\documentclass[12pt]{article}

\usepackage[margin=1.25in]{geometry}
\usepackage{setspace}\onehalfspacing
\usepackage{amsmath,amsthm,amssymb,mathtools}
\usepackage{booktabs,adjustbox}
\usepackage{natbib}
\usepackage{subfig}
\usepackage[inline]{enumitem}
\usepackage{xcolor}
\usepackage{lmodern}
\usepackage{microtype}
\usepackage{hyperref}
\usepackage[bottom]{footmisc}
\usepackage{xr}
\hypersetup{
	colorlinks,
	citecolor=black,
	linkcolor=black,
	urlcolor=black
}

\makeatletter
\newtheoremstyle{mytheorem}
  {3pt}
  {3pt}
  {\itshape}
  {}
  {\itshape\bfseries}
  {.}
  {.5em}
  {\thmname{#1}\thmnumber{\@ifnotempty{#1}{ }#2}%
   \thmnote{ {\the\thm@notefont(#3)}}}
\makeatother
\theoremstyle{mytheorem}
\newtheorem{prop}{Proposition}
\newtheorem{cor}{Corollary}
\newtheorem{lem}{Lemma}

\newtheorem{assume}{Assumption}

\makeatletter
\newcommand{\mylabel}[2]{#2\def\@currentlabel{#2}\label{#1}}
\makeatother

\DeclareMathOperator{\eG}{\mathbb G}

\DeclareMathOperator{\Exp}{\mathbf E}

\DeclareMathOperator{\argmin}{argmin}

\let\Pr\relax
\DeclareMathOperator{\Pr}{\mathbf{Pr}}

\DeclareMathOperator{\TV}{TV}
\DeclareMathOperator{\KS}{KS}
\DeclareMathOperator{\DR}{DR}
\DeclareMathOperator{\LP}{LP}
\DeclareMathOperator{\MD}{MD}
\DeclareMathOperator{\MKW}{MK\smash /W}
\DeclareMathOperator{\Lip}{Lip}

\DeclareMathOperator{\Var}{\mathbf{Var}}
\DeclareMathOperator{\Cov}{\mathbf{Cov}}
\DeclareMathOperator{\diam}{diam}
\DeclareMathOperator{\Vol}{Vol}
\DeclareMathOperator*{\minimize}{min.}
\DeclareMathOperator{\Gr}{Gr}

\DeclareMathOperator{\sgn}{sgn}

\usepackage{datetime}
\newdateformat{monthyear}{\monthname[\THEMONTH] \THEYEAR}
\date{\monthyear\today}

\makeatletter
\renewcommand{\maketitle}{
	\renewcommand{\thefootnote}{\fnsymbol{footnote}}
	\setcounter{footnote}{0}
	\vspace*{-\baselineskip}
	\begin{center}
		{\LARGE \@title \par}
		\vskip \baselineskip
		{\large \@author}
		\vskip \baselineskip
		{\large \@date}
	\end{center}
	\setcounter{footnote}{0}
	\renewcommand{\thefootnote}{\arabic{footnote}}
}
\makeatother

\allowdisplaybreaks
\usepackage{minitoc}

\title{
On the role of the design phase in a linear regression}
\author{Junho Choi\footnote{The author acknowledges financial support from the College of Social Sciences of Seoul National University (SNU). Earlier versions of this paper were presented at the Applied Micro Brown Bag (2022, 2023) in the Department of Economics at SNU, the 6th International Conference on Econometrics and Statistics,
the Lunch Seminar in the Department of Economics at the University of Wisconsin-Madison (2024),
and the 13th International Association for Applied Econometrics Conference.
The present version was presented at the KER International Conference 2025 and the 2025 World Congress of the Econometric Society.
Special thanks are extended to Seojeong Lee for his continuous and invaluable guidance throughout this project, and also to Ryo Okui, Jungmin Lee, Myung Hwan Seo, Jack Porter, Harold Chiang, Kohei Yata, Bruce Hansen, Xiaoxia Shi, Yong Cai, Hyunseung Kang, and Jeffrey Smith, whose feedback contributed to the development of this paper.
The author thanks anonymous referees whose comments were found very helpful.}\\
University of Wisconsin-Madison\\
junho.choi@wisc.edu}

\begin{document}
\maketitle
\begin{abstract}
The ``design phase'' refers to a stage in observational studies, during which a researcher constructs a subsample that achieves a better balance in covariate distributions between the treated and untreated units. In this paper, we study the role of this preliminary phase in the context of linear regression, offering a justification for its utility. To that end, we first formalize the design phase as a process of estimand adjustment via selecting a subsample. Then, we show that covariate balance of a subsample is indeed a justifiable criterion for guiding the selection: it informs on the maximum degree of model misspecification that can be allowed for a subsample, when a researcher wishes to restrict the bias of the estimand for the parameter of interest within a target level of precision. In this sense, the pursuit of a balanced subsample in the design phase is interpreted as identifying an estimand that is less susceptible to bias in the presence of model misspecification. Also, we demonstrate that covariate imbalance can serve as a sensitivity measure in regression analysis, and illustrate how it can structure a communication between a researcher and the readers of her report.

\textbf{Keywords:} covariate balance, model misspecification, conditional estimand, design phase, linear regression, robust inference, sensitivity analysis
\end{abstract}

\begin{mtchideinmaintoc}
\newpage
\begin{quote}
	``It's fairly clear that the first things you would report are some estimate of the average effect of the intervention, and some measure of uncertainty, standard error. $\ldots$ and we don't really have much agreement on what the third number should be, or the fourth, or the fifth $\ldots$ clearly one of the next few numbers has to be something to do with sensitivity analysis.''\\
	---Guido Imbens \citep{Imbens2022}
\end{quote}

\section{Introduction}
In their influential book, \citet[Chapter 15, p.337]{ImbensRubin2015} define the term ``design phase'' to refer to the process of constructing a subsample within which the treated and untreated units are more balanced in their covariate distributions than in the full sample, so that, ``within this selected sample, inferences are more robust and credible.'' 

In this paper, we validate these authors' claim about the design phase, specifically in the context of linear regression with binary treatment. For that, the design phase is first formalized as a process of estimand adjustment through subsample selection.\footnote{``Subsample selection'' here should not be confused with ``subsampling'' as an inference method.} Then, we establish covariate balance as a justifiable criterion for guiding this selection: For a given subsample, it informs on the maximum degree of misspecification we may allow for our regression model, when pursuing a target closeness between the estimand and the parameter of interest. Thus, selecting a subsample with better balance can be understood as finding an estimand that is less prone to bias---i.e., ``credible''---despite possible misspecification of the regression model---i.e., ``robust.''

To fix ideas, 
let $\mathcal S$ be a finite random sample given to a researcher, each unit $i\in\mathcal S$ characterized by a scalar outcome $Y_i$, $p$-dimensional controls $X_i$, and an indicator $D_i$ for treated status. Let $\eG$ denote the empirical distribution of $(X_i',D_i)'$ within $\mathcal S$, $\eG^d$ the empirical conditional distribution of $X_i$ given $D_i=d$ within $\mathcal S$, where $d\in\{0,1\}$, and $F$ the population conditional distribution $\mathcal L_{Y|X,D}$ of $Y_i$ given $(X_i',D_i)'$.
 
 We study a scenario in which a researcher targets a parameter $\tau_{\eG^1,F}$---determined by $\eG^1$ and $F$---that identifies the average treatment effect on the treated units within $\mathcal S$---i.e., those with $D_i=1$---under the standard conditional independence assumption, and employs a least squares estimator $\hat\beta$ derived from $\mathcal S$ to estimate $\tau_{\eG^1,F}$. Due to the misspecification of her regression model, however, the estimand $\beta_{\eG,F}$ that $\hat\beta$ identifies in general differs from $\tau_{\eG^1,F}$, preventing valid inference for $\tau_{\eG^1,F}$ using $\hat\beta$. Nevertheless, there may still exist a \emph{subsample} $\mathcal S^*\subseteq \mathcal S$ within which the estimand $\beta_{\eG^*,F}$ and the parameter $\tau_{\eG^{*1},F}$ are close in value, i.e., $|\beta_{\eG^*,F}-\tau_{\eG^{*1},F}|\approx 0$.\footnote{$\eG^*$ denotes the empirical distribution of $(X_i',D_i)'$ within $\mathcal S^*$, and $\eG^{*1}$ the empirical conditional distribution of $X_i$ given $D_i=1$ within $\mathcal S^*$.} If so, the researcher can carry out valid inference for $\tau_{\eG^{*1},F}$ using the same least squares estimator $\hat\beta^*$, but now derived from $\mathcal S^*$.\footnote{We will show that $\hat\beta^*$ identifies $\beta_{\eG^*,F}$.}

This is our formalization of the design phase. That is, it is a process of selecting a subsample $\mathcal S^*$ whose $\beta_{\eG^*,F}$ and $\tau_{\eG^{*1},F}$ are close enough that we can conduct inference for $\tau_{\eG^{*1},F}$ through $\hat\beta^*$. The problem is, how one can know beforehand whether the two objects, both of which depend on the population,\footnote{Both $\beta_{\eG^*,F}$ and $\tau_{\eG^{*1},F}$ depend on $F$, which is unknown.} are close in finite samples. During the design phase, a researcher explores subsamples, compares their covariate balance, and selects the one that exhibits better balance. In light of our formalization, this can be justified only if the covariate balance of a subsample informs us of their closeness, i.e., $|\beta_{\eG^*,F}-\tau_{\eG^{*1},F}|$, which we call the ``bias'' of a linear regression.

In this paper, we formally show that this is indeed the case. To that end, we first establish a general representation of their difference, i.e., $\beta_{G,F}-\tau_{G^1,F}$, for an arbitrary distribution $G$ of $(X_i',D_i)'$.\footnote{$G^1$ denotes the conditional distribution of $X_i$ given $D_i=1$.} To be specific, in Section \ref{sec:representation-bias}, we represent their difference as an inner product of two basic functions, each resulting from misspecification of the regression model and covariate imbalance. Next, using this representation, in Section \ref{sec:bias-bounds}, we bound the bias, i.e., $|\beta_{G,F}-\tau_{G^1,F}|$, by the product of two objects---denoted by $m_{G,F}$ and $c_G$---each of which captures the overall degree of model misspecification and covariate imbalance.

These results
are then applied in Section \ref{sec:role-design-phase} when formalizing the aim of assessing covariate balance of subsamples in the design phase. Because the bias of the estimand $\beta_{\eG^*,F}$ within a subsample $\mathcal S^*$, i.e., $|\beta_{\eG^*,F}-\tau_{\eG^{*1},F}|$, is bounded by $m_{\eG^*,F}$ times $c_{\eG^*}$, the ``balance'' in the distributions of the controls, i.e., $1/c_{\eG^*}$---which can be \emph{computed} from $\mathcal S^*$---becomes the order of model misspecification $m_{\eG^*,F}$ a researcher may compromise, when she aims to restrict the bias below a pre-specified tolerance $\epsilon>0$. To be specific, if she wishes to be confident of $\tau_{\eG^{*1},F}$ using $\beta_{\eG^*,F}$ at the precision of $\epsilon$, $m_{\eG^*,F}$ can be allowed for up to $\epsilon/c_{\eG^*}$. In this regard, constructing a subsample with better balance in the design phase is equivalent to searching for an estimand, with which a researcher can be assured of less bias resulting from the misspecification of her regression model.

Our justification of the design phase can be approached from another perspective---as a framework for a structured communication between a researcher and the readers of her report. To explain, the bias bound takes various forms depending on the metric used to measure covariate imbalance $c_{\eG^*}$, which naturally defines a unique metric for model misspecification $m_{\eG^*,F}$. In turn, two measurements of imbalance under different metrics contain separate information on the robustness of an estimand. A researcher may therefore decide to report multiple forms of imbalance measurements to offer her readers rough ideas of \emph{how} her estimand $\beta_{\eG^*,F}$ is robust. The readers can then initiate perturbations of the true state $f$ starting from the researcher's model $l_{\theta_{\eG^*,F}}^s$,\footnote{$f$ and $l_{\theta_{\eG,F}}^s$ denote the conditional expectation and regression functions, respectively. Detailed definitions are provided in later sections.} compute their deviations $m_{\eG^*,F}$ under different misspecification metrics, and multiply them by the corresponding reported $c_{\eG^*}$'s to determine the maximum possible bias under the adversarial scenario they propose, i.e., the perturbed $f$. The smaller the $c_{\eG^*}$ achieved during the design phase, the larger the audience with less favorable $f$'s the researcher can persuade. Section \ref{sec:application} provides an empirical illustration of this communication.

\subsection{Related literature}

This paper pertains to the conditional estimand literature, as $\beta_{\eG^*,F}$ is the conditional linear projection of the outcome given the in-subsample empirical distribution of the controls and treated status.\footnote{assuming $Y_i$ has a finite second moment.} We extend the full-sample results of \citet{AbadieImbensZheng2014} to arbitrary subsamples in Section \ref{sec:role-design-phase}, where we establish the asymptotic properties of $\hat\beta^*-\beta_{\eG^*,F}$. In fact, our formalization of the design phase offers another motivation to use conditional estimands: A researcher can assess their population properties based on statistics in finite samples. 

The importance of securing covariate balance has been extensively emphasized in the literature. \citep[among others.]{CrumpHotzImbensMitnik2009, Imbens2015, ImbensRubin2015,HoImaiKingStuart2007,Stuart2010} The literature proposes various methods for that purpose---translated into our terms, systematic procedures for selecting $\mathcal S^*$---such as matching or trimming. In this paper, however, we do not limit ourselves to a specific balancing method. That is, we do not restrict the functional form of $\mathcal S^*$. Our focus is rather on formalizing and justifying the use of a general balancing process,
	especially when we use linear regression.

This study departs from the design-based inference literature \citep[among others]{AbadieAtheyImbensWooldridge2020} in that our inference conditions on the realizations of treated status as well as controls, so that statistical uncertainty only arises from the model side, i.e., $F$. By the same token, it also departs from \citet{AbadieSpiess2022}, whose focus is on unconditional inference.\footnote{However, it should be noted that they discuss conditional inference in their online appendix.}

This study contributes to the literature on sensitivity analysis \citep[among others]{BonhommeWeidner,AndrewsGentzkowShapiro2017}. Our results show that covariate imbalance serves as a sensitivity measure in regression analysis. We recommend that researchers report multiple imbalance metrics, as each computes the maximum possible bias along a different unit of perturbation.\footnote{This empirical practice can be particularly useful in settings where perfect balance is, in principle, unattainable, as in regression discontinuity designs.} This study also connects to the bias-aware inference literature \citep[among others]{ArmstrongKolesar2021}. We will detail how imbalance metrics can be used to robustify an existing confidence interval.

\section{Framework}\label{sec:framework}
We observe a finite random sample $\mathcal S$, consisting of $|\mathcal S|$ units, where each unit $i\in\mathcal S$ is characterized by a scalar outcome $Y_i\in\mathbb R$, a $p$-dimensional vector $X_i\in\mathbb R^p$ of controls, and an indicator $D_i\in\{0,1\}$ for treated status.\footnote{The notation $|\cdot|$ serves two meanings in this paper. When the argument is a set, it denotes its cardinality; when it is a vector, it indicates its absolute value.}
To simplify notation, we henceforth omit the subscript $i$ on variables unless required by context.

The parameter of our interest is
\begin{align}
	\tau\equiv \Exp[Y|D=1] - \Exp[\Exp[Y|X,D=0]|D=1],
\end{align}
which identifies the average treatment effect on the treated under the assumption that $Y(0)$ and $D$ are conditionally independent given $X$, where $Y(0)$ denotes the potential outcome when untreated.

Let $\mathcal X^d\subseteq \mathbb R^p$ denote the support of the conditional distribution $\mathcal L_{X|D=d}$ of $X$ given $D=d$, which we abbreviate by $G^d$. Assume that each $G^d$ is dominated by a common measure $\mu$, which may be, for example, a counting measure or a Lebesgue measure.\footnote{Such a measure always exists; $\sum_{d\in\{0,1\}}G^d$ is one example.} Then, we can rewrite $\tau$ as
\begin{align}
	\int (\Exp[Y|X=x,D=1] - \Exp[Y|X=x,D=0])\Pr[X=x|D=1]\mu(\mathrm dx),
\end{align}
where $\Pr[X=\cdot|D=d]\equiv(dG^d/d\mu)(\cdot)$ denotes the Radon-Nikodym derivative of $G^d$ with respect to $\mu$.
This representation is immediate but yields a pivotal observation. Note that, in the present setup, (i) the joint distribution between $X$ and $D$, which we denote by $G$, and (ii) the conditional distribution $\mathcal L_{Y|X,D}$ of $Y$ given $(X',D)'$, which we abbreviate by $F$, provide a complete description of the population. Thus, equation (2) reveals how each component of the population $(G,F)$ interacts with the parameter. In this regard, whenever there is a need to make explicit the dependence of an object on the components of a population, we append the corresponding subscript to signify that specific relationship; for instance, $\tau_{G^1,F}$, $\Pr_F$, or $\Exp_G$.

Now, consider a linear regression model
\begin{align}
	Y = \alpha + \beta D + s(X)'\gamma + E, \label{eq:reg-model}
\end{align}
where $s(\cdot)$ maps $X$ to a vector $s(X)$ of covariates, and $E$ denotes the regression error. We anchor the interpretation of $\beta$ by imposing the following conditions.
\begin{assume} \label{ass:regularity}
Let $Z\equiv (1,D,s(X)')'$. $\Exp[\|Z\|^2]$ and $\Exp[\|ZE\|]$ are finite, $\Exp[ZZ']$ is positive definite, and $\Exp[ZE]=\mathbf 0$.
\end{assume}
\noindent Let $\theta\equiv(\alpha,\beta,\gamma')'$. Under Assumption \ref{ass:regularity},
\begin{align}
	\theta=(\Exp[ZZ'])^{-1}\Exp[Z\Exp[Y|X,D]],
\end{align}
which is a vector of linear projection coefficients provided $\Exp[Y^2]$ exists. In particular,
\begin{align}
	\beta = \frac{\Exp[\dot D\Exp[Y|X,D]]}{\Exp[\dot D^2]}, \label{eq:beta}
\end{align}
where $\dot D$ denotes the population residual from the linear projection of $D$ on $(1,s(X)')'$.
We refer to $\theta$---specifically, $\beta$---as the ``regression estimand,'' as it is the limit to which the least squares estimator converges to.  

\paragraph{Example 1} We consider a toy example in which $Y$ is generated as
\begin{align}
	Y = D + DX + X + U,	
\end{align}
where $D$ and $X$ are binary variables such that $\Pr[D=d,X=x]=(1/2-p)\mathbf 1\{d=x\}+p\mathbf 1\{d\neq x\}$ for some $p\in(0,1/2)$ and $U$ is an arbitrary random variable that has a finite first moment, $\Exp[U]=0$, and is independent from $(X,D)'$. Because $X$ is binary conditional on $D$, $\mu=\delta_0 + \delta_1$ can be chosen as a counting measure, where $\delta_x$ denotes a Dirac measure defined as $\delta_x[B]\equiv \mathbf 1\{x\in B\}$ for a Borel set $B$. $G^1$ and $G^0$ are both Bernoulli, where $\Pr[X=1|D=1]=(1/2-p)/(1/2)$ and $\Pr[X=1|D=0]=p/(1/2)$. Note that $p$ parametrizes the dependence between $D$ and $X$. The farther it deviates from $1/4$, the more dependent they become. Also, in this setting, 
\begin{align*}
	\tau = \Exp[Y|D=1]-\Exp[\Exp[Y|X,D=0]|D=1]=(1+2(1-2p))-(1-2p)=2-2p.
\end{align*}
The following two specifications are considered for the regression model:
\begin{description}[itemsep=0em,topsep=3pt]
	\item[Specification A:] $Y = \alpha_A + \beta_A D + E$ and
	\item[Specification B:] $Y = \alpha_B + \beta_B D + \gamma_B X + E$.
\end{description}
Both of them are ``misspecified'' in the sense that their functional forms are different from the functional form of the conditional expectation, which includes the interaction term between $X$ and $D$. We will use this example to illustrate our results. \qed

\section{Bias representation} \label{sec:representation-bias}
We establish a general relationship between $\beta$ and $\tau$. Specifically, we provide a novel representation of their difference $\beta-\tau$, the absolute value of which we refer to as the ``bias'' of a linear regression.

Let $(x',d)'\in\cup_{d\in\{0,1\}}\mathcal X^d\times \{0,1\}$. Abbreviate by $f(x,d)$ the conditional expectation $\Exp[Y|X=x,D=d]$ of $Y$ given $(X',D)'=(x',d)'$, and by $g^d(x)$ the conditional density $\Pr[X=x|D=d]$ of $X$ at $x$ given $D=d$. Define $l_\theta^s(x,d) \equiv \alpha + \beta d + s(x)'\gamma$ to be the value of the regression function at $(x',d)'$.

The deviation of $f$ from $l_\theta^s$ results from the misspecification of the regression model \eqref{eq:reg-model}: Let $\Theta$ denote the parameter space for $\theta$. Then, $f\in\{l_\theta^s:\theta\in\Theta\}$ implies $f-l_\theta^s=0$. The discrepancy between the conditional densities $g^d$ is due to the covariate imbalance in the population: The greater the overlap between $G^d$'s, the closer $g^1-g^0$ is to zero.

We now state our result:
\begin{prop} \label{thm:representation}
Suppose that Assumption \ref{ass:regularity} holds. Then,
\begin{align}
\beta - \tau
= \int \underbrace{(f(x,0)-l_\theta^s(x,0))}_{\text{model misspecification}}\overbrace{(g^1(x)-g^0(x))}^{\text{covariate imbalance}}\mu(\mathrm dx), \label{eq:representation}
\end{align}
where we note that $(g^1(x)-g^0(x))\mu(\mathrm dx)=(G^1-G^0)(\mathrm dx)$.
\end{prop}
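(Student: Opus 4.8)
The plan is to reduce the claimed representation to two ``within-arm'' orthogonality conditions satisfied by the misspecification error, after which the difference $\beta-\tau$ telescopes. Throughout, write $r(x,d)\equiv f(x,d)-l_\theta^s(x,d)$ for the misspecification function appearing in the integrand.

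First I would extract the moment conditions hidden in Assumption \ref{ass:regularity}. Since $E=Y-l_\theta^s(X,D)$ and $Z=(1,D,s(X)')'$ is a function of $(X,D)$, iterated expectations give $\Exp[ZE]=\Exp[Z\,\Exp[E|X,D]]=\Exp[Zr(X,D)]$, where integrability of $Zr$ follows from $\Exp[\|ZE\|]<\infty$ together with $|\Exp[E|X,D]|\le\Exp[|E|\,|\,X,D]$ (conditional Jensen). Hence $\Exp[ZE]=\mathbf 0$ reads, coordinatewise, as $\Exp[r(X,D)]=0$, $\Exp[Dr(X,D)]=0$, and $\Exp[s(X)r(X,D)]=\mathbf 0$. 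The first two coordinates give $\Exp[Dr(X,D)]=\Pr[D=1]\,\Exp[r(X,1)|D=1]=0$ and $\Exp[(1-D)r(X,D)]=\Pr[D=0]\,\Exp[r(X,0)|D=0]=0$; since $\Exp[ZZ']$ is positive definite, both $\Pr[D=d]$ are strictly positive, so I obtain the two key identities
\begin{align}
\int r(x,1)\,g^1(x)\,\mu(\mathrm dx)=0,\qquad \int r(x,0)\,g^0(x)\,\mu(\mathrm dx)=0.\nonumber
\end{align}

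Next I would exploit that $\beta$ is exactly the coefficient on $D$ in the regression function, so that $l_\theta^s(x,1)-l_\theta^s(x,0)=\beta$ for every $x$. Substituting $l_\theta^s=f-r$ gives, pointwise in $x$, the identity $\beta=(f(x,1)-f(x,0))-(r(x,1)-r(x,0))$. Integrating both sides against $g^1(x)\,\mu(\mathrm dx)$ (so the left side returns $\beta$, as $g^1$ integrates to one) and invoking the integral form of $\tau=\int(f(x,1)-f(x,0))\,g^1(x)\,\mu(\mathrm dx)$ derived in Section \ref{sec:framework}, together with the first key identity $\int r(x,1)\,g^1=0$, I arrive at $\beta-\tau=\int r(x,0)\,g^1(x)\,\mu(\mathrm dx)$.

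Finally, I would symmetrize using the second key identity: subtracting the zero quantity $\int r(x,0)\,g^0(x)\,\mu(\mathrm dx)$ yields $\beta-\tau=\int r(x,0)\,(g^1(x)-g^0(x))\,\mu(\mathrm dx)$, which is the claimed representation, and the parenthetical remark $(g^1-g^0)\mu(\mathrm dx)=(G^1-G^0)(\mathrm dx)$ is immediate from $g^d=\mathrm dG^d/\mathrm d\mu$. I do not anticipate any difficulty in the algebra itself. The one place demanding care is the measure-theoretic bookkeeping---justifying the iterated-expectation step and the finiteness of the conditional integrals of $r$ (which for $r(\cdot,0)$ under $G^1$ is exactly what makes $\tau$ well-defined in the first place)---so that the two key identities and the final integral are all legitimate. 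The conceptual crux is recognizing that misspecification leaves the error mean-zero within each treatment arm; once that is established, the representation follows by a one-line cancellation.
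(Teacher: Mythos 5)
Your argument is correct and is essentially the paper's own proof in lightly different clothing: your two ``within-arm'' orthogonality identities $\int (f(x,d)-l_\theta^s(x,d))\,G^d(\mathrm dx)=0$ for $d=0,1$ are exactly Lemma \ref{thm:Y-moments} restated via iterated expectations (the paper derives them from the saturated-in-$D$ structure of $Y-s(X)'\gamma=\alpha+\beta D+E$ together with $\Exp[E]=\Exp[DE]=0$, you from the first two coordinates of $\Exp[ZE]=\mathbf 0$), and your final step of subtracting the zero quantity $\int (f(x,0)-l_\theta^s(x,0))\,G^0(\mathrm dx)$ is the same add-and-subtract telescoping the paper performs on $\Exp[\Exp[Y|X,D=0]|D=1]$. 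No gap; your attention to the integrability of $f(\cdot,0)-l_\theta^s(\cdot,0)$ under $G^1$ is the right point of care and is handled the same way (implicitly, via the well-definedness of $\tau$ and $\Exp[\|Z\|^2]<\infty$) in the paper.
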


In Proposition \ref{thm:representation}, the bias of
linear regression is represented as an inner product of two basic functions $f(\cdot,0)-l_\theta^s(\cdot,0)$ and $g^1-g^0$. It recovers the existing results for the regression estimand $\beta$ under an ideal scenario in which either the regression model is correctly specified or the treatment is randomly assigned: Under correct specification, $f(\cdot,d)-l_\theta^s(\cdot,d)=0$, and the representation implies $\beta=\tau$. Under random assignment, 
	$g^1-g^0=0$, and again, by the representation, $\beta=\tau$.\footnote{This particular result has been shown by \citet[Chapter 7]{ImbensRubin2015}.} 
	
However, it also allows us to explore the middle ground where neither condition is exactly satisfied. For instance, the representation shows that, if the two basic functions are close to zero, $\beta$ should also be close to $\tau$. In other words, the bias of a regression is continuous with respect to model misspecification and covariate imbalance. 
	
Recall that we introduced the term ``bias'' to refer to the \emph{absolute} value of $\beta-\tau$. Thus, there is a subtlety in referring to 
	equation \eqref{eq:representation} as a bias representation. 
	Here, it is assumed that the sign of the outcome $Y$ is pre-adjusted based on that of $\beta-\tau$ by multiplying it with $\sgn(\beta-\tau)$.\footnote{$\sgn(\cdot)$ denotes the sign function.} This would then ensure the positivity of $\beta-\tau$.
	
For each $d\in\{0,1\}$, let $\mathcal T^d$ denote the support of the push-forward $s\sharp G^d$ through the covariate function $s$ of $G^d$. Abbreviate $s\sharp G^d$ by $G_s^d$. For each $(t',d)'\in\cup_{d\in\{0,1\}}\mathcal T^d\times \{0,1\}$, let $\Exp_{F_s}[Y|s(X)=t,D=d]$ denote the conditional expectation 
\begin{align*}
	\Exp_G[\Exp_F[Y|X,D]|s(X)=t,D=d]
\end{align*}
of $Y$ given $(s(X)',D)'=(t',d)'$. Define
\begin{align}
	\tau_{G_s^1,F_s^{}}\equiv \Exp[Y|D=1] - \Exp[\Exp[Y|s(X),D=0]|D=1]
\end{align}
to be a parameter that treats the covariates $s(X)$ as controls.

Let $(t',d)'\in\cup_{d\in\{0,1\}}\mathcal T^d\times \{0,1\}$. Abbreviate $\Exp_{F_s}[Y|s(X)=t,D=d]$ by $f^s(t,d)$, and let $l_\theta(t,d)\equiv \alpha + \beta d + t'\gamma$, so that $l_\theta$ defines the regression function at the level of the covariates $s(X)$.\footnote{$l_\theta(s(x),d)=l_\theta^s(x,d)$.} Proposition \ref{thm:representation} then yields
\begin{align}
	\beta - \tau_{G_s^1,F_s^{}} = \int (f^s(t,d)- l_\theta(t,d))(G_s^1-G_s^0)(\mathrm dt). \label{eq:representation-covariates}
\end{align}
Although this is immediate, as will be illustrated in Section \ref{sec:application}, it can be a useful device for reducing the original problem of assessing the bias to a simpler one, notably when $s(X)$ is a balancing score so that $\tau=\tau_{G_s^1,F_s^{}}$, and $X$ is of higher dimension than $s(X)$. 

In the appendix, we establish similar bias representations for the regression model extended to include interaction terms between $X$ and $D$, and for other parameters.

\paragraph{Example 1 (Cont')} We illustrate how equation \eqref{eq:representation} operates in our toy example. Let $l^A$ and $l^B$ denote the population regression functions for each specification. Then, it can be shown that $l^A(x,d)=2p + (3-6p)d$ and $l^B(x,d)=-p+(3/2)x + (3/2)d$. Hence, for Specification A, equation \eqref{eq:representation} holds in the form of
\begin{align*}
	1- 4p = \beta_A - \tau
	&= \sum_{x\in\{0,1\}}(f(x,0)-l^A(x,0))(g^1(x)-g^0(x))\\
	&= (1-2p)((1-2p)-2p) + (0-2p)(2p-(1-2p)) 
	= 1 - 4p, 
\end{align*}
and for Specification B, it takes the form of
\begin{align*}
	-1/2 + 2p = \beta_B - \tau
	&= \sum_{x\in\{0,1\}}(f(x,0)-l^B(x,0))(g^1(x)-g^0(x))\\
	&= (1-(-p+3/2))((1-2p)-2p) + (0-(-p))(2p-(1-2p))\\[0.5em]
	&= -1/2 + 2p.\tag*{\qed}
\end{align*}


\section{Bias bounds} \label{sec:bias-bounds}

The inner product structure of the bias representation \eqref{eq:representation}
yields useful bounds for the bias of a linear regression. All of them share the same structure of the multiplication of the two objects $m$ and $c$, each of which measures the degree of misspecification of the regression model and covariate imbalance, that is, 
\begin{align}
	|\beta-\tau|\leq mc. \label{eq:bias-bound}
\end{align}
This common structure of the bias bounds shows that misspecification of a regression model can be allowed for on the order of $1/c$. If a researcher wishes to be confident of $\tau$ using her $\beta$ at the precision of $\epsilon$, she may not concern much about misspecification of her regression model up to $\epsilon/c$. In an extreme case where $c=0$, the specification of a regression model becomes irrelevant to the identification of $\tau$.\footnote{However, we note that even when $c=0$, the specification of a regression model can be important in terms of the ``estimation.'' In some cases, we can attain higher efficiency. See \citet{Freedman2008a,Freedman2008b}, \citet{Lin2013}, and \citet{NegiWooldridge2021} for relevant discussion.}

An important observation will be that $c$ depends solely on the joint distribution of $X$ and $D$, i.e., $G$. This plays a critical role in justifying the design phase in Section \ref{sec:role-design-phase}.

In this section, we offer three forms of bias bound \eqref{eq:bias-bound}, which differ in how model misspecification and covariate imbalance are measured.
	Taken together, they yield a rich description of the robustness of a regression. Three additional ones are provided in the appendix: total variation, density ratio, and L\'evy-Prokhorov bounds.

\subsection{Kolmogorov-Smirnov bound}
The bound of the first form is referred to as the ``Kolmogorov-Smirnov bound,'' since it uses the Kolmogorov-Smirnov distance to measure the degree of covariate imbalance. To streamline the discussion, here we assume that $X$ is one-dimensional, i.e., $p=1$.\footnote{Nevertheless, one can always operationalize Corollary \ref{thm:bias-bound-KS} through the estimated indices, as detailed in Section \ref{sec:application}.}
\begin{assume} \label{ass:continuity}
	$f(\cdot,0)-l_\theta^s(\cdot,0)$ is continuous and bounded on $\cup_{d\in\{0,1\}}\mathcal X^d$.
\end{assume}
Let $\mathcal H$ be the class of all continuous and bounded extensions $h$ of $f(\cdot,0)-l_\theta^s(\cdot,0)$ on $\mathbb R$. That is, if $h\in\mathcal H$, then $h(x)=f(x,0)-l_\theta^s(x,0)$, for every $x\in\cup_{d\in\{0,1\}}\mathcal X^d$. $\mathcal H$ is then non-empty by Assumption \ref{ass:continuity}. Define
\begin{align}
	m^{\KS} &\equiv \inf_{h\in\mathcal H}V_{-\infty}^\infty[h] \text{ and}\\
	c^{\KS} &\equiv \sup_{x\in\mathbb R}|G^1(x)-G^0(x)|,
\end{align}
where $V_{-\infty}^\infty[\cdot]$ denotes the total variation of the argument on $\mathbb R$, and $G^d(x)$ the value of the distribution function of $G^d$ at $x$. $m^{\KS}$ measures the total variation of $f(\cdot,0)-l_\theta^s(\cdot,0)$ on $\cup_{d\in\{0,1\}}\mathcal X^d$, and $c^{\KS}$ is the Kolmogorov-Smirnov distance between $G^1$ and $G^0$.
\begin{cor} \label{thm:bias-bound-KS}
Suppose that the conditions of Proposition \ref{thm:representation} are satisfied. Also, suppose that Assumption \ref{ass:continuity} holds. Then,
\begin{align}
	|\beta - \tau| \leq m^{\KS}c^{\KS}.	\label{eq:bias-bound-KS}
\end{align}
\end{cor}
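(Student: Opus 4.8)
The plan is to start from the bias representation in Proposition \ref{thm:representation} and convert the integral against the signed measure $G^1-G^0$ into an integral against the \emph{difference of distribution functions}, via integration by parts. Writing $\beta-\tau=\int(f(x,0)-l_\theta^s(x,0))(G^1-G^0)(\mathrm dx)$, I would first observe that since $G^1$ and $G^0$ are both concentrated on $\cup_{d\in\{0,1\}}\mathcal X^d$, on which every $h\in\mathcal H$ agrees with $f(\cdot,0)-l_\theta^s(\cdot,0)$, the first factor of the integrand may be replaced by any extension $h$ without altering the integral. Hence $\beta-\tau=\int_{\mathbb R}h\,\mathrm d(G^1-G^0)$ for each $h\in\mathcal H$.

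Next I would apply the Lebesgue--Stieltjes integration-by-parts formula. Let $\Delta(x)\equiv G^1(x)-G^0(x)$ denote the difference of the distribution functions, so that $\mathrm d\Delta=(G^1-G^0)(\mathrm dx)$. Because $h$ is continuous, the formula takes the clean symmetric form
\[
\int_{\mathbb R}h\,\mathrm d\Delta+\int_{\mathbb R}\Delta\,\mathrm dh=\bigl[h\Delta\bigr]_{-\infty}^{\infty},
\]
with no ambiguity from atoms of $G^0,G^1$ (i.e.\ jumps of $\Delta$). The boundary term vanishes, since $\Delta(-\infty)=0$ and $\Delta(+\infty)=1-1=0$ while $h$ is bounded; therefore $\beta-\tau=-\int_{\mathbb R}\Delta(x)\,\mathrm dh(x)$.

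The final step is a H\"older-type estimate. Taking absolute values and extracting the sup-norm of $\Delta$,
\[
|\beta-\tau|=\Bigl|\int_{\mathbb R}\Delta\,\mathrm dh\Bigr|\leq\sup_{x\in\mathbb R}|G^1(x)-G^0(x)|\int_{\mathbb R}|\mathrm dh|=c^{\KS}\,V_{-\infty}^\infty[h].
\]
Since this holds for every $h\in\mathcal H$ (and only those with finite $V_{-\infty}^\infty[h]$ constrain the infimum, the rest contributing $+\infty$), taking the infimum over $h\in\mathcal H$ gives $|\beta-\tau|\leq c^{\KS}\inf_{h\in\mathcal H}V_{-\infty}^\infty[h]=m^{\KS}c^{\KS}$, which is the claim.

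The main obstacle I anticipate is the rigorous justification of the integration-by-parts step: confirming that the boundary terms at $\pm\infty$ genuinely vanish (needing $h$ bounded together with $\Delta\to0$ at both ends) and that the continuity of $h$ removes any left-/right-continuous ambiguity in $\int\Delta\,\mathrm dh$ when the $G^d$ have atoms. A secondary point to verify is the legitimacy of substituting the extension $h$ for the support-restricted integrand, which rests on the measures being concentrated on $\cup_{d\in\{0,1\}}\mathcal X^d$, and the identification $\int_{\mathbb R}|\mathrm dh|=V_{-\infty}^\infty[h]$ with the total variation appearing in the definition of $m^{\KS}$.
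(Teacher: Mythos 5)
Your proposal is correct and follows essentially the same route as the paper's proof: replace the integrand by a continuous bounded extension $h\in\mathcal H$ (the paper invokes the Tietze extension theorem to guarantee $\mathcal H\neq\emptyset$), integrate by parts against $G^1-G^0$, check that the boundary terms vanish using the boundedness of $h$ and $(G^1-G^0)(\pm\infty)=0$, bound the remaining term by $\sup_x|G^1(x)-G^0(x)|\cdot V_{-\infty}^{\infty}[h]$, and take the infimum over $h$. The paper carries out the integration by parts on finite intervals $[a,b]$ (using the Riemann--Stieltjes/Lebesgue--Stieltjes equivalence for continuous $h$ against the monotone $G^d$) before passing to the limit, which is exactly the rigor you flag as the remaining obstacle.
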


\paragraph{Example 1 (Cont')} We show how equation \eqref{eq:bias-bound-KS} works in our toy example. Since $\mathcal X^d=\{0,1\}$, $c^{\KS}=|G^1(0)-G^0(0)|=|1-4p|$. For Specification A, $m_A^{\KS}=|(f(1,0)-l^A(1,0))-(f(0,0)-l^A(0,0))|=|(1-2p)-(0-2p)|=1$, and similarly for Specification B, $m_B^{\KS}=|(1-(-p+3/2))-(0-(-p))|=1/2$. Then, equation \eqref{eq:bias-bound-KS} becomes equality in the forms of
\begin{align*}
	|1-4p|=|\beta_A - \tau|&= m_A^{\KS}c^{\KS}=1\times |1-4p| \text{ and }\\
	|-1/2+2p|=|\beta_B - \tau|&= m_B^{\KS}c^{\KS}=(1/2)\times |1-4p|. \tag*{\qed}
\end{align*}

\subsection{Monge-Kantorovich/Wasserstein bound}
The bound of the second form is called the ``Monge-Kantorovich/Wasserstein bound,'' as it uses the Monge-Kantorovich or Wasserstein distance to measure covariate imbalance. Let $\Pi(G^1,G^0)$ be the set of all probability measures $\pi$ on $\cup_{d\in\{0,1\}}\mathcal X^d\times \cup_{d\in\{0,1\}}\mathcal X^d$ such that its push-forward onto the first coordinate is $G^1$, and onto the second $G^0$.

Define 
\begin{align}
	c^{\MKW}&\equiv \inf_{\pi\in\Pi(G^1,G^0)}\int\|x_1-x_2\|\pi(\mathrm dx_1\mathrm dx_2)\text{ and}\\
	m^{\MKW}&\equiv	\|f(\cdot,0)-l_\theta^s(\cdot,0)\|_{\Lip},
\end{align}
where $\|\cdot\|_{\Lip}$ denotes the Lipschitz seminorm of the argument, i.e.,
\begin{align*}
	\|f(\cdot,0)-l_\theta^s(\cdot,0)\|_{\Lip} \equiv \sup_{x_1\neq x_2'\in\cup_{d\in\{0,1\}}\mathcal X^d}\frac{|f(x_1,0)-l_\theta^s(x_1,0)-(f(x_2,0)-l_\theta^s(x_2,0))|}{\|x_1-x_2\|}.
\end{align*}
Then, we have the following result:
\begin{cor}\label{thm:bias-bound-MKW}
	Suppose that the conditions of Proposition \ref{thm:representation} hold. Then,
	\begin{align}
		|\beta-\tau|\leq m^{\MKW}c^{\MKW}. \label{eq:bias-bound-MKW}
	\end{align}
\end{cor}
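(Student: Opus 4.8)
The plan is to combine the bias representation in Proposition \ref{thm:representation} with the coupling (primal) characterization of the Monge-Kantorovich/Wasserstein distance $c^{\MKW}$. Write $h \equiv f(\cdot,0)-l_\theta^s(\cdot,0)$ for the misspecification function on $\cup_{d\in\{0,1\}}\mathcal X^d$, so that Proposition \ref{thm:representation} gives
\begin{align*}
	\beta - \tau = \int h\,\mathrm d(G^1-G^0) = \int h\,\mathrm dG^1 - \int h\,\mathrm dG^0,
\end{align*}
with both integrals finite under the hypotheses underlying Proposition \ref{thm:representation}. If $m^{\MKW}=\infty$ or $c^{\MKW}=\infty$ there is nothing to prove, so I would assume both are finite.

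Next, I would exploit the marginal constraints defining $\Pi(G^1,G^0)$. Fix an arbitrary $\pi\in\Pi(G^1,G^0)$. Since the push-forwards of $\pi$ onto the first and second coordinates are $G^1$ and $G^0$, respectively, each single integral lifts to the product space as
\begin{align*}
	\int h\,\mathrm dG^1 = \int h(x_1)\,\pi(\mathrm dx_1\mathrm dx_2),\qquad \int h\,\mathrm dG^0 = \int h(x_2)\,\pi(\mathrm dx_1\mathrm dx_2).
\end{align*}
Subtracting and using that both sides are finite, this recombines as $\beta-\tau = \int (h(x_1)-h(x_2))\,\pi(\mathrm dx_1\mathrm dx_2)$, where $h(x_1)$ and $h(x_2)$ are well-defined because $\pi$ is supported on $\cup_{d\in\{0,1\}}\mathcal X^d\times\cup_{d\in\{0,1\}}\mathcal X^d$.

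The bound then follows pointwise from the Lipschitz seminorm. Applying the triangle inequality and then $|h(x_1)-h(x_2)|\leq \|h\|_{\Lip}\,\|x_1-x_2\|=m^{\MKW}\|x_1-x_2\|$, which is valid for $x_1,x_2\in\cup_{d\in\{0,1\}}\mathcal X^d$ by the definition of $m^{\MKW}$, yields
\begin{align*}
	|\beta-\tau| \leq \int |h(x_1)-h(x_2)|\,\pi(\mathrm dx_1\mathrm dx_2) \leq m^{\MKW}\int \|x_1-x_2\|\,\pi(\mathrm dx_1\mathrm dx_2).
\end{align*}
Since $\pi\in\Pi(G^1,G^0)$ was arbitrary, taking the infimum over all such couplings on the right-hand side and recalling the definition of $c^{\MKW}$ gives $|\beta-\tau|\leq m^{\MKW}c^{\MKW}$, as claimed.

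I expect the only point requiring care to be measure-theoretic rather than conceptual: verifying that the single integrals of $h$ against $G^1$ and $G^0$ are finite, so that the recombination into $\int(h(x_1)-h(x_2))\,\mathrm d\pi$ is legitimate (an issue only if $h$ is unbounded). This is guaranteed by the integrability already implicit in the validity of Proposition \ref{thm:representation}. Alternatively, one could sidestep splitting the integral entirely by invoking Kantorovich-Rubinstein duality, since $c^{\MKW}=\sup_{\|g\|_{\Lip}\leq 1}\int g\,\mathrm d(G^1-G^0)$ and the normalized function $h/m^{\MKW}$ is admissible in that supremum; this route makes the bound immediate but relies on a stronger external theorem than the elementary coupling argument above.
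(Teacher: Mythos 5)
Your proof is correct, but it takes a genuinely different route from the paper's. The paper normalizes the misspecification function by its Lipschitz seminorm (after disposing separately of the degenerate case $\|f(\cdot,0)-l_\theta^s(\cdot,0)\|_{\Lip}=0$, where Proposition \ref{thm:representation} gives $\beta-\tau=0$ directly) and then invokes the Kantorovich--Rubinstein duality theorem \citep[Theorem 11.8.2]{Dudley2002} as a black box to identify $c^{\MKW}$ with $\sup_{\|h\|_{\Lip}\leq 1}|\int h\,\mathrm d(G^1-G^0)|$. You instead work on the primal side: lift the two marginal integrals to an arbitrary coupling $\pi\in\Pi(G^1,G^0)$, apply the pointwise Lipschitz bound $|h(x_1)-h(x_2)|\leq m^{\MKW}\|x_1-x_2\|$, and take the infimum over couplings. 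This is in effect a self-contained proof of the only direction of the duality that the bound actually requires (the ``easy'' inequality $\sup\leq\inf$), so your argument is more elementary and avoids the external theorem; it also handles the constant case automatically, since you never divide by the seminorm. The one point needing care---that $\int h\,\mathrm dG^1$ and $\int h\,\mathrm dG^0$ are separately finite so the recombination into $\int(h(x_1)-h(x_2))\,\mathrm d\pi$ is legitimate---you flag explicitly, and it is indeed guaranteed by the moment conditions underlying Proposition \ref{thm:representation}. What the paper's dual route buys is brevity and a statement that is sharp by construction (the duality is an equality); what your primal route buys is transparency and independence from Dudley's theorem.
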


\paragraph{Example 1 (Cont')} Equation \eqref{eq:bias-bound-MKW} applies to the preceding example in the following manner. 
Because $(\mathrm d/\mathrm dx)(f(x,0)-l^A(x,0))=(\mathrm d/\mathrm dx)(x-2p)=1$ and $(\mathrm d/\mathrm dx)(f(x,0)-l^B(x,0))=(\mathrm d/\mathrm dx)(x-(-p+(3/2)x))=-1/2$, we have $m_A^{\MKW}=1$ and $m_B^{\MKW}=|-1/2|=1/2$. Also, by the theorem of \citet{Vallender1974}, 
\begin{align*}
	c^{\MKW}=\int_{-\infty}^\infty |G^1(x)-G^0(x)|\mathrm dx = (1-0)\times |G^1(0)-G^0(0)| = |1-4p|.
\end{align*}
Therefore, equation \eqref{eq:bias-bound-MKW} takes the forms of
\begin{align*}
	|1-4p|=|\beta_A-\tau|&=m_A^{\MKW}c^{\MKW}=1\times |1-4p|\text{ and}\\
	|-1/2+2p|=|\beta_B-\tau|&=m_B^{\MKW}c^{\MKW}=(1/2)\times |1-4p|.\tag*{\qed}
\end{align*}

A slightly weaker definition of $m^{\MKW}$ could have been considered, for instance,
\begin{align}
	\sup_{\pi\in\Pi(G^1,G^0)}	\biggl(\int_{x_1\neq x_2} \biggl(\frac{|f(x_1,0)-l_\theta^s(x_1,0)-(f(x_2,0)-l_\theta^s(x_2,0))|}{\|x_1-x_2\|}\biggr)^2\pi(\mathrm dx_1\mathrm dx_2)\biggr)^{\frac{1}{2}} \label{eq:bias-bound-MKW-m}
\end{align}
by strengthening that of $c^{\MKW}$ to
\begin{align}
	\inf_{\pi\in \Pi(G^1,G^0)} \biggl(\int \|x_1-x_2\|^2\pi(\mathrm dx_1\mathrm dx_2)\biggr)^{\frac{1}{2}}, \label{eq:bias-bound-MKW-c}
\end{align} 
which is called the quadratic Wasserstein distance.

\subsection{Mean difference bound}\label{sec:bias-bound-MD}
In practice, it is common to compare the means of several ``summaries'' $r_j(X),j\in\mathcal J$ of the controls $X$, i.e., $\Exp_{G^d}[r_j(X)|D=d]$, between the treated and untreated---rather than their distributions $G^d$'s---to assess covariate balance. Practices span from simply comparing the means of each control---which amounts to setting $\mathcal J=\{1,\dots,p\}$ and $r_j(X)=X_j$---to comparing those of more compact summaries, such as the propensity scores $\Pr[D=1|X]$. \citet[Section 14.2]{ImbensRubin2015}, in particular, recommend using the normalized difference in the means.

The bound of the final form, which we call the ``mean difference bound,'' explains how the mean differences in summaries are informative for the bias of the regression estimand. Here, we limit our considerations to the cases where both $G^d$'s are finitely supported, i.e., when $\mathcal X^d$'s are finite.

Let $r(X)\equiv (r_j(X))_{j\in\mathcal J}$ denote a $|\mathcal J|$-dimensional vector which collects the summaries in column. Fix a non-negative number $L\in[0,\infty]$. Consider a pair of convex optimization problems, defined for each $\sigma=\pm 1$ in the following manner:
\begin{align}
	\minimize_{\zeta_\sigma\in\mathbb R^{\mathcal J},\xi_\sigma^1\in\mathbb R_+^{\mathcal X^1},\xi_\sigma^0\in\mathbb R_+^{\mathcal X^0}} \|\zeta_\sigma\|^2 + L\sum_{d\in\{0,1\}}\sum_{x\in\mathcal X^d}\xi_\sigma^d(x)
\end{align}
subject to the inequality constraints
\begin{equation}
\begin{aligned}
	\sigma(f(x,0)-l_\theta^s(x,0)) &\leq \sigma\cdot r(x)'\zeta_\sigma  + \xi_\sigma^1(x) \text{ for every }x\in\mathcal X^1\text{ and}\\
	\sigma(f(x,0)-l_\theta^s(x,0)) &\geq \sigma\cdot r(x)'\zeta_\sigma -\xi_\sigma^0(x) \text{ for every }x\in\mathcal X^0,
\end{aligned}
\end{equation}
where $\mathbb R_+$ denotes the set of all non-negative real numbers. Let $(\zeta_\sigma^\star,\xi_\sigma^{1\star},\xi_\sigma^{0\star})$ denote a solution for each problem.

The values of $\xi_1^{d\star}(x)$'s reflect how well the graphs
\begin{align*}
	\Gr_{f(\cdot,0)-l_\theta^s(\cdot,0)}^{\mathcal X^d}\equiv \{(x',f(x,0)-l_\theta^s(x,0))'\}_{x\in\mathcal X^d}
	\subseteq \mathcal X^d\times \mathbb R	
\end{align*}
of the function $f(\cdot,0)-l_\theta^s(\cdot,0)$ on $\mathcal X^d$'s are separated by the functional
\begin{align*}
	(x',y)'\in\cup_{d\in\{0,1\}}\mathcal X^d\times \mathbb R\mapsto r(x)'\zeta_1^\star-y.
\end{align*}
If $\xi_1^{d\star}(x)$'s are all zero, then the level set $\{(x',y)'\in\cup_{d\in\{0,1\}}\mathcal X^d\times \mathbb R:r(x)'\zeta_1^\star-y=0\}$ sharply separates $\Gr_{f(\cdot,0)-l_\theta^s(\cdot,0)}^{\mathcal X^d}$'s such that $\Gr_{f(\cdot,0)-l_\theta^s(\cdot,0)}^{\mathcal X^1}$ is located below it. However, when some of them are positive, the functional achieves only a noisy separation, which allows for the corresponding errors, i.e., $\xi_1^{d\star}(x)$'s. 

A similar discussion can be made for $\sigma=-1$, with one critical difference being that $\Gr_{f(\cdot,0)-l_\theta^s(\cdot,0)}^{\mathcal X^1}$ now lies \emph{above} the level set $\{(x',y)'\in\cup_{d\in\{0,1\}}\mathcal X^d\times \mathbb R:r(x)'\zeta_{-1}^\star-y=0\}$ of the functional 
\begin{align*}
	(x',y)'\in\cup_{d\in\{0,1\}}\mathcal X^d\times \mathbb R\mapsto r(x)'\zeta_{-1}^\star - y
\end{align*}
provided it sharply separates the graphs. Thus, taken together, we may interpret the supremum $|\zeta_1^\star|\vee|\zeta_{-1}^\star|$ of the absolute values $|\zeta_\sigma^{d\star}|$ as the degree of misspecification; if both $r(\cdot)'\zeta_\sigma^\star$'s attain sharp separation, it follows that 
\begin{align}
	|f(\cdot,0)-l_\theta^s(\cdot,0)|\leq (|\zeta_1^\star|\vee|\zeta_{-1}^\star|)'|r(\cdot)|.
\end{align}

Now, define
\begin{align}
	m_\sigma^{\MD}&\equiv |\zeta_\sigma^\star|,\text{ where $\sigma=\pm 1$, and} \label{eq:m-sigma}\\
	c^{\MD}&\equiv |\Exp_{G^1}[r(\cdot)]-\Exp_{G^0}[r(\cdot)]|;
\end{align}
$c^{\MD}$ is a $|\mathcal J|$-dimensional vector collecting the absolute values of the mean differences of the summaries $r_j(X)$ between the treated and untreated.\footnote{$\Exp_{G^d}[r(\cdot)]$ is shorthand for $\Exp[r(X)|X=d]$.} Let $m^{\MD}\equiv m_1^{\MD}\vee m_{-1}^{\MD}$.

\begin{cor}\label{thm:bias-bound-MD}
	Suppose that the conditions of Proposition \ref{thm:representation} are satisfied. 
	Then,
	\begin{align}
		-m_{-1}^{\MD\prime}c^{\MD}-\sum_{d\in\{0,1\}}\Exp_{G^d}[\xi_{-1}^{d\star}(\cdot)]\leq \beta - \tau\leq m_1^{\MD\prime}c^{\MD} + \sum_{d\in\{0,1\}}\Exp_{G^d}[\xi_1^{d\star}(\cdot)].\label{eq:bias-bound-MD}
	\end{align}
	In particular, $|\beta-\tau|\leq m^{\MD\prime}c^{\MD} + \sum_{\sigma=\pm 1}\sum_{d\in\{0,1\}}\Exp_{G^d}[\xi_\sigma^{d\star}(\cdot)]$.
\end{cor}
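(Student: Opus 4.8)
The plan is to turn the finitely-supported version of the bias representation into a difference of two expectations and then feed in the feasibility of the optimal solutions of the two linear programs. First I would specialize Proposition \ref{thm:representation} to the present finite-support setting: since $(g^1(x)-g^0(x))\mu(\mathrm dx)=(G^1-G^0)(\mathrm dx)$ and both $G^d$ are supported on finite sets, the integral in \eqref{eq:representation} collapses to a finite sum, which I would split according to the support of each density and rewrite as
\begin{align*}
	\beta-\tau=\Exp_{G^1}[f(X,0)-l_\theta^s(X,0)]-\Exp_{G^0}[f(X,0)-l_\theta^s(X,0)].
\end{align*}
Writing $\phi(x)\equiv f(x,0)-l_\theta^s(x,0)$, the target is thus a two-sided bound on $\Exp_{G^1}[\phi]-\Exp_{G^0}[\phi]$.

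For the upper bound I would invoke feasibility of $(\zeta_1^\star,\xi_1^{1\star},\xi_1^{0\star})$ for the $\sigma=+1$ program: on $\mathcal X^1$ we have $\phi(x)\leq r(x)'\zeta_1^\star+\xi_1^{1\star}(x)$, and on $\mathcal X^0$ we have $\phi(x)\geq r(x)'\zeta_1^\star-\xi_1^{0\star}(x)$. Because each $g^d$ is a nonnegative density, integrating the first inequality against $G^1$ and the second against $G^0$ (and negating the latter) preserves the inequalities and yields
\begin{align*}
	\beta-\tau\leq (\Exp_{G^1}[r(X)]-\Exp_{G^0}[r(X)])'\zeta_1^\star+\sum_{d\in\{0,1\}}\Exp_{G^d}[\xi_1^{d\star}(\cdot)].
\end{align*}
The linear form in $\zeta_1^\star$ is then bounded coordinatewise by $\sum_{j\in\mathcal J}|\Exp_{G^1}[r_j]-\Exp_{G^0}[r_j]|\,|\zeta_{1,j}^\star|=m_1^{\MD\prime}c^{\MD}$, using the definitions \eqref{eq:m-sigma} of $m_1^{\MD}=|\zeta_1^\star|$ and of $c^{\MD}$; this delivers the stated right-hand inequality.

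The lower bound is the mirror image, obtained by repeating the argument with the $\sigma=-1$ program, whose constraints reverse the roles of the two supports: feasibility of $(\zeta_{-1}^\star,\xi_{-1}^{1\star},\xi_{-1}^{0\star})$ gives $\phi(x)\geq r(x)'\zeta_{-1}^\star-\xi_{-1}^{1\star}(x)$ on $\mathcal X^1$ and $\phi(x)\leq r(x)'\zeta_{-1}^\star+\xi_{-1}^{0\star}(x)$ on $\mathcal X^0$, so integrating against the respective densities lower-bounds $\beta-\tau$ by $(\Exp_{G^1}[r]-\Exp_{G^0}[r])'\zeta_{-1}^\star-\sum_{d}\Exp_{G^d}[\xi_{-1}^{d\star}(\cdot)]$, and the coordinatewise inequality in the opposite direction replaces the linear form by $-m_{-1}^{\MD\prime}c^{\MD}$. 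For the ``in particular'' statement I would combine the two one-sided bounds: since $m^{\MD}=m_1^{\MD}\vee m_{-1}^{\MD}$ dominates each $m_\sigma^{\MD}$ coordinatewise while $c^{\MD}\geq 0$, and since every $\xi_\sigma^{d\star}\geq 0$, each endpoint is dominated in absolute value by $m^{\MD\prime}c^{\MD}+\sum_{\sigma=\pm1}\sum_{d}\Exp_{G^d}[\xi_\sigma^{d\star}(\cdot)]$, yielding the claimed bound on $|\beta-\tau|$.

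The main obstacle is not analytic—everything reduces to feasibility plus nonnegativity of the densities—so the only real work is sign-bookkeeping. The delicate point is that the two programs impose constraints in opposite directions, and one must correctly pair ``upper-bound $\phi$ on the support entering with a $+$ sign'' with ``lower-bound $\phi$ on the support entering with a $-$ sign'' (and vice versa for $\sigma=-1$); a single flipped inequality collapses the argument. It is also worth emphasizing that only feasibility, not optimality, of the LP solutions is used, so the inequalities hold for any feasible triple and optimality merely sharpens the bound.
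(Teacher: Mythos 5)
Your proof is correct and follows essentially the same route as the paper's: apply the bias representation of Proposition \ref{thm:representation}, integrate the feasibility constraints of the $\sigma=\pm 1$ programs against $G^1$ and $G^0$ respectively, and bound the resulting linear form in $\zeta_\sigma^\star$ coordinatewise by $m_\sigma^{\MD\prime}c^{\MD}$. Your closing observation that only feasibility, not optimality, of the LP solutions is used is exactly the remark the paper makes immediately after the corollary.
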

The presence of the $\sum_{d\in\{0,1\}}\Exp_{G^d}[\xi_\sigma^{d\star}(\cdot)]$'s in the previous equations formalizes the notion that marginal alignment of the conditional distributions $G^d$, i.e., a small value of $c^{\MD}$, does not ensure a small bias. The informativeness of $c^{\MD}$ hinges on the ability of the summaries to separate $\Gr_{f(\cdot,0)-l_\theta^s(\cdot,0)}^{\mathcal X^d}$'s, i.e., small values of the slacks $\xi_\sigma^{d\star}$.

In fact, the optimalities of $(\zeta_\sigma^\star,\xi_\sigma^{1\star},\xi_\sigma^{0\star})$'s are not used in the proof of Corollary \ref{thm:bias-bound-MD}. Their use is limited to selecting an element from the constraint set. The result applies for an arbitrary feasible one.

\paragraph{Example 1 (Cont')} Equation \eqref{eq:bias-bound-MD} operates in our example in the following manner. Consider a case where a researcher uses a constant, i.e., $r_1(X)=1$, and the control $X$ itself, i.e., $r_2(X)=X$, as her summaries; hence, $\mathcal J=\{1,2\}$. Set $L=\infty$.

Since $x\mapsto f(x,0)-l^A(x,0)=-2p + x$ is a linear combination of the two summary functions, $\xi_{\sigma,A}^d=0$'s are feasible. Furthermore, since $L=\infty$, $\xi_{\sigma,A}^{d\star}=0$. It follows that $\zeta_{1,A}^\star=\zeta_{-1,A}^\star=(-2p,1)'$ and therefore $m_{1,A}^{\MD}=m_{-1,A}^{\MD}=(2p,1)'$. Similarly, the linearity of  $x\mapsto f(x,0)-l^B(x,0)=p-(1/2)x$ implies $\xi_{\sigma,B}^{d\star}=0$ and $m_{1,B}^{\MD}=m_{-1,B}^{\MD}=(p,1/2)'$.

Given
\begin{align*}
	c^{\MD}=(|1-1|,|(1/2-p)/(1/2)-p/(1/2)|)'=(0,|1-4p|)',
\end{align*}
equation \eqref{eq:bias-bound-MD} takes the forms of
\begin{align*}
	&\beta_A-\tau = 1-4p\\
	&\begin{cases}
		\leq m_{1,A}^{\MD\prime}c^{\MD} +\sum_{d\in\{0,1\}}\Exp_{G^d}[\xi_{1,A}^{d\star}(\cdot)]=(2p,1)'(0,|1-4p|) + 0 = |1-4p|\\
		\geq - m_{-1,A}^{\MD\prime}c^{\MD} - \sum_{d\in\{0,1\}}\Exp_{G^d}[\xi_{-1,A}^{d\star}(\cdot)]=-(2p,1)'(0,|1-4p|)-0=-|1-4p| 
	\end{cases}
	\intertext{and}
	&\beta_B-\tau=-1/2+2p\\
	&\begin{cases}
		\leq m_{1,B}^{\MD\prime}c^{\MD} + \sum_{d\in\{0,1\}}\Exp_{G^d}[\xi_{1,B}^{d\star}(\cdot)]=(p,1/2)'(0,|1-4p|) + 0 = (1/2)|1-4p|\\
		\geq - m_{-1,B}^{\MD\prime}c^{\MD} - \sum_{d\in\{0,1\}}\Exp_{G^d}[\xi_{-1,B}^{d\star}(\cdot)]=-(p,1/2)'(0,|1-4p|)-0=-(1/2)|1-4p|.
	\end{cases}\tag*{\qed}
\end{align*}

\section{Role of design phase} \label{sec:role-design-phase}

The bias bounds in Section \ref{sec:bias-bounds} clarify the role of \emph{population} covariate balance in linear regression: Small $c$ can keep the bias small even under severe misspecification of the regression model. This section clarifies that \emph{sample} covariate balance serves the same role as population balance, except that it controls the bias of the regression estimand to which the least squares estimator \emph{conditionally} converges to.

The outline of this section is as follows: Section \ref{sec:adjustment} formalizes the design phase as a process of estimand adjustment via selecting a subsample. Section \ref{sec:assessment} justifies the design phase by showing that the covariate balance of a selected subsample can serve as a useful tool for assessing the bias of the adjusted estimand. Section \ref{sec:inference} concerns statistical inference for the adjusted estimand. We describe how one can robustify an existing confidence interval against model misspecification using imbalance measures.

\subsection{Estimand adjustment} \label{sec:adjustment}
Let $\mathcal S^*\subseteq \mathcal S$ be the subsample constructed in the design phase. Let $\hat\theta^*=(\hat\alpha^*,\hat\beta^*,\hat\gamma^{*\prime})'$ denote the least squares estimator obtained by regressing $Y_i$ on $Z_i$ within $\mathcal S^*$, i.e., 
\begin{align}
	\hat\theta^*\equiv \biggl(\frac{1}{|\mathcal S^*|}\sum_{i\in \mathcal S^*}Z_iZ_i'\biggr)^{-1}\biggl(\frac{1}{|\mathcal S^*|}\sum_{i\in \mathcal S^*}Z_iY_i\biggr),
\end{align}
where we recall that $Z_i\equiv (1,D_i,s(X_i)')'$.

Define $\eG^*(\mathrm dx\mathrm dd)\equiv |\mathcal S^*|^{-1}\sum_{i\in\mathcal S^*}\delta_{(X_i',D_i)'}(\mathrm dx\mathrm dd)$ to be the empirical distribution of $(X_i',D_i)'$ within $\mathcal S^*$. Also, for each $d\in\{0,1\}$, define $\eG^{*d}(\mathrm dx)\equiv |\mathcal S^{*d}|^{-1}\sum_{i\in\mathcal S^{*d}}\delta_{X_i}(\mathrm dx)$, where $\mathcal S^{*d}\equiv \{i\in\mathcal S^*:D_i=d\}$, to be the empirical conditional distribution of $X$ given $D=d$ within $\mathcal S^*$. Let $\theta_{\eG^*,F}=(\alpha_{\eG^*,F},\beta_{\eG^*,F},\gamma_{\eG^*,F}')'$ denote the least squares estimator obtained from the hypothetical regression of $\Exp_F[Y|X=X_i,D=D_i]$ on $Z_i$, i.e., 
\begin{align}
	\theta_{\eG^*,F}\equiv \biggl(\frac{1}{|\mathcal S^*|}\sum_{i\in\mathcal S^*}Z_iZ_i'\biggr)^{-1}\biggl(\frac{1}{|\mathcal S^*|}\sum_{i\in\mathcal S^*}Z_i\Exp_F[Y|X=X_i,D=D_i]\biggr),
\end{align}
whose subscript is appropriate, since, recalling the notation introduced in Section \ref{sec:framework},
\begin{align*}
	\theta_{\eG^*,F}=(\Exp_{\eG^*}[ZZ'])^{-1}\Exp_{\eG^*}[Z\Exp_F[Y|X,D]].
\end{align*}

We make the following assumptions.
\begin{assume} \label{ass:function-subsample}
	$\mathcal S^*$ is a function of $\{(X_i',D_i)'\}_{i\in\mathcal S}$, i.e., $\mathcal S^*=\mathcal S^*(\{(X_i',D_i)'\}_{i\in\mathcal S})$.
\end{assume}
In the design phase, a subsample must be constructed solely based on the controls and treatment status. A researcher may use estimated propensity scores, for instance, as they do not depend on the outcomes. By Assumption \ref{ass:function-subsample}, $\eG^*$ and $\eG^{*d}$ are functions of $\{(X_i',D_i)'\}_{i\in\mathcal S}$.

\begin{assume}
	$s$ is a function of $\{(X_i',D_i)\}_{i\in\mathcal S}$, i.e., $s=s(\{(X_i',D_i)'\}_{i\in\mathcal S})$.	
\end{assume}

The covariate function $s$ may depend on $\{(X_i',D_i)'\}_{i\in\mathcal S}$. This allows the estimated propensity scores to be included as covariates.

\begin{assume} \label{ass:function-F}
	$F$ is a function of $\{(X_i',D_i)'\}_{i\in\mathcal S}$, i.e., $F=F(\{(X_i',D_i)'\}_{i\in\mathcal S})$.
\end{assume}

We allow the conditional distribution $\mathcal L_{Y|X,D}$ itself to vary with $\{(X_i',D_i)'\}_{i\in\mathcal S}$. By Assumption \ref{ass:function-F}, $Y_i,i\in\mathcal S$ can be considered as a random triangular array. 

Fix vectors $(x_i',d_i)',i\in\mathcal S$ in $\sqcup_{d\in\{0,1\}}(\mathcal X^d\times \{d\})$, which models the realized values of $(X_i',D_i)',i\in\mathcal S$ encountered by the researcher. Let $\mathbb X^{*d}$ denote the support of $\eG^{*d}$, i.e., $\mathbb X^{*d}=\mathbb X^{*d}(\{(X_i',D_i)'\}_{i\in\mathcal S})\equiv\{X_i:i\in\mathcal S^{*d}\}$. Let $U\equiv Y-\Exp_F[Y|X,D]$.

\begin{assume} \label{ass:subsample-large}
	$\lim_{|\mathcal S|\rightarrow \infty}|\mathcal S^*|=\infty$ given $\{(X_i',D_i)'\}_{i\in\mathcal S}=\{(x_i',d_i)'\}_{i\in\mathcal S}$.
\end{assume}
\begin{assume} \label{ass:bounded}
	For each $d\in\{0,1\}$, $\Exp_F[\|ZU\|^2|X=\cdot,D=d]$ is bounded on $\mathbb X^{*d}$ by a fixed constant given $\{(X_i',D_i)'\}_{i\in\mathcal S}=\{(x_i',d_i)'\}_{i\in\mathcal S}$.
\end{assume}

Define $\Gamma^*=\Gamma^*(\{(X_i',D_i)'\}_{i\in\mathcal S})\equiv|\mathcal S^*|^{-1}\sum_{i\in\mathcal S^*}Z_iZ_i$.
\begin{assume} \label{ass:design-matrix-inverse}
	For a symmetric positive semi-definite matrix $A$, let $\lambda_{\min}[A]$ denote its smallest eigenvalue. For some fixed positive constant $\underline\lambda>0$, 
\begin{align}
	\liminf_{|\mathcal S|\rightarrow\infty}\lambda_{\min}[\Gamma^*]\geq \underline\lambda
\end{align}
given $\{(X_i',D_i)'\}_{i\in\mathcal S}=\{(x_i',d_i)'\}_{i\in\mathcal S}$.
\end{assume}

\begin{prop} \label{thm:consistency}
Under Assumptions \ref{ass:function-subsample}--\ref{ass:design-matrix-inverse}, $\hat\theta^*$ converges in conditional probability to $\theta_{\eG^*,F}$ given $\{(X_i',D_i)'\}_{i\in\mathcal S}=\{(x_i',d_i)'\}_{i\in\mathcal S}$, i.e.,
\begin{align}
	\Pr_F[\|\hat\theta^* - \theta_{\eG^*,F}\|>\eta|\{(X_i',D_i)'\}_{i\in\mathcal S}]\rightarrow 0,\forall \eta>0\label{eq:consistency}
\end{align}
given $\{(X_i',D_i)'\}_{i\in\mathcal S}=\{(x_i',d_i)'\}_{i\in\mathcal S}$, as $|\mathcal S|$ tends to infinity.
\end{prop}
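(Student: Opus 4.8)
The plan is to exploit that $\hat\theta^*$ and $\theta_{\eG^*,F}$ are least squares coefficients from regressing $Y_i$ and $\Exp_F[Y|X=X_i,D=D_i]$, respectively, on the \emph{same} regressors $Z_i$ over the \emph{same} index set $\mathcal S^*$. Hence they share the common factor $(\Gamma^*)^{-1}$, and subtracting yields the clean decomposition
\begin{align*}
\hat\theta^* - \theta_{\eG^*,F} = (\Gamma^*)^{-1}\,\frac{1}{|\mathcal S^*|}\sum_{i\in\mathcal S^*}Z_i U_i,\qquad U_i\equiv Y_i-\Exp_F[Y|X=X_i,D=D_i].
\end{align*}
By Assumptions \ref{ass:function-subsample} and \ref{ass:function-F}, together with the assumption that $s$ is a function of the design, the set $\mathcal S^*$, the regressors $Z_i$, and the matrix $\Gamma^*$ are all measurable with respect to the conditioning information $\{(X_i',D_i)'\}_{i\in\mathcal S}$; conditionally they therefore act as constants, and the only remaining randomness lives in the residuals $U_i$.

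I would then control the two factors separately. For the first, the operator norm of $(\Gamma^*)^{-1}$ equals $1/\lambda_{\min}[\Gamma^*]$, and since Assumption \ref{ass:design-matrix-inverse} gives $\liminf_{|\mathcal S|\to\infty}\lambda_{\min}[\Gamma^*]\ge\underline\lambda>0$, for all large enough $|\mathcal S|$ this norm is bounded by, say, $2/\underline\lambda$ along the conditioning sequence. For the second, write $W^*\equiv |\mathcal S^*|^{-1}\sum_{i\in\mathcal S^*}Z_iU_i$. The defining property $\Exp_F[U|X,D]=0$ gives $\Exp_F[Z_iU_i\mid\{(X_j',D_j)'\}_{j\in\mathcal S}]=\mathbf 0$, so $W^*$ has conditional mean zero. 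Under the triangular-array interpretation licensed by Assumption \ref{ass:function-F}, the $U_i$ are conditionally independent across $i$ given the design, so all cross terms vanish and
\begin{align*}
\Exp_F\bigl[\|W^*\|^2\mid\{(X_j',D_j)'\}_{j\in\mathcal S}\bigr]=\frac{1}{|\mathcal S^*|^2}\sum_{i\in\mathcal S^*}\Exp_F\bigl[\|Z_iU_i\|^2\mid X_i,D_i\bigr].
\end{align*}
Bounding each summand by the fixed constant $C$ supplied by Assumption \ref{ass:bounded} (valid because $X_i\in\mathbb X^{*d}$ for $i\in\mathcal S^{*d}$) shows this is at most $C/|\mathcal S^*|$, which tends to zero by Assumption \ref{ass:subsample-large}. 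A conditional Markov inequality then gives $\Pr_F[\|W^*\|>\delta\mid\{(X_j',D_j)'\}_{j\in\mathcal S}]\le C/(\delta^2|\mathcal S^*|)\to0$ for every $\delta>0$, i.e.\ $W^*\to\mathbf 0$ in conditional probability.

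Combining the two pieces via $\|\hat\theta^*-\theta_{\eG^*,F}\|\le (1/\lambda_{\min}[\Gamma^*])\,\|W^*\|$ finishes the argument: the prefactor is eventually bounded and $\|W^*\|$ vanishes in conditional probability, so the product does too, which is exactly \eqref{eq:consistency}. The main obstacle, and where the hypotheses do their real work, is the bookkeeping of the conditioning: one must argue carefully that fixing $\{(X_i',D_i)'\}_{i\in\mathcal S}$ renders $\mathcal S^*$, $Z_i$, and $\Gamma^*$ nonrandom (Assumptions \ref{ass:function-subsample}--\ref{ass:function-F}) and that the residuals remain conditionally mean-zero and uncorrelated across $i$ in the triangular-array design, since otherwise the variance computation would pick up nonvanishing cross terms. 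A minor secondary point is that Assumption \ref{ass:design-matrix-inverse} is stated as a $\liminf$, so the bound on $(\Gamma^*)^{-1}$ holds only for $|\mathcal S|$ large; this is harmless for a limiting statement but should be handled with an explicit ``eventually'' along the sequence.
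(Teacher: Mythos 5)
Your proposal is correct and follows essentially the same route as the paper's proof: the identical decomposition $\hat\theta^*-\theta_{\eG^*,F}=(\Gamma^*)^{-1}|\mathcal S^*|^{-1}\sum_{i\in\mathcal S^*}Z_iU_i$, the same conditional-variance computation in which cross terms vanish by conditional independence of the $U_i$ given the design, the same $O(1/|\mathcal S^*|)$ bound from Assumptions \ref{ass:bounded} and \ref{ass:subsample-large}, and the same control of $(\Gamma^*)^{-1}$ via Assumption \ref{ass:design-matrix-inverse}. Your explicit ``eventually'' caveat about the $\liminf$ in Assumption \ref{ass:design-matrix-inverse} is a small point of care that the paper leaves implicit.
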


The type of convergence used in Proposition \ref{thm:consistency} is conditional. It asserts that, when $|\mathcal S|$ is large, if a researcher confronts multiple samples with the same realizations of $(X_i',D_i)'$'s, then $\hat\beta^*$---in particular---must exhibit little variation around $\beta_{\eG^*,F}$.

It relates to Theorem 1 of \citet{AbadieImbensZheng2014}, who show the unconditional version of equation \eqref{eq:consistency} in the case where $\mathcal S^*=\mathcal S$.\footnote{However, it should be noted that their result is not confined to linear regression.} Accordingly, this result can be viewed as an extension of their full-sample statement to arbitrary subsamples.

However, Proposition \ref{thm:consistency} provides a useful formalization of subsample construction in the design phase: It can be viewed as adjusting the regression estimand from $\beta_{\eG,F}$ to $\beta_{\eG^*,F}$, possibly under the expectation that the latter is closer to the corresponding parameter of interest, here assumed to be
\begin{align}
	\tau_{\eG^{*1},F}\equiv\frac{1}{|\mathcal S^{*1}|}\sum_{i\in\mathcal S^{*1}}(\Exp_F[Y|X=X_i,D=1]-\Exp_F[Y|X=X_i,D=0]).
\end{align}
Note that $\tau_{\eG^{*1},F}$ identifies the average treatment effect on the treated units within $\mathcal S^*$, under the conditional independence assumptions on $Y_i(0)$'s, and also that
\begin{align*}
	\tau_{\eG^{*1},F}= \Exp_{\eG^{*1}}[\Exp_F[Y|X,D=1]-\Exp_F[Y|X,D=0]|D=1],
\end{align*}
which explains the subscript $(\eG^{*1},F)$.


\subsection{Estimand assessment via covariate balance}\label{sec:assessment}
Pursuing different subsample construction strategies $\mathcal S\mapsto\mathcal S^*$---which result in different $\eG^*$'s---leads to different estimands $\beta_{\eG^*,F}$. In the design phase, a researcher pursues a subsample $\mathcal S^*$ with small covariate imbalance. Thus, the design phase can be justified if covariate imbalance can reveal a desirable attribute of $\beta_{\eG^*,F}$. Here, we elaborate on how imbalance measurements provide information about the robustness of $\beta_{\eG^*,F}$---or the sensitivity of its bias---to the misspecification of the researcher's model 
\begin{align*}
	l_{\theta_{\eG^*,F}}^s(\cdot,0)=(1,0,s(\cdot)')'\theta_{\eG^*,F}
\end{align*}
with respect to the true conditional expectation function $f(\cdot,0)=\Exp_F[Y|X=\cdot,D=0]$.

Assume that the conditions of either Corollary 
\ref{thm:bias-bound-KS}, 
\ref{thm:bias-bound-MKW}, or \ref{thm:bias-bound-MD} hold under $(\eG^*,F)$; they do, for instance, given $\{(X_i',D_i)'\}_{i\in\mathcal S}=\{(x_i',d_i)'\}_{i\in\mathcal S}$ under Assumption \ref{ass:design-matrix-inverse}. Then, 
\begin{align}
	|\beta_{\eG^*,F} - \tau_{\eG^{*1},F}|\leq m_{\eG^*,F}c_{\eG^*},\label{eq:bias-bound-subsample}
\end{align}
whose right-hand side is either 
Kolmogorov-Smirnov,
Monge-Kantorovich/Wasserstein, or mean difference bound,\footnote{
Clearly, for mean difference bound to take the form of equation \eqref{eq:bias-bound-subsample}, one requires extra conditions that ensure separabilities. Here, we limit ourselves to cases where the slacks are minimal.}
with the relevant superscripts suppressed. Although the only difference between inequalities \eqref{eq:bias-bound} and \eqref{eq:bias-bound-subsample} is their dependence on $G$ versus $\eG^*$, the practical significance of the latter is substantially higher, as $c_{\eG^*}$---unlike $c_G$---can be \emph{computed} from $\mathcal S^*$. One can now use a \emph{statistic} to constructively discuss the bias of the estimand $\beta_{\eG^*,F}$ in \emph{finite} samples. For instance, a researcher can argue, if $|\mathcal S^*|$ is large enough, that the bias of her estimate $\hat\beta^*$ is at most $\epsilon$ by appealing to the small Kolmogorov-Smirnov distance $c_{\eG^*}^{\KS}$ between $\eG^{*d}$'s. Unless the total variation $m_{\eG^*,F}^{\KS}$ of the misspecification function $f(\cdot,0)-l_{\theta_{\eG^*,F}}^s(\cdot,0)$ on $\cup_{d\in\{0,1\}}\mathbb X^{*d}$ exceeds $\epsilon/c_{\eG^*}^{\KS}$---which is large---the bias of $\beta_{\eG^*,F}$ must be no greater than $\epsilon$. In other words, bias bound \eqref{eq:bias-bound-subsample} provides a framework to \emph{translate} various evaluations $c_{\eG^*}$ of the quality of a ``design'' $\mathbb G^*$ into the performance of the least squares estimator $\hat\beta^*$.

As another example, mean difference bound enables a researcher to associate the numbers displayed in her balance tables---which compare the means $\Exp_{\eG^{*d}}[r_j(X)|D=d]$ of summaries $r_j(X),j\in\mathcal J$---directly to her main regression tables. To be more specific, she can refer to Corollary \ref{thm:bias-bound-MD} to articulate an argument that, if the true state $F$ is such that the graphs
\begin{align*}
	\Gr_{f(\cdot,0)-l_{\theta_{\eG^*,F}}^s(\cdot,0)}^{\mathbb X^{*d}}\equiv \{(x',f(x,0)-l_{\theta_{\eG^*,F}}^s(x,0))\}_{x\in\mathbb X^{*d}}
\end{align*}
of $f(\cdot,0)-l_{\theta_{\eG^*,F}}^s(\cdot,0)$ are well-separated by $r(\cdot)'\zeta_{\sigma,\eG^*,F}^{\star}$'s,\footnote{$\zeta_{\sigma,\eG^*,F}^\star$ refers to $\zeta_\sigma^\star$ defined in Section \ref{sec:bias-bound-MD}, except that the reference population is $(\eG^*,F)$ instead of $(G,F)$. The same applies to $m_{\eG^*,F}^{\MD}$.}  
	the small values $c_{\eG^*}^{\MD}$ observed from her balance tables ensure a small $\epsilon$-bias of $\beta_{\eG^*,F}$ even when $|f(\cdot,0)-l_{\theta_{\eG^*,F}}^s(\cdot,0)|$ can be covered only by $|r(\cdot)|'m_{\eG^*,F}^{\MD}$ with a large $m_{\eG^*,F}^{\MD}$.\footnote{as long as $m_{\eG^*,F}^{\MD\prime}c_{\eG^*}^{\MD}$ is smaller than $\epsilon$.} We detail in Section \ref{sec:application} how one could form a concrete statement about the bias of $\beta_{\eG^*,F}$ using summaries.

Note that $c_{\eG^*}$ decides the set $\mathcal F^\epsilon$ of states $F$ at which $\beta_{\eG^*,F}$ secures a bias less than $\epsilon$, and that the smaller value of $c_{\eG^*}$ implies larger $\mathcal F^\epsilon$. However, we highlight that the empirical situations we consider neither require a researcher's ability to fully process nor enumerate the members of $\mathcal F^{\epsilon}$. $\mathcal F^\epsilon$ can serve purely as a diagnostic tool to check whether a state $f(\cdot,0)$---posed against the researcher's model $l_{\theta_{\eG^*,F}}^s(\cdot,0)$---may carry a risk of bias larger than $\epsilon$. In Section \ref{sec:application}, we illustrate how the readers of a researcher's report may utilize $\mathcal F^\epsilon$ to gauge the robustness of her conclusion.

\paragraph{Example 2} We illustrate our point with a simple setup. Let $Y_i=X_i+U_i$, where $X_i$ is conditionally normal given $D_i=d$ such that $\mathcal L_{X_i|D_i=d}=\mathcal N(d,1)$,\footnote{$\mathcal N(\theta,\Sigma)$ denotes the normal distribution with mean $\theta$ and covariance matrix $\Sigma$.} and $U_i$ is an arbitrary random variable with a finite first moment, $\Exp[U_i]=0$, and is independent of $(X_i,D_i)$. A sample $\mathcal S$ with 24 units is given, i.e., $|\mathcal S|=24$, where the half of them, T1, $\dots$, T12, are treated, while the other half, U1, $\dots$, U12, are not. Table \ref{tab:empirical-conditional-distributions} shows the empirical conditional distributions $\mathbb G^d$ of $X_i$ given $D_i=d$ within $\mathcal S$.

\begin{table}[ht]
\bigskip\centering\small
\begin{tabular}{lcccccccccccc}
  \toprule 
  & U1 & U2 & U3 & T1 & U4 & T2 & U5 & U6 & T3 & T4 & U7 & T5\\
  \midrule 
 $\mathbb G^1$ &  &  &  & -1.35 &  & -0.91 &  &  & 0.16 & 0.33 &  & 0.44 \\ 
  $\mathbb G^0$ & -2.32 & -1.96 & -1.36 &  & -0.91 &  & 0.11 & 0.14 &  &  & 0.42 &  \\ 
   \bottomrule \\
 \toprule & U8 & T6 & T7 & U9 & U10 & T8 & U11 & U12 & T9 & T10 & T11 & T12\\
\midrule 
$\mathbb G^1$ &  & 0.70 & 0.75 &  &  & 0.92 &  &  & 1.95 & 2.16 & 2.22 & 3.19 \\ 
  $\mathbb G^0$ & 0.55 &  &  & 0.76 & 0.83 &  & 1.02 & 1.52 &  &  &  &  \\ 
   \bottomrule 
\end{tabular}
\caption{Empirical conditional distributions, $\eG^d$'s}
\label{tab:empirical-conditional-distributions}
\end{table}

We set up the regression model as $Y = \alpha + \beta D + E$, which is misspecified since it omits the interaction term between $X$ and $D$. Regressing $\Exp_F[Y|X=X_i,D=D_i]$ on $D_i$ within $\mathcal S$ produces the mean difference in $\Exp_F[Y|X=X_i,D=D_i]$ between treated and untreated units, i.e., $\beta_{\eG,F}=\Exp_{\eG^1,F}[Y|D=1]-\Exp_{\eG^0,F}[Y|D=0]$.  

The Kolmogorov-Smirnov distance $c_{\eG}^{\KS}$ between $\mathbb G^d$'s is $1/3$. This indicates that, if we wish to be confident of $\tau_{\eG^1,F}$ using $\beta_{\eG,F}$ at the precision of $\epsilon=0.5$, for example, the degree of model misspecification, i.e., $m_{\eG,F}^{\KS}$, should be less than 3/2.\footnote{Here, $\alpha_{\eG,F}$ is the mean of $\eG^0$, which is $-0.100$. Thus, $f(x,0)-l_{\theta_{\mathbb G,F}}^0(x,0)=x-\alpha_{\mathbb G,F}=x+0.100$, yielding $m_{\eG,F}^{\KS}=|(3.19+0.100)-(-2.32+0.100)|=5.51$, which is larger than $3/2$.}

Now, consider a case where we instead construct a subsample $\mathcal S^*$ consisting of U3 and T1, U4 and T2, U6 and T3, T5 and U8, and T7 and U9. Then, the Kolmogorov-Smirnov distance improves to $c_{\eG^*}^{\KS}=1/6$, and the upper limit of model misspecification $m_{\eG^*,F}^{\KS}$ is relaxed to 3.\footnote{$m_{\eG^*,F}^{\KS}=|(0.76+0.100)-(-1.36+0.100)|=2.12$; it is smaller than 3.}

We can perform similar exercises for the other bounds: 
The Monge-Kantorovich or Wasserstein distance $c_{\eG}^{\MKW}$ based on $\mathcal S$ is 0.981, while that based on $\mathcal S^*$, i.e., $c_{\eG^*}^{\MKW}$, is 0.039.
If we use as summaries a constant $1$ and $X$, the mean differences $c_{\eG}^{\MD}$ based on $\mathcal S$ are $(|1-1|,|0.881-(-0.100)|)'=(0,0.981)'$, while those based on $\mathcal S^*$, i.e., $c_{\eG^*}^{\MD}$, are $(|1-1|,|(-0.096)-(-0.067)|)'=(0,0.028)'$.

The shared implication of these computations is that the adjusted estimand $\beta_{\eG^*,F}$ is likely of a smaller bias. The small $c_{\eG^*}$'s suggest that $\beta_{\eG^*,F}$ is more robust than $\beta_{\eG,F}$ against various adversarial scenarios. Indeed, one can check that the bias $|\beta_{\eG,F}-\tau_{\eG^1,F}|$ of the latter is $|0.981-0|=0.981$, whereas that of the former, i.e., $|\beta_{\eG^*,F}-\tau_{\eG^{*1},F}|$, is $|(-0.028)-0|=0.028$. \qed

\subsection{Inference for adjusted estimand}\label{sec:inference}

We establish an asymptotic normality result for the deviation $\hat\beta^*-\beta_{\eG^*,F}$, and propose a corresponding standard error for $\hat\beta^*$. Combining these results with bias bound \eqref{eq:bias-bound-subsample}, we then establish a confidence procedure for $\tau_{\eG^{*1},F}$, which is robust to misspecification of the regression model.

Define $\Delta^*=\Delta^*(\{(X_i',D_i)'\}_{i\in\mathcal S})\equiv |\mathcal S^*|^{-1}\sum_{i\in\mathcal S^*}Z_iZ_i\Exp_F[U|X=X_i,D=D_i]$.
\begin{assume} \label{ass:variance-consistency}
For some fixed positive constant $\underline\lambda>0$,
\begin{align*}
	\liminf_{|\mathcal S|\rightarrow\infty}\lambda_{\min}[\Delta^*]\geq \underline\lambda
\end{align*}
given $\{(X_i',D_i)'\}_{i\in\mathcal S}=\{(x_i',d_i)'\}_{i\in\mathcal S}$.
\end{assume}

Let $\delta>0$ be a fixed positive constant.
\begin{assume} \label{ass:bounded-lyapunov}
	For each $d\in\{0,1\}$, $\Exp_F[\|ZU\|^{2+\delta}|X=\cdot,D=d]$ is bounded on $\mathbb X^{*d}$ by a fixed positive constant given $\{(X_i',D_i)'\}_{i\in\mathcal S}=\{(x_i',d_i)'\}_{i\in\mathcal S}$. 
\end{assume}

Let $\Sigma^*\equiv (\Gamma^*)^{-1}\Delta^*(\Gamma^*)^{-1}$. Let $\rho$ denote the L\'evy-Prokhorov metric on the space of probability measures, which metrizes the convergence in law. A formal definition is provided in equation \eqref{eq:LP}.
\begin{prop} \label{thm:normality} Under Assumptions \ref{ass:function-subsample}--\ref{ass:subsample-large} and \ref{ass:design-matrix-inverse}--\ref{ass:bounded-lyapunov}, $\Sigma^{*-1/2}\sqrt{|\mathcal S^*|}(\hat\theta^*-\theta_{\eG^*,F})$ converges in conditional distribution to $\mathcal N(\mathbf 0,I)$ given $\{(X_i',D_i)'\}_{i\in\mathcal S}=\{(x_i',d_i)'\}_{i\in\mathcal S}$, i.e., 
\begin{align}
	\rho(\mathcal L_{\Sigma^{*-1/2}\sqrt{|\mathcal S^*|}(\hat\theta^*
	-\theta_{\eG^*,F})|\{(X_i',D_i)'\}_{i\in\mathcal S}},\mathcal N(\mathbf 0,I))\rightarrow 0
\end{align}
given $\{(X_i',D_i)'\}_{i\in\mathcal S}=\{(x_i',d_i)'\}_{i\in\mathcal S}$, as $|\mathcal S|$ tends to infinity.
\end{prop}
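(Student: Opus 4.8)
The plan is to reduce the statement to a central limit theorem for a conditionally independent triangular array and then transport that limit through a fixed linear map. First I would isolate the ``score.'' Since $\hat\theta^*=(\Gamma^*)^{-1}(|\mathcal S^*|^{-1}\sum_{i\in\mathcal S^*}Z_iY_i)$ and $\theta_{\eG^*,F}=(\Gamma^*)^{-1}(|\mathcal S^*|^{-1}\sum_{i\in\mathcal S^*}Z_i\Exp_F[Y|X=X_i,D=D_i])$, subtraction gives
\begin{align*}
\hat\theta^*-\theta_{\eG^*,F}=(\Gamma^*)^{-1}\Bigl(\tfrac{1}{|\mathcal S^*|}\sum_{i\in\mathcal S^*}Z_iU_i\Bigr),\qquad U_i\equiv Y_i-\Exp_F[Y|X=X_i,D=D_i].
\end{align*}
Conditioning on $\{(X_i',D_i)'\}_{i\in\mathcal S}$ fixes $\mathcal S^*$ (Assumption \ref{ass:function-subsample}), the regressors $Z_i$, and the law $F$ (Assumption \ref{ass:function-F}); under the sampling scheme the outcomes $\{Y_i\}_{i\in\mathcal S^*}$ are then independent across $i$, so the summands $W_i\equiv Z_iU_i$ form a conditionally independent array with $\Exp_F[W_i\mid X_i,D_i]=Z_i\Exp_F[U_i\mid X_i,D_i]=\mathbf 0$ and conditional covariance $V_i\equiv Z_iZ_i'\Exp_F[U_i^2\mid X_i,D_i]$, whose average is exactly $\Delta^*=|\mathcal S^*|^{-1}\sum_{i\in\mathcal S^*}V_i$.

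Next I would prove that the standardized score is asymptotically Gaussian, i.e.\ $\Delta^{*-1/2}|\mathcal S^*|^{-1/2}\sum_{i\in\mathcal S^*}W_i\to\mathcal N(\mathbf 0,I)$ in conditional law. By the Cram\'er--Wold device this reduces to a scalar statement along each fixed direction, to which I would apply the Lyapunov central limit theorem for independent, non-identically distributed triangular arrays. The Lyapunov ratio is controlled by the uniform $(2+\delta)$-moment bound on $\Exp_F[\|ZU\|^{2+\delta}\mid X=\cdot,D=d]$ over $\mathbb X^{*d}$ (Assumption \ref{ass:bounded-lyapunov}) together with $\|\Delta^{*-1/2}\|=\lambda_{\min}[\Delta^*]^{-1/2}\le\underline\lambda^{-1/2}$, bounded by Assumption \ref{ass:variance-consistency}; a direct computation then shows the ratio is $O(|\mathcal S^*|^{-\delta/2})$, which vanishes since $|\mathcal S^*|\to\infty$ (Assumption \ref{ass:subsample-large}). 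Positive definiteness of $\Delta^*$, and hence of $\Sigma^*=(\Gamma^*)^{-1}\Delta^*(\Gamma^*)^{-1}$ via the invertibility of $\Gamma^*$ from Assumption \ref{ass:design-matrix-inverse}, guarantees all inverse square roots are well defined.

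Finally I would transport this limit to $\hat\theta^*$. Writing the target quantity as
\begin{align*}
\Sigma^{*-1/2}\sqrt{|\mathcal S^*|}(\hat\theta^*-\theta_{\eG^*,F})=M^*\Bigl(\Delta^{*-1/2}|\mathcal S^*|^{-1/2}\sum_{i\in\mathcal S^*}W_i\Bigr),\qquad M^*\equiv\Sigma^{*-1/2}(\Gamma^*)^{-1}\Delta^{*1/2},
\end{align*}
and using the symmetry of $\Gamma^{*-1},\Delta^{*1/2},\Sigma^{*-1/2}$, a direct check gives $M^*M^{*\prime}=\Sigma^{*-1/2}\Sigma^*\Sigma^{*-1/2}=I$, so $M^*$ is orthogonal. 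Since $\mathcal N(\mathbf 0,I)$ is rotation invariant and an orthogonal map is a Euclidean isometry, applying $M^*$ cannot increase the L\'evy--Prokhorov distance to $\mathcal N(\mathbf 0,I)$; hence the standardized-score convergence passes to the left-hand side, which is the claim.

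I expect the main obstacle to be the Lyapunov step in its conditional, triangular-array form: one must ensure that the normalizing matrix $\Delta^*$ stays uniformly nondegenerate---so that $\|\Delta^{*-1/2}\|$ does not blow up---while the per-unit $(2+\delta)$ moments remain uniformly bounded along a sequence in which both the composition of $\mathcal S^*$ and the law $F$ may drift with $|\mathcal S|$. Assumptions \ref{ass:variance-consistency} and \ref{ass:bounded-lyapunov} are tailored precisely to supply these two ingredients; the remaining work---the Cram\'er--Wold reduction, and the conditional reading of convergence in $\rho$---is routine and parallels the consistency argument of Proposition \ref{thm:consistency}.
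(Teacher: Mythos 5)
Your proposal is correct and follows essentially the same route as the paper: conditional on the design, the score $|\mathcal S^*|^{-1}\sum_{i\in\mathcal S^*}Z_iU_i$ is a sum of conditionally independent mean-zero terms, and the Cram\'er--Wold device plus a Lindeberg--Feller/Lyapunov argument---fed by Assumptions \ref{ass:variance-consistency}--\ref{ass:bounded-lyapunov} and the eigenvalue bound on $\Gamma^*$---delivers the limit. The only cosmetic difference is that the paper absorbs $\Sigma^{*-1/2}(\Gamma^*)^{-1}$ directly into the summands and verifies the Lindeberg condition there, whereas you first standardize by $\Delta^{*-1/2}$ and then transport the limit through the orthogonal matrix $M^*$, which is a valid and slightly tidier bookkeeping of the same argument.
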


Proposition \ref{thm:normality} shows that, under suitable normalization, the variation of $\hat\theta^*$ around $\theta_{\eG^*,F}$ given $\{(X_i',D_i)'\}_{i\in\mathcal S}$ is asymptotically distributed as standard normal. It  extends to arbitrary subsamples the full-sample result by \citet[Theorem 2]{AbadieImbensZheng2014}.

Note that, both here and in Proposition \ref{thm:consistency}, the asymptotic behavior of $\theta_{\eG^*,F}$ is left unrestricted: We do not require that $\theta_{\eG^*,F}$ converge to $\theta_{G^*,F}$ for some $G^*$ under which $X$ and $D$ become independent, the case explored in \citet{AbadieSpiess2022}.\footnote{They consider a case where $\mathcal S^*$ is constructed through matching, and show that, if $\eG^{*d}$'s converge to each other at a sufficient rate, match-level clustering yields valid standard errors.}

As addressed in \citet{AbadieImbensZheng2014}, the conditional variance $\Exp_F[U^2|X,D]$ in $\Sigma^*$ complicates its estimation, especially when $X$ is continuous. They propose alternative estimators, which do not involve a separate estimation of $\Exp_F[U^2|X,D]$. We show that one of their estimators is also valid when using subsamples. 
\begin{assume} \label{ass:metric} There exists a metric $\psi$ on $\mathbb R^p$, dominated by the Euclidean distance up to a fixed constant, such that, for each $d\in\{0,1\}$, 
	\begin{enumerate}[label=(\roman*)]
		\item $\Exp_F[Z_{j,i}^{r_j}Z_{k,i}^{r_k}(Y_i-Z_i'\theta_{\eG^*,F})^{r_{j,k}}|X_i=\cdot,D_i=d]$ is bounded on $\mathbb X^{*d}$ by a fixed constant, where $Z_{j,i}$ and $Z_{k,i}$ indicate the $j$th and $k$th components of $Z_i$, respectively, and $r_j$, $r_k$, and $r_{j,k}$ are natural numbers such that $r_j,r_k\leq 2$ and $r_{j,k}\leq r_j+r_k$; 
		\item $\Exp_F[Z_{j,i}^{r_j}Z_{k,i}^{r_k}(Y_i-Z_i'\theta_{\eG^*,F})^{r_{j,k}}|X_i=\cdot,D_i=d]$ is $\psi$-Lipschitz with a fixed Lipschitz constants on $\mathbb X^{*d}$; and 
		\item $\diam(\mathbb X^{*d})\equiv \sup_{i,j\in\mathcal S^{*d}}\|X_i-X_j\|$ are bounded by a fixed constant
	\end{enumerate}
	given $\{(X_i',D_i)'\}_{i\in\mathcal S}=\{(x_i',d_i)'\}_{i\in\mathcal S}$.
\end{assume}

For each unit $i\in\mathcal S^*$, let $\hat E_i^*\equiv Y_i-Z_i'\hat\theta^*$, and denote by
\begin{align}
	l_{(X',D)'}^*(i) \equiv \arg\min_{j\in\mathcal S^*/\{i\}} \psi(x_i,x_j) + \kappa |d_i-d_j|
\end{align}
the index of the closest unit in $\mathcal S^*$ to $i$, where $\kappa>0$ is a large, fixed positive constant. Our proposed estimator for $\Delta^*$ is then
\begin{align}
	\hat\Delta^* \equiv \frac{1}{2|\mathcal S^*|}\sum_{i\in\mathcal S^*}(Z_i\hat E_i^* - Z_{l_{(X',D)'}^*(i)}\hat E_{l_{(X',D)'}^*(i)}^*)(Z_i\hat E_i^* - Z_{l_{(X',D)'}^*(i)}\hat E_{l_{(X',D)'}^*(i)}^*)'.
\end{align}

Let $\hat\Sigma^*\equiv (\Gamma^*)^{-1}\hat\Delta^*(\Gamma^*)^{-1}$.
\begin{prop} \label{thm:variance-estimator-consistency} Provided the conditions of Proposition \ref{thm:consistency} and Assumption \ref{ass:metric} hold, $\hat\Sigma^*$ converges in conditional probability to $\Sigma^*$ given $\{(X_i',D_i)'\}_{i\in\mathcal S}=\{(x_i',d_i)'\}_{i\in\mathcal S}$, i.e.,
\begin{align}
	\Pr_F[\|\hat\Sigma^* - \Sigma^*\|>\eta|\{(X_i',D_i)'\}_{i\in\mathcal S}]\rightarrow 0, \forall \eta>0 \label{eq:variance-estimator-consistency}
\end{align}
given $\{(X_i',D_i)'\}_{i\in\mathcal S}=\{(x_i',d_i)'\}_{i\in\mathcal S}$, as $|\mathcal S|$ tends to infinity.
\end{prop}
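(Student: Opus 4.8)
The plan is to follow the variance-estimation argument of \citet{AbadieImbensZheng2014}, adapting it to the conditional, subsample setting. Since $\Gamma^*$ is deterministic given $\{(X_i',D_i)'\}_{i\in\mathcal S}$, and Assumption \ref{ass:design-matrix-inverse} keeps $\lambda_{\min}[\Gamma^*]$ above $\underline\lambda>0$ eventually, the factor $(\Gamma^*)^{-1}$ in the identity $\hat\Sigma^*-\Sigma^*=(\Gamma^*)^{-1}(\hat\Delta^*-\Delta^*)(\Gamma^*)^{-1}$ has uniformly bounded operator norm. Hence it suffices to show that $\hat\Delta^*$ converges in conditional probability to $\Delta^*=|\mathcal S^*|^{-1}\sum_{i\in\mathcal S^*}Z_iZ_i'\Exp_F[U^2\mid X=X_i,D=D_i]$; pre- and post-multiplying by the bounded matrices $(\Gamma^*)^{-1}$ then transfers the convergence to $\hat\Sigma^*$. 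Throughout, write $j(i)$ for the matched index $l_{(X',D)'}^*(i)$, and recall $j(i)\neq i$.

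First I would pass from the feasible residuals $\hat E_i^*=Y_i-Z_i'\hat\theta^*$ to the infeasible residuals $\tilde E_i\equiv Y_i-Z_i'\theta_{\eG^*,F}$, forming $\tilde\Delta^*$ from the same matching differences. Because $\hat E_i^*=\tilde E_i-Z_i'(\hat\theta^*-\theta_{\eG^*,F})$, the difference $\hat\Delta^*-\tilde\Delta^*$ is a finite sum of terms each carrying a factor $\hat\theta^*-\theta_{\eG^*,F}$ multiplied by products of the $Z_i$ and $\tilde E_i$. On the bounded support $\mathbb X^{*d}$ (Assumption \ref{ass:metric}(iii), with the realized values anchored) $\|Z_i\|$ is bounded, and $\hat\theta^*-\theta_{\eG^*,F}$ converges to $\mathbf 0$ in conditional probability by Proposition \ref{thm:consistency}; a Markov bound using the conditional moment bounds of Assumption \ref{ass:metric}(i) then yields $\hat\Delta^*-\tilde\Delta^*=o_p(1)$ conditionally, reducing the problem to the infeasible estimator.

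Next I would split $\tilde\Delta^*=(\tilde\Delta^*-\Exp_F[\tilde\Delta^*\mid\cdot])+\Exp_F[\tilde\Delta^*\mid\cdot]$ and write $\tilde E_i=R_i+U_i$, where $R_i\equiv f(X_i,D_i)-l_{\theta_{\eG^*,F}}^s(X_i,D_i)$ is the conditionally deterministic misspecification residual and $\Exp_F[U_i\mid X_i,D_i]=0$. For the centered piece, once the conditional means are removed the outer products of the matching differences $Z_i\tilde E_i-Z_{j(i)}\tilde E_{j(i)}$ form a sum of mean-zero functions of the independent $U_i$; the only covariances that survive are those of pairs $(i,i')$ with $\{i,j(i)\}\cap\{i',j(i')\}\neq\emptyset$, of which there are $O(|\mathcal S^*|)$ by the standard bound that each unit serves as $j(\cdot)$ for at most a bounded number of others (here using that $\psi$ is dominated by the Euclidean distance). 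With the conditional moment bounds of Assumption \ref{ass:metric}(i)---which, allowing $r_{j,k}\le r_j+r_k\le 4$, cover the order needed for the variance of a product of two $ZU$-type coordinates---a conditional Chebyshev argument gives $o_p(1)$. For the conditional-mean piece, independence of $U_i$ and $U_{j(i)}$ gives $\Exp_F[\tilde\Delta^*\mid\cdot]=\tfrac1{2|\mathcal S^*|}\sum_i\bigl(Z_iZ_i'\Exp_F[U_i^2\mid\cdot]+Z_{j(i)}Z_{j(i)}'\Exp_F[U_{j(i)}^2\mid\cdot]+(Z_iR_i-Z_{j(i)}R_{j(i)})(Z_iR_i-Z_{j(i)}R_{j(i)})'\bigr)$, whose first term is exactly $\tfrac12\Delta^*$.

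The crux, and the main obstacle, is to show that the remaining match-indexed variance sum converges to $\tfrac12\Delta^*$ and the misspecification sum to $\mathbf 0$; both hinge on the vanishing of the matching discrepancy. For large $\kappa$ one has $D_{j(i)}=D_i$ once the same-treatment group has at least two members, and since each $\mathbb X^{*d}$ has bounded diameter while $|\mathcal S^{*d}|\to\infty$ (Assumptions \ref{ass:metric}(iii) and \ref{ass:subsample-large}), the nearest-neighbour distances vanish in the averaged sense $|\mathcal S^*|^{-1}\sum_i\psi(X_i,X_{j(i)})\to 0$. Combining this with the $\psi$-Lipschitz continuity, guaranteed by Assumption \ref{ass:metric}(i)--(ii), of the conditional-moment maps $x\mapsto Z Z'\Exp_F[U^2\mid X=x,D=d]$ and $x\mapsto Z\,(f(x,d)-l_{\theta_{\eG^*,F}}^s(x,d))$ (both bounded and Lipschitz, exploiting that the constant coordinate of $Z$ isolates the first-order moments), together with boundedness of $\|Z\|$, forces the variance sum to agree asymptotically with $\tfrac12\Delta^*$ and the misspecification sum to vanish. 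The delicate point is making the averaged-matching-discrepancy and bounded-match-count arguments interlock uniformly enough to pass through the products that are not Lipschitz in $\hat\theta^*$; the rest is bookkeeping paralleling \citet{AbadieImbensZheng2014}, with every unconditional limit there replaced by its conditional analogue given $\{(X_i',D_i)'\}_{i\in\mathcal S}=\{(x_i',d_i)'\}_{i\in\mathcal S}$.
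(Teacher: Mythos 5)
Your proposal follows essentially the same route as the paper's proof: reduce to $\hat\Delta^*$ via the uniformly bounded $(\Gamma^*)^{-1}$, swap feasible for infeasible residuals using Proposition \ref{thm:consistency}, and then combine a packing argument for the vanishing average matching discrepancy, $\psi$-Lipschitz transfer of conditional moments, conditional independence of matched pairs, the bounded-match-count (kissing number) bound, and a conditional Chebyshev argument---exactly the ingredients of the paper's Lemmas \ref{thm:closest}--\ref{thm:closest-variance}. The only difference is organizational: you split $\tilde E_i=R_i+U_i$ and kill the misspecification outer-product term explicitly, whereas the paper proves a general moment lemma for $V_i^kV_{l_W^*(i)}^m\rightarrow\Exp[V^k|W]\Exp[V^m|W]$ so that the same cancellation falls out of the variance identity automatically; the two computations coincide.
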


Denote by $se_{\beta_{\eG^*,F}}(\hat\beta^*)$ the $|\mathcal S^*|^{-1/2}$-scaled square root of the second diagonal element of $\hat\Sigma^*$. 
For a function $\tau_0^*$ of $\{(X_i',D_i)'\}_{i\in\mathcal S}$, define a $t$-statistic
\begin{align}
	t^*\equiv \frac{\hat\beta^*-\tau_0^*}{se_{\beta_{\eG^*,F}}(\hat\beta^*)}.
\end{align}
\begin{assume}\label{ass:local-asymptotics}
	For a positive function $v$ of $\{(X_i',D_i)'\}_{i\in\mathcal S}$, 
	\begin{align}
	m_{\eG^*,F}c_{\eG^*}=\frac{v}{\sqrt{|\mathcal S^*|}},
	\end{align}
	which restricts $F$ to depend on $\{(X_i',D_i)'\}_{i\in\mathcal S}$ through $\eG^*$ and $v$.
\end{assume}

Fix $\{(X_i',D_i)'\}_{i\in\mathcal S}=\{(x_i',d_i)'\}_{i\in\mathcal S}$. The positive number $v$ locally parametrizes the states $F$ in the vicinity of the ``linear'' ones $F^\star$ at which $m_{\eG^*,F^\star}$ is zero.\footnote{It serves as a global parametrization if $c_{\eG^*}=O(1/\sqrt{|\mathcal S^*|})$.} It quantifies the misspecification of the regression model in units of $|\mathcal S^*|^{-1/2}c_{\eG^*}^{-1}$.

Recall that $\Phi$ denotes the distribution function of the standard normal distribution.
\begin{cor} \label{thm:size-distortion-subsample} Suppose that 
the conditions of
Propositions \ref{thm:normality}--\ref{thm:variance-estimator-consistency} are satisfied, and also that Assumption \ref{ass:local-asymptotics} hold. Then, under the null hypothesis $H_0:\tau_{\eG^{*1},F}=\tau_0^*$,
\begin{equation}
\begin{aligned}
	&\Phi(z)\leq \liminf_{|\mathcal S|\rightarrow\infty}\Pr_F\biggl[t^*-\frac{v}{\sqrt{|\mathcal S^*|}se_{\beta_{\eG^*,F}}(\hat\beta^*)}\leq z\biggm|\{(X_i',D_i)'\}_{i\in\mathcal S}\biggr]\text{ and}\\
	&\limsup_{|\mathcal S|\rightarrow\infty}\Pr_F\biggl[t^* + \frac{v}{\sqrt{|\mathcal S^*|}se_{\beta_{\eG^*,F}}(\hat\beta^*)}\leq z\biggm|\{(X_i',D_i)'\}_{i\in\mathcal S}\biggr]\leq \Phi(z), \forall z\in\mathbb R
\end{aligned}
\end{equation}
given $\{(X_i',D_i)'\}_{i\in\mathcal S}=\{(x_i',d_i)'\}_{i\in\mathcal S}$. 
\end{cor}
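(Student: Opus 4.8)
The plan is to split the $t$-statistic into a stochastic part that is asymptotically standard normal and a deterministic location shift governed by the bias bound. Under $H_0$ we may write $\hat\beta^*-\tau_0^* = (\hat\beta^*-\beta_{\eG^*,F}) + (\beta_{\eG^*,F}-\tau_{\eG^{*1},F})$, so dividing through by $se_{\beta_{\eG^*,F}}(\hat\beta^*)$ and setting
\begin{align*}
	W^*\equiv \frac{\hat\beta^*-\beta_{\eG^*,F}}{se_{\beta_{\eG^*,F}}(\hat\beta^*)}
	\quad\text{and}\quad
	b^*\equiv \frac{\beta_{\eG^*,F}-\tau_{\eG^{*1},F}}{se_{\beta_{\eG^*,F}}(\hat\beta^*)}
\end{align*}
gives the identity $t^*=W^*+b^*$. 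The two displayed inequalities will then follow by sandwiching $t^*\mp v/(\sqrt{|\mathcal S^*|}\,se_{\beta_{\eG^*,F}}(\hat\beta^*))$ against $W^*$.

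First I would show that $W^*$ converges in conditional law to $\mathcal N(0,1)$. Proposition \ref{thm:normality} gives conditional convergence of $\sqrt{|\mathcal S^*|}(\hat\theta^*-\theta_{\eG^*,F})$ to $\mathcal N(\mathbf 0,\Sigma^*)$ after premultiplying by $\Sigma^{*1/2}$; reading off the second coordinate and studentizing by $\sqrt{(\Sigma^*)_{22}}$ yields conditional convergence of $\sqrt{|\mathcal S^*|}(\hat\beta^*-\beta_{\eG^*,F})/\sqrt{(\Sigma^*)_{22}}$ to $\mathcal N(0,1)$. Because $se_{\beta_{\eG^*,F}}(\hat\beta^*)=|\mathcal S^*|^{-1/2}\sqrt{(\hat\Sigma^*)_{22}}$ and $\hat\Sigma^*\to\Sigma^*$ in conditional probability by Proposition \ref{thm:variance-estimator-consistency}, a conditional Slutsky argument gives $W^*\to\mathcal N(0,1)$ in conditional law, hence $\Pr_F[W^*\leq z\mid\{(X_i',D_i)'\}_{i\in\mathcal S}]\to\Phi(z)$ for every $z$ by continuity of $\Phi$. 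The only technicality at this step is that extracting a studentized marginal involves a normalization direction that may vary with $|\mathcal S|$; since it ranges over the compact unit sphere, a routine subsequence argument confirms the limit is $\mathcal N(0,1)$ irrespective of the direction.

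Next I would pin down the bias term. Bias bound \eqref{eq:bias-bound-subsample} gives $|\beta_{\eG^*,F}-\tau_{\eG^{*1},F}|\leq m_{\eG^*,F}c_{\eG^*}$, which Assumption \ref{ass:local-asymptotics} identifies with $v/\sqrt{|\mathcal S^*|}$. The crucial observation is that $b^*$ and the quantity $B^*\equiv v/(\sqrt{|\mathcal S^*|}\,se_{\beta_{\eG^*,F}}(\hat\beta^*))$ carry the same random denominator, so dividing the deterministic numerator bound by $se_{\beta_{\eG^*,F}}(\hat\beta^*)$ produces the pathwise inequality $|b^*|\leq B^*$ on every realization, even though $B^*$ is itself random through $se_{\beta_{\eG^*,F}}(\hat\beta^*)$. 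From $|b^*|\leq B^*$ I obtain, pathwise, $t^*-B^*=W^*+(b^*-B^*)\leq W^*$ and $t^*+B^*=W^*+(b^*+B^*)\geq W^*$, since $b^*-B^*\leq 0$ and $b^*+B^*\geq 0$ everywhere.

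These pathwise bounds yield the event inclusions $\{W^*\leq z\}\subseteq\{t^*-B^*\leq z\}$ and $\{t^*+B^*\leq z\}\subseteq\{W^*\leq z\}$, so monotonicity of probability gives $\Pr_F[t^*-B^*\leq z\mid\cdot]\geq\Pr_F[W^*\leq z\mid\cdot]$ and $\Pr_F[t^*+B^*\leq z\mid\cdot]\leq\Pr_F[W^*\leq z\mid\cdot]$; passing to the $\liminf$ and $\limsup$ and invoking $\Pr_F[W^*\leq z\mid\cdot]\to\Phi(z)$ delivers the two claimed inequalities. I do not expect a serious obstacle: the only real care is the conditional Slutsky/marginal-extraction step above, together with the observation that the shared random denominator renders the numerator bound pathwise. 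The one conceptual subtlety worth recording is that the sign of $\beta_{\eG^*,F}-\tau_{\eG^{*1},F}$ is unknown, so both one-sided shifts $\pm B^*$ are genuinely required---each hedges against the worst-case displacement of $t^*$ in one direction---which is exactly why the conclusion is a pair of one-sided bounds rather than a single two-sided limit.
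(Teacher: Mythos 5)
Your proposal is correct and follows essentially the same route as the paper: decompose $t^*$ into the studentized deviation $(\hat\beta^*-\beta_{\eG^*,F})/se_{\beta_{\eG^*,F}}(\hat\beta^*)$ plus a bias term, bound the bias pathwise by $v/(\sqrt{|\mathcal S^*|}\,se_{\beta_{\eG^*,F}}(\hat\beta^*))$ via bias bound \eqref{eq:bias-bound-subsample} and Assumption \ref{ass:local-asymptotics}, and conclude from the conditional asymptotic normality supplied by Propositions \ref{thm:normality}--\ref{thm:variance-estimator-consistency}. Your write-up merely makes explicit the conditional Slutsky step that the paper leaves implicit.
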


For a positive number $m$ and $\alpha\in(0,1)$, consider a random interval
\begin{align}
	C_\alpha(m)\equiv \bigl\{\tau\in\mathbb R: \hat\beta^*-z_{1-\alpha/2}se^* - mc_{\eG^*}< \tau\leq \hat\beta^*-z_{\alpha/2}se^* + mc_{\eG^*} \bigr\}, \label{eq:random-set}
\end{align}
where we abbreviate $se_{\beta_{\eG^*,F}}(\hat\beta^*)$ by $se^*$, and $z_\alpha\equiv \Phi^{-1}(\alpha)$. By Corollary \ref{thm:size-distortion-subsample}, whenever
\begin{align}
	m	\geq m_{\eG^*,F}, \label{eq:robust}
\end{align}
$C_\alpha(m)$ achieves asymptotic conditional coverage no smaller than $1-\alpha$. Reporting the graph of the interval-valued function $C_\alpha$ can be an effective communication practice, since it enables the computation of the minimum degree $\mathfrak m_\tau$ of model misspecification required to \emph{un}-reject a rejected null hypothesis $H_0:\tau_{\eG^{*1},F}=\tau$ by the usual confidence interval $C_\alpha(0)$. We refer to $\mathfrak m_\tau$ as the ``m-value'' of the null hypothesis.

We add the superscript 
$\KS$, 
$\MKW$, or $\MD$ to the notation $C_\alpha$ if the corresponding imbalance metric is used in its definition. For example, if $c_{\eG^*}$ of equation \eqref{eq:random-set} is the Kolmogorov-Smirnov distance, this choice is conveyed by denoting $C_{\alpha}^{\KS}$.

\section{Application}\label{sec:application}
Our justification of the design phase not only underscores the critical role of covariate balance in making a regression estimate robust and credible, but also suggests how a researcher and the readers of her report may communicate using imbalance statistics. Our bound results---compactly summarized in equation \eqref{eq:bias-bound-subsample}---indicate that, by using the available imbalance measures $c_{\eG}$, the readers can themselves explore various true states $f(\cdot,0)$ in the vicinity of the researcher's model $l_{\theta_{\eG,F}}^s(\cdot,0)$, compute the distances $m_{\eG,F}$ between $f(\cdot,0)$'s and $l_{\theta_{\eG,F}}^s(\cdot,0)$, and derive the maximum possible biases at those states by multiplying $m_{\eG,F}$'s with $c_{\eG}$'s. They can determine whether the researcher's estimate $\hat\beta$ is large enough to ensure that, even after accounting for potential bias due to model misspecification, her conclusion can be sustained. We use the Boston House Mortgage Disclosure Act (HMDA) dataset to illustrate how this communication can be implemented.

This expanded version of the prior 1990 HMDA reports, enriched by the Federal Reserve Bank of Boston, includes 38 additional variables tied to minority status and, at the same time, critical to banks' mortgage lending decisions, such as credit histories or loan-to-value ratios. It has been employed in the literature to examine the presence of racial discrimination in the mortgage market.\footnote{For example, \citet{MunnellTootellBrowneMcEneaney1996}.
}

The sample is limited to male applicants purchasing single-family residences, who are either black or white, not self-employed, and were approved for private mortgage insurance; furthermore, they must have no public record of default. This results in a sample $\mathcal S$ of 148 black and 1336 white applicants.\footnote{This is the sample restriction used by \citet[Section 6.1]{AbadieImbens2012}.}

For each applicant $i\in \mathcal S$, let $Y_i$ be the indicator which takes value 1 if $i$'s mortgage application is denied, and 0 if approved, and $D_i$ an indicator that takes value 1 if $i$ is black, and 0 if white. Six variables---the housing expense to income ratio, the total debt payments to income ratio, consumer credit history, mortgage credit history, the probability of unemployment, and the loan to appraised value ratio---are used as the controls, i.e., $p=6$. Let $X_i$ denote a six-dimensional vector that collects their values for unit $i$.\footnote{The detailed definitions of the variables are provided in \citet[Appendix]{MunnellTootellBrowneMcEneaney1996}.}

Given these defined variables, consider a researcher who runs the linear regression \eqref{eq:reg-model} on the sample $\mathcal S$ using the identity $x\mapsto Ix$ as the covariate function $s$, and obtains the regression estimates $(\hat\alpha,\hat\beta,\hat\gamma')'$ for the coefficients of the regressors $Z_i=(1,D_i,X_i')'$. To convey the robustness of her estimate $\hat\beta$ to her readers, she computes the distances $c_{\eG}$ between the empirical conditional distributions $\eG^d$ of the controls $X_i$ given minority status $D_i=d$. The readers then use the reported imbalance measurements to compute the maximum possible biases of $\beta_{\eG,F}$ at various adversarial states $f(\cdot,0)$ that deviate from the researcher's model $l_{\theta_{\eG,F}}^I(\cdot,0)$: They compute their distances $m_{\eG,F}$, and then multiply them by the corresponding $c_{\eG}$'s. However, since $X_i$ is multi-dimensional, i.e., $p>1$, the computations of $c_{\eG}$'s and $m_{\eG,F}$'s may be involved. We present an approach that reduces them into one-dimensional problems.

It is based on the following key observation: One can show that $\hat\beta$ is \emph{also} the least squares estimate
	for the coefficient $\tilde\beta$ of $D_i$ of the regression model
\begin{align}
	Y_i = \tilde\alpha + \tilde\beta D_i + X_i'\hat\gamma\cdot \tilde\gamma + \tilde E_i, \label{eq:reg-model-index} 
\end{align}
where we treat the estimated index $X_i'\hat\gamma$ as the only \emph{covariate};\footnote{The estimated linear propensity score also shares this property.} that is, we take the covariate function $s$ to be $x\mapsto x'\hat\gamma$. Thus, considering $\hat\gamma$ \emph{fixed},\footnote{This treatment is legitimate essentially because, as $|\mathcal S|$ tends to infinity, $\hat\gamma$ converges conditionally to $\gamma_{\eG,F}$, which is a function of $\{(X_i',D_i)'\}_{i\in\mathcal S}$.} a researcher may use the bias of the regression estimand $\tilde\beta_{\eG,F}$ of this induced model as that of her estimate $\hat\beta$. Now, assessing the bias of $\tilde\beta_{\eG,F}$ is a one-dimensional problem; it involves computing the discrepancy between the push-forwards $\hat\gamma\sharp\eG^d$ of $\eG^d$'s through $x\mapsto x'\hat\gamma$, which are distributions on the real line $\mathbb R$.

To be specific, let $\tilde Z_i\equiv (1,D_i,X_i'\hat\gamma)'$ collect the regressors of regression model \eqref{eq:reg-model-index}. $\tilde\beta_{\eG,F}$ is then the second component of 
\begin{align}
	\tilde\theta_{\eG,F}\equiv (\Exp_{\eG}[\tilde Z\tilde Z'])^{-1}\Exp_{\eG}[\tilde Z\Exp_F[Y|X,D]]. \label{eq:theta-index}
\end{align}
Define $\Exp_{F_{\hat\gamma}^{}}[Y|X'\hat\gamma,D]\equiv \Exp_{\eG}[\Exp_F[Y|X,D]|X'\hat\gamma,D]$. Abbreviate $\hat\gamma\sharp \eG^d$ by $\eG_{\hat\gamma}^d$. Consider a parameter
\begin{align}
	\tau_{\eG_{\hat\gamma}^1,F_{\hat\gamma}^{}}\equiv \Exp_{\eG_{\hat\gamma}^1}[\Exp_{F_{\hat\gamma}^{}}[Y|X'\hat\gamma,D=1] - \Exp_{F_{\hat\gamma}^{}}[Y|X'\hat\gamma,D=0]|D=1],	
\end{align}
which may be given a causal interpretation---for example, in cases where $X_i'\hat\gamma$ predicts $Y_i(0)$ sufficiently well that conditioning on it ensures the independence between $Y_i(0)$ and $D_i$. In this example, the supports of $\hat\gamma\sharp\eG^d$'s do not intersect. This makes $X_i'\hat\gamma$ a balancing score, which implies $\tau_{\eG_{\hat\gamma}^1,F_{\hat\gamma}^{}}=\tau_{\eG^1,F}$. Thus, we can assess the bias of $\tilde\beta_{\eG,F}$ for the latter via its bias for the former, which admits a nice representation involving the discrepancy between one-dimensional $\eG_{\hat\gamma}^d$'s.

Abbreviate $\Exp_{F_{\hat\gamma}}[Y|X'\hat\gamma=\cdot,D=0]$ by $f^{\hat\gamma}(\cdot,0)$. Let $l_{\tilde\theta_{\eG,F}}(\cdot,0)\equiv(1,0,\cdot)'\tilde\theta_{\eG,F}$ be the regression function defined at the level of the estimated index $X_i'\hat\gamma$. By equation \eqref{eq:representation-covariates},
\begin{align}
	\tilde\beta_{\eG,F} - \tau_{\eG_{\hat\gamma}^1,F_{\hat\gamma}^{}} = \int (\underbrace{f^{\hat\gamma}(t,0) - l_{\tilde\theta_{\eG,F}}(t,0)}_{\smash{\text{model misspecification}}})(\overbrace{\eG_{\hat\gamma}^1 - \eG_{\hat\gamma}^0}^{\smash{\mathclap{\text{covariate imbalance}}}})(\mathrm dt). \label{thm:representation-index}
\end{align}
Let $\eG_{\hat\gamma}$ be shorthand for the push-forward $\hat\gamma\sharp \eG$ of $\eG$ through $(x',d)'\mapsto (x'\hat\gamma,d)'$. Our bound results in Section \ref{sec:bias-bounds} then imply that the distances $c_{\eG_{\hat\gamma}}$ between the push-forwards $\eG_{\hat\gamma}^d$  are informative about the maximum possible bias of $\tilde\beta_{\eG,F}$---and thus about that of $\hat\beta$---for $\tau_{\eG_{\hat\gamma}^1,F_{\hat\gamma}^{}}$ under the corresponding forms of model misspecifications.

Note, however, that a perturbation of the true state $f^{\hat\gamma}(\cdot,0)$ from the researcher's model $l_{\tilde\theta_{\eG,F}}(\cdot,0)$ must now be articulated at the level $t$ of the estimated index. At the cost of reducing the problem to a one-dimensional one, one can no longer perturb the true state $f(\cdot,0)$ at the level $x$ of the original control.

Now, we begin our empirical analysis based on the induced model \eqref{eq:reg-model-index}. We assume that the sample size $|\mathcal S|$ is sufficiently large so that the population regression function $l_{\tilde \theta_{\eG,F}}(\cdot,0)$ is well-approximated by its empirical analogue, i.e.,
\begin{align*}
	l_{\tilde\theta_{\eG,F}}(t,0)\approx \hat\alpha	+ t,
\end{align*}
where we recall that $(\hat\alpha,\hat\beta,1)'$ is the least squares estimate for the coefficients $(\tilde\alpha,\tilde\beta,\tilde\gamma)'$ of the regressors $\tilde Z_i$ in the induced model; the regression of $Y_i$ on $(1,D_i,X_i'\hat\gamma)'$ within $\mathcal S$ results in them. The realized values of $\hat\alpha$ and $\hat\beta$ are $-0.249$ and 0.099, respectively.

The black circles of Figure \ref{fig:model} represent the joint distribution of the denial indicator $Y_i$ and the estimated index $X_i'\hat\gamma$. The red and blue diamonds display the distributions of the values of the population regression function, i.e., $l_{\tilde\theta_{\eG,F}}(X_i'\hat\gamma,0)\approx-0.249 + X_i'\hat\gamma$, for white and black applicants, respectively.

\begin{figure}
	\centering
	\includegraphics[width=0.45\textwidth]{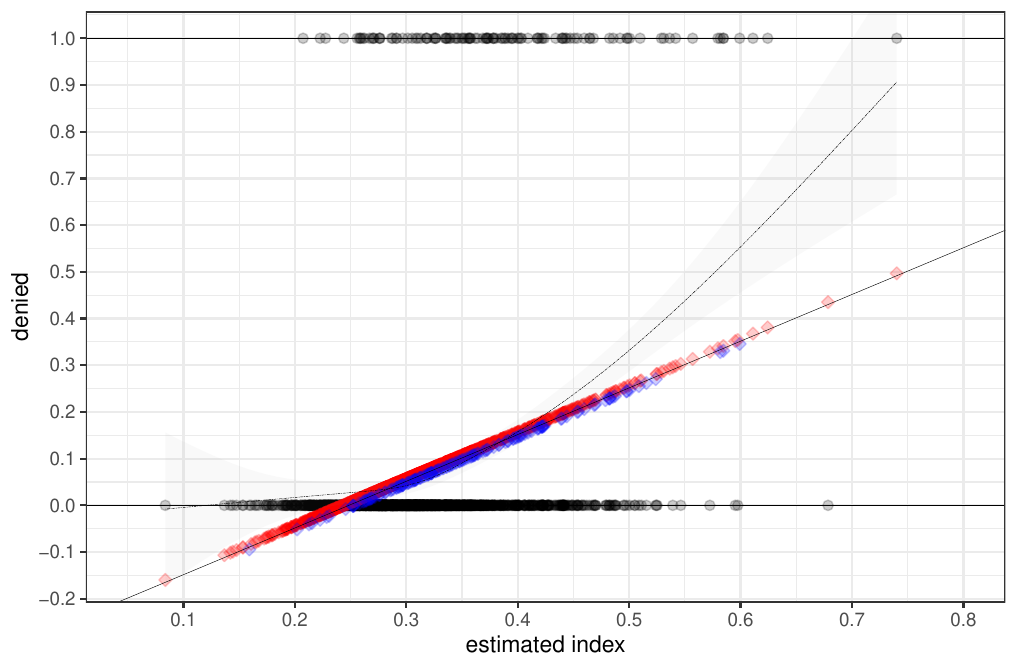}	
	\caption{Researcher's model, $l_{\tilde\theta_{\eG,F}}(\cdot,0)$}
	\label{fig:model}
\end{figure}

The readers initiate the perturbation exercise from the researcher's model $l_{\tilde\theta_{\eG,F}}(\cdot,0)$. If it were the true conditional expectation function $f^{\hat\gamma}(\cdot,0)$, then representation \eqref{thm:representation-index} implies that the bias of her estimand $\tilde\beta_{\eG,F}$ is zero, and thus, as long as $\hat\beta\approx \tilde\beta_{\eG,F}$, she may safely interpret her regression estimate $\hat\beta$ as $\tau_{\eG_{\hat\gamma}^1,F_{\hat\gamma}^{}}$. In most practical situations, however, this level of exactness is unlikely. The readers may wish to experiment with various adversarial scenarios in which the true state $f^{\hat\gamma}(\cdot,0)$ deviates from her model $l_{\tilde\theta_{\eG,F}}(\cdot,0)$ to some degree, accordingly. 

Figure \ref{fig:pKS} illustrates a case where the readers challenge that the true state $f^{\hat\gamma}(\cdot,0)$ deviates from the researcher's model $l_{\tilde\theta_{\eG,F}}(\cdot,0)$ in a periodic manner. If the researcher reports the Kolmogorov-Smirnov distance $c_{\eG_{\hat\gamma}}^{\KS}$ between the push-forwards $\eG_{\hat\gamma}^d$ as well as a finite set that contains their supports $\mathbb T^d$, the readers themselves can compute the total variation $m_{\eG_{\hat\gamma},F_{\hat\gamma}}^{\KS}$ of the resulting misspecification function, i.e., $f^{\hat\gamma}(\cdot,0)-l_{\tilde\theta_{\eG,F}}(\cdot,0)$, and multiply it with $c_{\eG_{\hat\gamma}}^{\KS}$ to obtain the maximum possible bias of her estimand $\tilde \beta_{\eG,F}$ for $\tau_{\eG_{\hat\gamma}^1,F_{\hat\gamma}^{}}$. If, even after reflecting such maximum possible bias, her regression estimate $\hat\beta$ remains distant from zero, for instance, if
\begin{align*}
	\hat\beta - m_{\eG_{\hat\gamma},F_{\hat\gamma}}^{\KS}c_{\eG_{\hat\gamma}}^{\KS}\gg 0,
\end{align*}
the readers may agree with the researcher's conclusion that the minority status affects banks' lending decision.

In our sample $\mathcal S$, $c_{\eG_{\hat\gamma}}^{\KS}=0.233$, which is not particularly small; this indicates that for perturbations similar to the previous one, which involves multiple fluctuations, the maximum possible bias computed based on $c_{\eG_{\hat\gamma}}^{\KS}$ could be large. Nevertheless, assuming that the Monge-Kantorovich/Wasserstein distance $c_{\eG_{\hat\gamma}}^{\MKW}$ between $\eG_{\hat\gamma}^d$'s, which takes the value 0.044, is also reported by the researcher, since $f^{\hat\gamma}(\cdot,0)-l_{\tilde\theta_{\eG,F}}(\cdot,0)$ has slopes of moderate magnitudes, the readers may---using the Monge-Kantorovich/Wasserstein bound---conclude that the bias of $\tilde\beta_{\eG,F}$ would remain small. However, for perturbations like those in Figure \ref{fig:pMKW}, even a small $c_{\eG_{\hat\gamma}}^{\MKW}$ is insufficient to ensure a small bias bound. It exhibits little fluctuations, but since $c_{\eG_{\hat\gamma}}^{\KS}$ is not very small, the Kolmogorov-Smirnov bound would not also yield a small number either.

\begin{figure}
	\centering
	\subfloat[Periodic fluctuations\label{fig:pKS}]{\includegraphics[width=0.45\textwidth]{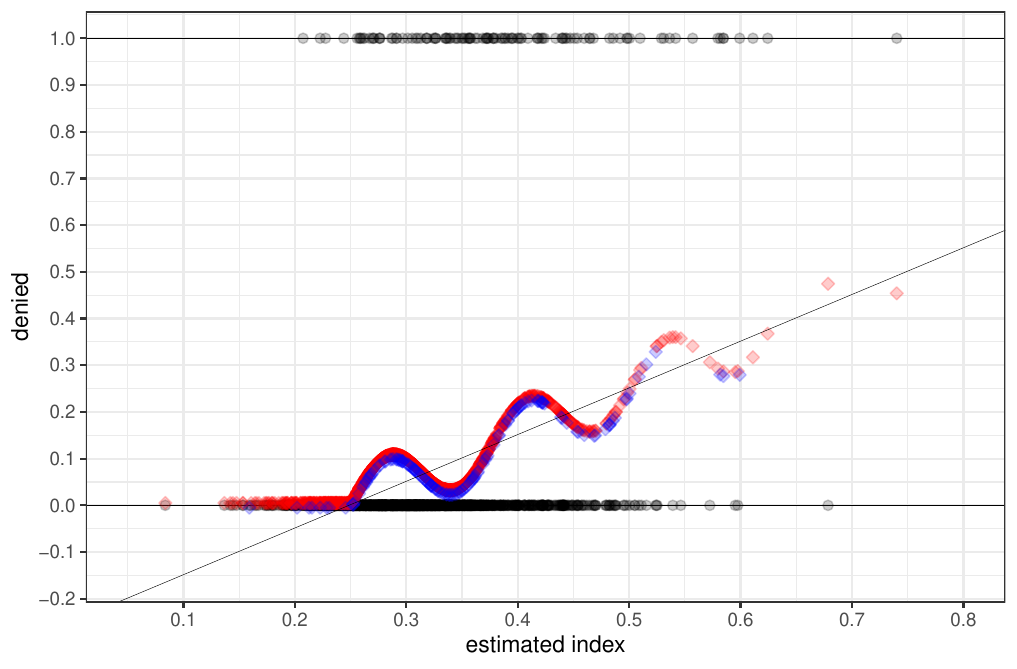}}	
	\subfloat[Single big spike\label{fig:pMKW}]{\includegraphics[width=0.45\textwidth]{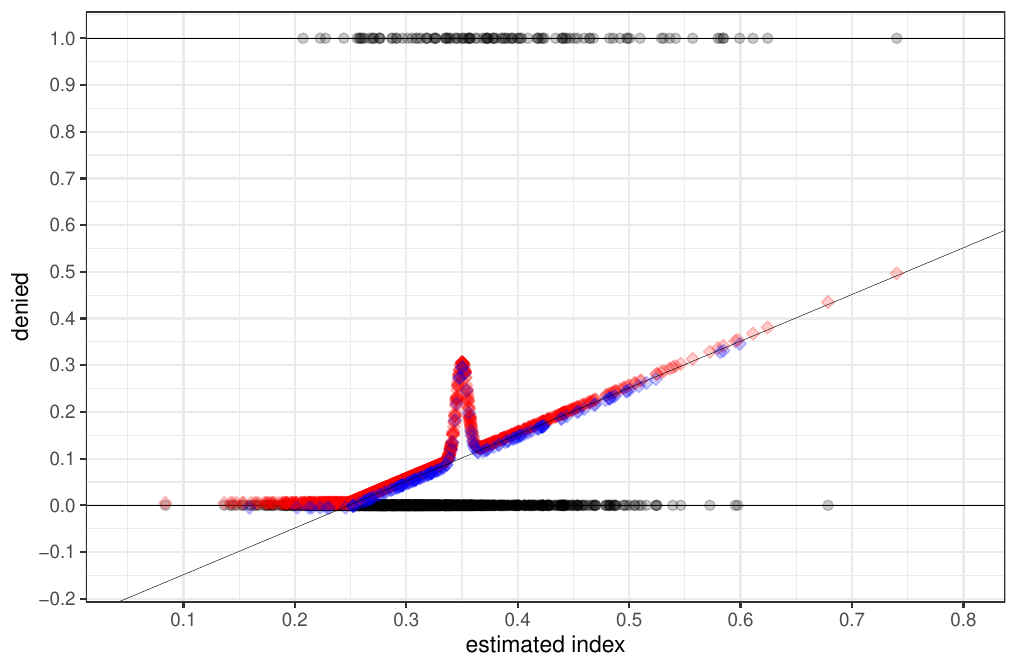}}
	\caption{Readers' perturbations}
\end{figure}

Nevertheless, note that, except for a few (mostly) red diamonds along the spike,
\begin{align*}
	f^{\hat\gamma}(X_i'\hat\gamma,0)-l_{\tilde\theta_{\eG,F}}(X_i'\hat\gamma,0)=-l_{\tilde\theta_{\eG,F}}(X_i'\hat\gamma,0)\wedge 0.
\end{align*}
Because the spike is local and of small mass, the function $-l_{\tilde\theta_{\eG,F}}(\cdot,0)\wedge 0$ can separate
\begin{align*}
	\Gr_{f^{\hat\gamma}(\cdot,0)-l_{\tilde\theta_{\eG,F}}(\cdot,0)}^{\mathbb T^d}\equiv \{(t,\tilde f^{\hat\gamma}(t,0)-l_{\tilde\theta_{\eG,F}}(t,0))'\}_{t\in \mathbb T^d}\text{'s}
\end{align*}
without consuming much slack, on average.\footnote{That is, $\sum_{d\in\{0,1\}}\Exp_{\eG^d}[\tilde \xi_\sigma^d(s(X_i)'\hat\gamma)|D=d]$ is presumed to be small.} Thus, if the researcher also reports the mean difference $c_{\eG_{\hat\gamma}}^{\MD}$ in the summary
\begin{align*}
	-l_{\tilde\theta_{\eG,F}}(X_i'\hat\gamma,0)\wedge 0\approx -(\hat\alpha+X_i'\hat\gamma)\wedge 0
\end{align*}
between black and white applicants, which is 0.002, the readers may refer to the mean difference bound to conclude that the bias of her estimand will remain small.

Assuming that the researcher reports the standard error $se_{\tilde\beta_{\eG,F}}(\hat\beta)$ of her regression estimate $\hat\beta$, the readers may themselves construct the robustified confidence intervals $C_{0.95}^{\KS}$, $C_{0.95}^{\MKW}$, and $C_{0.95}^{\MD}$, as in equation \eqref{eq:random-set}, using the reported distances $c_{\eG_{\hat\gamma}}^{\KS}$, $c_{\eG_{\hat\gamma}}^{\MKW}$, and $c_{\eG_{\hat\gamma}}^{\MD}$ between the push-forwards $\eG_{\hat\gamma}^d$. They are shown in the left panels of Figures \ref{fig:KS}--\ref{fig:MD}, where each vertical segment $C_{0.95}(m)$ of the trapezoid is a robustification of the traditional confidence interval, i.e., $C_{0.95}(0)$, assuming that the degree $m_{\eG_{\hat\gamma},F_{\hat\gamma}}$ of model misspecification is $m$. Note that the imbalance $c_{\eG_{\hat\gamma}}$ between $\eG_{\hat\gamma}^d$'s determines the angle of the trapezoid's rightward expansion. Referring to the graph, the readers can identify the minimum degrees $\mathfrak m_0$ of misspecification, i.e., the m-values, that render her initial rejection of the null $H_0:\tau_{\eG_{\hat\gamma}^1,F_{\hat\gamma}^{}}=0$ based on $C_{0.95}(0)$ inconclusive.

\begin{figure}
	\centering
	\subfloat[Pre-matching, $c_{\eG_{\hat\gamma}}^{\KS}=0.233$]{\includegraphics[width=0.45\textwidth]{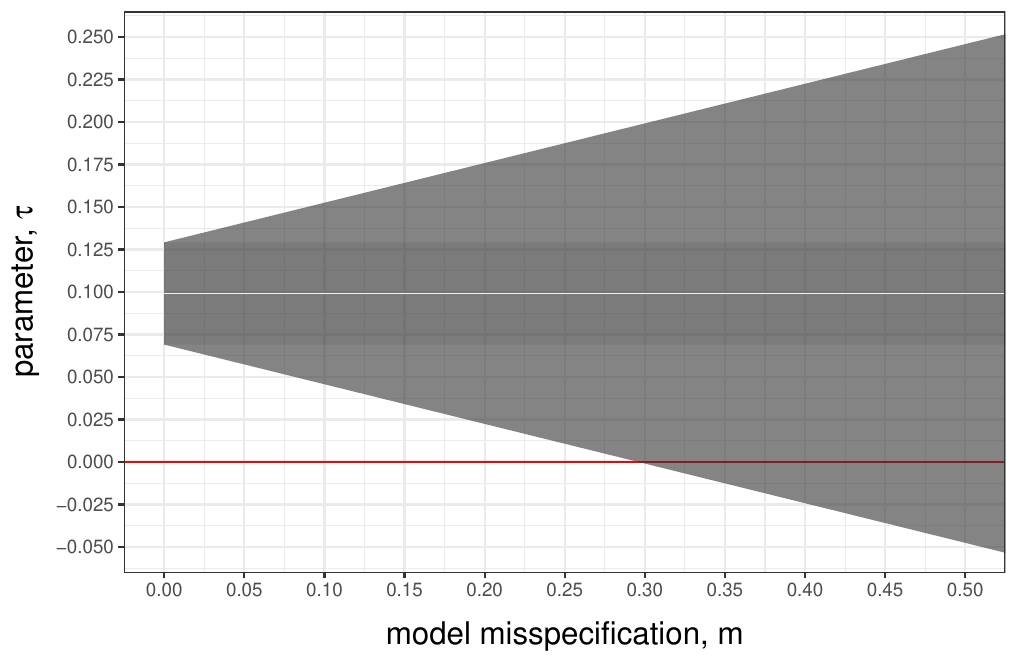}}
	\quad
	\subfloat[Post-matching, $c_{\eG_{\hat\gamma}^*}^{\KS}=0.047$]{\includegraphics[width=0.45\textwidth]{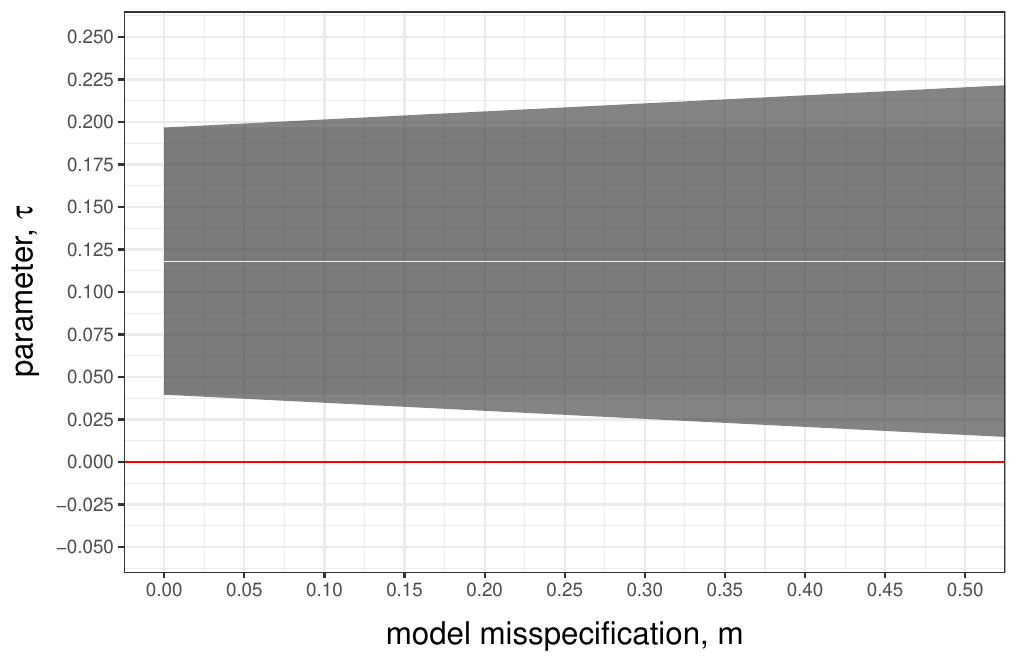}}
	\caption	{The graphs of the robustified confidence intervals, $C_{0.95}^{\KS}(m)$}
	\label{fig:KS}
	\vspace{\baselineskip}
	\subfloat[Pre-matching, $c_{\eG_{\hat\gamma}}^{\MKW}=0.044$]{\includegraphics[width=0.45\textwidth]{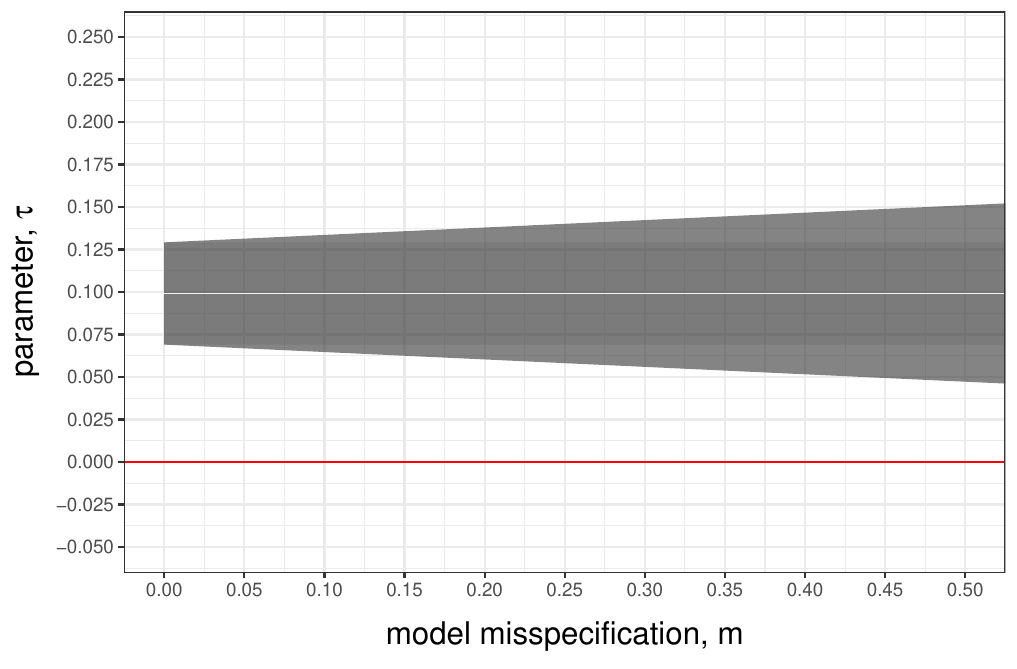}}
	\quad
	\subfloat[Post-matching, $c_{\eG_{\hat\gamma}^*}^{\MKW}=0.007$]{\includegraphics[width=0.45\textwidth]{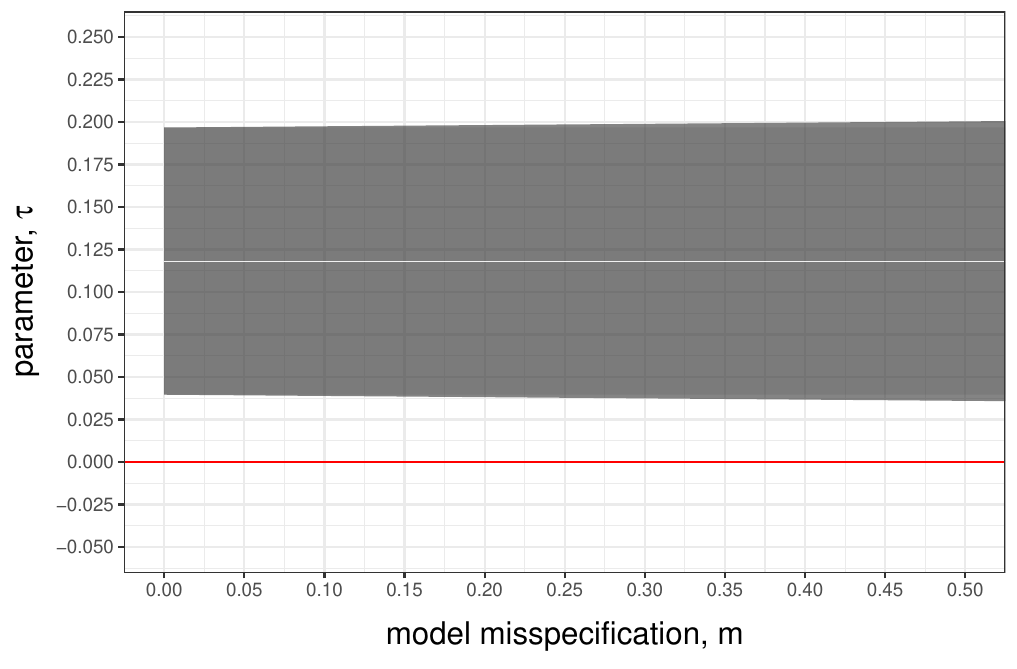}}
	\caption	{The graphs of the robustified confidence intervals, $C_{0.95}^{\MKW}(m)$}
	\label{fig:MKW}
	\vspace{\baselineskip}
	\subfloat[Pre-matching, $c_{\eG_{\hat\gamma}}^{\MD}=0.040$]{\includegraphics[width=0.45\textwidth]{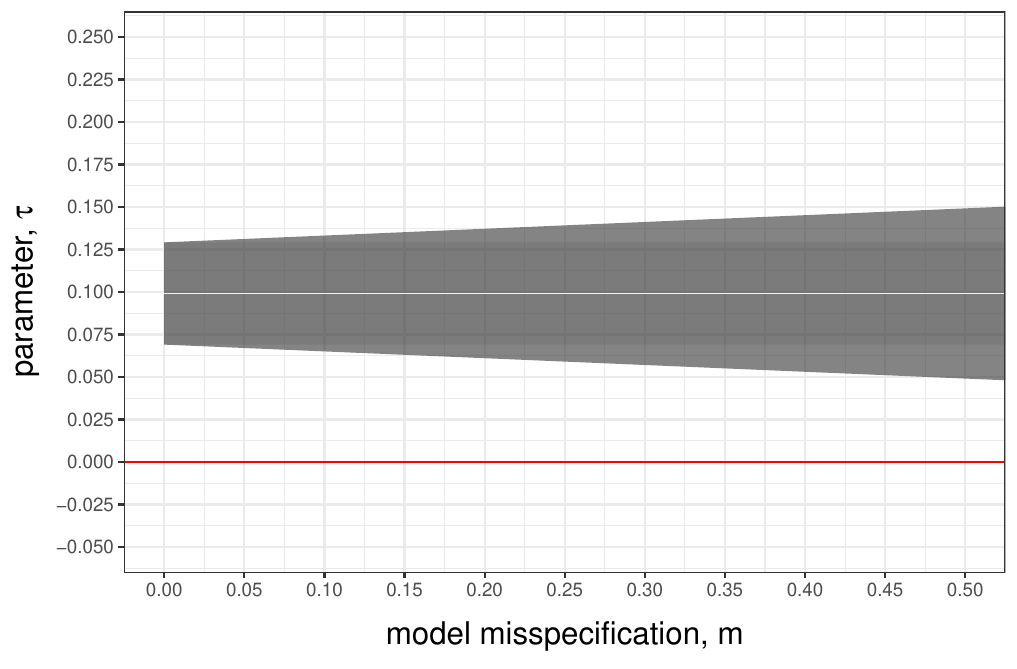}}
	\quad
	\subfloat[Post-matching, $c_{\eG_{\hat\gamma}^*}^{\MD}=0.003$]{\includegraphics[width=0.45\textwidth]{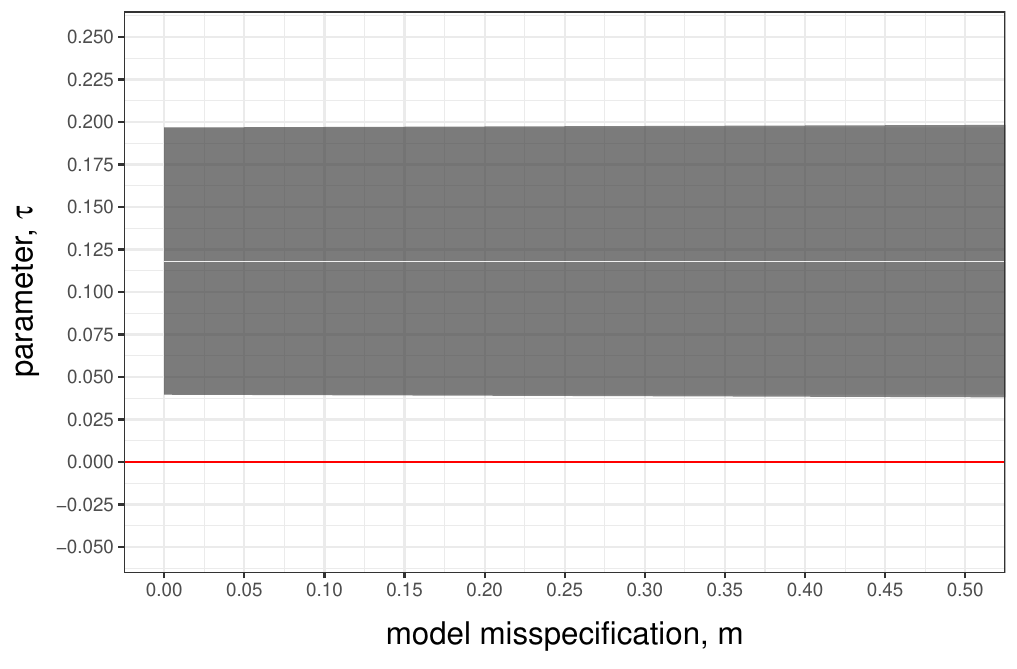}}
	\caption	{The graphs of the robustified confidence intervals, $C_{0.95}^{\MD}(m)$}
	\label{fig:MD}
\end{figure}

Also, a researcher may offer a rough estimate for the true conditional expectation function $f^{\hat\gamma}(\cdot,0)$, as indicated by the black dashed curve in Figure \ref{fig:model}, to guide readers on where to initiate their perturbation exercise. This can be useful particularly when the readers are not expected to hold a reliable prior about the true state $f^{\hat\gamma}(\cdot,0)$, and thus about the actual degree of misspecification $m_{\eG_{\hat\gamma},F_{\hat\gamma}}$.

The right panels of Figures \ref{fig:KS}--\ref{fig:MD} are the counterparts of the left ones, assuming the researcher has completed her design phase. It is assumed that she has applied nearest neighbor matching on the controls $X_i$. The imbalances $c_{\eG_{\hat\gamma}^*}$ between the push-forwards $\eG_{\hat\gamma}^{*d}$ within the resulting subsample $\mathcal S^*$ are minimal: $c_{\eG_{\hat\gamma}^*}^{\KS}=0.047$, $c_{\eG_{\hat\gamma}^*}^{\MKW}=0.007$, and $c_{\eG_{\hat\gamma}^*}^{\MD}=0.003$. This indicates that, unless the readers have a strong justification for an extreme perturbation---at least within this subsample $\mathcal S^*$---the bias of the researcher's estimand $\tilde\beta_{\eG^*,F}$ appears very small. 

This improved robustness comes at the cost of increased statistical uncertainty due to the smaller sample size. Let $\hat{\tilde\beta}^*$ denote the regression estimate for $\tilde\beta$ within $\mathcal S^*$. While its value, 0.118, is similar to that of $\hat\beta=0.099$, its standard error $se_{\tilde\beta_{\eG^*,F}}(\hat{\tilde\beta}^*)=0.040$ has substantially increased from $se_{\tilde\beta_{\eG,F}}(\hat\beta)=0.015$.\footnote{$se_{\tilde\beta_{\eG^*,F}}(\hat{\tilde\beta}^*)$ is the $|\mathcal S^*|^{-1/2}$-scaled square root of the second diagonal element of $\hat\Sigma^*$ in equation \eqref{eq:variance-estimator-consistency}, but computed using $\tilde Z_i=(1,D_i,X_i'\hat\gamma)'$ instead of $Z_i=(1,D_i,X_i')'$.} Therefore, a researcher may wish to navigate a trade-off between efficiency and robustness during the design phase. For instance, given her target level $\epsilon$ of precision, she may set the minimum robustness $\mathfrak m$ that she aims to uphold,   and search for a subsample $\mathcal S^*$ with the greatest cardinality, subject to its $c_{\eG_{\hat\gamma}^*}$ being less than $\epsilon/\mathfrak m$. We leave for future research the formal modeling of a researcher's decision when faced with these two competing concerns. 

Similar exercises can be performed using either the total variation or density ratio distance between the push-forwards, the results of which we present in the appendix.\footnote{As the L\'evy-Prokhorov distance requires non-trivial computation, we omit the corresponding results.}

\section{Conclusion}\label{sec:conclusion}
The design phase is formalized and justified in the context of linear regression: It is a process of adjusting the estimand via subsample selection, and the covariate balance of a subsample---used in the design phase as a criterion for the selection---informs on the maximum degree of misspecification that can be compromised when the subsample is used. The design phase can be viewed as a means to identify a \emph{better} estimand that is more robust to bias in the face of potential misspecification.

\bibliographystyle{chicago}
\bibliography{semi-te.bib}

\begin{thebibliography}{}

\bibitem[\protect\citeauthoryear{Abadie, Athey, Imbens, and Wooldridge}{Abadie
  et~al.}{2020}]{AbadieAtheyImbensWooldridge2020}
Abadie, A., S.~Athey, G.~W. Imbens, and J.~M. Wooldridge (2020).
\newblock Sampling-based versus design-based uncertainty in regression
  analysis.
\newblock {\em Econometrica\/}~{\em 88\/}(1), 265--296.

\bibitem[\protect\citeauthoryear{Abadie and Imbens}{Abadie and
  Imbens}{2008}]{AbadieImbens2008}
Abadie, A. and G.~W. Imbens (2008).
\newblock Estimation of the {C}onditional {V}ariance in {P}aired {E}xperiments.
\newblock {\em Annales d'{\'e}conomie et de statistique\/}~(91/92), 175--187.

\bibitem[\protect\citeauthoryear{Abadie and Imbens}{Abadie and
  Imbens}{2012}]{AbadieImbens2012}
Abadie, A. and G.~W. Imbens (2012).
\newblock Martingale {R}epresentation for {M}atching {E}stimators.
\newblock {\em Journal of the American Statistical Association\/}~{\em
  107\/}(498), 833--843.

\bibitem[\protect\citeauthoryear{Abadie, Imbens, and Zheng}{Abadie
  et~al.}{2014}]{AbadieImbensZheng2014}
Abadie, A., G.~W. Imbens, and F.~Zheng (2014).
\newblock Inference for {M}isspecified {M}odels {W}ith {F}ixed {R}egressors.
\newblock {\em Journal of the American Statistical Association\/}~{\em
  109\/}(508), 1601--1614.

\bibitem[\protect\citeauthoryear{Abadie and Spiess}{Abadie and
  Spiess}{2022}]{AbadieSpiess2022}
Abadie, A. and J.~Spiess (2022).
\newblock Robust {P}ost-{M}atching {I}nference.
\newblock {\em Journal of the American Statistical Association\/}~{\em
  117\/}(538), 983--995.

\bibitem[\protect\citeauthoryear{Andrews, Gentzkow, and Shapiro}{Andrews
  et~al.}{2017}]{AndrewsGentzkowShapiro2017}
Andrews, I., M.~Gentzkow, and J.~M. Shapiro (2017, 06).
\newblock Measuring the {S}ensitivity of {P}arameter {E}stimates to
  {E}stimation {M}oments.
\newblock {\em The Quarterly Journal of Economics\/}~{\em 132\/}(4),
  1553--1592.

\bibitem[\protect\citeauthoryear{Armstrong and Koles{\'a}r}{Armstrong and
  Koles{\'a}r}{2021}]{ArmstrongKolesar2021}
Armstrong, T.~B. and M.~Koles{\'a}r (2021).
\newblock Finite-{S}ample {O}ptimal {E}stimation and {I}nference on {A}verage
  {T}reatment {E}ffects {U}nder {U}nconfoundedness.
\newblock {\em Econometrica\/}~{\em 89\/}(3), 1141--1177.

\bibitem[\protect\citeauthoryear{Bonhomme and Weidner}{Bonhomme and
  Weidner}{2022}]{BonhommeWeidner}
Bonhomme, S. and M.~Weidner (2022).
\newblock Minimizing sensitivity to model misspecification.
\newblock {\em Quantitative Economics\/}~{\em 13\/}(3), 907--954.

\bibitem[\protect\citeauthoryear{Crump, Hotz, Imbens, and Mitnik}{Crump
  et~al.}{2009}]{CrumpHotzImbensMitnik2009}
Crump, R.~K., V.~J. Hotz, G.~W. Imbens, and O.~A. Mitnik (2009).
\newblock Dealing with limited overlap in estimation of average treatment
  effects.
\newblock {\em Biometrika\/}~{\em 96\/}(1), 187--199.

\bibitem[\protect\citeauthoryear{Dudley}{Dudley}{2002}]{Dudley2002}
Dudley, R. (2002).
\newblock {\em Real {A}nalysis and {P}robability}.
\newblock Cambridge studies in advanced mathematics. Cambridge University
  Press.

\bibitem[\protect\citeauthoryear{Dudley}{Dudley}{1968}]{Dudley1968}
Dudley, R.~M. (1968).
\newblock Distances of {P}robability {M}easures and {R}andom {V}ariables.
\newblock {\em The Annals of Mathematical Statistics\/}~{\em 39\/}(5), 1563 --
  1572.

\bibitem[\protect\citeauthoryear{Durrett}{Durrett}{2019}]{Durrett2019}
Durrett, R. (2019).
\newblock {\em Probability: {T}heory and {E}xamples\/} (5 ed.).
\newblock Cambridge Series in Statistical and Probabilistic Mathematics.
  Cambridge University Press.

\bibitem[\protect\citeauthoryear{Freedman}{Freedman}{2008a}]{Freedman2008a}
Freedman, D.~A. (2008a).
\newblock On {R}egression {A}djustments in {E}xperiments with {S}everal
  {T}reatments.
\newblock {\em The Annals of Applied Statistics\/}~{\em 2\/}(1), 176--196.

\bibitem[\protect\citeauthoryear{Freedman}{Freedman}{2008b}]{Freedman2008b}
Freedman, D.~A. (2008b).
\newblock On regression adjustments to experimental data.
\newblock {\em Advances in Applied Mathematics\/}~{\em 40\/}(2), 180--193.

\bibitem[\protect\citeauthoryear{Ho, Imai, King, and Stuart}{Ho
  et~al.}{2007}]{HoImaiKingStuart2007}
Ho, D.~E., K.~Imai, G.~King, and E.~A. Stuart (2007).
\newblock Matching as {N}onparametric {P}reprocessing for {R}educing {M}odel
  {D}ependence in {P}arametric {C}ausal {I}nference.
\newblock {\em Political Analysis\/}~{\em 15\/}(3), 199--236.

\bibitem[\protect\citeauthoryear{Huber and Ronchetti}{Huber and
  Ronchetti}{2009}]{HuberRonchetti2009}
Huber, P.~J. and E.~M. Ronchetti (2009).
\newblock {\em Robust Statistics}.
\newblock John Wiley \& Sons.

\bibitem[\protect\citeauthoryear{Imbens}{Imbens}{2015}]{Imbens2015}
Imbens, G.~W. (2015).
\newblock Matching {M}ethods in {P}ractice: {T}hree {E}xamples.
\newblock {\em The Journal of Human Resources\/}~{\em 50\/}(2), 373--419.

\bibitem[\protect\citeauthoryear{Imbens and Rubin}{Imbens and
  Rubin}{2015}]{ImbensRubin2015}
Imbens, G.~W. and D.~B. Rubin (2015).
\newblock {\em Causal {I}nference for {S}tatistics, {S}ocial, and {B}iomedical
  {S}ciences: {A}n {I}ntroduction}.
\newblock Cambridge University Press.

\bibitem[\protect\citeauthoryear{Kallenberg}{Kallenberg}{2021}]{Kallenberg2021}
Kallenberg, O. (2021).
\newblock {\em Foundations of {M}odern {P}robability}.
\newblock Springer Cham.

\bibitem[\protect\citeauthoryear{Kolmogorov and Fomin}{Kolmogorov and
  Fomin}{1975}]{KolmogorovFomin1975}
Kolmogorov, A.~N. and S.~V. Fomin (1975).
\newblock {\em Introductory {R}eal {A}nalysis}.
\newblock Dover Publications.

\bibitem[\protect\citeauthoryear{Lin}{Lin}{2013}]{Lin2013}
Lin, W. (2013).
\newblock Agonistic {N}otes on {R}egression {A}djustments to {E}xperimental
  {D}ata: {R}eexamining {F}reedman's {C}ritique.
\newblock {\em The Annals of Applied Statistics\/}~{\em 7\/}(1), 295--318.

\bibitem[\protect\citeauthoryear{Munnell, Tootell, Browne, and
  McEneaney}{Munnell et~al.}{1996}]{MunnellTootellBrowneMcEneaney1996}
Munnell, A.~H., G.~M.~B. Tootell, L.~E. Browne, and J.~McEneaney (1996).
\newblock Mortgage {L}ending in {B}oston: {I}nterpreting {HMDA} {D}ata.
\newblock {\em The American Economic Review\/}~{\em 86\/}(1), 25--53.

\bibitem[\protect\citeauthoryear{Negi and Wooldridge}{Negi and
  Wooldridge}{2021}]{NegiWooldridge2021}
Negi, A. and J.~M. Wooldridge (2021).
\newblock Revisiting {R}egression {A}djustment in {E}xperiments with
  {H}eterogeneous {T}reatment {E}ffects.
\newblock {\em Econometric Reviews\/}~{\em 40\/}(5), 504--534.

\bibitem[\protect\citeauthoryear{{Online Causal Inference Seminar}}{{Online
  Causal Inference Seminar}}{2022}]{Imbens2022}
{Online Causal Inference Seminar} (2022, 1).
\newblock Interview with {G}uido {I}mbens.
\newblock \url{https://youtu.be/DuVVy1WM-qM}.

\bibitem[\protect\citeauthoryear{Protter and Morrey}{Protter and
  Morrey}{1991}]{ProtterMorrey1991}
Protter, M.~H. and C.~B. Morrey (1991).
\newblock {\em A {F}irst {C}ourse in {R}eal {A}nalysis}.
\newblock Springer-Verlag New York.

\bibitem[\protect\citeauthoryear{Stuart}{Stuart}{2010}]{Stuart2010}
Stuart, E.~A. (2010).
\newblock {Matching Methods for Causal Inference: A Review and a Look Forward}.
\newblock {\em Statistical Science\/}~{\em 25\/}(1), 1 -- 21.

\bibitem[\protect\citeauthoryear{Vallender}{Vallender}{1974}]{Vallender1974}
Vallender, S.~S. (1974).
\newblock Calculation of the wasserstein distance between probability
  distributions on the line.
\newblock {\em Theory of Probability \& Its Applications\/}~{\em 18\/}(4),
  784--786.

\end{thebibliography}

\appendix
\section{Proofs}
\subsection{Proof of Proposition \ref{thm:representation}}
We use the following observation for our result.
\begin{lem} \label{thm:Y-moments} Suppose that Assumption \ref{ass:regularity} holds. Then, 
\begin{align}
	\Exp[Y|D=d] = \alpha + \beta d + \Exp[s(X)'\gamma|D=d].	
\end{align}
\end{lem}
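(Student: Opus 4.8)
The plan is to reduce the claim to the statement that the regression error $E$ has mean zero within each treatment arm, i.e., $\Exp[E\mid D=d]=0$ for each $d\in\{0,1\}$. Granting this, I would simply condition the defining model equation $Y=\alpha+\beta D+s(X)'\gamma+E$ on the event $\{D=d\}$: the term $\alpha+\beta D$ collapses to the constant $\alpha+\beta d$, the term $s(X)'\gamma$ contributes $\Exp[s(X)'\gamma\mid D=d]$ by linearity of conditional expectation, and the error term drops out, giving the asserted identity directly.

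To obtain the arm-wise mean-zero property, I would exploit that the orthogonality condition $\Exp[ZE]=\mathbf 0$ in Assumption \ref{ass:regularity}, with $Z=(1,D,s(X)')'$, contains among its coordinates the two scalar conditions $\Exp[E]=0$ (from the constant entry) and $\Exp[DE]=0$ (from the $D$ entry). Since $D$ is binary, I can unfold these by conditioning on $D$: one has $\Exp[DE]=\Exp[E\mid D=1]\Pr[D=1]$ and $\Exp[E]=\sum_{d\in\{0,1\}}\Exp[E\mid D=d]\Pr[D=d]$. The positive definiteness of $\Exp[ZZ']$ forces $D$ to be non-degenerate, so that $0<\Pr[D=1]<1$; hence $\Exp[DE]=0$ yields $\Exp[E\mid D=1]=0$, and feeding this back into $\Exp[E]=0$ forces $\Exp[E\mid D=0]=0$ as well.

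The step requiring the most care is the bookkeeping that makes these decompositions legitimate rather than the algebra itself: I would confirm that $E$ is integrable (which follows from $\Exp[\|ZE\|]<\infty$ in Assumption \ref{ass:regularity}, as the constant coordinate of $Z$ bounds $\Exp[|E|]$), so that all the conditional expectations are well-defined, and that $\Pr[D=d]>0$ for both $d$, which I pin down from the positive-definiteness of $\Exp[ZZ']$. Once non-degeneracy and integrability are in hand, the remaining computation is immediate.
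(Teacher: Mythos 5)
Your proposal is correct and follows essentially the same route as the paper: the paper's appeal to the right-hand side of $Y-s(X)'\gamma=\alpha+\beta D+E$ being saturated in the binary, non-degenerate $D$ is exactly the observation you spell out explicitly, namely that $\Exp[E]=\Exp[DE]=0$ together with $0<\Pr[D=1]<1$ (forced by positive definiteness of $\Exp[ZZ']$) yields $\Exp[E\mid D=d]=0$ for both $d$. Your additional bookkeeping on integrability of $E$ via $\Exp[\|ZE\|]<\infty$ matches the paper's remark that the relevant conditional expectations are well-defined.
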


\begin{proof}
Since the right-hand side of
\begin{align*}
	Y- s(X)'\gamma = \alpha + \beta D + E
\end{align*}
is saturated-in-$D$, $\Exp[E]=\Exp[DE]=0$, and $\Exp[ZZ']$ is positive definite,
\begin{align*}
	\Exp[Y- s(X)'\gamma|D] = \alpha + \beta D,
\end{align*}
where the left-hand side is well-defined, as $Y$ and $s(X)$ have their first moments. Then, expanding the left-hand side yields the desired result.
\end{proof}
Now, we have
\begin{align*}
	\tau &=\Exp[Y|D=1] - \Exp[\Exp[Y|X,D=0]|D=1]\\
	&= \Exp[Y|D=1] - \biggl(\int \Exp[Y|X=x,D=0]G^0(\mathrm dx) + \int \Exp[Y|X=x,D=0](G^1-G^0)(\mathrm dx)\biggr)\\
	&= \beta + (\Exp[s(X)'\gamma|D=1] - \Exp[s(X)'\gamma|D=0]) - \int \Exp[Y|X=x,D=0](G^1-G^0)(\mathrm dx)\\
	&= \beta - \int(\Exp[Y|X=x,D=0] - s(x)'\gamma)(G^1-G^0)(\mathrm dx)\\
	&= \beta - \int(\Exp[Y|X=x,D=0] - (\alpha + s(x)'\gamma))(G^1-G^0)(\mathrm dx),
\end{align*}
where Lemma \ref{thm:Y-moments} is used in the third equality.

\subsection{Proof of Corollary \ref{thm:bias-bound-KS}}
By Assumption \ref{ass:continuity}, $f(\cdot,0)-l(\cdot,0)$ is continuous and bounded on $\cup_{d\in\{0,1\}}\mathcal X^d$, which is closed by definition. By the Tietze extension theorem, there exists a continuous and bounded extension on $\mathbb R$. Thus, $\mathcal H$ is non-empty. Take any $h\in\mathcal H$. Then,
\begin{align*}
	\int_{[a,b]} (f(x,0)- l(x,0))G^d(\mathrm dx) &= \int_{[a,b]\cap \mathcal X^d} (f(x,0)- l(x,0))G^d(\mathrm dx)\\
	&=\int_{[a,b]\cap \mathcal X^d} h(x)G^d(\mathrm dx) = \int_{[a,b]} h(x)G^d(\mathrm dx).
\end{align*}
As $h$ is continuous and $G^d$ is nondecreasing, the Riemann-Stieltjes integral $\int_a^b h(x)\mathrm dG^d(x)$ exists and coincides with the Lebesgue-Stieltjes integral $\int_{[a,b]} h(x)G^d(\mathrm dx)$. \citep[p.368]{KolmogorovFomin1975} Integrating by parts,
\begin{align*}
	\int_{[a,b]} h(x)G^d(\mathrm dx) = \int_a^b h(x)\mathrm dG^d(x) &= h(b)G^d(b) - h(a)G^d(a) - \int_a^b G^d(x)\mathrm dh(x).
\end{align*}
\citep[p.320]{ProtterMorrey1991} Combining the previous two observations, we have
\begin{align*}
	&\int_{[a,b]} (f(x,0)- l(x,0))(G^1-G^0)(\mathrm dx)\\
	&= h(b)(G^1-G^0)(b) - h(a)(G^1-G^0)(a) - \int_a^b (G^1-G^0)(x)\mathrm dh(x).
\end{align*}
The third term of the right-hand side is bounded by
\begin{align*}
	\biggl|\int_a^b (G^1-G^0)(x)\mathrm dh(x)\biggr| &\leq \sup_{x\in[a,b]}|G^1(x)-G^0(x)|V_a^b[h] \leq \sup_{x\in[-\infty,\infty]}|G^1(x)-G^0(x)|V_{-\infty}^\infty[h].
\end{align*}
Taking $a,b\rightarrow\pm\infty$, by the dominated convergence theorem, and since $\lim_{b\rightarrow\infty}h(b)(G^1-G^0)(b)=\lim_{a\rightarrow-\infty}h(a)(G^1-G^0)(a)=0$, where we use the boundedness of $h$,
\begin{align*}
	\lim_{a,b\rightarrow\pm \infty}\biggl|\int_a^b (G^1-G^0)(x)\mathrm dh(x)\biggr| = \biggl|\int_{[-\infty,\infty]} (f(x,0)- l(x,0))(G^1-G^0)(\mathrm dx)\biggr|.
\end{align*}
This term does not depend on $h$, and thus we have
\begin{align*}
	\biggl|\int (f(x,0)- l(x,0))(G^1-G^0)(\mathrm dx)\biggr|	 \leq \sup_{x\in[-\infty,\infty]}|G^1(x)-G^0(x)| \inf_{h\in\mathcal H}V_{-\infty}^\infty[h].
\end{align*}
Now, the desired result follows from Proposition \ref{thm:representation}.

\subsection{Proof of Corollary \ref{thm:bias-bound-MKW}}
Suppose that $\|f(\cdot,0)-l(\cdot,0)\|_{\Lip}=0$. Then, $f(\cdot,0)-l(\cdot,0)$ is constant on $\cup_{d\in\{0,1\}}\mathcal X^d$. Then, by Proposition \ref{thm:representation}, 
\begin{align*}
	\beta - \tau	 = (f(\cdot,0)-l(\cdot,0))\int (G^1-G^0)(\mathrm dx) = (f(\cdot,0)-l(\cdot,0))(1-1)=0.
\end{align*}
That is, equation \eqref{eq:bias-bound-MKW} holds in the form of $0=0$.

Now, assume that the Lipschitz seminorm is positive. By Theorem 11.8.2 of \citet{Dudley2002},
\begin{align*}
	\inf_{\pi\in\Pi(G^1,G^0)}\int \|x_1-x_2\|\pi(\mathrm dx_1\mathrm dx_2)=\sup_{\|h\|_{\Lip}\leq 1}\biggl|\int h(x) (G^1-G^0)(\mathrm dx) \biggr|,
\end{align*}
where $h(\cdot)$ is a real-valued function defined on $\cup_{d\in\{0,1\}}\mathcal X^d$. The desired result follows from the observation
\begin{align*}
	|\beta-\tau|	=\biggl|\int \frac{f(x,0)-l(x,0)}{\|f(\cdot,0)-l(\cdot,0)\|_{\Lip}}(G^1-G^0)(\mathrm dx)\biggr|\|f(\cdot,0)-l(\cdot,0)\|_{\Lip}.
\end{align*}

\subsection{Proof of Corollary \ref{thm:bias-bound-MD}}
Note that
\begin{align*}
	\int (f(x,0)-l(x,0))G^1(\mathrm dx)&\leq \int (r(x)'\zeta_1^\star + \xi_1^{1\star}(x))G^1(\mathrm dx)\\
	&=\zeta_1^{\star\prime}\int r(x)G^1(\mathrm dx) + \int \xi_1^{1\star}(x)G^1(\mathrm dx)\text{ and}\\
	-\int (f(x,0)-l(x,0))G^0(\mathrm dx)&\leq \int (-r(x)'\zeta_1^\star + \xi_1^{0\star}(x))G^0(\mathrm dx)\\
	&=\zeta_1^{\star\prime}\int -r(x)G^0(\mathrm dx) + \int \xi_1^{0\star}(x)G^0(\mathrm dx).
\end{align*}
By Proposition \ref{thm:representation},
\begin{align*}
	\beta - \tau \leq \underbrace{\zeta_1^{\star\prime}\int r(x)(G^1-G^0)(\mathrm dx)}_{\displaystyle\mathclap{\leq |\zeta_1^{\star}|'\biggl|\int r(x)(G^1-G^0)(\mathrm dx)\biggr|}} + \sum_{d\in\{0,1\}}\int \xi_1^{d\star}(x)G^d(\mathrm dx).
\end{align*}
Similarly, the inequalities
\begin{align*}
	\int (f(x,0)-l(x,0))G^1(\mathrm dx)&\geq \int (r(x)'\zeta_{-1}^\star - \xi_{-1}^{1\star}(x))G^1(\mathrm dx)\\
	&= \zeta_{-1}^{\star\prime}\int r(x)G^1(\mathrm dx) - \int \xi_{-1}^{1\star}(x)G^1(\mathrm dx)\text{ and}\\
	-\int (f(x,0)-l(x,0))G^0(\mathrm dx)&\geq \int (-r(x)'\zeta_{-1}^\star-\xi_{-1}^{0\star}(x))G^0(\mathrm dx)\\
	&=\zeta_{-1}^{\star\prime}\int-r(x)G^0(\mathrm dx) - \int \xi_{-1}^{0\star}(x)G^0(\mathrm dx)
\end{align*}
combined with Proposition \ref{thm:representation} implies
\begin{align*}
	\beta - \tau \geq \underbrace{\zeta_{-1}^{\star\prime}\int r(x)(G^1-G^0)(\mathrm dx)}_{\displaystyle\mathclap{\geq -|\zeta_{-1}^\star|'\biggl|\int r(x)(G^1-G^0)(\mathrm dx)\biggr|}}-\sum_{d\in\{0,1\}}\int \xi_{-1}^{d\star}(x)G^d(\mathrm dx).
\end{align*}

\subsection{Proof of Proposition \ref{thm:consistency}}
Fix $\{(X_i',D_i)'\}_{i\in\mathcal S}=\{(x_i',d_i)'\}_{i\in\mathcal S}$. Since
\begin{align*}
	\hat\theta^* - \theta_{\eG^*,F} = \biggl(\frac{1}{|\mathcal S^*|}\sum_{i\in\mathcal S^*}Z_iZ_i'\biggr)^{-1}\frac{1}{|\mathcal S^*|}	\sum_{i\in\mathcal S^*}Z_iU_i, 
\end{align*}
and by Assumption \ref{ass:design-matrix-inverse}, it is enough to show that, as $|\mathcal S|$ tends to infinity, 
\begin{align*}
	\Var_F\biggl[\frac{1}{|\mathcal S^*|}\sum_{i\in\mathcal S^*} Z_iU_i\biggm|\{(X_i',D_i)'\}_{i\in\mathcal S}\biggr]\rightarrow 0.	
\end{align*}

Note that
\begin{align*}
	&\Var_F\biggl[\sum_{i\in\mathcal S}\frac{1}{|\mathcal S^*|}\mathbf 1\{i\in\mathcal S^*\}Z_iU_i\biggm|\{(X_i',D_i)'\}_{i\in\mathcal S}\biggr]\\
	&= \sum_{i\in\mathcal S}\sum_{j\in\mathcal S}\Cov_F\biggl[\frac{1}{|\mathcal S^*|}\mathbf 1\{i\in\mathcal S^*\}Z_iU_i,\frac{1}{|\mathcal S^*|}\mathbf 1\{j\in\mathcal S^*\}Z_jU_j\biggm|\{(X_i',D_i)'\}_{i\in\mathcal S}\biggr]\\
	&= \frac{1}{|\mathcal S^*|^2}\sum_{i\in\mathcal S}\sum_{j\in\mathcal S}\mathbf 1\{i\in\mathcal S^*\}\mathbf 1\{j\in\mathcal S^*\}\Cov_F[Z_iU_i,Z_jU_j|\{(X_i',D_i)'\}_{i\in\mathcal S}]\\
	&= \frac{1}{|\mathcal S^*|^2}\sum_{i\in\mathcal S^*}\Var_F[Z_iU_i|(X_i',D_i)']= \frac{1}{|\mathcal S^*|^2}\sum_{i\in\mathcal S^*}Z_iZ_i'\Exp_F[U_i^2|(X_i',D_i)'],
\end{align*}
where the second and third equations hold by Assumption \ref{ass:function-subsample} and
\begin{align*}
	&\Cov_F[Z_iU_i,Z_jU_j|\{(X_i',D_i)'\}_{i\in\mathcal S}]\\
	&= Z_iZ_j'\Exp_F[U_iU_j|\{(X_i',D_i)'\}_{i\in\mathcal S}]
	- Z_iZ_j'\Exp_F[U_i|(X_i',D_i)']\Exp_F[U_j|(X_j',D_j)']=0,
\end{align*}
respectively. By Assumption \ref{ass:bounded},
\begin{align*}
	\biggl\|\Var_F\biggl[\frac{1}{|\mathcal S^*|}\sum_{i\in\mathcal S^*} Z_iU_i\biggm|\{(X_i',D_i)'\}_{i\in\mathcal S}\biggr]\biggr\|
	\leq \frac{1}{|\mathcal S^*|^2}\sum_{i\in\mathcal S^*}\Exp_F[\|Z_iU_i\|^2|(X_i',D_i)']\leq \frac{M}{|\mathcal S^*|}.
\end{align*}
for some fixed positive constant $M>0$. This yields the desired result.

\subsection{Proof of Proposition \ref{thm:normality}}
Redefine the index set as $\mathcal S=\{1,\dots,|\mathcal S|\}\times \{\mathcal S\}$. To simplify notation, however, we suppress the dependence of the relabeled indices on $\mathcal S$.

For each $i=1,\dots,|\mathcal S|$, let $\xi_i\equiv\Sigma^{*-\frac{1}{2}}(\Gamma^*)^{-1}|\mathcal S^*|^{-1/2}\mathbf 1\{i\in\mathcal S^*\}Z_iU_i$, so that 
\begin{align*}
	\Sigma^{*-\frac{1}{2}}\sqrt{|\mathcal S^*|}(\hat\theta^*
	-\theta_{\eG^*,F}) = \sum_{i=1}^{|\mathcal S|}\xi_i.
\end{align*}
Fix $\{(X_i',D_i)'\}_{i=1}^{|\mathcal S|}=\{(x_i',d_i)'\}_{i=1}^{|\mathcal S|}$. Let $t\in\mathbb R^{\dim\theta}$. Note that, by Assumptions \ref{ass:design-matrix-inverse}--\ref{ass:variance-consistency},
\begin{align*}
	&\sum_{i=1}^{|\mathcal S|}\Exp_F[(t'\xi_i)^2|\{(X_i',D_i)'\}_{i\in\mathcal S}]\\
	&= t'\Sigma^{*-\frac{1}{2}}(\Gamma^*)^{-1}\biggl(\frac{1}{\sqrt{|\mathcal S^*|}}\sum_{i\in\mathcal S^*}Z_iZ_i'\Exp_F[U_i^2|(X_i',D_i)']\biggr)(\Gamma^*)^{-1}\Sigma^{*-\frac{1}{2}}t\rightarrow t't,
\end{align*}
and that, by Assumption \ref{ass:bounded-lyapunov}, for an arbitrary fixed positive constant $\varepsilon>0$,
\begin{align*}
	&\sum_{i=1}^{|\mathcal S|}\Exp_F[|t'\xi_i|^2\mathbf 1\{|t'\xi_i|>\varepsilon\}|\{(X_i',D_i)'\}_{i=1}^{|\mathcal S|}]\\
	&\leq \sum_{i=1}^{|\mathcal S|}\Exp_F\biggl[|t'\xi_i|^{2+\delta}\frac{1}{\varepsilon^\delta}\biggm|\{(X_i',D_i)'\}_{i=1}^{|\mathcal S|}\biggr]\\
	&\leq \frac{\|t\|^{2+\delta}}{\varepsilon^\delta}\underline\lambda^{-\frac{3}{2}(2+\delta)}\frac{1}{|\mathcal S^*|^{\frac{1}{2}(2+\delta)}}\sum_{i=1}^{|\mathcal S|}\Exp_F[\|Z_iU_i\|^{2+\delta}|(X_i',D_i)']\rightarrow 0,
\end{align*}
By the Lindeberg-Feller theorem \citep[Theorem 3.4.10]{Durrett2019}, 
\begin{align*}
	\sum_{i=1}^{|\mathcal S|}t'\xi_i\rightarrow_d \mathcal N(0,t't).
\end{align*}
The desired result follows from the Cram\'er-Wold device.

\subsection{Proof of Proposition \ref{thm:variance-estimator-consistency}}
Our result requires Lemma \ref{thm:closest-variance}, and for that, we establish Lemma \ref{thm:closest-moments}. Lemma \ref{thm:closest} provides lower-level conditions for the key condition used in Lemma \ref{thm:closest-moments}. The proofs closely follow those of \citet[Lemma A.2, Lemma A.3]{AbadieImbensZheng2014} and \citet[Lemma 1]{AbadieImbens2008}.

\begin{lem} \label{thm:closest}
	Let $\|\cdot\|$ denote the Euclidean norm on $\mathbb R^q$. Let $\mathcal S^*$ be a finite index set, and $\mathbb W_{\mathcal S^*}$ a subset of $\mathbb R^q$ such that $\diam(\mathbb W_{\mathcal S^*})\equiv \sup_{w,w'\in\mathbb W_{\mathcal S^*}}d(w,w')$ is bounded by a fixed constant. Let $\{w_i\}_{i\in\mathcal S^*}$ be a finite subset of $\mathbb W_{\mathcal S^*}$. Let $\nu$ be a metric on $\mathbb R^q$, which is dominated by the Euclidean metric. For each $i\in\mathcal S$, let
	\begin{align*}
		l_W^*(i)=l_W^*(i,w_i)\equiv \arg\min_{j\in\mathcal S^*/\{i\}}\nu(w_i,w_j). 
	\end{align*}
	Then, as $|\mathcal S^*|$ tends to infinity, 
	\begin{align*}
		\frac{1}{|\mathcal S^*|}\sum_{i\in\mathcal S^*}\nu(w_i,w_{l_W^*(i)})\rightarrow 0.
	\end{align*}
\end{lem}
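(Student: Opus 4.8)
The plan is to reduce the statement from the metric $\nu$ to the Euclidean metric and then close it with an elementary covering argument. First I would exploit the optimality of $l_W^*(i)$: since $l_W^*(i)$ minimizes $\nu(w_i,\cdot)$ over $\mathcal S^*\setminus\{i\}$, and since the Euclidean nearest neighbor $k_i\equiv\arg\min_{j\neq i}\|w_i-w_j\|$ is itself a candidate, we get $\nu(w_i,w_{l_W^*(i)})\leq \nu(w_i,w_{k_i})$. Because $\nu$ is dominated by the Euclidean metric, there is a fixed constant $C$ with $\nu(w,w')\leq C\|w-w'\|$, so that $\nu(w_i,w_{l_W^*(i)})\leq C\min_{j\neq i}\|w_i-w_j\|$. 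It therefore suffices to prove that $|\mathcal S^*|^{-1}\sum_{i\in\mathcal S^*}\min_{j\neq i}\|w_i-w_j\|\to 0$.

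Next I would invoke the bounded-diameter hypothesis. Writing $D$ for the fixed bound on $\diam(\mathbb W_{\mathcal S^*})$, all the $w_i$ lie in a box of side at most $D$, which for $0<\delta\leq D$ can be tiled by at most $N(\delta)\leq(2D/\delta)^q$ cubes of side $\delta$. I would then classify each index $i$ as \emph{paired} if its cube contains at least one other $w_j$, and \emph{lonely} otherwise. For a paired $i$ the Euclidean nearest-neighbor distance is at most the cube diameter $\sqrt q\,\delta$, while for a lonely $i$ it is at most $D$; and since distinct lonely indices occupy distinct cubes, there are at most $N(\delta)$ of them. Summing these bounds gives
\[
\sum_{i\in\mathcal S^*}\min_{j\neq i}\|w_i-w_j\|\;\leq\;|\mathcal S^*|\,\sqrt q\,\delta \;+\; N(\delta)\,D.
\]
Dividing by $|\mathcal S^*|$ and letting $|\mathcal S^*|\to\infty$ with $\delta$ fixed kills the second term, so $\limsup_{|\mathcal S^*|\to\infty}|\mathcal S^*|^{-1}\sum_i\min_{j\neq i}\|w_i-w_j\|\leq\sqrt q\,\delta$; sending $\delta\downarrow 0$ then yields the claim.

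The one point that genuinely demands care is that this is a triangular array: both $\mathbb W_{\mathcal S^*}$ and $\{w_i\}_{i\in\mathcal S^*}$ vary with $\mathcal S^*$, so I must make sure the constants $C$, $D$, and hence $N(\delta)$, do not depend on $\mathcal S^*$. The uniform diameter bound and the fixed domination constant supply exactly this uniformity, which is what lets the counting bound hold along the whole array. The reduction in the first step is the only subtle piece, since $l_W^*$ optimizes over $\nu$ rather than Euclidean distance; comparing against the Euclidean nearest neighbor \emph{before} applying domination is what makes it go through.
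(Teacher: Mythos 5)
Your proof is correct, and it follows the same overall strategy as the paper's: show that, uniformly in $|\mathcal S^*|$, only a bounded number of indices can have a nearest-neighbor distance exceeding a given threshold, so that the average is eventually dominated by the threshold itself. The implementations differ in two ways worth noting. First, you begin by comparing $l_W^*(i)$ against the \emph{Euclidean} nearest neighbor and then applying the domination $\nu\leq C\|\cdot\|$, which cleanly reduces the whole problem to Euclidean nearest-neighbor distances; the paper instead works with $\nu$-distances throughout and converts $\nu$-balls into Euclidean balls only when it needs disjointness. Second, your counting step is a pigeonhole argument on a tiling by cubes of side $\delta$ (lonely indices occupy distinct cubes, hence number at most $(2D/\delta)^q$), whereas the paper packs disjoint Euclidean $\varepsilon$-balls into an enlarged container and bounds their number by a volume ratio. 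The two counts deliver the same uniform bound; yours avoids ball volumes and the Gamma function and is slightly more elementary, while the paper's packing bound is the one mirrored from \citet{AbadieImbens2008}. You are also right to flag the triangular-array issue --- the only thing that matters is that $C$, $D$, and hence $N(\delta)$ do not depend on $\mathcal S^*$, which both arguments secure from the fixed domination constant and the uniform diameter bound. No gap.
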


\begin{proof}
By assumption, there exists some positive constant $K>0$ such that $\nu(w,w')\leq K\|w-w'\|$. Let $\varepsilon>0$ be arbitrary. Let $B_\nu^\varepsilon(w)\equiv\{w'\in\mathbb W_{\mathcal S^*}:\nu(w,w')<\varepsilon\}$. Suppose that there are $N^\varepsilon$ $i$'s such that $\nu(w_i,w_{l_W^*(i)})>2\varepsilon K$. Then, by definition, for such $i$'s, $B_\nu^{\varepsilon K}(w_i)\cap B_\nu^{\varepsilon K}(w_j)=\emptyset$ for all $j\in\mathcal S^*/\{i\}$, which implies
\begin{align}
	B^{\varepsilon}(w_i)\cap B^\varepsilon(w_j)=\emptyset,\text{ for all }j\in\mathcal S^*/\{i\}.\label{eq:exclusive}
\end{align}

Let $\mathcal O_{\mathcal S^*}$ be a closed Euclidean ball with radius $\diam(\mathbb W_{\mathcal S^*})$ such that $\mathbb W_{\mathcal S^*}\subseteq \mathcal O_{\mathcal S^*}$. By equation \eqref{eq:exclusive}, 
\begin{align*}
	N^\epsilon \frac{\pi^{q/2}\varepsilon^q}{\Gamma(\frac{q}{2}+1)}
	&= \sum_{i\in\mathcal S^*:\nu(w_i,w_{l_W^*(i)})>2\varepsilon K}\Vol[B^\varepsilon(w_i)]\\
	&=\Vol\biggl[\bigsqcup_{i\in\mathcal S^*:\nu(w_i,w_{l_W^*(i)})>2\varepsilon K} B^\varepsilon (w_i)\biggr]
	\leq\Vol[\mathcal O_{\mathcal S^*}^\varepsilon]= \frac{\pi^{q/2}(\diam(\mathbb W_{\mathcal S^*})(1+\varepsilon))^q}{\Gamma(\frac{q}{2}+1)},
\end{align*}
where $\Vol[\cdot]$ computes the argument's volume, $\Gamma(\cdot)$ denotes the Gamma function, and $\mathcal O_{\mathcal S^*}^\varepsilon$ the $\varepsilon$-enlargement of $\mathcal O_{\mathcal S^*}$. Thus, $N^\varepsilon\leq \diam(\mathbb W_{\mathcal S^*})^q((1+\varepsilon)/\varepsilon)^q$. 

Now, it follows from
\begin{align*}
	\frac{1}{|\mathcal S^*|}\sum_{i\in\mathcal S^*}\nu(w_i,w_{l_W(i)})
	&\leq \frac{1}{|\mathcal S^*|}\biggl(N^\varepsilon \diam_\nu(\mathbb W_{\mathcal S^*}) + 2\varepsilon K(|\mathcal S^*|-N^\varepsilon)\biggr)\\
	&\leq K\diam(\mathbb W_{\mathcal S^*})\frac{N^\varepsilon}{|\mathcal S^*|} + 2\varepsilon K
	\leq K\diam(\mathbb W_{\mathcal S^*})^{q+1}\frac{(\frac{1+\varepsilon}{\varepsilon})^q}{|\mathcal S^*|} + 2\varepsilon K
\end{align*}
that $\limsup_{|\mathcal S^*|\rightarrow\infty}|\mathcal S^*|^{-1}\sum_{i\in\mathcal S^*}\nu(w_i,w_{l_W(i)})\leq 2\varepsilon$. 
\end{proof}

\begin{lem} \label{thm:closest-moments}
Let $\{(V_i,W_i')'\}_{i\in\mathcal S}$ be a finite random sample in $\mathbb R\times \mathbb R^{\dim W}$, where $\mathcal L_{V|W}$ may be a function of $\{W_i\}_{i\in\mathcal S}$, i.e., $\mathcal L_{V|W}=\mathcal L_{V|W}(\{W_i\}_{i\in\mathcal S})$. Let $\mathcal S^*\subseteq \mathcal S$ be a function of $\{W_i\}_{i\in\mathcal S}$, i.e., $\mathcal S^*=\mathcal S^*(\{W_i\}_{i\in\mathcal S})$, and $i\in\mathcal S^*\mapsto l_W^*(i)\neq i\in\mathcal S^*$ a derangement that is a function of $\{W_i\}_{i\in\mathcal S}$, i.e., $l_W^*=l_W^*(\{W_i\}_{i\in\mathcal S})$. Let $R$ be a natural number, and $\nu$ a metric on $\mathbb R^{\dim W}$. Let $\mathbb W_{\mathcal S^*}\equiv \{W_i:i\in\mathcal S^*\}$. Now, fix vectors $\{w_i\}_{i\in\mathcal S}\subseteq\mathbb R^{\dim W}$, and assume that, given $\{W_i\}_{i\in\mathcal S}=\{w_i\}_{i\in\mathcal S}$, the following conditions hold for every natural number $r$ no greater than $R$:
	\begin{enumerate}[label=(\roman*)]
		\item $\Exp_{\mathcal L_{V|W}}[V^r|W=\cdot]$ is  $\nu$-Lipschitz on $\mathbb W_{\mathcal S^*}$ with a fixed Lipschitz constant $L_r$;
		\item $\Exp_{\mathcal L_{V|W}}[V^r|W=\cdot]$ is bounded on $\mathbb W_{\mathcal S^*}$ by a fixed constant $M_r$; and
		\item $\lim_{|\mathcal S|\rightarrow\infty}|\mathcal S^*|=\infty$ and $\lim_{|\mathcal S^*|\rightarrow\infty} |\mathcal S^*|^{-1}\sum_{i\in\mathcal S^*}\nu(W_i,W_{l_W^*(i)})=0$.
	\end{enumerate}
	Then, for all natural numbers $k$ and $m$ such that $k\vee m\leq R/2$,
	\begin{align*}
		&\Pr_{\mathcal L_{V|W}}\biggl[\biggl|\frac{1}{|\mathcal S^*|}\sum_{i\in\mathcal S^*}V_i^kV_{l_W^*(i)}^m\\
		&\hphantom{\Pr_{\mathcal L_{V|W}}\biggl[\biggl|} 
		- \frac{1}{|\mathcal S^*|}\sum_{i\in\mathcal S^*}\Exp_{\mathcal L_{V|W}}[V_i^k|W_i]\Exp_{\mathcal L_{V|W}}[V_i^m|W_i]\biggr|>\eta\biggm|\{W_i\}_{i\in\mathcal S}\biggr]\rightarrow 0, \forall \eta>0
	\end{align*}
	given $\{W_i\}_{i\in\mathcal S}=\{w_i\}_{i\in\mathcal S}$.
\end{lem}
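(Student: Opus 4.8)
The plan is to write the two averages as $A\equiv|\mathcal S^*|^{-1}\sum_{i\in\mathcal S^*}V_i^kV_{l_W^*(i)}^m$ and $B\equiv|\mathcal S^*|^{-1}\sum_{i\in\mathcal S^*}\Exp_{\mathcal L_{V|W}}[V_i^k\mid W_i]\Exp_{\mathcal L_{V|W}}[V_i^m\mid W_i]$, where throughout all moments are taken under $\mathcal L_{V|W}$ conditional on $\{W_i\}_{i\in\mathcal S}=\{w_i\}_{i\in\mathcal S}$, and to split $A-B=(A-\Exp_{\mathcal L_{V|W}}[A\mid\{W_i\}])+(\Exp_{\mathcal L_{V|W}}[A\mid\{W_i\}]-B)$. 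I would show the second (``bias'') term is deterministically small and the first (``variance'') term is small in conditional probability, and then combine. A standing observation I would use repeatedly is that, since $\mathcal S^*$ and $l_W^*$ are functions of $\{W_i\}$ with $l_W^*(i)\neq i$, the random-sampling structure makes $V_i$ and $V_{l_W^*(i)}$ conditionally independent given $\{W_i\}$.

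For the bias term I would use this independence to write $\Exp_{\mathcal L_{V|W}}[V_i^kV_{l_W^*(i)}^m\mid\{W_i\}]=\Exp_{\mathcal L_{V|W}}[V_i^k\mid W_i]\,\Exp_{\mathcal L_{V|W}}[V^m\mid W_{l_W^*(i)}]$. Subtracting $B$ term by term, each summand of $\Exp_{\mathcal L_{V|W}}[A\mid\{W_i\}]-B$ equals $\Exp_{\mathcal L_{V|W}}[V_i^k\mid W_i]\bigl(\Exp_{\mathcal L_{V|W}}[V^m\mid W_{l_W^*(i)}]-\Exp_{\mathcal L_{V|W}}[V^m\mid W_i]\bigr)$. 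Bounding the first factor by $M_k$ through condition (ii) and the second by $L_m\,\nu(W_i,W_{l_W^*(i)})$ through the Lipschitz property (i) gives $|\Exp_{\mathcal L_{V|W}}[A\mid\{W_i\}]-B|\le M_kL_m\,|\mathcal S^*|^{-1}\sum_{i\in\mathcal S^*}\nu(W_i,W_{l_W^*(i)})$, which vanishes by condition (iii). This step is routine and uses (i), (ii), (iii) directly.

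For the variance term I would invoke Chebyshev's inequality, reducing matters to $\Var_{\mathcal L_{V|W}}[A\mid\{W_i\}]\to 0$. Expanding, $\Var_{\mathcal L_{V|W}}[A\mid\{W_i\}]=|\mathcal S^*|^{-2}\sum_{i,i'}\Cov_{\mathcal L_{V|W}}[V_i^kV_{l_W^*(i)}^m,\,V_{i'}^kV_{l_W^*(i')}^m\mid\{W_i\}]$, and conditional independence forces each covariance to vanish unless the index pairs $\{i,l_W^*(i)\}$ and $\{i',l_W^*(i')\}$ share an element. Each surviving covariance is bounded in absolute value by $M_{2k}M_{2m}$: by Cauchy--Schwarz it is at most the geometric mean of the two conditional variances, and each such variance is dominated by $\Exp_{\mathcal L_{V|W}}[V_i^{2k}\mid W_i]\Exp_{\mathcal L_{V|W}}[V_{l_W^*(i)}^{2m}\mid W_{l_W^*(i)}]\le M_{2k}M_{2m}$, which is finite precisely because $2k,2m\le R$ under the hypothesis $k\vee m\le R/2$, so (ii) applies. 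What remains is a counting bound on the number of ordered pairs $(i,i')$ with overlapping index sets.

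This counting step is the main obstacle. The number of overlapping pairs is governed by the in-degree $\max_j|\{i\in\mathcal S^*:l_W^*(i)=j\}|$ of the matching map: for each $i$ the indices $i'$ producing an overlap are $i$ itself, $l_W^*(i)$, the preimages of $i$, and the preimages of $l_W^*(i)$, so once the in-degree is bounded by a constant $K$ the total count is at most $(2+2K)|\mathcal S^*|$. For a general derangement this in-degree can be of order $|\mathcal S^*|$---in which case $A$ need not concentrate at all---so the argument genuinely requires $l_W^*$ to be the nearest-neighbor map of Lemma \ref{thm:closest}. For nearest-neighbor matching in the fixed-dimensional space $\mathbb R^{\dim W}$, the classical cone/packing argument underlying \citet[Lemma 1]{AbadieImbens2008} and \citet{AbadieImbensZheng2014} shows that a single point can be the nearest neighbor of at most a dimension-dependent constant number of points, yielding a uniform in-degree bound $K$. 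Then $\Var_{\mathcal L_{V|W}}[A\mid\{W_i\}]\le|\mathcal S^*|^{-2}(2+2K)|\mathcal S^*|\,M_{2k}M_{2m}=O(|\mathcal S^*|^{-1})\to 0$, and combining this with the bias bound via Chebyshev's inequality completes the proof.
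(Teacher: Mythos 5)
Your proof is correct and is essentially the paper's argument: the bias term is controlled by conditional independence of $V_i$ and $V_{l_W^*(i)}$ given $\{W_i\}$ together with conditions (i)--(iii), and the variance term by noting that only index pairs with overlapping match sets contribute, bounding each surviving covariance by $M_{2k}M_{2m}$ (the paper uses the slightly different constant $M_{k+m}^2\vee M_{k+m}M_kM_m\vee M_k^2M_{2m}\vee M_k^2M_m^2$, which is equally valid since $k+m,2k,2m\leq R$), and counting the overlapping pairs via an in-degree bound on $l_W^*$ --- the paper invokes the same kissing-number constant $\overline K(\dim W)$ at exactly this point, and your variance-plus-Chebyshev organization is just a cleaner packaging of the paper's direct expansion of $\Exp[(A-B)^2]$. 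One correction to your side remark: a derangement is a fixed-point-free \emph{permutation}, so its in-degree is identically $1$ and the counting step goes through for the lemma exactly as stated without any geometric input; the nearest-neighbor structure and the kissing number are needed only because the lemma is in fact applied (in Lemma \ref{thm:closest-variance} and Proposition \ref{thm:variance-estimator-consistency}) to the non-bijective map $l_W^*(i)=\arg\min_{j\in\mathcal S^*/\{i\}}\nu(W_i,W_j)$, whose in-degree is bounded by $\overline K(\dim W)$ rather than equal to $1$.
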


\begin{proof} Fix $\{W_i\}_{i\in\mathcal S}=\{w_i\}_{i\in\mathcal S}$. It is enough to show that, as $|\mathcal S|$ tends to infinity, 
\begin{equation}
\begin{aligned}
	&\Exp_{\mathcal L_{V|W}}\biggl[\biggl(\frac{1}{|\mathcal S^*|}\sum_{i\in\mathcal S^{*[n]}}V_i^kV_{l_W^*(i)}^m\\
	&\hphantom{\Exp_{\mathcal L_{V|W}}\biggl[\biggl(\frac{1}{|\mathcal S^*|}}
	- \frac{1}{|\mathcal S^*|}\sum_{i\in\mathcal S^*}\Exp_{\mathcal L_{V|W}}[V_i^k|W_i]\Exp_{\mathcal L_{V|W}}[V_i^m|W_i]\biggr)^2\biggm|\{W_i\}_{i\in\mathcal S}\biggr]
	\label{eq:L2}
\end{aligned}
\end{equation}
is $o(1)$, since, by the Markov inequality, it then implies
\begin{align*}
	\frac{1}{|\mathcal S^*|}\sum_{i\in\mathcal S^*}V_i^kV_{l_W^*(i)}^m - \frac{1}{|\mathcal S^*|}\sum_{i\in\mathcal S^*}\Exp_{\mathcal L_{V|W}}[V_i^k|W_i]\Exp_{\mathcal L_{V|W}}[V_i^m|W_i]\rightarrow_p 0.
\end{align*}
Let $\mu_r(\cdot)\equiv \Exp_{\mathcal L_{V|W}}[V_i^r|W_i=\cdot]$. We henceforth omit the subscript $\mathcal L_{V|W}$ for notational simplicity.  

Before proceeding, note that
\begin{align*}
	\biggl|\Exp\biggl[\frac{1}{|\mathcal S^*|}\sum_{i\in\mathcal S^*}V_i^kV_{l_W^*(i)}^m
	- \frac{1}{|\mathcal S^*|}\sum_{i\in\mathcal S^*}\Exp[V_i^k|W_i]\Exp[V_i^m|W_i]\biggm|\{W_i\}_{i\in\mathcal S}\biggr]\biggr|=o(1).
\end{align*}
This follows from  
\begin{align*}
	&\Exp\biggl[\frac{1}{|\mathcal S^*|}\sum_{i\in\mathcal S^*}V_i^kV_{l_W^*(i)}^m\biggm|\{W_i\}_{i\in\mathcal S}\biggr]\\
	&=\frac{1}{|\mathcal S^*|}\sum_{i\in\mathcal S^*}\Exp[V_i^k|W_i]\Exp[V_{l_W^*(i)}^m|W_{l_W^*(i)}]\\
	&=\frac{1}{|\mathcal S^*|}\sum_{i\in\mathcal S^*}\mu_k(W_i)\mu_m(W_i)
	+ \frac{1}{|\mathcal S^*|}\sum_{i\in\mathcal S^*}\mu_k(W_i)(\mu_m(W_{l_W^*(i)})-\mu_m(W_i)),
\end{align*}
where we note that $V_i$ and $V_{l_W^*(i)}$ are independent conditional on $\{W_i\}_{i\in\mathcal S}$,\footnote{\citet[Theorem 8.12]{Kallenberg2021}} and
\begin{align*}
	\biggl|\frac{1}{|\mathcal S^*|}\sum_{i\in\mathcal S^*}\mu_k(W_i)(\mu_m(W_{l_W^*(i)})-\mu_m(W_i))\biggr|
	&\leq \frac{1}{|\mathcal S^*|}\sum_{i\in\mathcal S^*}|\mu_k(W_i)||\mu_m(W_{l_W^*(i)})-\mu_m(W_i)|\\
	&\leq M_kL_m\frac{1}{|\mathcal S^*|}\sum_{i\in\mathcal S^*}\nu(W_{l_W^*(i)},W_i)= o(1).
\end{align*}

We expand equation \eqref{eq:L2} as
\begin{align*}
	&\underbrace{\Exp\biggl[\biggl(\frac{1}{|\mathcal S^*|}\sum_{i\in\mathcal S^*}V_i^kV_{l_W^*(i)}^m\biggr)^2\biggm|\{W_i\}_{i\in\mathcal S}\biggr]}_{\equiv A} + \underbrace{\biggl(\frac{1}{|\mathcal S^*|}\sum_{i\in\mathcal S^*}\Exp[V_i^k|W_i]\Exp[V_i^m|W_i]\biggr)^2}_{\equiv B}\\
	&-	2\underbrace{\Exp\biggl[\frac{1}{|\mathcal S^*|}\sum_{i\in\mathcal S^*}V_i^kV_{l_W^*(i)}^m\biggm|\{W_i\}_{i\in\mathcal S}\biggr]\biggl(\frac{1}{|\mathcal S^*|}\sum_{i\in\mathcal S^*}\Exp[V_i^k|W_i]\Exp[V_i^m|W_i]\biggr)}_{\equiv C}.
\end{align*}
Here, $C=B+o(1)$, since
\begin{align*}
	C &= \biggl(\frac{1}{|\mathcal S^*|}\sum_{i\in\mathcal S^*}\Exp[V_i^k|W_i]\Exp[V_i^m|W_i]\biggr)^2\\
	&\phantom{=}+ \Exp\biggl[\frac{1}{|\mathcal S^*|}\sum_{i\in\mathcal S^*}V_i^kV_{l_W^*(i)}^m-\frac{1}{|\mathcal S^*|}\sum_{i\in\mathcal S^*}\Exp[V_i^k|W_i]\Exp[V_i^m|W_i]\biggm|\{W_i\}_{i\in\mathcal S}\biggr]\biggl(\frac{1}{|\mathcal S^*|}\sum_{i\in\mathcal S^*}\Exp[V_i^k|W_i]\Exp[V_i^m|W_i]\biggr),
\end{align*}
where the second term is bounded by
\begin{align*}
	&\biggl|\Exp\biggl[\frac{1}{|\mathcal S^*|}\sum_{i\in\mathcal S^*}V_i^kV_{l_W^*(i)}^m-\frac{1}{|\mathcal S^*|}\sum_{i\in\mathcal S^*}\Exp[V_i^k|W_i]\Exp[V_i^m|W_i]\biggm|\{W_i\}_{i\in\mathcal S}\biggr]\biggr|M_kM_m = o(1).
\end{align*}
It is thus enough to establish that $A=B+ o(1)$.\footnote{$A+B-2C = (B+o(1))+B-2(B+o(1))=o(1)$.}

The first term on the right-hand side of
\begin{align*}
	A
	= \Exp\biggl[\frac{1}{|\mathcal S^*|^2}\sum_{i\in\mathcal S^*}V_i^{2k}V_{l_W^*(i)}^{2m}\biggm|\{W_i\}_{i\in\mathcal S}\biggr]
	+ \Exp\biggl[\frac{1}{|\mathcal S^*|^2}\sum_{i\in\mathcal S^*}\sum_{j\neq i}V_i^kV_{l_W^*(i)}^mV_j^kV_{l_W^*(j)}^m\biggm|\{W_i\}_{i\in\mathcal S}\biggr]
\end{align*}
is $o(1)$, because
\begin{align*}
	\Exp\biggl[\frac{1}{|\mathcal S^*|^2}\sum_{i\in\mathcal S^*}V_i^{2k}V_{l_W^*(i)}^{2m}\biggm|\{W_i\}_{i\in\mathcal S}\biggr]
	&= \frac{1}{|\mathcal S^*|^2}\sum_{i\in\mathcal S^*}\Exp[V_i^{2k}|W_i]\Exp[V_{l_W^*(i)}^{2m}|W_{l_W^*(i)}]\leq \frac{1}{|\mathcal S^*|}M_{2k}M_{2m}=o(1),
\end{align*}
where we note that $2k, 2m \leq R$. The second term is further decomposed into
\begin{equation}
\begin{aligned}
	&\frac{1}{|\mathcal S^*|^2}\sum_{i\in\mathcal S^*}\sum_{j\neq i: l_W^*(j)\neq i, l_W^*(i)\neq j, l_W^*(j)\neq l_W^*(i)}\Exp[V_i^kV_{l_W^*(i)}^mV_j^kV_{l_W^*(j)}^m|\{W_i\}_{i\in\mathcal S}]\\
	&+ \frac{1}{|\mathcal S^*|^2}\sum_{i\in\mathcal S^*}\sum_{j\neq i: l_W^*(j)= i\text{ or }l_W^*(i)= j\text{ or }l_W^*(j)= l_W^*(i)}\Exp[V_i^kV_{l_W^*(i)}^mV_j^kV_{l_W^*(j)}^m|\{W_i\}_{i\in\mathcal S}].\label{eq:A22}
\end{aligned}
\end{equation}
Note that, for each $i\in\mathcal S^*$, $|\{j\in\mathcal S^*:  l_W^*(j)=i\}|\leq \overline K(\dim W)$, where $\overline K(q)$ indicates the ``kissing number'' i.e., the maximum number of times that each unit can be \emph{used} as a match in the  $q$-dimensional Euclidean space. Thus, as $|\{j\in\mathcal S^*:l_W^*(i)=j\}|=1$, for each $i\in\mathcal S^*$,
\begin{align*}
	|\{j\neq i:l_W^*(j)=i\text{ or }l_W^*(i)=j\text{ or }l_W^*(j)=l_W^*(i)\}|\leq \overline K(\dim W) + 1 + \overline K(\dim W).
\end{align*}
From this,
\begin{align*}
	&\sum_{j\neq i: l_W^*(j)= i\text{ or }l_W^*(i)= j\text{ or }l_W^*(j)= l_W^*(i)}|\Exp[V_i^kV_{l_W^*(i)}^mV_j^kV_{l_W^*(j)}^m|\{W_i\}_{i\in\mathcal S}]|\\
	&\leq (2\overline K(\dim W) + 1)\cdot (M_{k+m}^2\vee M_{k+m}M_kM_m \vee M_k^2M_{2m}\vee M_k^2M_m^2),
\end{align*}
and the second term of equation \eqref{eq:A22} is $O(|\mathcal S^*|^{-1})=o(1)$. Now, we show that the first term converges to $B$.

We proceed as follows: Note that
\begin{align*}
	&\frac{1}{|\mathcal S^*|^2}\sum_{i\in\mathcal S^*}\sum_{j\neq i: l_W^*(j)\neq i, l_W^*(i)\neq j, l_W^*(j)\neq l_W^*(i)}\Exp[V_i^kV_{l_W^*(i)}^mV_j^kV_{l_W^*(j)}^m|\{W_i\}_{i\in\mathcal S}]\\
	&\phantom{\Exp\biggl[}-\frac{1}{|\mathcal S^*|^2}\sum_{i\in\mathcal S^*}\sum_{j\neq i: l_W^*(j)\neq i, l_W^*(i)\neq j, l_W^*(j)\neq l_W^*(i)}\Exp[V_i^k|W_i]\Exp[V_i^m|W_i]\Exp[V_j^k|W_j]\Exp[V_j^m|W_j]\\
	&=\frac{1}{|\mathcal S^*|^2}\sum_{i\in\mathcal S^*}\sum_{j\neq i: l_W^*(j)\neq i, l_W^*(i)\neq j, l_W^*(j)\neq l_W^*(i)}\Exp[V_i^k|W_i]\Exp[V_j^k|W_j]\\
	&\phantom{=\frac{1}{|\mathcal S^*|^2}}\times (\Exp[V_{l_W^*(i)}^m|W_{l_W^*(i)}]\Exp[V_{l_W^*(j)}^m|W_{l_W^*(j)}]-\Exp[V_i^m|W_i]\Exp[V_j^m|W_j])\\
	&\leq \frac{1}{|\mathcal S^*|^2}\sum_{i\in\mathcal S^*}\sum_{j\neq i: l_W^*(j)\neq i, l_W^*(i)\neq j, l_W^*(j)\neq l_W^*(i)}\Exp[V_i^k|W_i]\Exp[V_j^k|W_j]\\
	&\phantom{\leq \frac{1}{|\mathcal S^*|^2}}
	\times \frac{\Exp[V_{l_W^*(i)}^m|W_{l_W^*(i)}]+\Exp[V_i^m|W_i]}{2}(\Exp[V_{l_W^*(j)}^m|W_{l_W^*(j)}]-\Exp[V_j^m|W_j])\\
	&\phantom{\leq}
	+ \frac{1}{|\mathcal S^*|^2}\sum_{i\in\mathcal S^*}\sum_{j\neq i: l_W^*(j)\neq i, l_W^*(i)\neq j, l_W^*(j)\neq l_W^*(i)}\Exp[V_i^k|W_i]\Exp[V_j^k|W_j]\\
	&\phantom{\leq \frac{1}{|\mathcal S^*|^2}}
	\times \frac{\Exp[V_{l_W^*(j)}^m|W_{l_W^*(j)}]+\Exp[V_j^m|W_j]}{2}(\Exp[V_{l_W^*(i)}^m|W_{l_W^*(i)}]-\Exp[V_i^m|W_i]),
\end{align*}
where the first term in the last equation is bounded by
\begin{align*}
	&\biggl|\frac{1}{|\mathcal S^*|^2}\sum_{i\in\mathcal S^*}\sum_{j\neq i: l_W^*(j)\neq i, l_W^*(i)\neq j, l_W^*(j)\neq l_W^*(i)}\Exp[V_i^k|W_i]\Exp[V_j^k|W_j]\\
	&\phantom{=\Exp\biggl[}\times \frac{\Exp[V_{l_W^*(i)}^m|W_{l_W^*(i)}]+\Exp[V_i^m|W_i]}{2}(\Exp[V_{l_W^*(j)}^m|W_{l_W^*(j)}]-\Exp[V_j^m|W_j])\biggr|\\
	&\leq \frac{1}{|\mathcal S^*|^2}\sum_{i\in\mathcal S^*}\sum_{j\neq i: l_W^*(j)\neq i, l_W^*(i)\neq j, l_W^*(j)\neq l_W^*(i)}|\Exp[V_i^k|W_i]||\Exp[V_j^k|W_j]|\\
	&\phantom{=\Exp\biggl[}\times \frac{|\Exp[V_{l_W^*(i)}^m|W_{l_W^*(i)}]|+|\Exp[V_i^m|W_i]|}{2}|\Exp[V_{l_W^*(j)}^m|W_{l_W^*(j)}]-\Exp[V_j^m|W_j]|\biggr|\\
	&= M_k^2M_mL_m\frac{1}{|\mathcal S^*|}\sum_{j\in\mathcal S^*}\nu(W_{l_W^*(j)},W_j)=o(1),
\end{align*}
Similarly, the second term can be bounded by $o(1)$. Now, from
\begin{align*}
	&\biggl|\frac{1}{|\mathcal S^*|^2}\sum_{i\in\mathcal S^*}\sum_{j\neq i: l_W^*(j)\neq i, l_W^*(i)\neq j, l_W^*(j)\neq l_W^*(i)}\Exp[V_i^k|W_i]\Exp[V_i^m|W_i]\Exp[V_j^k|W_j]\Exp[V_j^m|W_j]\\
	&\phantom{\Exp\biggl[}-\frac{1}{|\mathcal S^*|^2}\sum_{i\in\mathcal S^*}\sum_{j\neq i}\Exp[V_i^k|W_i]\Exp[V_i^m|W_i]\Exp[V_j^k|W_j]\Exp[V_j^m|W_j]\biggr|\\
	&=\frac{1}{|\mathcal S^*|^2}\sum_{i\in\mathcal S^*}\sum_{j\neq i:l_W^*(j)=i\text{ or }l_W^*(i)=j\text{ or }l_W^*(j)=l_W^*(i)}\underbrace{|\Exp[V_i^k|W_i]||\Exp[V_i^m|W_i]||\Exp[V_j^k|W_j]||\Exp[V_j^m|W_j]|}_{\leq M_k^2M_m^2}\\
	&\leq M_k^2M_m^2(2\overline K(\dim W)+1)\frac{1}{|\mathcal S^*|}=o(1),
\end{align*}
it follows that
\begin{align*}
	&\biggl|B - \frac{1}{|\mathcal S^*|^2}\sum_{i\in\mathcal S^*}\sum_{j\neq i: l_W^*(j)\neq i, l_W^*(i)\neq j, l_W^*(j)\neq l_W^*(i)}\Exp[V_i^kV_{l_W^*(i)}^mV_j^kV_{l_W^*(j)}^m|\{W_i\}_{i=1}^n]\biggr|\\
	&\leq \biggl|B - \frac{1}{|\mathcal S^*|^2}\sum_{i\in\mathcal S^*}\sum_{j\neq i}\Exp[V_i^k|W_i]\Exp[V_i^m|W_i]\Exp[V_j^k|W_j]\Exp[V_j^m|W_j]\biggr| + o(1)\\
	&=\frac{1}{|\mathcal S^*|^2}\sum_{i\in\mathcal S^*}(\Exp[V_i^k|W_i]\Exp[V_i^m|W_i])^2\leq M_k^2M_m^2\frac{1}{|\mathcal S^*|}=o(1).
\end{align*}
This completes the proof.
\end{proof}

\begin{lem} \label{thm:closest-variance}
Now, consider a finite random sample $\{(V_i',W_i')'\}_{i\in\mathcal S}$ in $\mathbb R^{\dim V}\times \mathbb R^{\dim W}$. We adopt the same setup for $\mathcal S^*$, $\mathcal L_{V|W}$, $\{w_i\}_{i\in\mathcal S}$, $\mathbb W_{\mathcal S^*}$, and $\nu$ as in Lemma \ref{thm:closest-moments}, with the exception that $\nu$ is now further assumed to be dominated by the Euclidean metric. For $l_W^*$, we follow the setup of Lemma \ref{thm:closest}, i.e., $l_W^*(i)=\arg\min_{j\in\mathcal S^*/\{i\}}\nu(W_j,W_i)$. Suppose that, given $\{W_i\}_{i\in\mathcal S}=\{w_i\}_{i\in\mathcal S}$, the following conditions hold for all natural numbers $r_j$ and $r_k$ no greater than 2:
\begin{enumerate}[label=(\roman*)]
	\item $\Exp_{\mathcal L_{V|W}}[V_{j,i}^{r_j}V_{k,i}^{r_k}|W_i=\cdot]$ is Lipschitz on $\mathbb W_{\mathcal S^*}$ with a fixed Lipschitz constant $L_{r_j,r_k}$,	where $j$ and $k$ are natural numbers no larger than $\dim V$. 
	\item $\Exp_{\mathcal L_{V|W}}[V_{j,i}^{r_j}V_{k,i}^{r_k}|W_i=\cdot]$ is bounded on $\mathbb W_{\mathcal S^*}$ by a fixed constant $M_{r_j,r_k}$, and
	\item $\lim_{|\mathcal S|\rightarrow\infty}|\mathcal S^*|=\infty$, and $\dim(\mathbb W_{\mathcal S^*})$ is bounded by a fixed constant. 
\end{enumerate}
Then, as $|\mathcal S|$ tends to infinity,
\begin{align*}
	&\Pr_{\mathcal L_{V|W}}\biggl[\biggl\|\frac{1}{2|\mathcal S^*|}\sum_{i\in\mathcal S^*}(V_i - V_{l_W^*(i)})(V_i - V_{l_W^*(i)})'\\
	&\hphantom{\Pr_{\mathcal L_{V|W}}\biggl[\biggl\|\frac{1}{2|\mathcal S^*|}} - \frac{1}{|\mathcal S^*|}\sum_{i\in\mathcal S^*}\Var_{\mathcal L_{V|W}}[V_i|W_i]\biggr\|>\eta\biggm|\{W_i\}_{i\in\mathcal S}\biggr]\rightarrow 0, \forall\eta>0
\end{align*}
given $\{W_i\}_{i\in\mathcal S}=\{w_i\}_{i\in\mathcal S}$. 
\end{lem}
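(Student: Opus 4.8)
The plan is to reduce the matrix statement to its scalar entries and then apply Lemma \ref{thm:closest-moments}, with the matching-distance hypothesis (iii) of that lemma supplied by Lemma \ref{thm:closest}. Since $\dim V$ is fixed, convergence of the matrix in the Frobenius norm follows from convergence of each of the finitely many $(j,k)$ entries, so I would fix a pair $(j,k)$ and work with the scalars $V_{j,i},V_{k,i}$. First I would invoke Lemma \ref{thm:closest}---whose hypotheses hold here because $\nu$ is dominated by the Euclidean metric, $l_W^*$ is the nearest-neighbor map, and $\diam(\mathbb W_{\mathcal S^*})$ is bounded with $|\mathcal S^*|\to\infty$---to conclude that $|\mathcal S^*|^{-1}\sum_{i\in\mathcal S^*}\nu(W_i,W_{l_W^*(i)})\to 0$, which is precisely condition (iii) needed to run Lemma \ref{thm:closest-moments}.

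Second, I would expand the $(j,k)$ entry of the statistic as
\begin{align*}
	\frac{1}{2|\mathcal S^*|}\sum_{i\in\mathcal S^*}(V_{j,i}-V_{j,l_W^*(i)})(V_{k,i}-V_{k,l_W^*(i)})
	&=\frac{1}{2|\mathcal S^*|}\sum_{i\in\mathcal S^*}\bigl(V_{j,i}V_{k,i}+V_{j,l_W^*(i)}V_{k,l_W^*(i)}\bigr)\\
	&\quad-\frac{1}{2|\mathcal S^*|}\sum_{i\in\mathcal S^*}\bigl(V_{j,i}V_{k,l_W^*(i)}+V_{j,l_W^*(i)}V_{k,i}\bigr),
\end{align*}
and treat the ``self'' terms and the ``cross'' terms separately, each converging in conditional probability to an explicit average of conditional moments. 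For the direct self term $\tfrac{1}{2|\mathcal S^*|}\sum V_{j,i}V_{k,i}$, the summands are conditionally independent with conditional second moments $\Exp_{\mathcal L_{V|W}}[V_{j,i}^2V_{k,i}^2|W_i]$ bounded by hypothesis (ii) (which covers coordinate powers up to two), so a Markov/$L^2$ argument gives convergence to $\tfrac{1}{2|\mathcal S^*|}\sum\Exp_{\mathcal L_{V|W}}[V_{j,i}V_{k,i}|W_i]$. For the matched self term $\tfrac{1}{2|\mathcal S^*|}\sum V_{j,l_W^*(i)}V_{k,l_W^*(i)}$, I would use that each unit is used as a match at most a bounded (kissing-number) number of times to bound its conditional variance by $O(|\mathcal S^*|^{-1})$, and then replace $W_{l_W^*(i)}$ by $W_i$ at the cost of $|\mathcal S^*|^{-1}\sum\nu(W_i,W_{l_W^*(i)})\to 0$ via the Lipschitz hypothesis (i); this too converges to $\tfrac{1}{2|\mathcal S^*|}\sum\Exp_{\mathcal L_{V|W}}[V_{j,i}V_{k,i}|W_i]$.

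Third, for the combined cross term I would apply Lemma \ref{thm:closest-moments} through polarization. Setting $A\equiv V_j+V_k$ and applying the lemma (at $R=2$, $k=m=1$) to $A$, to $V_j$, and to $V_k$, and subtracting, the identity $V_{j,i}V_{k,l_W^*(i)}+V_{j,l_W^*(i)}V_{k,i}=A_iA_{l_W^*(i)}-V_{j,i}V_{j,l_W^*(i)}-V_{k,i}V_{k,l_W^*(i)}$ shows that $\tfrac{1}{2|\mathcal S^*|}\sum(V_{j,i}V_{k,l_W^*(i)}+V_{j,l_W^*(i)}V_{k,i})$ converges in conditional probability to $\tfrac{1}{|\mathcal S^*|}\sum\Exp_{\mathcal L_{V|W}}[V_{j,i}|W_i]\Exp_{\mathcal L_{V|W}}[V_{k,i}|W_i]$; the conditional independence of $V_i$ and $V_{l_W^*(i)}$ given $\{W_i\}_{i\in\mathcal S}$ is exactly what makes the matched expectations factorize in Lemma \ref{thm:closest-moments}. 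Combining the three limits, the $(j,k)$ entry converges to $\tfrac{1}{|\mathcal S^*|}\sum(\Exp_{\mathcal L_{V|W}}[V_{j,i}V_{k,i}|W_i]-\Exp_{\mathcal L_{V|W}}[V_{j,i}|W_i]\Exp_{\mathcal L_{V|W}}[V_{k,i}|W_i])$, the $(j,k)$ entry of $|\mathcal S^*|^{-1}\sum\Var_{\mathcal L_{V|W}}[V_i|W_i]$, as desired.

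The main obstacle I anticipate is the cross term, because it pairs two \emph{different} coordinates evaluated at two \emph{different} (matched) units, a shape not literally covered by Lemma \ref{thm:closest-moments}, which is stated for powers of a single scalar. The polarization device resolves this, but requires checking that the auxiliary variable $A=V_j+V_k$ inherits the hypotheses of Lemma \ref{thm:closest-moments}: its conditional moments up to order two expand into coordinate products $\Exp_{\mathcal L_{V|W}}[V_j^{r_j}V_k^{r_k}|W=\cdot]$ with $r_j,r_k\le 2$, each of which is bounded by (ii) and Lipschitz by (i), so the conditions transfer. Alternatively, one could re-run the proof of Lemma \ref{thm:closest-moments} verbatim with $V_i^kV_{l_W^*(i)}^m$ replaced by $V_{j,i}V_{k,l_W^*(i)}$, since that proof only used bounded conditional moments, Lipschitz continuity, conditional independence, and the kissing-number sparsity of the matching map.
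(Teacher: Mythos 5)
Your proposal is correct and follows essentially the same route as the paper's proof: the same entrywise reduction, the same decomposition into self and cross terms, the same polarization identity with $V_j+V_k$ handled by Lemma \ref{thm:closest-moments} at $(k,m)=(1,1)$, and the same use of Lemma \ref{thm:closest} to supply the matching-distance condition. The only cosmetic difference is that you treat the two self terms by a direct $L^2$/kissing-number argument, whereas the paper obtains them as the $(k,m)=(1,0)$ and $(0,1)$ instances of Lemma \ref{thm:closest-moments} applied to the product $V_{j,i}V_{k,i}$.
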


\begin{proof} Fix $\{W_i\}_{i\in\mathcal S}=\{w_i\}_{i\in\mathcal S}$.  First, note that the condition (iii) and Lemma \ref{thm:closest} imply $\lim_{|\mathcal S^*|\rightarrow\infty}|\mathcal \mathcal S^*|^{-1}\sum_{i\in\mathcal S^*}\nu(W_i,W_{l_W^*(i)})=0$, thereby making Lemma \ref{thm:closest-moments} applicable. Again, for notational simplicity, we omit the subscript $\mathcal L_{V|W}$.

Let $\hat{\mathbb V}_{j,k}^*$ be the $(j,k)$th element of $\hat{\mathbb V}^*$. Then,
\begin{align*}
	\hat{\mathbb V}_{j,k}^*
	&=\frac{1}{2|\mathcal S^*|}\sum_{i\in\mathcal S^*}(V_{j,i}-V_{j,l_W^*(i)})(V_{k,i}-V_{k,l_W^*(i)})\\
	&=\frac{1}{2|\mathcal S^*|}\sum_{i\in\mathcal S^*}V_{j,i}V_{k,i} + \frac{1}{2|\mathcal S^*|}\sum_{i\in\mathcal S^*}V_{j,l_W^*(i)}V_{k,l_W^*(i)}
	- \biggl(\frac{1}{2|\mathcal S^*|}\sum_{i\in\mathcal S^*}V_{j,i}V_{k,l_W^*(i)} + \frac{1}{2|\mathcal S^*|}\sum_{i\in\mathcal S^*}V_{j,l_W^*(i)}V_{k,i}\biggr).
\end{align*}
Applying Lemma \ref{thm:closest-moments} to $V_{j,i}V_{k,i}$ with $(k,m)=(1,0)$ and $(k,m)=(0,1)$ yields
\begin{align*}
	&\frac{1}{|\mathcal S^*|}\sum_{i\in\mathcal S^*}V_{j,i}V_{k,i} -\frac{1}{|\mathcal S^*|}\sum_{i\in\mathcal S^*}\Exp[V_{j,i}V_{k,i}|W_i]\rightarrow_p 0,\text{ and}\\
	&\frac{1}{|\mathcal S^*|}\sum_{i\in\mathcal S^*}V_{j,l_W^*(i)}V_{k,l_W^*(i)} - \frac{1}{|\mathcal S^*|}\sum_{i\in\mathcal S^*}\Exp[V_{j,i}V_{k,i}|W_i]\rightarrow_p 0.
\end{align*}
Note that
\begin{align*}
	&\frac{1}{|\mathcal S^*|}\sum_{i\in\mathcal S^*}V_{j,i}V_{k,l_W^*(i)} + \frac{1}{|\mathcal S^*|}\sum_{i\in\mathcal S^*}V_{j,l_W^*(i)}V_{k,i}\\
	&= \frac{1}{|\mathcal S^*|}\sum_{i\in\mathcal S^*}(V_{j,i}+V_{k,i})(V_{j,l_W^*(i)} + V_{k,l_W^*(i)}) - \frac{1}{|\mathcal S^*|}\sum_{i\in\mathcal S^*} V_{j,i}V_{j,l_W^*(i)} - \frac{1}{|\mathcal S^*|}\sum_{i\in\mathcal S^*} V_{k,i}V_{k,l_W^*(i)}.
\end{align*}
Applying Lemma \ref{thm:closest-moments} to $V_{j,i} + V_{k,i}$, $V_{j,i}$, and $V_{k,i}$ with $(k,m)=(1,1)$ yields
\begin{align*}
	&\frac{1}{|\mathcal S^*|}\sum_{i\in\mathcal S^*}(V_{j,i}+V_{k,i})(V_{j,l_W^*(i)} + V_{k,l_W^*(i)}) - \frac{1}{|\mathcal S^*|}\sum_{i\in\mathcal S^*}(\Exp[V_{j,i}+V_{k,i}|W_i])^2\rightarrow_p 0,\\
	&\frac{1}{|\mathcal S^*|}\sum_{i\in\mathcal S^*} V_{j,i}V_{j,l_W^*(i)} - \frac{1}{|\mathcal S^*|}\sum_{i\in\mathcal S^*}(\Exp[V_{j,i}|W_i])^2 \rightarrow_p 0, \text{ and}\\
	&\frac{1}{|\mathcal S^*|}\sum_{i\in\mathcal S^*} V_{k,i}V_{k,l_W^*(i)} - \frac{1}{|\mathcal S^*|}\sum_{i\in\mathcal S^*}(\Exp[V_{k,i}|W_i])^2 \rightarrow_p 0.
\end{align*}
Combining results,
\begin{align*}
	&\hat{\mathbb V}_{j,k}^* - \biggl(\frac{1}{2|\mathcal S^*|}\sum_{i\in\mathcal S^*}\Exp[V_{j,i}V_{k,i}|W_i] + \frac{1}{2|\mathcal S^*|}\sum_{i\in\mathcal S^*}\Exp[V_{j,i}V_{k,i}|W_i]\\
	&\phantom{\hat{\mathbb V}_{j,k} - \biggl(}- \biggl(\frac{1}{2|\mathcal S^*|}\sum_{i\in\mathcal S^*}(\Exp[V_{j,i}+V_{k,i}|W_i])^2-\frac{1}{2|\mathcal S^*|}\sum_{i\in\mathcal S^*}(\Exp[V_{j,i}|W_i])^2-\frac{1}{2|\mathcal S^*|}\sum_{i\in\mathcal S^*}(\Exp[V_{k,i}|W_i])^2\biggr)\biggr)\\
	&=\hat{\mathbb V}_{j,k}^* - \frac{1}{|\mathcal S^*|}\sum_{i\in\mathcal S^*}(\Exp[V_{j,i}V_{k,i}|W_i]-\Exp[V_{j,i}|W_i]\Exp[V_{k,i}|W_i])\rightarrow_p 0.
\end{align*}
This completes the proof.
\end{proof}

We now begin the proof of Proposition \ref{thm:variance-estimator-consistency}. For each $i\in\mathcal S$, let $W_i\equiv (X_i',D_i)'$. Fix $\{W_i\}_{i\in\mathcal S}=\{(x_i',d_i)'\}_{i\in\mathcal S}$. Let $\mathbb W_{\mathcal S^*}\equiv (\mathbb X^{*1} \times \{1\}) \sqcup (\mathbb X^{*0} \times \{0\})$. For $w,w'\in \mathbb R^p\times \{0,1\}$, define $\nu(w,w')\equiv \psi(x,x') + \kappa|d-d'|$. Since $\psi$ is dominated by the Euclidean distance up to a constant, so is $\nu$. Lemma \ref{thm:closest}---together with the condition (iii) of Assumption \ref{ass:metric}---implies that $\lim_{|\mathcal S^*|\rightarrow\infty}|\mathcal S^*|^{-1}\sum_{i\in\mathcal S^*}\nu(W_i,W_{l_W^*(i)})=0$.\footnote{by choosing $\kappa$ sufficiently large.}

Define $V_i\equiv Z_i(Y_i-Z_i'\theta_{\eG^*,F})$ and $\mu_{r_j,r_k}(\cdot)\equiv\Exp_F[V_{j,i}^{r_j}V_{k,i}^{r_k}|W_i=\cdot]$. From the condition (ii) of Assumption \ref{ass:metric}, it follows that $\mu_{r_j,r_k}$ is $\nu$-Lipschitz on $\mathbb W_{\mathcal S^*}$; for $w,w'\in\mathbb W_{\mathcal S^*}$, 
\begin{align*}
	|\mu_{r_j,r_k}(w)-\mu_{r_j,r_k}(w')|
	&\leq |\mu_{r_j,r_k}(x,d)-\mu_{r_j,r_k}(x',d)| + |\mu_{r_j,r_k}(x',d)-\mu_{r_j,r_k}(x',d')|\\
	&\leq L_{r_j,r_k}^d\psi(x,x') + (M_{r_j,r_k}^d + M_{r_j,r_k}^{d'})|d-d'|\\
	&\leq (L_{r_j,r_k}^d \vee (M_{r_j,r_k}^d + M_{r_j,r_k}^{d'})/\kappa) \nu(w,w'),
\end{align*}
where $L_{r_j,r_k}^d$ is a fixed Lipschitz constant of $\mu_{r_j,r_k}(\cdot,d)$, and $M_{r_j,r_k}^d$ a fixed maximum of $\mu_{r_j,r_k}(\cdot,d)$ on $\mathbb X^{*d}$.\footnote{By the conditions (i)--(ii) of Assumption \ref{ass:metric}, such $L_{r_j,r_k}^d$'s and $M_{r_j,r_k}^d$'s exist.} The condition (i) of Assumption \ref{ass:metric} implies that $\mu_{r_j,r_k}$ is bounded on $\mathbb W_{\mathcal S^*}$ by a fixed constant. 

Thus, Lemma \ref{thm:closest-variance} yields that, as $|\mathcal S|$ tends to infinity, 
\begin{align}
	\frac{1}{2|\mathcal S^*|}\sum_{i\in\mathcal S^*}(V_i - V_{l_W^*(i)})(V_i - V_{l_W^*(i)})' - \frac{1}{|\mathcal S^*|}	\sum_{i\in\mathcal S^*}\Var_F[V_i|W_i] \rightarrow_p 0, \label{eq:closest-variance}
\end{align}
where we note that
\begin{align*}
	\frac{1}{|\mathcal S^*|}\sum_{i\in\mathcal S^*}\Var_F[V_i|W_i] = \Delta^*,
\end{align*}
since $\Var_F[V_i|W_i]
=Z_iZ_i'\Exp_F[U_i^2|X_i,D_i]$. Let $V_i(\theta)\equiv Z_i(Y_i-Z_i'\theta)$. Using Proposition \ref{thm:consistency}, we can show that
\begin{equation}
\begin{aligned}
	&\frac{1}{|\mathcal S^*|}\sum_{i\in\mathcal S^*}(V_i(\hat\theta^*) - V_{l_W^*(i)}(\hat\theta^*))(V_i(\hat\theta^*) - V_{l_W^*(i)}(\hat\theta^*))'\\
	&- \frac{1}{|\mathcal S^*|}\sum_{i\in\mathcal S^*}(V_i - V_{l_W^*(i)})(V_i - V_{l_W^*(i)})'\rightarrow_p 0. \label{eq:V}
\end{aligned}
\end{equation}
Now, from
\begin{align*}
	&\biggl\|\biggl(\frac{1}{|\mathcal S^*|}\sum_{i\in\mathcal S}Z_iZ_i'\biggr)^{-1}\frac{1}{2|\mathcal S^*|}\sum_{i\in\mathcal S^*}(V_i(\hat\theta^*) - V_{l_W^*(i)}(\hat\theta^*))(V_i(\hat\theta^*) - V_{l_W^*(i)}(\hat\theta^*))'\biggl(\frac{1}{|\mathcal S^*|}\sum_{i\in\mathcal S^*} Z_iZ_i'\biggr)^{-1}\\
	 &\hphantom{\biggl\|}- \biggl(\frac{1}{|\mathcal S^*|}\sum_{i\in\mathcal S^*} Z_iZ_i'\biggr)^{-1}\frac{1}{|\mathcal S^*|}\sum_{i\in\mathcal S^*}\Var_F[V_i|W_i]\biggl(\frac{1}{|\mathcal S^*|}\sum_{i\in\mathcal S^*} Z_iZ_i'\biggr)^{-1}\biggr\|\\
	 &\leq \lambda_{\min}[\Gamma^*]^{-2}\biggl\|\frac{1}{2|\mathcal S^*|}\sum_{i\in\mathcal S^*}(V_i(\hat\theta^*) - V_{l_W^*(i)}(\hat\theta^*))(V_i(\hat\theta^*) - V_{l_W^*(i)}(\hat\theta^*))' - \frac{1}{|\mathcal S^*|}\sum_{i\in\mathcal S^*}\Var_F[V_i|W_i]\biggr\|,
\end{align*}
the desired result follows.

\subsection{Proof of Corollary \ref{thm:size-distortion-subsample}}
Under the null $H_0$,
\begin{align*}
	- \frac{m_{\eG^*,F}c_{\eG^*}}{se_{\beta_{\eG^*,F}}(\hat\beta^*)}\leq \frac{\beta_{\eG^*,F}-\tau_0^*}{se_{\beta_{\eG^*,F}}(\hat\beta^*)}\leq \frac{m_{\eG^*,F}c_{\eG^*}}{se_{\beta_{\eG^*,F}}(\hat\beta^*)}
\end{align*}
By Assumption \ref{ass:local-asymptotics},
\begin{align*}
	t^* - \frac{v}{\sqrt{|\mathcal S^*|}se_{\beta_{\eG^*,F}}(\hat\beta^*)} \leq \frac{\hat\beta^*-\beta_{\eG^*,F}}{se_{\beta_{\eG^*,F}}(\hat\beta^*)} = \frac{\hat\beta^*-\tau_0^*}{se_{\beta_{\eG^*,F}}(\hat\beta^*)} - \frac{\beta_{\eG^*,F}-\tau_0^*}{se_{\beta_{\eG^*,F}}(\hat\beta^*)}	\leq t^* + \frac{v}{\sqrt{|\mathcal S^*|}se_{\beta_{\eG^*,F}}(\hat\beta^*)},
\end{align*}
Now, the desired result follows from Propositions \ref{thm:normality}--\ref{thm:variance-estimator-consistency} and Assumptions \ref{ass:design-matrix-inverse}--\ref{ass:variance-consistency}.

\newpage
\section{Three additional bounds} \label{sec:bias-bounds-TVDRLP}

\subsection{Total variation bound}
The bound of the first form is called the ``total variation bound,'' as the total variation distance is used as a measure of covariate imbalance. Define
\begin{align}
	m^{\TV}	&\equiv \|f(\cdot,0)-l_\theta^s(\cdot,0)\|_{L^\infty(\mu)}\text{ and}\\
	c^{\TV} &\equiv \int |g^1(x)-g^0(x)|\mu(\mathrm dx),
\end{align}
where $m^{\TV}$ is the essential supremum of $f(\cdot,0)-l_\theta^s(\cdot,0)$ with respect to $\mu$, and $c^{\TV}$ is the total variation distance between $G^1$ and $G^0$. $m^{\TV}=0$ if and only if $f(\cdot,0)=l_\theta^s(\cdot,0)$ $\mu$-almost surely, and $c^{\TV}=0$ if and only if $D$ and $X$ are independent.

\begin{cor} \label{thm:bias-bound-TV}
Suppose that the conditions of Proposition \ref{thm:representation} are satisfied. Then,
\begin{align}
	|\beta - \tau| \leq m^{\TV}c^{\TV}. \label{eq:bias-bound-TV}
\end{align}
\end{cor}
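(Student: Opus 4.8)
The plan is to derive the bound directly from the inner-product representation in Proposition \ref{thm:representation}, treating it as an application of the $L^\infty$--$L^1$ Hölder inequality (i.e., the extreme case $p=\infty$, $q=1$). Since Proposition \ref{thm:representation} gives
\begin{align*}
	\beta - \tau = \int (f(x,0)-l_\theta^s(x,0))(g^1(x)-g^0(x))\mu(\mathrm dx),
\end{align*}
the entire argument reduces to bounding the absolute value of this integral. This means no new structural input is needed beyond the representation itself together with the definitions of $m^{\TV}$ and $c^{\TV}$.

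First I would apply the triangle inequality for integrals to move the absolute value inside, yielding $|\beta-\tau|\leq \int |f(x,0)-l_\theta^s(x,0)|\,|g^1(x)-g^0(x)|\,\mu(\mathrm dx)$. Next, I would bound the factor $|f(x,0)-l_\theta^s(x,0)|$ pointwise by its essential supremum with respect to $\mu$; because this bound holds $\mu$-almost everywhere and both $g^1$ and $g^0$ are densities with respect to $\mu$ (so the integrand vanishes on $\mu$-null sets), the replacement is legitimate and introduces no loss. Pulling the constant $\|f(\cdot,0)-l_\theta^s(\cdot,0)\|_{L^\infty(\mu)}=m^{\TV}$ outside the integral leaves exactly $\int |g^1(x)-g^0(x)|\,\mu(\mathrm dx)=c^{\TV}$, giving $|\beta-\tau|\leq m^{\TV}c^{\TV}$.

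There is essentially no obstacle here; the only point warranting care is the measure-theoretic justification that the pointwise essential-supremum bound transfers correctly under the integral $\int (\cdot)\,|g^1-g^0|\,\mu(\mathrm dx)$, which is immediate since the $\mu$-null exceptional set contributes nothing. I would therefore keep the write-up to the two displayed inequalities above and conclude by invoking Proposition \ref{thm:representation} to identify the left-hand side, mirroring the structure already used in the proofs of Corollaries \ref{thm:bias-bound-KS} and \ref{thm:bias-bound-MKW}.
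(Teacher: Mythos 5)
Your proposal is correct and follows essentially the same route as the paper's proof: move the absolute value inside the integral (the paper labels this step Jensen's inequality), bound $|f(\cdot,0)-l_\theta^s(\cdot,0)|$ by its $\mu$-essential supremum, and identify the remaining integral as $c^{\TV}$. The measure-theoretic caveat you raise about the $\mu$-null exceptional set is fine and does not change the argument.
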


\paragraph{Example 1 (Cont')} We illustrate the result of Corollary \ref{thm:bias-bound-TV} using our toy example. We calculate $m^{\TV}$'s for each specification, and report the specific forms of equation \eqref{eq:bias-bound-TV}. First, note that 
\begin{align*}
	c^{\TV}=\int|g^1(x)-g^0(x)|\mu(\mathrm dx)=|g^1(1)-g^0(1)|+|g^1(0)-g^0(0)|=2|1-4p|.
\end{align*}
For Specification A, as $m_A^{\TV}=\|f(\cdot,0)-l^A(\cdot,0)\|_{L^\infty(\mu)}=\|\cdot-2p\|_{L^\infty(\mu)}=2p\vee(1-2p)$, equation \eqref{eq:bias-bound-TV} holds in the form of
\begin{align*}
	|1-4p|=|\beta_A-\tau|\leq m_A^{\TV}c^{\TV}=\underbrace{(2p\vee (1-2p))}_{\geq 1/2}\times 2|1-4p|.
\end{align*}
For Specification B, since $m_B^{\TV}=\|f(\cdot,0)-l^B(\cdot,0)\|_{L^\infty(\mu)}=\|\cdot - (-p+(3/2)\cdot)\|_{L^\infty(\mu)}=p\vee(1/2-p)$, equation \eqref{eq:bias-bound-TV} takes the form of
\begin{align*}
	|-1/2+2p|=|\beta_B-\tau|\leq m_B^{\TV}c^{\TV}=\underbrace{(p\vee(1/2-p))}_{\geq 1/4}\times 2|1-4p|.\tag*{\qed}
\end{align*}

\subsection{Density ratio bound}

$m^{\TV}$ and $m^{\KS}$ could be overly stringent measures for misspecification, as they basically pick up the most extreme discrepancy between $f(\cdot,0)$ and $l_\theta^s(\cdot,0)$ irrespective of how likely it occurs. To elaborate, suppose that $p$ is close to zero. As $G^0[\{0\}]$ is then close to one, one might argue that $l^A(\cdot,0)=2p$ is close enough to $f(\cdot,0)=\cdot$, and that the regression model $l^A$ is \emph{almost} correctly specified. Nevertheless, this aspect cannot be reflected in $m_A^{\TV}$ or $m_A^{\KS}$; both of them are no smaller than $1/4$. The bound we now provide uses a mean squared metric to measure misspecification, which circumvents this issue.

The bound of the second form is called the ``density ratio bound,'' as it uses the moments of the density ratio $\mathrm dG^1/\mathrm dG^0$ to assess the overlap in covariate distributions. 
Define
\begin{align}
	m^{\DR} &\equiv \|f(\cdot,0)-l_\theta^s(\cdot,0)\|_{L^2(G^0)} \text{ and}\\
	c^{\DR} &\equiv \|(\mathrm dG^1/\mathrm dG^0)(\cdot)-1\|_{L^2(G^0)}.
\end{align}
$m^{\DR}=0$ if the regression model is correctly specified, but the reverse is not necessarily true. $m^{\DR}$ is a relaxed measure of model misspecification compared to $m^{\TV}$ or $m^{\KS}$; in the previous scenario where $p\approx 0$, $m_A^{\DR}=\Exp[(X-2p)^2|D=0]^{1/2}=(2p(1-2p))^{1/2}\approx 0$.
When $G^0$ dominates $G^1$, $c^{\DR}=0$ if and only if $X$ and $D$ are independent; also, it is finite if the propensity scores $\Pr[D=1|X]$ are bounded away from one. $m^{\DR}$ is finite when $Y$ has a second moment.
\begin{cor} \label{thm:bias-bound-DR} Suppose that the conditions of Proposition \ref{thm:representation} are satisfied. Then,
\begin{align}
	|\beta - \tau| \leq m^{\DR}c^{\DR} + |\Exp_{(G^1)^{\perp}}[f(\cdot,0)-l_\theta^s(\cdot,0)]|, \label{eq:bias-bound-DR}
\end{align}
where $(G^1)^{\perp}$ denotes the singular part of $G^1$ with respect to $G^0$.
\end{cor}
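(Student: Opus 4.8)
The plan is to begin from the bias representation of Proposition \ref{thm:representation}, namely
\begin{align*}
	\beta - \tau = \int (f(x,0)-l_\theta^s(x,0))\,(G^1-G^0)(\mathrm dx),
\end{align*}
and to exploit the Lebesgue decomposition of $G^1$ relative to $G^0$. I would write $G^1 = (G^1)^{\mathrm{ac}} + (G^1)^{\perp}$, where $(G^1)^{\mathrm{ac}}$ is absolutely continuous with respect to $G^0$ and $(G^1)^{\perp}$ is its singular part, and let $h\equiv \mathrm d(G^1)^{\mathrm{ac}}/\mathrm dG^0$ denote the Radon-Nikodym derivative of the absolutely continuous part. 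This $h$ is precisely the density ratio entering the definition of $c^{\DR}$.

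First I would split the integral according to this decomposition and collect the absolutely continuous piece against $G^0$:
\begin{align*}
	\beta - \tau
	&= \int (f(x,0)-l_\theta^s(x,0))\,(G^1)^{\mathrm{ac}}(\mathrm dx)
	+ \int (f(x,0)-l_\theta^s(x,0))\,(G^1)^{\perp}(\mathrm dx)
	- \int (f(x,0)-l_\theta^s(x,0))\,G^0(\mathrm dx)\\
	&= \int (f(x,0)-l_\theta^s(x,0))(h(x)-1)\,G^0(\mathrm dx) + \Exp_{(G^1)^{\perp}}[f(\cdot,0)-l_\theta^s(\cdot,0)],
\end{align*}
where the first equality uses $\int (f(x,0)-l_\theta^s(x,0))\,(G^1)^{\mathrm{ac}}(\mathrm dx)=\int (f(x,0)-l_\theta^s(x,0))h(x)\,G^0(\mathrm dx)$.

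Next I would apply the Cauchy-Schwarz inequality in $L^2(G^0)$ to the first term, obtaining
\begin{align*}
	\biggl|\int (f(x,0)-l_\theta^s(x,0))(h(x)-1)\,G^0(\mathrm dx)\biggr|
	\leq \|f(\cdot,0)-l_\theta^s(\cdot,0)\|_{L^2(G^0)}\,\|h-1\|_{L^2(G^0)} = m^{\DR}c^{\DR}.
\end{align*}
The triangle inequality then delivers the stated bound, with the singular contribution surfacing as the additive term $|\Exp_{(G^1)^{\perp}}[f(\cdot,0)-l_\theta^s(\cdot,0)]|$.

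The main obstacle is the bookkeeping around integrability and the correct identification of the density ratio, rather than any deep step. I would need to confirm that $m^{\DR}$ is finite—guaranteed when $Y$ has a second moment, since then $f(\cdot,0)-l_\theta^s(\cdot,0)\in L^2(G^0)$—and that $c^{\DR}$ is finite, as discussed in the text when the propensity scores are bounded away from one, so that the Cauchy-Schwarz step is legitimate. The only genuine subtlety is making explicit that the $\mathrm dG^1/\mathrm dG^0$ appearing in $c^{\DR}$ is the density of the absolutely continuous part $(G^1)^{\mathrm{ac}}$ and not of $G^1$ as a whole; once the Lebesgue decomposition is invoked this is automatic, and it is the single point at which care is required.
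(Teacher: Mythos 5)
Your proposal is correct and follows essentially the same route as the paper: both arguments rest on the Lebesgue decomposition $G^1=(G^1)^{\mathrm{ac}}+(G^1)^{\perp}$, the identification $\int(f-l_\theta^s)\,\mathrm d((G^1)^{\mathrm{ac}}-G^0)=\int(f-l_\theta^s)(\mathrm dG^1/\mathrm dG^0-1)\,\mathrm dG^0$, a Cauchy--Schwarz bound in $L^2(G^0)$, and the triangle inequality to absorb the singular contribution; the paper merely presents the steps in the reverse order, starting from the bound $m^{\DR}c^{\DR}$ and working back to the representation. Your explicit remark that the density ratio in $c^{\DR}$ is that of the absolutely continuous part is exactly the point the paper handles by writing $(G^1)^a$ in the corresponding step, and the paper likewise dispenses with the finiteness issue by noting the inequality is trivial when either factor is infinite.
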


\paragraph{Example 1 (Cont')} We revisit the preceding example to illustrate the result of Corollary \ref{thm:bias-bound-DR}. First, since $p\in(0,1/2)$, $G^0$ dominates $G^1$. $c^{\DR}=|1-4p|/(2p(1-2p))^{1/2}$, which equals zero if and only if $p=1/4$. For Specifications A and B, equation \eqref{eq:bias-bound-DR} holds as an equality in the forms of
\begin{align*}
	|1-4p|=|\beta_A-\tau|&= m_A^{\DR}c^{\DR} = (2p(1-2p))^{1/2}\times |1-4p|/(2p(1-2p))^{1/2} \text{ and}\\
	|-1/2+2p|=|\beta_B-\tau|&= m_B^{\DR}c^{\DR} = (p(1/2-p))^{1/2}\times |1-4p|/(2p(1-2p))^{1/2}.\tag*{\qed}
\end{align*}

\subsection{L\'evy-Prokhorov bound}
The bound of the third form is referred to as the ``L\'evy-Prokhorov bound,'' because it uses the L\'evy-Prokhorov metric to gauge the discrepancy in covariate distributions between the treated and untreated. Define
\begin{align}
	m^{\LP}&\equiv \|f(\cdot,0)-l_\theta^s(\cdot,0)\|_{\Lip} + \|f(\cdot,0)-l_\theta^s(\cdot,0)\|_\infty\text{ and}\\
	c^{\LP}&\equiv 2\rho(G^1,G^0),
\end{align}
where $m^{\LP}$ is the sum of the Lipschitz seminorm and the supremum norm of $f(\cdot,0)-l_\theta^s(\cdot,0)$, where the latter is defined as
\begin{align*}
	\|f(\cdot,0)-l_\theta^s(\cdot,0)\|_\infty \equiv \sup_{x\in\cup_{d\in\{0,1\}}\mathcal X^d} |f(x,0)-l_\theta^s(x,0)|, 	
\end{align*}
and we recall that $\rho$ 
denotes the L\'evy-Prokhorov metric, which is defined as 
\begin{align}
	\rho(G^1,G^0) \equiv \inf\{\varepsilon>0:G^1[A]\leq G^0[A^\varepsilon]+\varepsilon, \forall  A\in\mathcal B(\mathbb R^p)\}, \label{eq:LP}
\end{align}
where $A^\varepsilon$ is the $\varepsilon$-enlargement of $A$, and $\mathcal B(\mathbb R^p)$ is the class of all Borel sets in $\mathbb R^p$.  

\begin{cor} \label{thm:bias-bound-LP} Suppose that the conditions of Proposition \ref{thm:representation} hold. Then,
\begin{align}
	|\beta - \tau|\leq m^{\LP}c^{\LP}. \label{eq:bias-bound-LP}
\end{align}
\end{cor}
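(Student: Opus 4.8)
The plan is to begin from the inner-product representation furnished by Proposition \ref{thm:representation}, namely $\beta-\tau=\int h\,(G^1-G^0)(\mathrm dx)$ with $h\equiv f(\cdot,0)-l_\theta^s(\cdot,0)$, and then to bound this integral by recognizing it as a bounded-Lipschitz functional that is controlled by the L\'evy--Prokhorov distance $\rho(G^1,G^0)$. If either $\|h\|_{\Lip}$ or $\|h\|_\infty$ is infinite then $m^{\LP}=\infty$ and \eqref{eq:bias-bound-LP} is vacuous, so I may assume throughout that both quantities are finite.

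First I would extend $h$ from $\cup_{d\in\{0,1\}}\mathcal X^d$ to all of $\mathbb R^p$ without inflating either norm. A McShane extension preserves the Lipschitz seminorm, and a subsequent truncation to the interval $[-\|h\|_\infty,\|h\|_\infty]$---a $1$-Lipschitz operation that leaves $h$ unchanged on $\cup_{d\in\{0,1\}}\mathcal X^d$, since $|h|\le\|h\|_\infty$ there---keeps the supremum norm at $\|h\|_\infty$ while not increasing the Lipschitz seminorm. Denote the result by $\tilde h$. Because $\tilde h=h$ on the common support of $G^1$ and $G^0$, the integral is unchanged, i.e. $\int h\,(G^1-G^0)=\int \tilde h\,(G^1-G^0)$, while $\|\tilde h\|_\infty\le\|h\|_\infty$ and $\|\tilde h\|_{\Lip}\le\|h\|_{\Lip}$.

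The core step is the passage from this integral to $\rho$, which I would carry out via a Strassen coupling. For any $\varepsilon>\rho(G^1,G^0)$ there exists $\pi\in\Pi(G^1,G^0)$ with $\pi[\{\|x_1-x_2\|>\varepsilon\}]\le\varepsilon$; here $\mathbb R^p$ being Polish legitimizes the use of Strassen's theorem with the enlargement-based definition \eqref{eq:LP}. Writing $\int\tilde h\,(G^1-G^0)=\int(\tilde h(x_1)-\tilde h(x_2))\,\pi(\mathrm dx_1\mathrm dx_2)$ and splitting the domain according to $\{\|x_1-x_2\|\le\varepsilon\}$ versus its complement, the near part is bounded by $\|\tilde h\|_{\Lip}\varepsilon$ and the far part by $2\|\tilde h\|_\infty\varepsilon$, so that $|\beta-\tau|\le(\|\tilde h\|_{\Lip}+2\|\tilde h\|_\infty)\varepsilon$. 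Letting $\varepsilon\downarrow\rho(G^1,G^0)$ and using $\|\tilde h\|_{\Lip}+2\|\tilde h\|_\infty\le 2(\|h\|_{\Lip}+\|h\|_\infty)=2m^{\LP}$ gives $|\beta-\tau|\le m^{\LP}\cdot 2\rho(G^1,G^0)=m^{\LP}c^{\LP}$. Equivalently, this is the classical comparison $\beta_{BL}\le 2\rho$ between the dual bounded-Lipschitz (Dudley) metric and the Prokhorov metric \citep{Dudley2002}, applied to $\tilde h/(\|h\|_{\Lip}+\|h\|_\infty)$.

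I expect the main obstacle to be organizing the coupling argument so that the factor of $2$ built into $c^{\LP}=2\rho(G^1,G^0)$ emerges cleanly: it is exactly the near/far split that produces the two contributions scaled by $\|\tilde h\|_{\Lip}$ and $\|\tilde h\|_\infty$, and one must check that the normalization of the bounded-Lipschitz norm matches the additive form $\|h\|_{\Lip}+\|h\|_\infty$ appearing in $m^{\LP}$ so that the constant is $2$ rather than $3$. The extension of $h$ off the support, though routine, is a genuinely necessary step, since $h=f(\cdot,0)-l_\theta^s(\cdot,0)$ is a priori defined only on $\cup_{d\in\{0,1\}}\mathcal X^d$, whereas both the duality and the Strassen construction operate on functions defined over the whole of $\mathbb R^p$.
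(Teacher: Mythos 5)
Your proof is correct and follows essentially the same route as the paper: the paper likewise reduces \eqref{eq:bias-bound-LP} to the comparison between the bounded-Lipschitz functional $h\mapsto\int h\,(G^1-G^0)$ and the Prokhorov distance, except that it simply normalizes $f(\cdot,0)-l_\theta^s(\cdot,0)$ by $\|\cdot\|_{\Lip}+\|\cdot\|_\infty$ and cites Corollary 2 of Dudley (1968) for the inequality ``bounded-Lipschitz $\leq 2\rho$,'' whereas you reprove that inequality via Strassen's coupling and the near/far split (the same device the paper itself uses for the refinement \eqref{eq:bias-bound-LP-m}). Your additional care with the McShane extension, the truncation, and the degenerate cases is harmless and, if anything, slightly more complete than the paper's treatment.
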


A distinctive feature of this bound is, the informativeness of $c^{\LP}$ is both for the Lipschitz-smoothness of $f(\cdot,0)-l_\theta^s(\cdot,0)$ and the maximum degree of misspecification that can be allowed; for a given level $\epsilon$ of target precision, the closer the distributions $G^1$ and $G^0$ become, the less smooth and the farther from zero $f(\cdot,0)-l_\theta^s(\cdot,0)$ can be.

\paragraph{Example 1 (Cont')} We illustrate how equation \eqref{eq:bias-bound-LP} concretizes in our toy example. $\rho(G^1,G^0)=\inf\{\varepsilon>0:G^1[\{0\}]\leq G^0[\{0\}^\varepsilon]+\varepsilon, G^1[\{1\}]\leq G^0[\{1\}^\varepsilon]+\varepsilon, G^1[\{0,1\}]\leq G^0[\{0,1\}^\varepsilon]+\varepsilon\}=|1-4p|$, whereby $c^{\LP}=2|1-4p|$. Also, $m_A^{\LP}=m_A^{\MKW}+m_A^{\TV}=1+2p\vee(1-2p)$ and $m_B^{\LP}=m_B^{\MKW}+m_B^{\TV}=1/2+p\vee(1/2-p)$. As a result, equation \eqref{eq:bias-bound-LP} takes the forms of
\begin{align*}
	|1-4p|=|\beta_A - \tau|&\leq m^{\LP}c^{\LP} = (1+2p\vee(1-2p))\times 2|1-4p|\text{ and}\\
	|-1/2+2p|=|\beta_B - \tau|&\leq m^{\LP}c^{\LP} = (1/2+p\vee(1/2-p))\times 2|1-4p|.\tag*{\qed}
\end{align*}

As with the Monge-Kantorovich/Wasserstein bound, $m^{\LP}$ could have been defined in a slightly weaker manner: The Lipschitz seminorm can be replaced by a no larger
\begin{align}
	\Exp_{G^0}\biggl[\sup_{x\in\mathcal X^1}\frac{|f(x,0)-l_\theta^s(x,0)-(f(\cdot,0)-l_\theta^s(\cdot,0))|}{\|x-\cdot\|}\mathbf 1\{0<\|x-\cdot\|\leq c^{\LP}/2\}\biggr], \label{eq:bias-bound-LP-m}
\end{align}
which averages the local unsmoothness of the function $f(\cdot,0)-l_\theta^s(\cdot,0)$; $c^{\LP}$ parametrizes the locality. 

The exact computation of the L\'evy-Prokhorov distance $c^{\LP}$ is non-trivial. One may utilize its relationship $(1/2)c^{\text{MK/W}}\leq (1/2)c^{\LP}\leq \sqrt{c^{\text{MK/W}}}$ with the Monge-Kantorovich or Wasserstein distance $c^{\MKW}$ \citep[Corollary 2.18]{HuberRonchetti2009}, whose computation is less involved.

\section{Extensions of Proposition \ref{thm:representation}}

\subsection{Interaction terms}

\begin{assume}
	$Y=l(X,D)+E$ such that $l(X,D)$ and $E$ have finite first moments, and $\Exp[DE]=0$. Also, $\Pr[D=1]\in(0,1)$. \label{ass:regularity-extension}
\end{assume}

For notational convenience, let $\Delta l(X,\cdot)\equiv l(X,1)-l(X,0)$ and $\Delta \Exp[l(X,0)|D=\cdot]\equiv \Exp[l(X,0)|D=1]-\Exp[l(X,0)|D=0]$.

\begin{prop} \label{thm:representation-extension} Suppose that Assumption \ref{ass:regularity-extension} holds. Then,
\begin{align}
\Exp[\Delta l(X,\cdot)|D=1] - \tau = \int (f(x,0) - l(x,0))(g^1(x)-g^0(x))\mu(\mathrm dx). \label{eq:representation-extension}
\end{align}
\end{prop}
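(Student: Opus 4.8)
The plan is to follow the proof of Proposition~\ref{thm:representation} almost verbatim, with the role of Lemma~\ref{thm:Y-moments} now played by the single identity $\Exp[Y|D=d]=\Exp[l(X,d)|D=d]$ for each $d\in\{0,1\}$. Once this is in hand, the representation follows from the same bookkeeping used there: I split $\int f(x,0)g^1(x)\mu(\mathrm dx)$ into $\int f(x,0)g^0(x)\mu(\mathrm dx)$ plus $\int f(x,0)(g^1(x)-g^0(x))\mu(\mathrm dx)$ and collect terms. I expect the identity $\Exp[Y|D=d]=\Exp[l(X,d)|D=d]$ to be the only real content, since everything afterward is linear manipulation of integrals.

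First I would establish that identity, i.e.\ that $\Exp[E|D=d]=0$ for both $d$. Exactly as in the proof of Lemma~\ref{thm:Y-moments}, the saturated-in-$D$ structure of $Y=l(X,D)+E$ forces the error to be orthogonal to the span of $\{1,D\}$, so that $\Exp[E]=\Exp[DE]=0$. Dividing $\Exp[DE]=\Pr[D=1]\,\Exp[E|D=1]$ by $\Pr[D=1]>0$ gives $\Exp[E|D=1]=0$, and then $\Exp[E]=\Pr[D=0]\,\Exp[E|D=0]$ with $\Pr[D=0]=1-\Pr[D=1]>0$ gives $\Exp[E|D=0]=0$; taking conditional expectations in $Y=l(X,D)+E$ then yields $\Exp[Y|D=d]=\Exp[l(X,d)|D=d]$. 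This is the step that uses $\Pr[D=1]\in(0,1)$, and it is where care is needed: the statement pins down the \emph{level} of $l(\cdot,0)$ relative to $Y$ on the $D=0$ cell, and it is precisely the condition $\Exp[E|D=0]=0$ (equivalently $\Exp[f(X,0)-l(X,0)|D=0]=0$) that makes the right-hand side of \eqref{eq:representation-extension} match the left-hand side. I regard securing $\Exp[E|D=d]=0$ for \emph{both} $d$, and not merely $d=1$, as the main obstacle.

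With the identity available I would then compute directly. Writing $\tau=\Exp[Y|D=1]-\int f(x,0)g^1(x)\mu(\mathrm dx)$ and using the split above together with $\int f(x,0)g^0(x)\mu(\mathrm dx)=\Exp[f(X,0)|D=0]=\Exp[Y|D=0]=\Exp[l(X,0)|D=0]$ (tower property, then the identity at $d=0$) and $\Exp[Y|D=1]=\Exp[l(X,1)|D=1]$, I obtain
\begin{align*}
\tau=\Exp[l(X,1)|D=1]-\Exp[l(X,0)|D=0]-\int f(x,0)(g^1(x)-g^0(x))\mu(\mathrm dx).
\end{align*}
Decomposing $\Exp[l(X,1)|D=1]=\Exp[\Delta l(X,\cdot)|D=1]+\Exp[l(X,0)|D=1]$ and recognizing $\Exp[l(X,0)|D=1]-\Exp[l(X,0)|D=0]=\int l(x,0)(g^1(x)-g^0(x))\mu(\mathrm dx)$ lets me merge the two integrals into $-\int (f(x,0)-l(x,0))(g^1(x)-g^0(x))\mu(\mathrm dx)$; rearranging gives exactly \eqref{eq:representation-extension}. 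As a consistency check one may note $f(x,0)-l(x,0)=\Exp[E|X=x,D=0]$, so that $\int (f(x,0)-l(x,0))g^0(x)\mu(\mathrm dx)=\Exp[E|D=0]=0$ is precisely the vanishing invoked above.
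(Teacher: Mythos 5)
Your argument is correct and is essentially the paper's own proof: your identity $\Exp[Y|D=d]=\Exp[l(X,d)|D=d]$ is precisely Lemma \ref{thm:Y-moments-extension} (stated there with $\Exp[l(X,0)|D=1]$ telescoped into the coefficient of $D$), and the subsequent split of $\int f(x,0)\,G^1(\mathrm dx)$ into its $G^0$ part plus the $(G^1-G^0)$ part, merged via $\Exp[l(X,0)|D=1]-\Exp[l(X,0)|D=0]=\int l(x,0)(G^1-G^0)(\mathrm dx)$, is the same bookkeeping the paper performs. The one soft spot---your claim that the saturated-in-$D$ structure \emph{forces} $\Exp[E]=0$, which is what delivers the crucial $\Exp[E|D=0]=0$ even though Assumption \ref{ass:regularity-extension} literally states only $\Exp[DE]=0$---is a leap the paper's own lemma proof also makes (``since $\Exp[DE]=0$ and $\Pr[D=1]\in(0,1)$, $\Exp[E|D]=0$''), so your proof is exactly as complete as the published one; to make it airtight, $\Exp[E]=0$ (equivalently $\Exp[E|D=0]=0$) should be recorded as part of the hypothesis, since otherwise the right-hand side of \eqref{eq:representation-extension} acquires an extra $\Exp[E|D=0]$ term.
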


As an example, consider an extension of \eqref{eq:reg-model}
\begin{align}
	Y = \underbrace{\alpha + \beta D + s(X)'\gamma + D(s(X)-\Exp[s(X)|D=1])'\delta}_{l(X,D)} + E,
\end{align}
which includes the interaction terms between $D$ and $s(X)$. Note that 
\begin{align*}
	\Delta l(X,\cdot)=\beta + (s(X)-\Exp[s(X)|D=1]), 
\end{align*}
and thus $\Exp[\Delta l(X,\cdot)|D=1]=\beta$.

\subsubsection{Proof of Proposition \ref{thm:representation-extension}}

\begin{lem} \label{thm:Y-moments-extension}Suppose that Assumption \ref{ass:regularity-extension} holds. Then,
\begin{align}
	\Exp[Y|D] =\Exp[l(X,0)|D=0]+(\Exp[\Delta l(X,\cdot)|D=1] + \Delta \Exp[l(X,0)|D=\cdot])D.
\end{align}
\end{lem}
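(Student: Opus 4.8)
The plan is to use the fact that, since $D$ is binary, every conditional expectation given $D$ is automatically an affine function of $D$, so the identity reduces to computing two conditional means and matching the coefficients on $1$ and on $D$. Concretely, I would first argue that $\Exp[E|D]=0$, so that $\Exp[Y|D]=\Exp[l(X,D)|D]$, and then decompose $l(X,D)$ into its untreated level $l(X,0)$ plus a treatment increment $\Delta l(X,\cdot)$ and read off the intercept and the slope in $D$.

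The only real obstacle is the first step, $\Exp[E|D]=0$, which is the exact analogue of the orthogonality argument in the proof of Lemma \ref{thm:Y-moments}. Because $D\in\{0,1\}$, the conditional mean $\Exp[E|D]$ is ``saturated-in-$D$,'' i.e.\ $\Exp[E|D]=a+bD$ for some constants $a,b$ (here I use that $E$ has a finite first moment so the conditional mean is well defined). Iterating expectations gives $\Exp[E]=a+b\Pr[D=1]$ and, using $D^2=D$, $\Exp[DE]=\Exp[D(a+bD)]=(a+b)\Pr[D=1]$. Since the residual of the saturated-in-$D$ model is orthogonal to both $1$ and $D$ (the mean-zero condition together with the stated $\Exp[DE]=0$), these two quantities vanish; as $\Pr[D=1]\in(0,1)$ makes neither $\Pr[D=1]$ nor $\Pr[D=0]$ zero, the equations $(a+b)\Pr[D=1]=0$ and $a+b\Pr[D=1]=0$ force $b=0$ and then $a=0$. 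Hence $\Exp[E|D]=0$, and since $l(X,D)$ has a finite first moment, $Y=l(X,D)+E$ yields $\Exp[Y|D]=\Exp[l(X,D)|D]$. The one point to flag is that this uses orthogonality of $E$ to the full span $\{1,D\}$, precisely as in Lemma \ref{thm:Y-moments}, rather than $\Exp[DE]=0$ alone.

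The remaining step is routine algebra on the binary variable $D$. I would write the valid decomposition
\begin{align*}
	l(X,D)=l(X,0)+\Delta l(X,\cdot)\,D,
\end{align*}
take $\Exp[\cdot|D]$, and use that any function $g(D)$ of the binary $D$ satisfies $D\,g(D)=D\,g(1)$, so that $D\,\Exp[\Delta l(X,\cdot)|D]=D\,\Exp[\Delta l(X,\cdot)|D=1]$. This gives
\begin{align*}
	\Exp[l(X,D)|D]=\Exp[l(X,0)|D]+\Exp[\Delta l(X,\cdot)|D=1]\,D.
\end{align*}
Finally, since $\Exp[l(X,0)|D]$ is itself affine in the binary $D$, it equals $\Exp[l(X,0)|D=0]+\Delta\Exp[l(X,0)|D=\cdot]\,D$. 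Collecting the coefficient of $D$ produces $\Exp[\Delta l(X,\cdot)|D=1]+\Delta\Exp[l(X,0)|D=\cdot]$ with intercept $\Exp[l(X,0)|D=0]$, which is exactly the asserted expression. No delicate estimate is needed here beyond the finiteness of first moments already assumed.
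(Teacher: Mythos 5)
Your proof is correct and follows essentially the same route as the paper's: both arguments reduce to showing $\Exp[E|D]=0$, noting that $\Exp[l(X,D)|D]$ is automatically affine in the binary $D$, and then telescoping $l(X,D)$ into $l(X,0)$ plus the increment $\Delta l(X,\cdot)\,D$. The caveat you flag is well taken and applies equally to the paper's own proof: Assumption \ref{ass:regularity-extension} states only $\Exp[DE]=0$, which pins down $\Exp[E|D=1]=0$ but not $\Exp[E|D=0]=0$, so both your argument and the paper's implicitly rely on $\Exp[E]=0$ as well.
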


\begin{proof} Let $a\equiv \Exp[l(X,D)|D=0]$ and $b\equiv\Exp[l(X,D)|D=1]-\Exp[l(X,D)|D=0]$; both of which exist and finite. Define $\tilde E \equiv l(X,D) - (a+ bD)$. Then, 
\begin{align*}
	\Exp[D\tilde E]=\Exp[Dl(X,D)]-(a+b)\Pr[D=1]=0.	
\end{align*}
Thus, $\Exp[l(X,D)|D]=a + bD$. In addition, since $\Exp[DE]=0$ and $\Pr[D=1]\in(0,1)$, $\Exp[E|D]=0$. Now, combining results,
\begin{align*}
	\Exp[Y|D]&=\Exp[l(X,D)|D] + \Exp[E|D]\\
	&= \Exp[l(X,D)|D=0] + (\Exp[l(X,D)|D=1]-\Exp[l(X,D)|D=0])D.
\end{align*}
Telescoping $\Exp[l(X,0)|D=1]$ in the coefficient of $D$ yields the desired result.
\end{proof}

Now,
\begin{align*}
	\tau &= \Exp[Y|D=1] - \int \Exp[Y|X=x,D=0]G^1(\mathrm dx)\\
	&= \Exp[Y|D=1] - \Exp[Y|D=0] - \int \Exp[Y|X=x,D=0](G^1-G^0)(\mathrm dx)\\
	&= \Exp[\Delta l(X,\cdot)|D=1] + \Delta \Exp[l(X,0)|D=\cdot] - \int \Exp[Y|X=x,D=0](G^1-G^0)(\mathrm dx)\\
	&= \Exp[\Delta l(X,\cdot)|D=1] - \int (\Exp[Y|X=x,D=0]-l(x,0))(G^1-G^0)(\mathrm dx),
\end{align*}
where we use Lemma \ref{thm:Y-moments-extension} in the third equality.

\subsection{Other parameters}
Here, we explore other parameters that may simplify into the average treatment effect on the untreated (ATEU) and the average treatment effect (ATE), respectively. 

Define
\begin{align}
	\tau^0 &\equiv \Exp[\Exp[Y|X,D=1]|D=0] - \Exp[Y|D=0]\text{, and}\\
	\tau^{10} &\equiv \Exp[Y|D=1]\Pr[D=1] + \Exp[\Exp[Y|X,D=1]|D=0]\Pr[D=0]\nonumber\\
	&\hphantom{\equiv \Exp} - (\Exp[\Exp[Y|X,D=0]|D=1]\Pr[D=1] + \Exp[Y|D=0]\Pr[D=0]).
\end{align}
$\tau^0$ is interpreted as the ATEU under the conditional independence between $Y(1)$ and $D$ given $X$, where $Y(1)$ denotes the potential outcome when treated, and $\tau^{10}$ as the ATE under that between $(Y(1),Y(0))$ and $D$ given $X$.
\begin{prop} \label{thm:representation-other-parameters} Suppose that Assumption \ref{ass:regularity} holds. Then,
\begin{align}
	\beta - \tau^0 &= \int (f(x,1) - l(x,1))(g^1(x)-g^0(x))\mu(\mathrm dx)\text{, and}\\
	\beta - \tau^{10} &= \Pr[D=1]\int (f(x,0)-l(x,0))(g^1(x)-g^0(x))\mu(\mathrm dx)\nonumber\\
	&\hphantom{= \Pr}+\Pr[D=0]\int (f(x,1)-l(x,1))(g^1(x)-g^0(x))\mu(\mathrm dx).	
\end{align}
\end{prop}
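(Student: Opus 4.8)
The plan is to handle the two identities in turn, deriving the $\tau^0$ representation by a label-swapped rerun of the argument behind Proposition~\ref{thm:representation}, and then obtaining the $\tau^{10}$ representation almost for free from a convex-combination identity.

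For the first identity, I would proceed exactly as in the proof of Proposition~\ref{thm:representation} but with the roles of $D=1$ and $D=0$ interchanged. The two facts I would lean on are Lemma~\ref{thm:Y-moments}, which gives $\Exp[Y|D=d]=\alpha+\beta d+\Exp[s(X)'\gamma|D=d]$ for each $d\in\{0,1\}$, and the law of iterated expectations in the form $\int f(x,1)\,G^1(\mathrm dx)=\Exp[Y|D=1]=\alpha+\beta+\Exp[s(X)'\gamma|D=1]$. Writing $l(x,1)=\alpha+\beta+s(x)'\gamma$ and expanding $\int(f(x,1)-l(x,1))(G^1-G^0)(\mathrm dx)$, the constant $\alpha+\beta$ integrates against $G^1-G^0$ to zero, the $s(x)'\gamma$ term contributes $\Exp[s(X)'\gamma|D=1]-\Exp[s(X)'\gamma|D=0]$, and the $f(x,1)$ term contributes $(\alpha+\beta+\Exp[s(X)'\gamma|D=1])-\Exp[\Exp[Y|X,D=1]|D=0]$. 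Collecting these and using $\Exp[Y|D=0]=\alpha+\Exp[s(X)'\gamma|D=0]$ should reduce the expression precisely to $\beta-\Exp[\Exp[Y|X,D=1]|D=0]+\Exp[Y|D=0]=\beta-\tau^0$, which is the claim.

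For the second identity, the key observation is that $\tau^{10}$ is a $\Pr[D]$-weighted average of the other two estimands, namely $\tau^{10}=\Pr[D=1]\,\tau+\Pr[D=0]\,\tau^0$. This is verified by grouping the four terms in the definition of $\tau^{10}$ into the pair $\Pr[D=1]\bigl(\Exp[Y|D=1]-\Exp[\Exp[Y|X,D=0]|D=1]\bigr)$ and the pair $\Pr[D=0]\bigl(\Exp[\Exp[Y|X,D=1]|D=0]-\Exp[Y|D=0]\bigr)$. Since $\Pr[D=1]+\Pr[D=0]=1$, I can write $\beta=\Pr[D=1]\beta+\Pr[D=0]\beta$ and hence $\beta-\tau^{10}=\Pr[D=1](\beta-\tau)+\Pr[D=0](\beta-\tau^0)$. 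Substituting the representation of $\beta-\tau$ from equation~\eqref{eq:representation} into the first term and the $\tau^0$ representation just established into the second term yields exactly the stated formula.

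I expect no serious obstacle here; the content is genuinely a corollary of Proposition~\ref{thm:representation}. The only place demanding care is the bookkeeping in the first identity—keeping straight which conditional distribution each expectation integrates against, and applying the iterated-expectation identities for $f(\cdot,1)$ against the correct $G^d$. Recognizing the convex-combination identity $\tau^{10}=\Pr[D=1]\tau+\Pr[D=0]\tau^0$ is the pivotal step for the second part, after which linearity does all the work.
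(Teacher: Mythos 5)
Your proposal is correct and follows essentially the same route as the paper: the first identity is the label-swapped version of the Proposition 1 argument resting on Lemma 1 and iterated expectations (the paper telescopes $\int f(x,1)\,G^0(\mathrm dx)$ into $\int f(x,1)(G^0-G^1)(\mathrm dx)+\Exp[Y|D=1]$, which is your expansion read in the opposite direction), and the second identity is obtained in the paper exactly via the convex-combination identity $\tau^{10}=\Pr[D=1]\tau+\Pr[D=0]\tau^0$ followed by substitution. No gaps.
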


The proof is given in the online appendix.

Also, we note that our discussion extends readily to situations where the parameter of a researcher's interest is a weighted version of $\tau$, i.e.,
\begin{align*}
	\tau^w\equiv \Exp[w^1(X)(\Exp[Y|X,D=1]-\Exp[Y|X,D=0])|D=1],
\end{align*}
where $w^d(\cdot)$ is her chosen weight function that satisfies $\Exp[w^d(X)|D=d]=1$.\footnote{Since $G$ is unknown, this choice is infeasible. However, given $\{(X_i',D_i)\}_{i\in\mathcal S}=\{(x_i',d_i)'\}_{i\in\mathcal S}$, it is feasible.} This is due to the observation that $\tau^w=\tau_{G_w,F}$, where $G_w$ is a modification of $G$ by replacing $G^d$'s with their dominated probability measures $w^d\cdot G^d$'s.\footnote{$(w^d\cdot G^d)[B]\equiv \int_Bw^d(x)G^d(\mathrm dx)$.} In fact, this observation yields a useful characterization of subsample selection: It is a transition $\eG\mapsto \eG_w$ such that $w^d\cdot\mathbb G^d$'s are again empirical distributions.

\end{mtchideinmaintoc}

\newpage
\title{\Large Online appendix to\\
``On the role of the design phase in a linear regression''}
\author{Junho Choi\\
University of Wisconsin-Madison\\
junho.choi@wisc.edu}
\hypersetup{bookmarksdepth=1}
\setcounter{page}{1}
\maketitle

\newpage
\tableofcontents
\newpage\appendix

\setcounter{section}{2}
\setcounter{assume}{14}
\setcounter{figure}{5}
\setcounter{prop}{6}
\setcounter{equation}{66}
\section{Simulation}\label{sec:simulation}
Using a simulation exercise, we illustrate that the design phase adjusts the estimand of a regression and makes its identification of the parameter more robust to potential model misspecification.

We adopt the simulation setup from \citet[Section 6.1]{AbadieImbens2012}, who use the restricted sample $\mathcal S$ of the Boston Home Mortgage Disclosure Act (HMDA) dataset, defined in Section \ref{sec:application}. 

Simulation proceeds in the following steps. First, we construct a simulation sample $\mathfrak S$, consisting of $\mathsf N_1$ black applicants and $\mathsf N_0(\geq \mathsf N_1)$ white applicants, randomly drawn from $\mathcal S$ without replacement. Let $(\check\alpha,\check\gamma^\prime)'$ denote the least squares estimates obtained from the regression of $D_i$ on $(1,X_i')'$ within $\mathfrak S$. The estimated linear propensity scores $\hat e(X_i)\equiv \check\alpha + X_i'\check\gamma$ and their exponentials $\hat e(X_i)^r$ are considered as only \emph{covariates}; the estimated linear propensity score function $\hat e$ generates the covariate function. By our notational convention, $\eG_{\hat e}$ denotes the empirical distribution of $(\hat e(X_i),D_i)'$ within $\mathfrak S$. Second, we run a logistic regression of $Y_i$ on $(1,D_i,\hat e(X_i),D_i\hat e(X_i))'$ within $\mathfrak S$, and use the estimated model as the population conditional distribution of $Y_i$ given $(\hat e(X_i),D_i)'$, denoted by $F_{\hat e}$. This models a situation in which the linear regression on the estimated propensity scores may offer a close approximation to the true conditional expectation. 

Third, we compute the full-sample objects $\beta_{\eG,F}$, $\tau_{\eG_{\hat e}^1,F_{\hat e}^{}}$, $m_{\eG_{\hat e},F_{\hat e}}$'s, and $c_{\eG_{\hat e}}$'s, where we  recall that $c_{\eG_{\hat e}}$ is the distance between the push-forwards $\hat e\sharp \eG^d$ of $\eG^d$ via $\hat e$ and
	\begin{align*}
		\tau_{\eG_{\hat e}^1,F_{\hat e}^{}}=\Exp_{\eG_{\hat e}^1}[\Exp_{F_{\hat e}^{}}[Y|\hat e(X),D=1] - \Exp_{F_{\hat e}^{}}[Y|\hat e(X_i),D=0]|D=1].	
	\end{align*}
	Fourth, we apply nearest-neighbor matching to the constructed simulation sample $\mathfrak S$, pairing each of the $\mathsf N_1$ black applicants to a unique white applicant having a similar $\hat e(X_i)$.\footnote{The remaining $N_0-N_1$ white applicants are discarded.} This step corresponds to the design phase in observational studies, though, in practice, one may choose alternative balancing techniques, such as matching on the full set of controls using the Mahalanobis metric, or trimming based on the estimated propensity scores. Recall that, by our notational convention, $\eG_{\hat e}^*$ denotes the empirical distribution of $(\hat e(X),D)'$ within the matched subsample $\mathfrak S^*\subseteq \mathfrak S$. Fifth, we compute the subsample objects $\beta_{\eG^*,F}$, $\tau_{\eG_{\hat e}^{*1},F_{\hat e}^{}}$, $m_{\eG_{\hat e}^*,F_{\hat e}^{}}$'s, and $c_{\eG_{\hat e}^*}$'s. Finally, we repeat steps 2 to 5 for $\mathsf R$ times.

We assume that a researcher considers three specifications:\begin{description}[itemsep=0em,topsep=3pt]
	\item[Specification A:] $Y_i=\alpha_A + \beta_A D_i + E_i$,
	\item[Specification B:] $Y_i=\alpha_B + \beta_B D_i + \gamma_B \hat e(X_i) + E_i$, and 
	\item[Specification C:] $Y_i=\alpha_C + \beta_C D_i + \sum_{r=1}^3\gamma_{C,r} \hat e(X_i)^r + E_i$,
\end{description}
These are relevant to a situation where a researcher weighs increasing the strength of the linear approximation against the potential statistical imprecision it entails.\footnote{Note, however, that including higher order terms does not necessarily improve the strength of approximation to $f(\cdot,0)$.}
  
We explore three scenarios $(\mathsf N_1,\mathsf N_0)\in\{(50,75),(50,100),(50,125)\}$, in all of which the number $\mathsf N_0$ of white applicants is kept small to underscore the utility of the design phase; for conciseness, we present the results only for the second one. Set $\mathsf R=500$.

\begin{figure}[ht]
\centering
\subfloat[Specification A]{\includegraphics[width=0.33\textwidth]{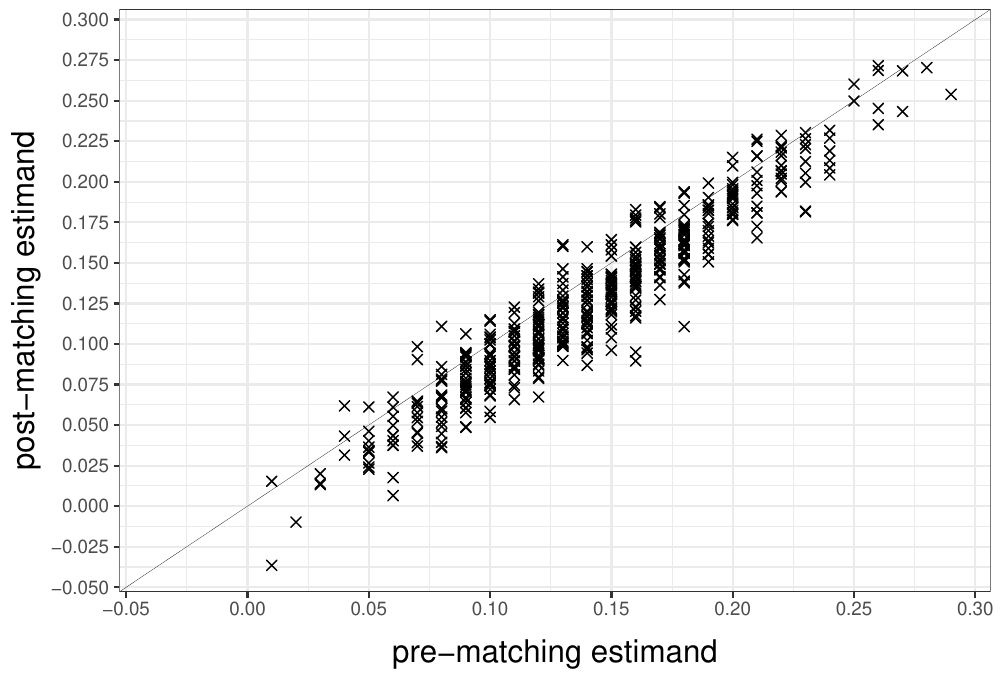}}\hfil
\subfloat[Specification B]{\includegraphics[width=0.33\textwidth]{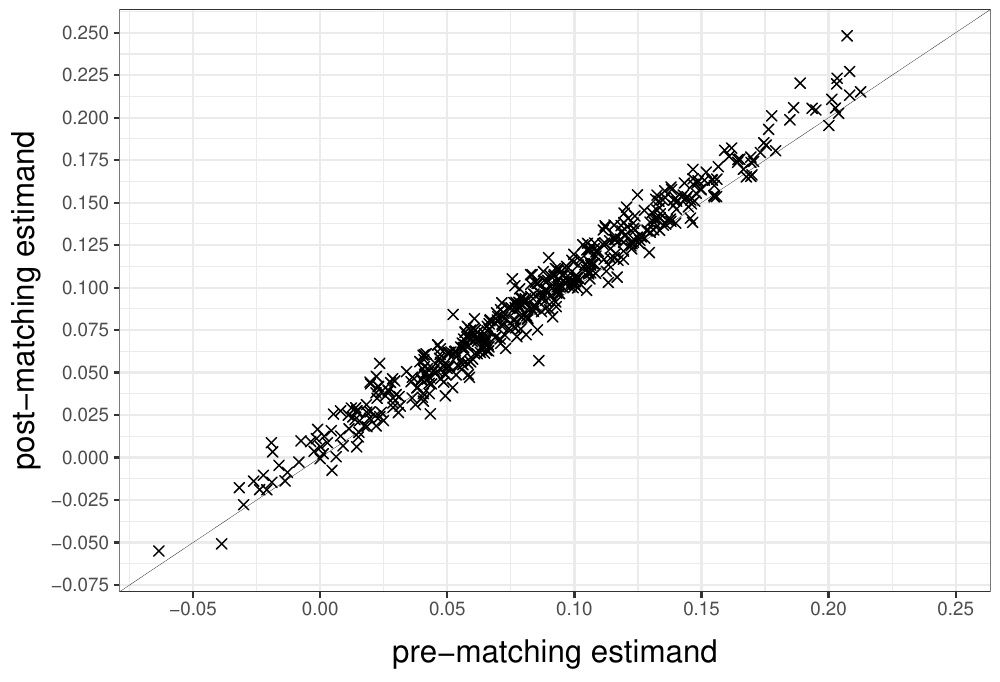}}\hfil
\subfloat[Specification C]{\includegraphics[width=0.33\textwidth]{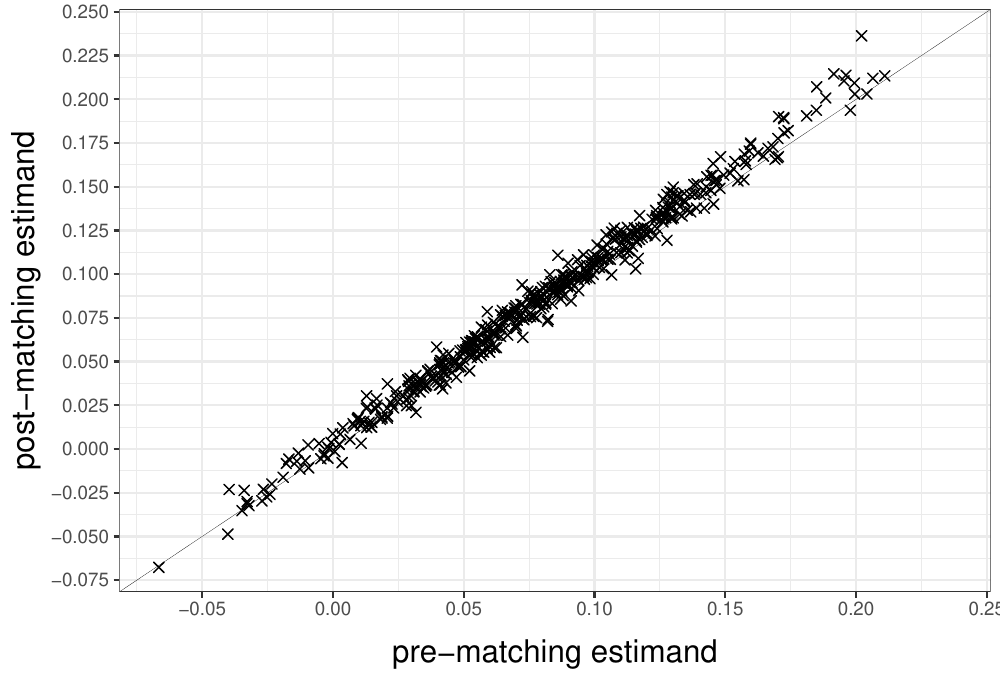}}
\caption{Pre- and Post-estimands ($\mathsf N_1=50$, $\mathsf N_0=100$)}
\label{fig:estimand-100}
\end{figure}

Figure \ref{fig:estimand-100} displays the pairs of the pre- and post-matching estimands, i.e., $\beta_{\eG,F}$ and $\beta_{\eG^*,F}$, showing how nearest-neighbor matching adjusts the estimands of the regressions. In general, it reduces the estimands for Specification A, while slightly increasing those for Specifications B--C.

Figure \ref{fig:bias-100} depicts the pairs $(|\beta_{\eG,F}-\tau_{\eG_{\hat e}^1,F_{\hat e}^{}}|,|\beta_{\eG^*,F}-\tau_{\eG_{\hat e}^{*1},F_{\hat e}^{}}|)$ of the values of the pre- and post-matching biases. It shows that the estimand adjustment in the design phase in general reduces the bias of a regression. In light of our justification, this reduction is presumably driven by the extended range of the degree of misspecification that can be allowed for the regression model. The next set of figures confirms this.
\begin{figure}[ht]
\subfloat[Specification A]{\includegraphics[width=0.33\textwidth]{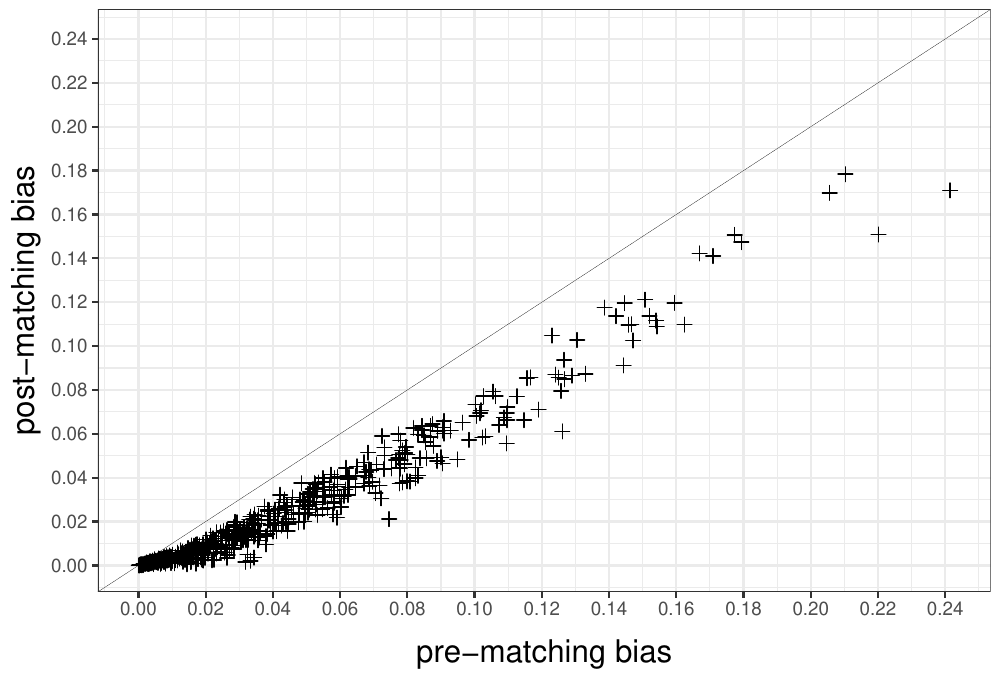}}\hfil
\subfloat[Specification B]{\includegraphics[width=0.33\textwidth]{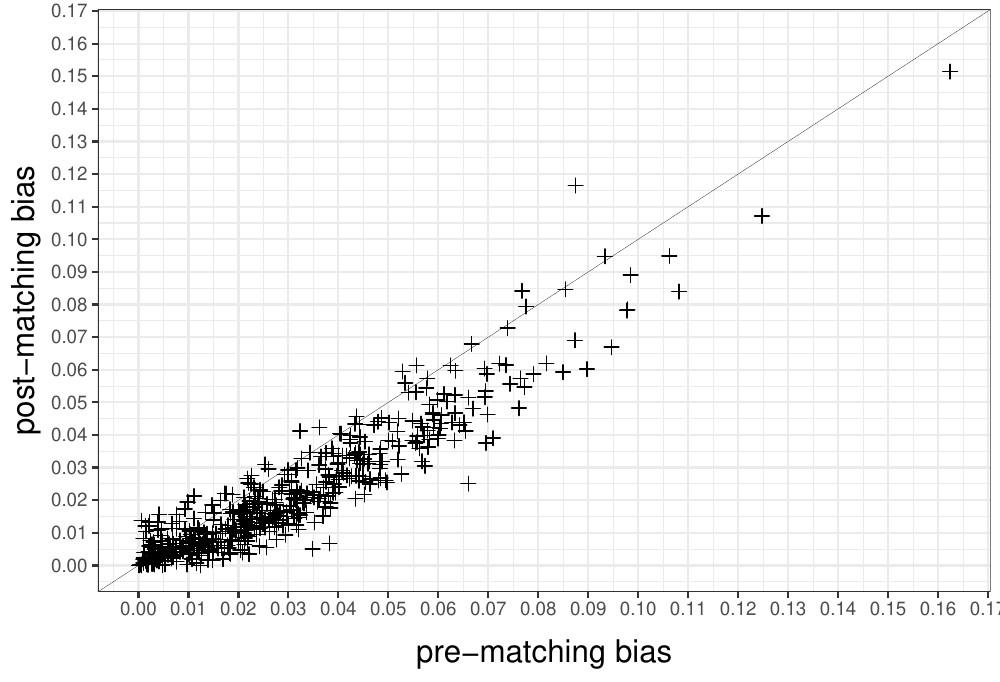}}\hfil	
\subfloat[Specification C]{\includegraphics[width=0.33\textwidth]{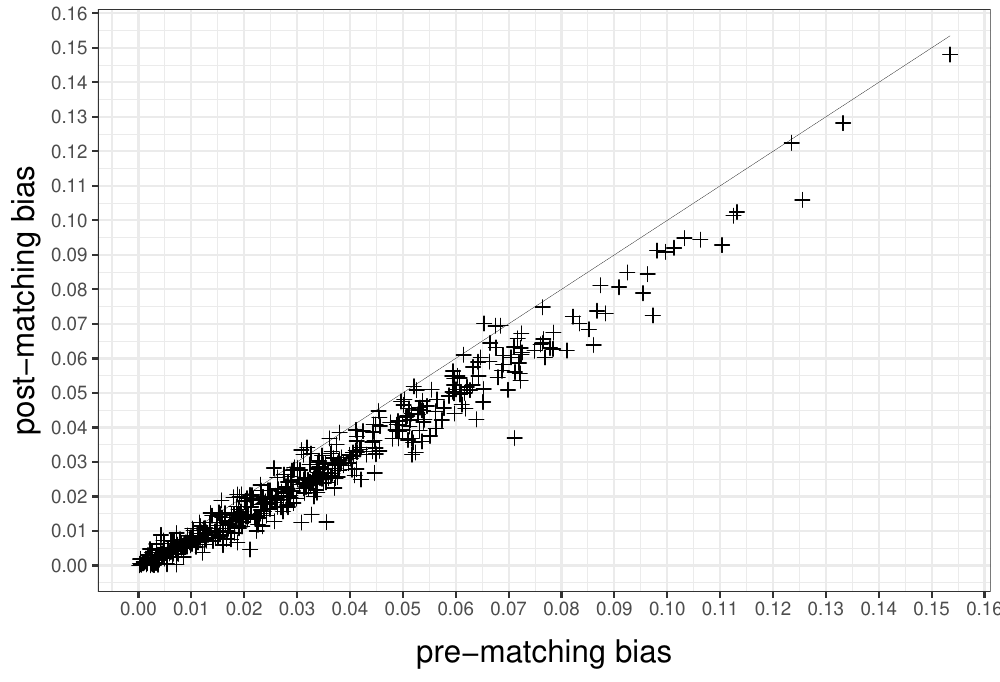}}
\caption{Pre- and Post-biases ($\mathsf N_1=50$, $\mathsf N_0=100$)}
\label{fig:bias-100}
\end{figure}

The points in Figure \ref{fig:total-variation}--\ref{fig:mean-difference} denote the pairs of the values of model misspecification $m_{\eG_{\hat e}^*,F_{\hat e}^{}}$ and covariate imbalance $c_{\eG_{\hat e}^*}$ for the bounds derived in Section \ref{sec:bias-bounds} and \ref{sec:bias-bounds-TVDRLP}.\footnote{The figure for the L\'evy-Prokhorov bound is omitted due to the non-trivial computation of the L\'evy-Prokhorov distance.} The red ones are from the full samples $\mathfrak S$, i.e., $\eG_{\hat e}^*=\eG_{\hat e}^{}$, while the blue ones are from the subsamples $\mathfrak S^*$ constructed using the nearest-neighbor matching. The curves $c\mapsto \epsilon/c$ illustrate the target levels of precision: If a point lies within the curve with $\epsilon=0.005$, it indicates that the product of $m_{\eG_{\hat e}^*,F_{\hat e}^{}}$ and $c_{\eG_{\hat e}^*}$, and thus the bias, is of less than 0.005.

For total variation and density ratio bounds, to guarantee that imbalance measure takes non-trivial values, the estimated linear propensity scores $\hat e(X_i)$ are stratified so that the supports of the resulting conditional distributions can overlap. To be specific, we use the indicators of whether $\hat e(X_i)$ falls into each interquartile of $\eG_{\hat e}^0$, as covariates. Accordingly, we substitute Specifications B--C with a single saturated regression. 

For mean difference bound, we use a constant and the estimated linear propensity score as the summaries, i.e., $r(X_i)=(1,\hat e(X_i))'$. Thus, $c_{\eG_{\hat e}^*}^{\MD}$ is the pair of zero and the absolute mean difference in $\hat e(X_i)$'s between treated and untreated units.

A key observation from the figures is that the design phase yields the subsamples $\mathfrak S^*$ that demonstrate better covariate balance when compared to the full samples $\mathfrak S$. The distribution of the blue points is shifted towards the left relative to the red ones. As a result, more blue points are located within each reciprocal curve, which explains the overall reduction in bias after the design phase.

\begin{figure}[htp]
\centering
\subfloat[Without covariates]{\includegraphics[width=0.33\textwidth]{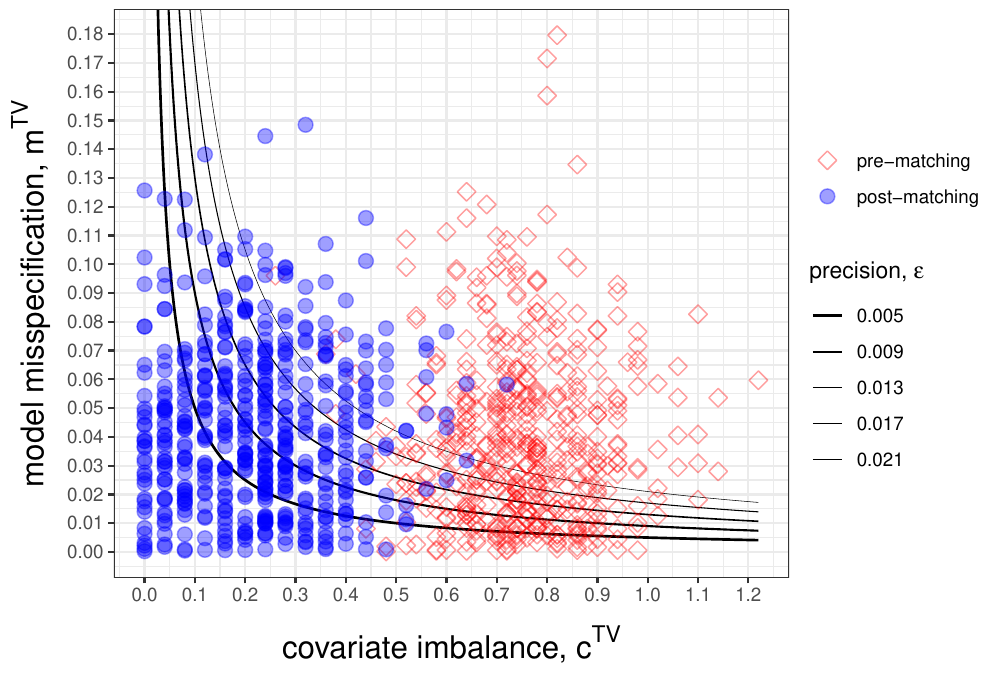}}
\subfloat[Saturated]{\includegraphics[width=0.33\textwidth]{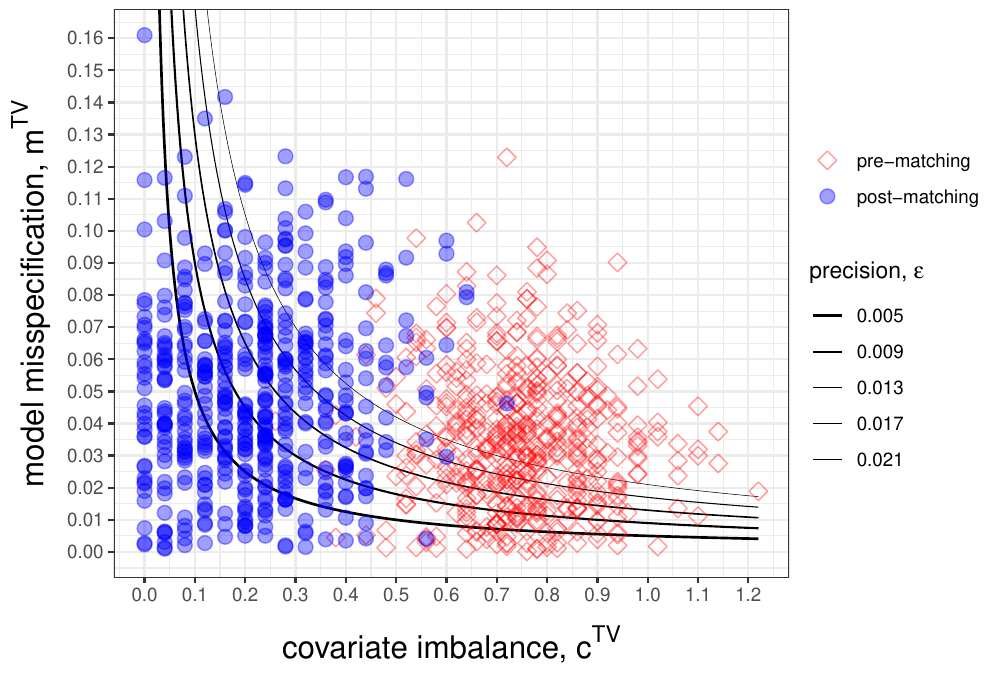}}	
\caption{Total variation bound ($\mathsf N_1=50$, $\mathsf N_0=100$)}
\label{fig:total-variation}\bigskip
\subfloat[Specification A]{\includegraphics[width=0.33\textwidth]{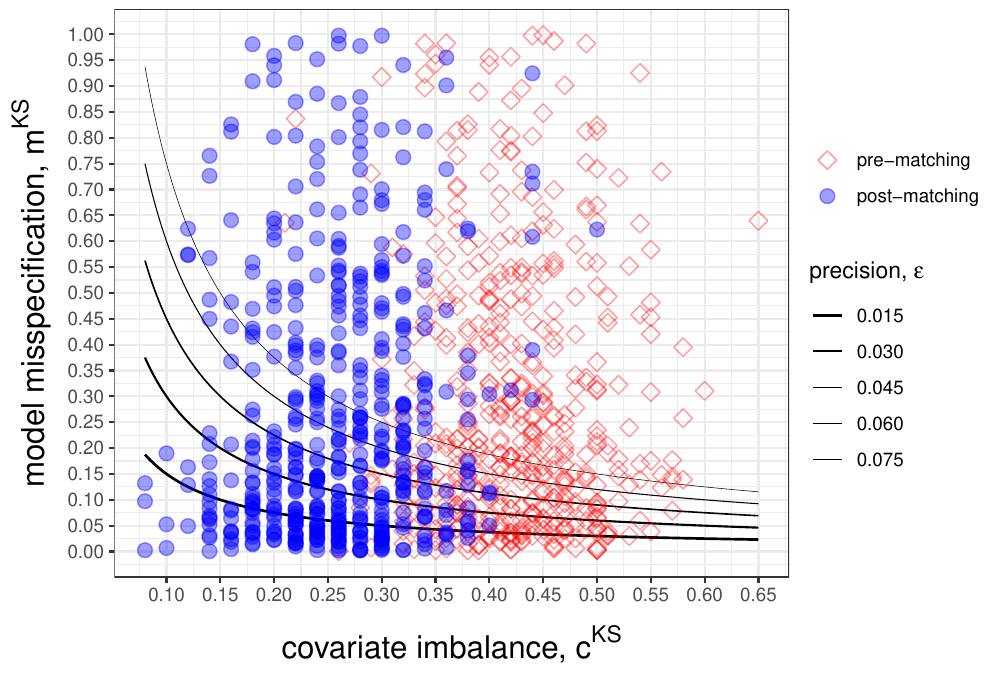}}\hfil
\subfloat[Specification B]{\includegraphics[width=0.33\textwidth]{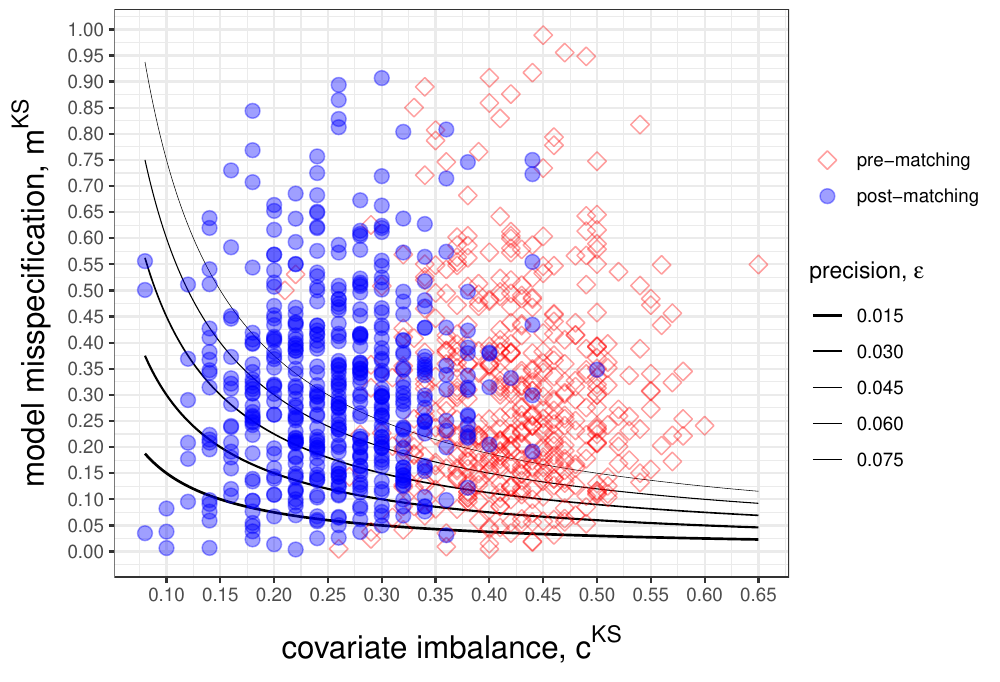}}\hfil
\subfloat[Specification C]{\includegraphics[width=0.33\textwidth]{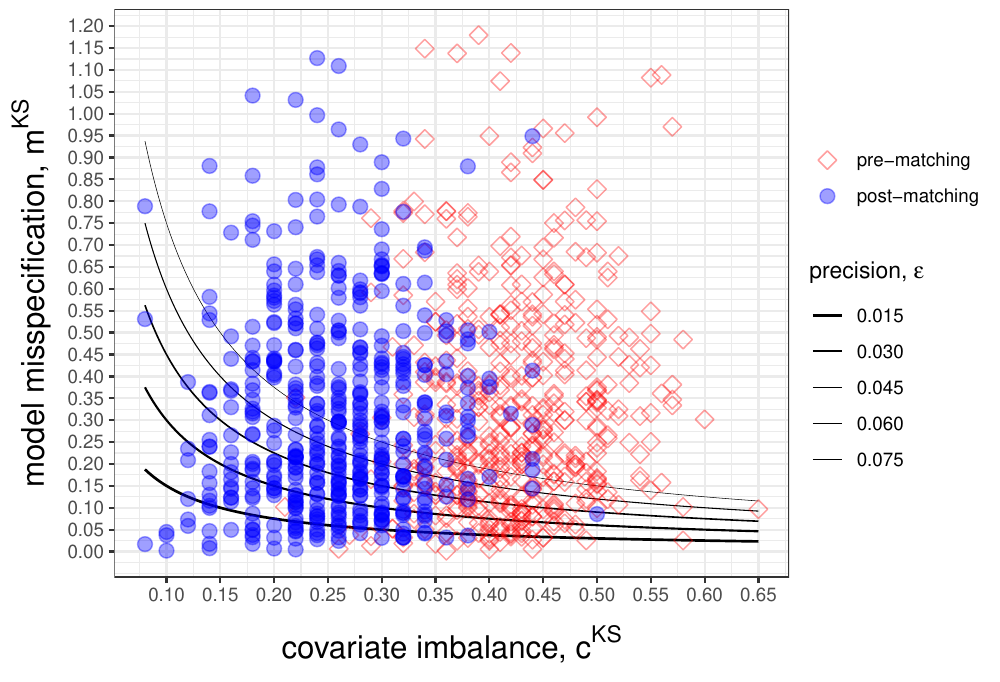}}
\caption{Kolmogorov-Smirnov bound ($\mathsf N_1=50$, $\mathsf N_0=100$)}
\label{fig:kolmogorov-smirnov}\bigskip
\subfloat[Without covariates]{\includegraphics[width=0.33\textwidth]{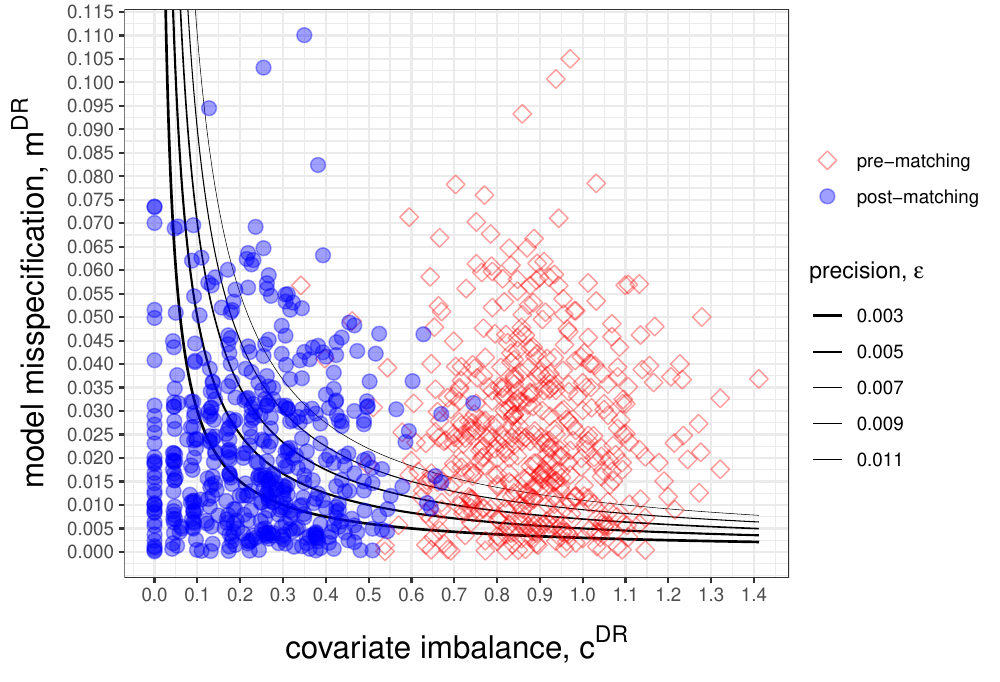}}
\subfloat[Saturated]{\includegraphics[width=0.33\textwidth]{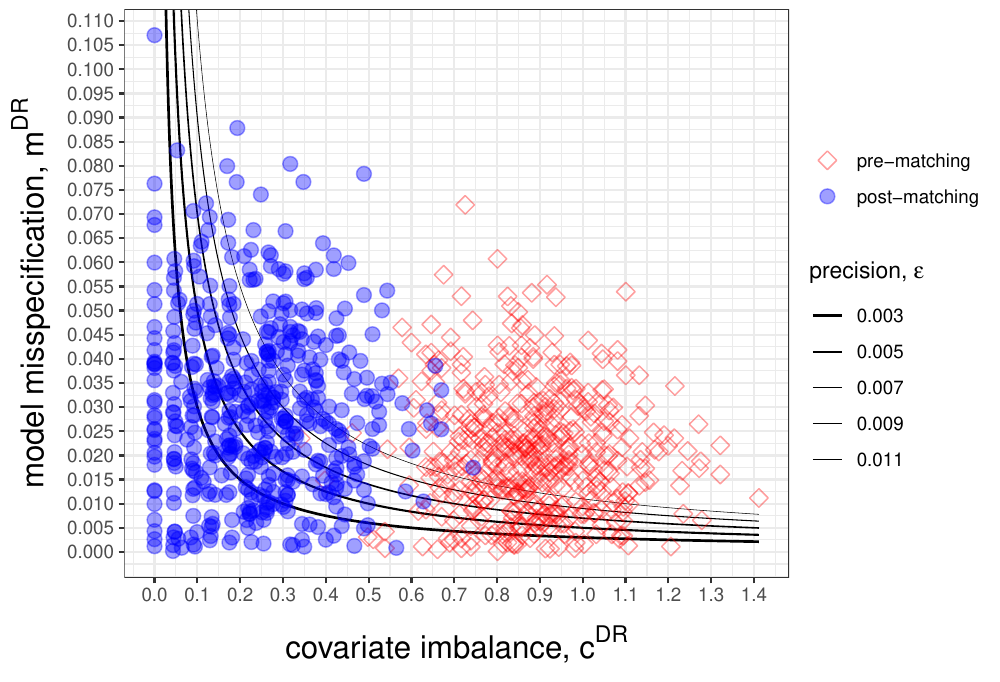}}
\caption{Density ratio bound ($\mathsf N_1=50$, $\mathsf N_0=100$)}
\label{fig:density-ratio}
\end{figure}

\begin{figure}
\centering
\subfloat[Specification A]{\includegraphics[width=0.33\textwidth]{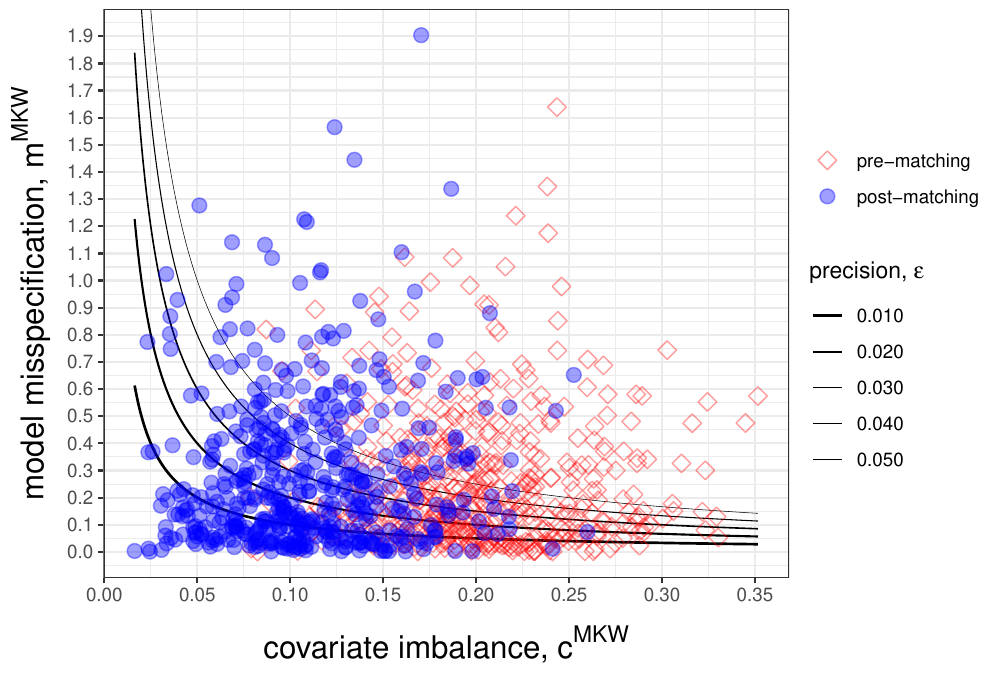}}\hfil
\subfloat[Specification B]{\includegraphics[width=0.33\textwidth]{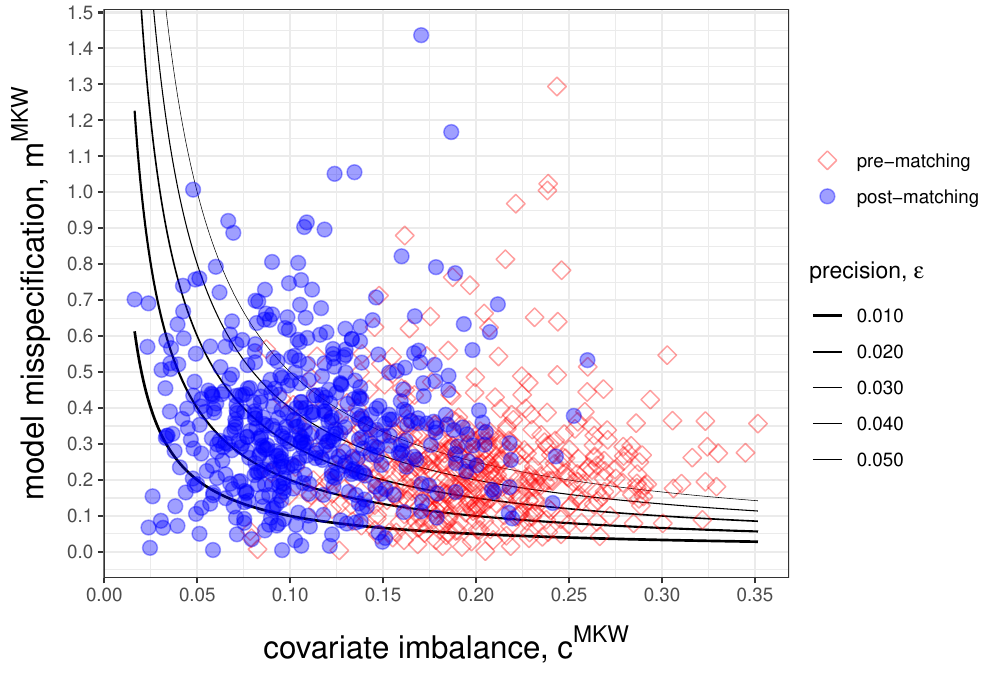}}\hfil
\subfloat[Specification C]{\includegraphics[width=0.33\textwidth]{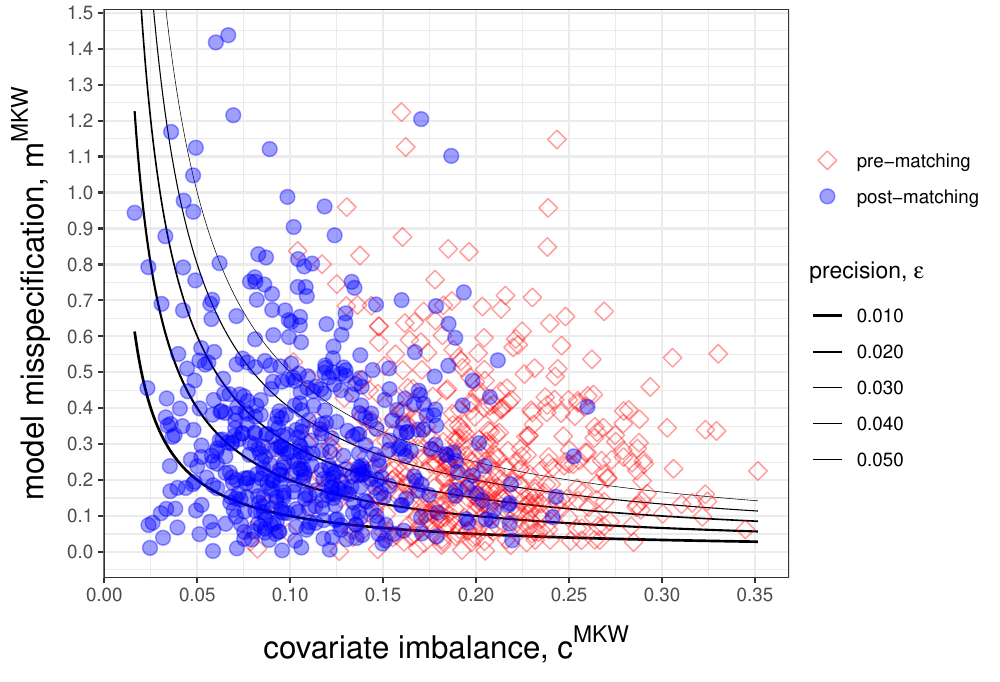}}
\caption{Monge-Kantorovich/Wasserstein bound ($\mathsf N_1=50$, $\mathsf N_0=100$)}
\label{fig:monge-kantorovich-wasserstein}\bigskip
\subfloat[Specification A]{\includegraphics[width=0.33\textwidth]{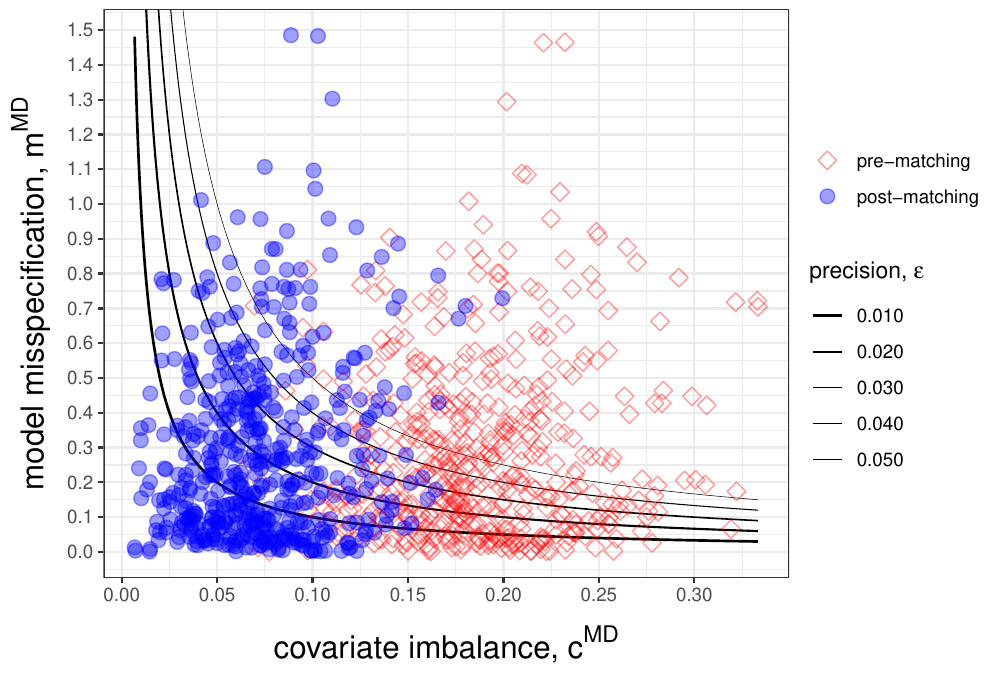}}\hfil
\subfloat[Specification B]{\includegraphics[width=0.33\textwidth]{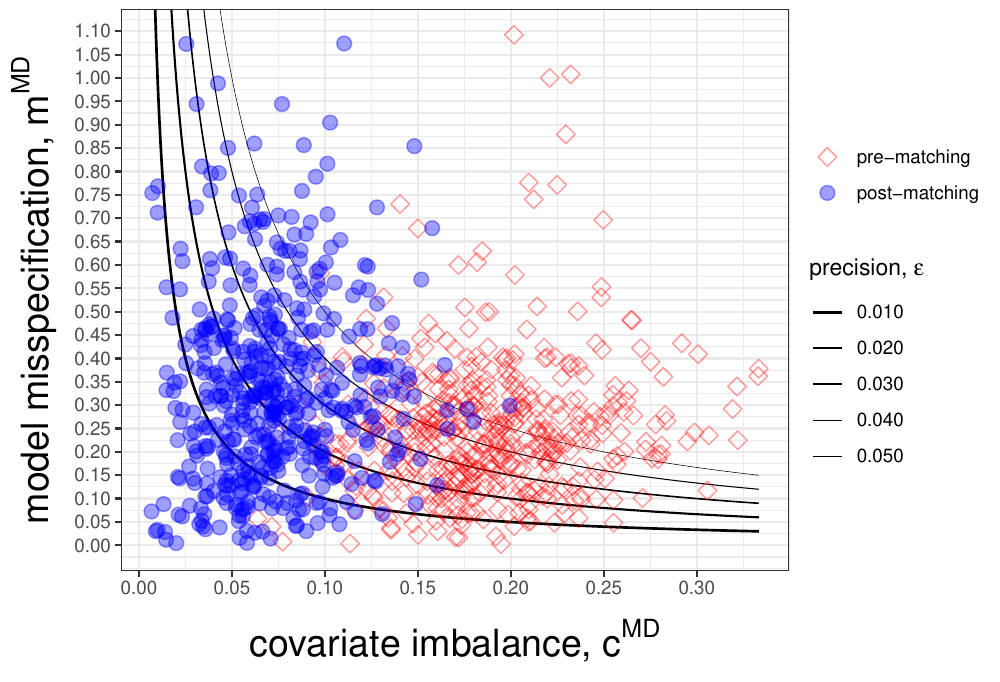}}\hfil
\subfloat[Specification C]{\includegraphics[width=0.33\textwidth]{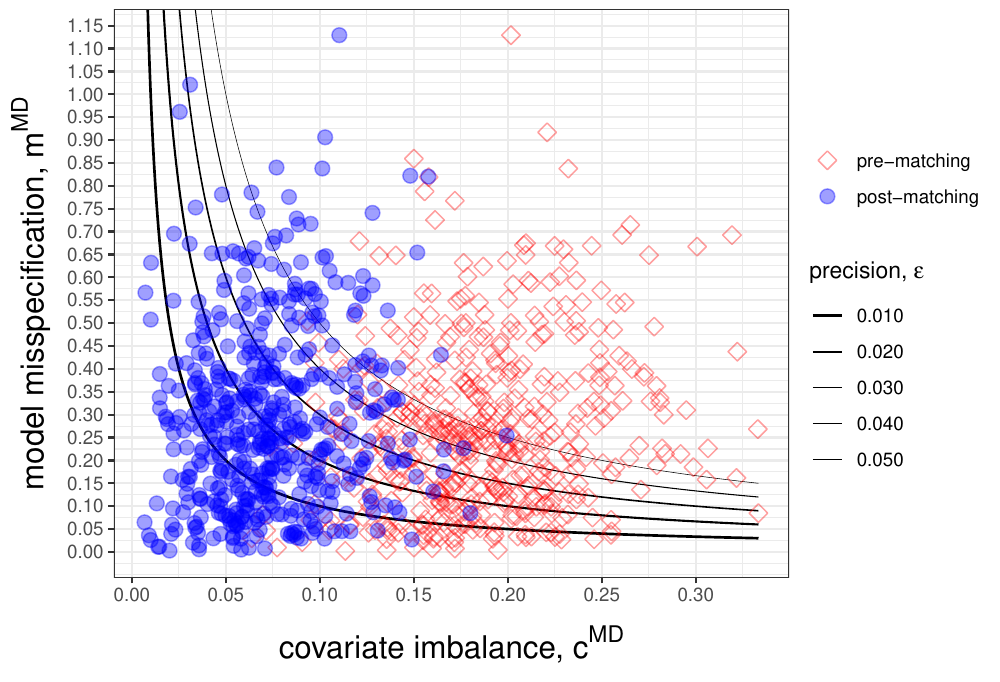}}
\caption{Mean difference bound ($\mathsf N_1=50$, $\mathsf N_0=100$)}
\label{fig:mean-difference}
\end{figure}

\clearpage
\section{More on application}
We illustrate how a researcher may utilize the total variation and density ratio distances to convey the robustness of her regression estimate $\hat\beta$ to the readers of her report. Here we assume that, prior to running the regression, she coarsened her control variables $X_i$ with a covariate function $s$ defined in the following manner: First, she divides the support of the conditional distribution $\hat e\sharp \eG^0$ of the estimated propensity scores $\hat e(X_i)$ given $D_i=0$ into $\mathsf T$ consecutive strata, say, $1,\dots, \mathsf T$. Then, define
\begin{align}
	s(\cdot) \equiv (\mathbf 1\{\hat e(\cdot)\in 1\},\dots,\mathbf 1\{\hat e(\cdot)\in \mathsf T\})'.
\end{align}
The parameter of our interest is $\tau_{\eG_s^1,F_s^{}}$, where we recall that
\begin{align}
\tau_{\eG_s^1,F_s^{}}=\Exp_{\eG_s^1}[\Exp_{F_s}[Y|s(X),D=1]-\Exp_{F_s}[Y|s(X),D=0]|D=1],
\end{align}
which reduces to the original $\tau_{\eG^1,F}$ if $s$ has the balancing property.

We consider a case where the researcher uses the estimated linear propensity score and runs a saturated regression. To be specific, if $(\check\alpha,\check\gamma')'$ denotes the estimates from the linear regression of $D_i$ on $(1,X_i')'$, then she sets $\hat e(X_i)\equiv \check\alpha  + X_i'\check\gamma$. Her regression model is
\begin{align}
	Y_i = \alpha + \beta D_i + (\mathbf 1\{\hat e(X_i)\in 1\},\dots,\mathbf 1\{\hat e(X_i)\in \mathsf T\})'\gamma + E_i.
\end{align}
The strata are partitioned based on the quartiles 0.157, 0.196, and 0.229 of $\hat e\sharp \eG^0$; in other words, $\mathsf T=4$.

The black circles in Figure \ref{fig:model-index} depicts the joint distribution of the estimated linear propensity score $\hat e(X_i)$ and the denial indicator $Y_i$. The red and blue diamonds show the distribution of the values $l_{\theta_{\eG,F}}(s(X_i),0)\approx\hat\alpha + s(X_i)'\hat\gamma$ of the population regression function for white and black applicants. Figure \ref{fig:ps} illustrates the imbalance between the conditional distributions $s\sharp \eG^d$ of $s(X_i)$'s. The black applicants tend to have higher estimated linear propensity scores $\hat e(X_i)$, placing them in the upper quartile regions.

\begin{figure}[htp]
	\centering
	\includegraphics[width=0.45\textwidth]{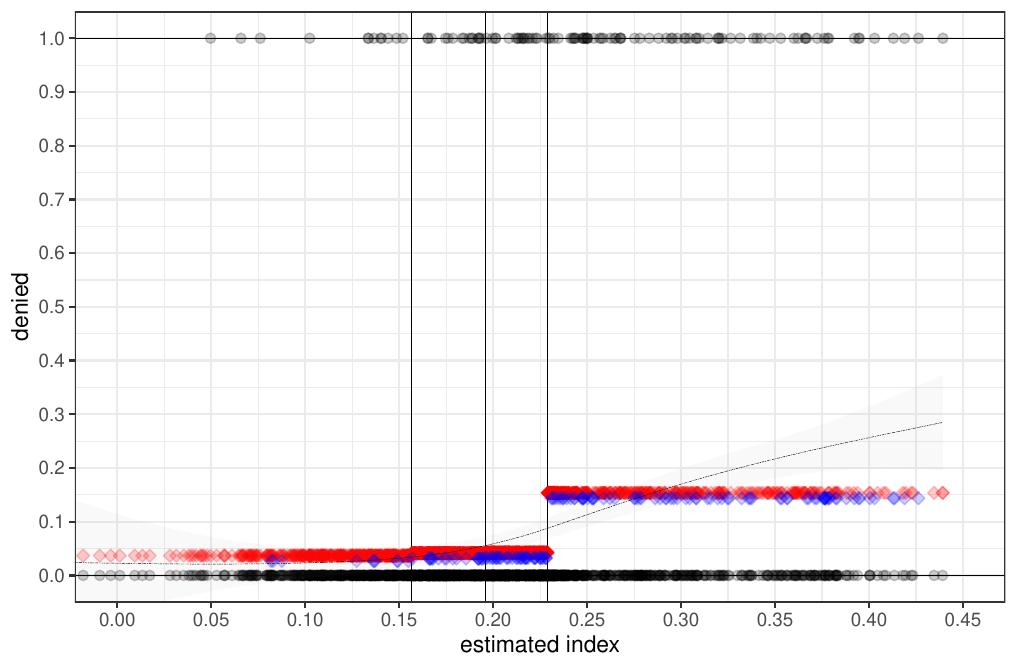}
	\caption{Researcher's model, $l_{\theta_{\eG,F}}(\cdot,0)$}
	\label{fig:model-index}
	\vspace{\baselineskip}
	\includegraphics[width=0.45\textwidth]{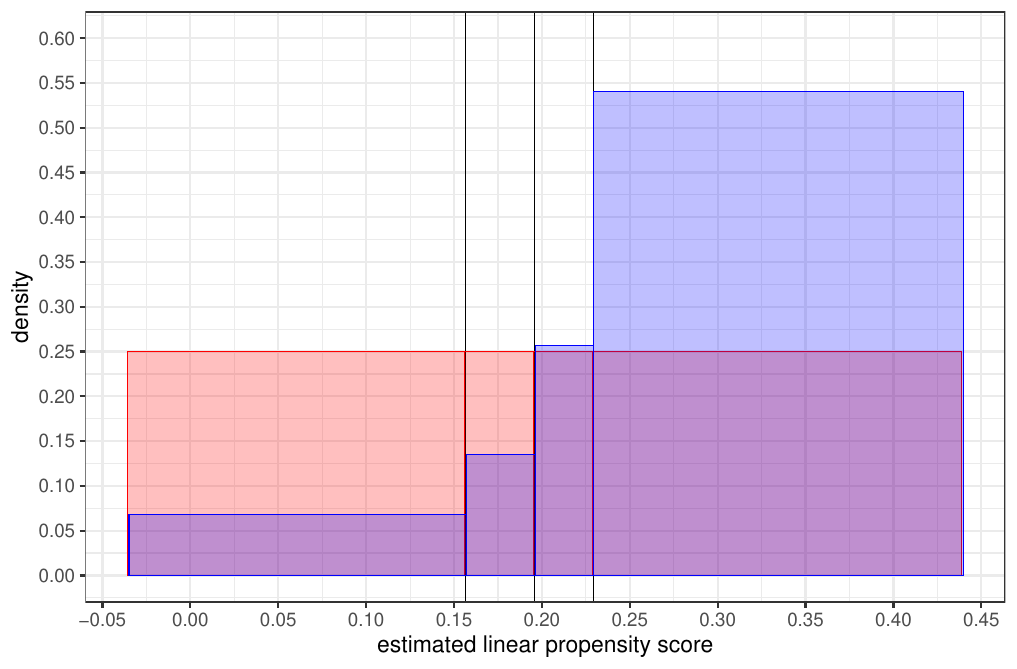}
	\caption{Imbalance in $s\sharp\eG^d$'s}
	\label{fig:ps}
	\vspace{\baselineskip}
	\includegraphics[width=0.45\textwidth]{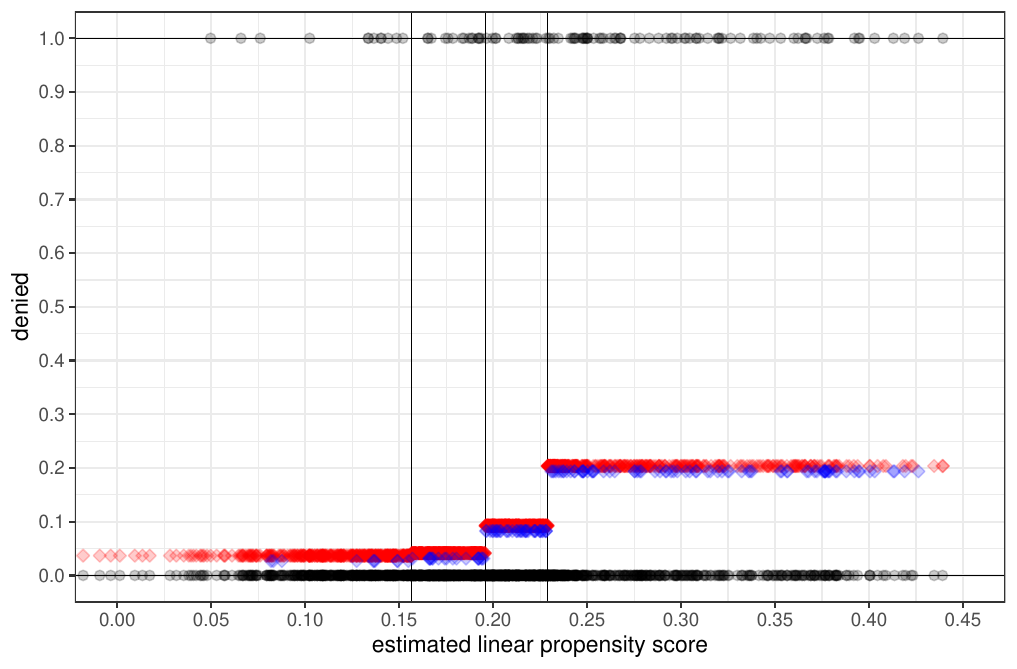}	
	\caption{Readers' perturbation}
	\label{fig:pTVDR}
\end{figure}

Some readers---based on the rough estimate for the conditional expectation function $\Exp_{F_{\hat e}}[Y|\hat e(X)=\cdot,D=0]$ indicated by the black dashed curve in Figure \ref{fig:model-index}---may argue that the denial probabilities of the applicants with higher $\hat e(X_i)$'s should be greater than her model predictions $l_{\theta_{\eG,F}}(3,0)\approx 0.038$ and $l_{\theta_{\eG,F}}(4,0)\approx 0.149$. 

Accordingly, they examine an adversarial state 
\begin{align*}
	f^s(\cdot,0)
	=l_{\theta_{\eG,F}}(\cdot,0) + 0.05\cdot \mathbf 1\{\cdot\in \{3,4\}\},
\end{align*}
allocating values 0.05 greater to the third and fourth quartiles, as illustrated in Figure \ref{fig:pTVDR}. The readers multiply the total variation and density ratio metrics $c_{\eG_s}^{\TV}=0.594$ and $c_{\eG_s}^{\DR}=0.723$ respectively by $m_{\eG_s,F_s}^{\TV}=0.05$ and $m_{\eG_s,F_s}^{\DR}=\sqrt{0.05^2\times (0.25+0.25)}=0.035$ to obtain the maximum possible biases $0.030$ and $0.026$.\footnote{Note that the latter calculation presumes partial access to the conditional distribution $\hat e(\cdot)\sharp \eG^0$ by the readers.} Given her regression estimate $\hat\beta=0.106$ of moderate magnitude, the readers may accept her conclusion that minority status affects banks' lending decision: Even if the denial probabilities for some white applicants with high propensity scores were 0.05 higher, the researcher's argument remains essentially unaffected.

Using the reported standard error $se_{\beta_{\eG,F}}(\hat\beta)=0.016$, the readers may themselves construct the robustified confidence intervals $C_{0.95}^{\TV}(m)$ and $C_{0.95}^{\DR}(m)$---illustrated in the left panels of Figures \ref{fig:TV}--\ref{fig:DR}---and obtain the minimum degrees $\mathfrak m_0$ of misspecification that render the rejections based on $C_{0.95}^{\TV}(0)$ and $C_{0.95}^{\DR}(0)$ 
	inconclusive. The right panels are their subsample counterparts, where the researcher has applied in the design phase the nearest neighbor matching on $\hat e(X_i)$'s. The regression estimate $\hat\beta^*=0.101$ remains similar, but the standard error $se_{\beta_{\eG^*,F}}(\hat\beta^*)$ has increased to 0.043. 	The figures indicate that the readers may no longer concern, at least, about potential misspecification.

\begin{figure}
	\subfloat[Pre-matching, $c_{\eG_s}^{\TV}=0.594$]{\includegraphics[width=0.45\textwidth]{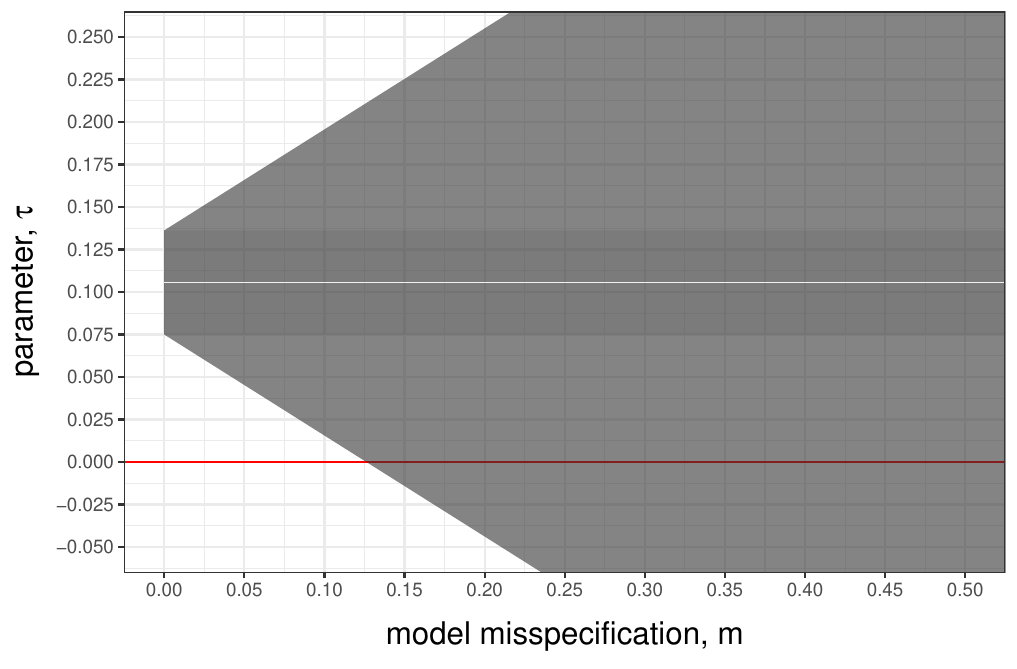}}\quad
	\subfloat[Post-matching, $c_{\eG_s^*}^{\TV}=0$]{\includegraphics[width=0.45\textwidth]{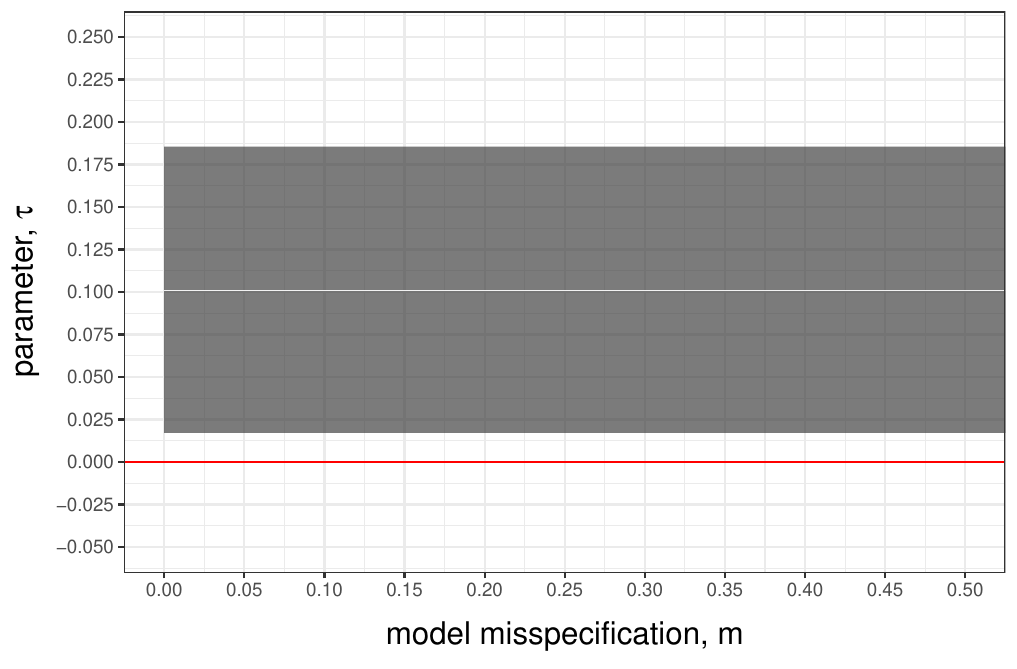}}
	\caption{The graphs of the robustified confidence intervals $C_{0.95}^{\TV}(m)$}
	\label{fig:TV}
	\vspace{\baselineskip}
	\subfloat[Pre-matching, $c_{\eG_s}^{\DR}=0.723$]{\includegraphics[width=0.45\textwidth]{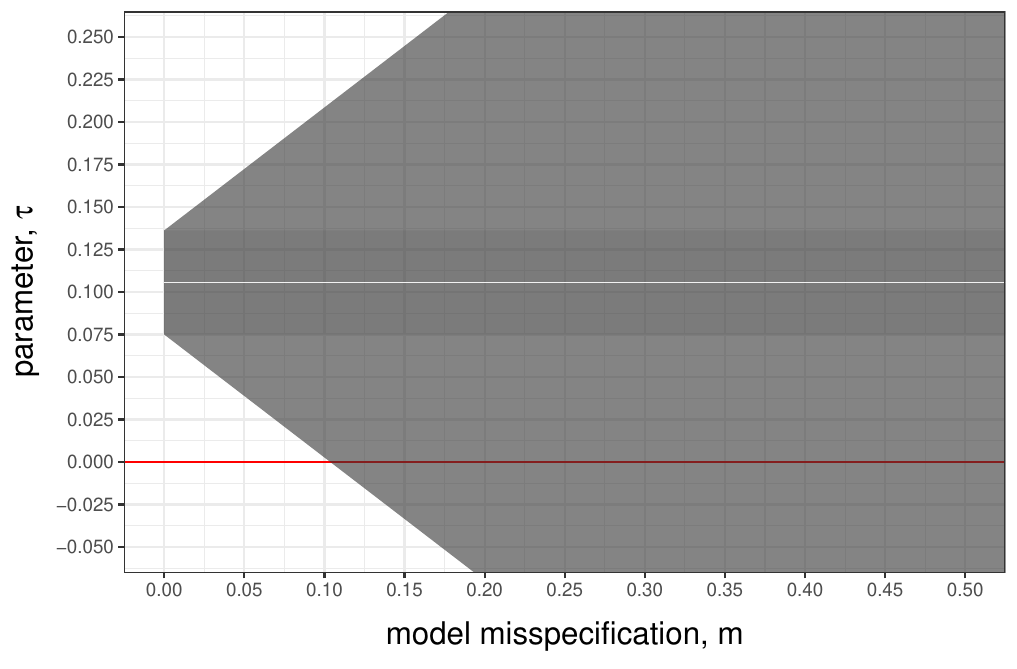}}\quad
	\subfloat[Pre-matching, $c_{\eG_s^*}^{\DR}=0$]{\includegraphics[width=0.45\textwidth]{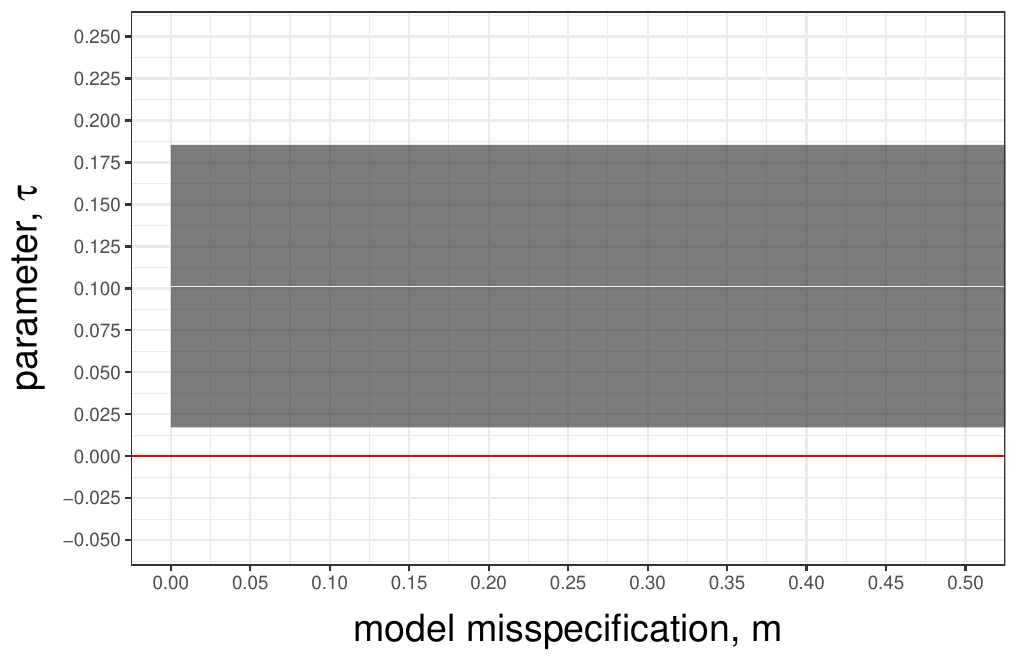}}
	\caption{The graphs of the robustified confidence intervals $C_{0.95}^{\DR}(m)$}
	\label{fig:DR}
\end{figure}

\clearpage
\section{Estimated index as a covariate}
Consider a regression model
\begin{align}
	Y_i = \tilde\alpha + \tilde\beta D_i + s(X_i)'\phi \cdot \tilde\gamma + \tilde E_i,  \label{eq:reg-model-gamma}
\end{align}
which specializes to the induced regression model \eqref{eq:reg-model-index} in Section \ref{sec:application} when $\phi=\hat\gamma$.

Let $\tilde Z_i(\phi)\equiv (1,D_i,s(X_i)'\phi)'$. Define $\hat{\tilde\theta}^*(\phi)\equiv (\tilde\alpha^*(\phi),\tilde\beta^*(\phi),\tilde\gamma^*(\phi))'$ to be the least squares estimator obtained by regressing $Y_i$ on $\tilde Z_i(\phi)$ within $\mathcal S^*$, i.e., 
\begin{align}
	\hat{\tilde\theta}^*(\phi)=\biggl(\frac{1}{|\mathcal S^*|}\sum_{i\in\mathcal S^*}\tilde Z_i(\phi)\tilde Z_i(\phi)'\biggr)^{-1}\biggl(\frac{1}{|\mathcal S^*|}\sum_{i\in\mathcal S^*}\tilde Z_i(\phi)Y_i\biggr).
\end{align}
Also, define $\tilde\theta_{\eG^*,F}(\phi)\equiv (\tilde\alpha_{\eG^*,F}(\phi),\tilde\beta_{\eG^*,F}(\phi),\tilde\gamma_{\eG^*,F}(\phi)')'$ to be the least squares estimator obtained from the hypothetical regression of $\Exp_F[Y|X=X_i,D=D_i]$ on $\tilde Z_i(\phi)$, i.e., 
\begin{align}
	\tilde\theta_{\eG^*,F}(\phi) = \biggl(\frac{1}{|\mathcal S^*|}\sum_{i\in\mathcal S^*}\tilde Z_i(\phi)\tilde Z_i(\phi)'\biggr)^{-1}\biggl(\frac{1}{|\mathcal S^*|}\sum_{i\in\mathcal S^*}\tilde Z_i(\phi)\Exp_F[Y|X=X_i,D=D_i]\biggr).
\end{align}
Abbreviate $\tilde Z_i(\hat\gamma)$, $\hat{\tilde\theta}^*(\hat\gamma)$, and $\tilde\theta_{\eG^*,F}(\hat\gamma)$ by $\tilde Z_i$, $\hat{\tilde\theta}^*$, and $\tilde\theta_{\eG^*,F}$, respectively, which accords with our notation in Section \ref{sec:application}. 

We justify treating $\hat\gamma$ as fixed in Section \ref{sec:application}: The next assumptions are analogues of Assumptions \ref{ass:design-matrix-inverse}--\ref{ass:variance-consistency}. Define
\begin{align}
	\tilde\Gamma^*(\phi)\equiv \Exp_{\eG^*}[\tilde Z(\phi)\tilde Z(\phi)']=\frac{1}{|\mathcal S^*|}\sum_{i\in\mathcal S^*}\tilde Z_i(\phi)\tilde Z_i(\phi)'.
\end{align}

\begin{assume}\label{ass:design-matrix-inverse-index} 
	For some fixed positive constant $\underline\lambda>0$, 
	\begin{align*}
		\liminf_{|\mathcal S|\rightarrow\infty}	\lambda_{\min}[\tilde\Gamma^*(\gamma_{\eG,F})]\geq \underline\lambda
	\end{align*}
	given $\{(X_i',D_i)'\}_{i\in\mathcal S}=\{(x_i',d_i)'\}_{i\in\mathcal S}$. 
\end{assume}

Denote by $\mathbb T^{*d}(\phi)$ the support of the push-forward $s(\cdot)'\phi\sharp\eG^{*d}$ of $\eG^{*d}$ via $x\mapsto s(x)'\phi$.  
\begin{assume}\label{ass:bounded-lyapunov-index}
For each $d\in\{0,1\}$, $\Exp_F[\|\tilde Z(\gamma_{\eG^*,F})U\|^{2+\delta}|X=\cdot,D=d]$ is bounded on $\mathbb T^d(\gamma_{\eG^*,F})$ by a fixed positive constant given $\{(X_i',D_i)'\}_{i\in\mathcal S}=\{(x_i',d_i)'\}_{i\in\mathcal S}$. 
\end{assume}

Define
\begin{align}
	\tilde\Delta^*(\phi)
	\equiv \frac{1}{|\mathcal S^*|}\sum_{i\in\mathcal S^*}\tilde Z_i(\phi) \tilde Z_i(\phi)'\Exp_F[U_i^2|X_i,D_i].
\end{align}

\begin{assume}\label{ass:variance-consistency-index} 
For some positive constant $\underline\lambda>0$, 
\begin{align*}
	\liminf_{|\mathcal S|\rightarrow\infty}\lambda_{\min}[\tilde\Delta^*(\gamma_{\eG,F})]\geq \underline\lambda
\end{align*}
given $\{(X_i',D_i)'\}_{i\in\mathcal S}=\{(x_i',d_i)'\}_{i\in\mathcal S}$. 
\end{assume}

Let $\tilde\Sigma^*(\phi)\equiv (\tilde\Gamma^*(\phi))^{-1}\tilde\Delta^*(\phi)(\tilde\Gamma^*(\phi))^{-1}$.
\begin{prop}\label{thm:normality-index}
Under Assumptions \ref{ass:function-subsample}--\ref{ass:subsample-large} and \ref{ass:design-matrix-inverse-index}--\ref{ass:variance-consistency-index}, as $|\mathcal S|$ tends to infinity,
\begin{align}
	\rho(\mathcal L_{\tilde\Sigma^*(\gamma_{\eG,F})^{-\frac{1}{2}}\sqrt{|\mathcal S^*|}(\hat{\tilde\theta}^*(\gamma_{\eG,F}) - \tilde\theta_{\eG^*,F}(\gamma_{\eG,F}))|\{(X_i',D_i)'\}_{i\in\mathcal S}},\mathcal N(\mathbf 0,I))\rightarrow 0
\end{align}
given $\{(X_i',D_i)'\}_{i\in\mathcal S}=\{(x_i',d_i)'\}_{i\in\mathcal S}$. 
\end{prop}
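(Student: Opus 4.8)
The plan is to recognize that Proposition \ref{thm:normality-index} is not a genuinely new limit theorem but a direct instance of Proposition \ref{thm:normality}, obtained by replacing the regressor vector $Z_i=(1,D_i,s(X_i)')'$ throughout with the scalar-index regressor $\tilde Z_i(\gamma_{\eG,F})=(1,D_i,s(X_i)'\gamma_{\eG,F})'$. The essential point, which I would establish first, is that $\gamma_{\eG,F}$ is measurable with respect to the conditioning $\sigma$-field generated by $\{(X_i',D_i)'\}_{i\in\mathcal S}$: it is the third block of $\theta_{\eG,F}=(\Exp_{\eG}[ZZ'])^{-1}\Exp_{\eG}[Z\Exp_F[Y\mid X,D]]$, and both $\eG$ and $F$ are functions of $\{(X_i',D_i)'\}_{i\in\mathcal S}$ (the latter by Assumption \ref{ass:function-F}), as is the covariate map $s$. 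Hence, once we fix $\{(X_i',D_i)'\}_{i\in\mathcal S}=\{(x_i',d_i)'\}_{i\in\mathcal S}$, the quantity $\gamma_{\eG,F}$ becomes a deterministic vector and $\tilde Z_i(\gamma_{\eG,F})$ a fixed vector for each $i$---exactly the status that $Z_i$ enjoys in the proof of Proposition \ref{thm:normality}.

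With this in hand, I would reproduce the algebraic decomposition underlying that earlier proof, after the same relabeling of the triangular array used there. Writing $U_i\equiv Y_i-\Exp_F[Y\mid X_i,D_i]$---this error is unchanged, since only the regressors, not the true conditional mean, are modified---one has
\[
\hat{\tilde\theta}^*(\gamma_{\eG,F})-\tilde\theta_{\eG^*,F}(\gamma_{\eG,F})=\bigl(\tilde\Gamma^*(\gamma_{\eG,F})\bigr)^{-1}\frac{1}{|\mathcal S^*|}\sum_{i\in\mathcal S^*}\tilde Z_i(\gamma_{\eG,F})U_i.
\]
The summands $\tilde Z_i(\gamma_{\eG,F})U_i$ retain the two structural properties that drove the earlier argument: conditional mean zero, $\Exp_F[\tilde Z_i(\gamma_{\eG,F})U_i\mid (X_i',D_i)']=\tilde Z_i(\gamma_{\eG,F})\Exp_F[U_i\mid(X_i',D_i)']=\mathbf 0$, and conditional independence across $i$ given $\{(X_i',D_i)'\}_{i\in\mathcal S}$. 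I would then set $\xi_i\equiv \tilde\Sigma^*(\gamma_{\eG,F})^{-1/2}(\tilde\Gamma^*(\gamma_{\eG,F}))^{-1}|\mathcal S^*|^{-1/2}\mathbf 1\{i\in\mathcal S^*\}\tilde Z_i(\gamma_{\eG,F})U_i$ and verify the Lindeberg--Feller conditions in parallel: the conditional second-moment sum converges to $t't$ using Assumptions \ref{ass:design-matrix-inverse-index} and \ref{ass:variance-consistency-index} (which bound $\tilde\Gamma^*(\gamma_{\eG,F})$ and $\tilde\Delta^*(\gamma_{\eG,F})$ away from degeneracy exactly as Assumptions \ref{ass:design-matrix-inverse} and \ref{ass:variance-consistency} did for $\Gamma^*$ and $\Delta^*$), and the Lindeberg remainder is controlled by the Lyapunov bound supplied by Assumption \ref{ass:bounded-lyapunov-index}, playing the role of Assumption \ref{ass:bounded-lyapunov}. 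Applying the Lindeberg--Feller theorem \citep[Theorem 3.4.10]{Durrett2019} along an arbitrary direction $t$ and closing with the Cram\'er--Wold device gives convergence in law to $\mathcal N(\mathbf 0,I)$; since $\rho$ metrizes weak convergence, this is the stated statement.

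The only genuine obstacle is conceptual rather than computational, and it is precisely the measurability point above: one must be certain that inserting the \emph{estimand-valued} coefficient $\gamma_{\eG,F}$ into the covariate position does not reintroduce outcome-dependent randomness that would break the conditional mean-zero and conditional-independence structure. Because $\gamma_{\eG,F}$ depends on $F$ only through the map $\{(X_i',D_i)'\}_{i\in\mathcal S}\mapsto F$ guaranteed by Assumption \ref{ass:function-F}, and never through the realized outcomes $Y_i$, conditioning on the covariates fully neutralizes it, and the substitution of $\tilde Z_i(\gamma_{\eG,F})$ for $Z_i$ is legitimate. I would flag as minor bookkeeping the need to read the support set $\mathbb T^{d}(\cdot)$ in Assumption \ref{ass:bounded-lyapunov-index} consistently with the conditioning on $X$, but this does not touch the logic. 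Granting the reduction, the remainder of the proof is a transcription of the proof of Proposition \ref{thm:normality} with $\tilde Z_i(\gamma_{\eG,F})$ in the role of $Z_i$.
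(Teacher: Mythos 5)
Your proposal is correct and takes essentially the same route as the paper: the paper's entire proof is the observation that $\gamma_{\eG,F}$ is a function of $\{(X_i',D_i)'\}_{i\in\mathcal S}$, which makes Proposition \ref{thm:normality-index} a special case of Proposition \ref{thm:normality} with $\tilde Z_i(\gamma_{\eG,F})$ in place of $Z_i$. You simply spell out the Lindeberg--Feller transcription that the paper leaves implicit, and your measurability discussion is the same key point the paper states in one sentence.
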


\begin{assume}\label{ass:metric-index}
There exists a metric $\psi$ on $\mathbb R^p$, dominated by the Euclidean distance up to a fixed constant, such that, for each $d\in\{0,1\}$, 
 	\begin{enumerate}[label=(\alph*)]
 		\item $\Exp_F[\tilde Z_{j,i}(\gamma_{\eG,F})^{r_j}\tilde Z_{k,i}(\gamma_{\eG,F})^{r_k}(Y_i-\tilde Z_i(\gamma_{\eG,F})'\tilde\theta_{\eG^*,F}^*(\gamma_{\eG,F}))^{r_{j,k}}|X_i=\cdot,D_i=d]$ is bounded on $\mathbb X^{*d}$ by a fixed constant;
 		\item $\Exp_F[\tilde Z_{j,i}(\gamma_{\eG,F})^{r_j}\tilde Z_{k,i}(\gamma_{\eG,F})^{r_k}(Y_i-\tilde Z_i(\gamma_{\eG,F})'\tilde\theta_{\eG^*,F}^*(\gamma_{\eG,F}))^{r_{j,k}}|X_i=\cdot,D_i=d]$ is Lipschitz on $\mathbb X^{*d}$ with respect to $\psi$; and 
	 	\item $\diam(\mathbb X^{*d})\equiv \sup_{i,j\in\mathcal S^{*d}}\|X_i-X_j\|$ is bounded by a fixed constant
 	\end{enumerate}
 	given $\{(X_i',D_i)\}_{i\in\mathcal S}=\{(x_i',d_i)'\}_{i\in\mathcal S}$.
\end{assume}

Define
\begin{equation}
\begin{aligned}
	&\hat{\tilde\Delta}(\phi) \equiv \frac{1}{2|\mathcal S^*|}\sum_{i\in\mathcal S^*}(\tilde Z_i(\phi)\hat{\tilde E}_i^*(\phi)-\tilde Z_{l_{(X',D)'}^*(i)}(\phi)\hat{\tilde E}_{l_{(X',D)'}^*(i)}^*(\phi))\\
	&\hphantom{ \frac{1}{2|\mathcal S^*|}\sum_{i\in\mathcal S^*}(\tilde Z_i(\phi)\hat{\tilde E}_i^*(\phi)}
	\times (\tilde Z_i(\phi)\hat{\tilde E}_i^*(\phi)-\tilde Z_{l_{(X',D)'}^*(i)}(\phi)\hat{\tilde E}_{l_{(X',D)'}^*(i)}^*(\phi))',
\end{aligned}
\end{equation}
where $\hat{\tilde E}_i^*(\phi) \equiv Y_i - \tilde Z_i(\phi)'\hat{\tilde\theta}^*(\phi)$. Let $\hat{\tilde\Sigma}^*(\phi)\equiv (\tilde\Gamma^*(\phi))^{-1}\hat{\tilde\Delta}^*(\phi)(\tilde\Gamma^*(\phi))^{-1}$.
\begin{prop}\label{thm:variance-estimator-consistency-index} Under the conditions of Proposition \ref{thm:normality-index} and Assumption \ref{ass:metric-index},
\begin{align}
	 \Pr_F[\|\hat{\tilde\Sigma}^*(\gamma_{\eG,F}) - \tilde\Sigma^*(\gamma_{\eG,F})\|>\eta|\{(X_i',D_i)\}_{i\in\mathcal S}]\rightarrow 0, \forall \eta>0
\end{align}
given $\{(X_i',D_i)'\}_{i\in\mathcal S}=\{(x_i',d_i)'\}_{i\in\mathcal S}$, as $|\mathcal S|$ tends to infinity.
\end{prop}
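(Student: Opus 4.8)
The plan is to reproduce the proof of Proposition~\ref{thm:variance-estimator-consistency} essentially verbatim, with $Z_i$ replaced throughout by $\tilde Z_i(\gamma_{\eG,F})=(1,D_i,s(X_i)'\gamma_{\eG,F})'$. The observation that makes this legitimate is that, once we condition on $\{(X_i',D_i)'\}_{i\in\mathcal S}=\{(x_i',d_i)'\}_{i\in\mathcal S}$, the index coefficient $\gamma_{\eG,F}$ is a \emph{fixed} vector---being a function of $\{(X_i',D_i)'\}_{i\in\mathcal S}$ alone---so that each $\tilde Z_i(\gamma_{\eG,F})$ is a deterministic transformation of $(X_i',D_i)'$. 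Consequently $\hat{\tilde\Sigma}^*(\gamma_{\eG,F})=(\tilde\Gamma^*(\gamma_{\eG,F}))^{-1}\hat{\tilde\Delta}^*(\gamma_{\eG,F})(\tilde\Gamma^*(\gamma_{\eG,F}))^{-1}$ has exactly the matched sandwich form analyzed there, with the same nearest-neighbor map $l_{(X',D)'}^*$ defined on the original covariate space.

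First I would set $W_i\equiv(X_i',D_i)'$ and $\nu(w,w')\equiv\psi(x,x')+\kappa|d-d'|$, as in the proof of Proposition~\ref{thm:variance-estimator-consistency}. Since $\psi$ is dominated by the Euclidean metric, Assumption~\ref{ass:metric-index}(c) together with Assumption~\ref{ass:subsample-large} furnishes condition~(iii) of Lemma~\ref{thm:closest-variance}, and via Lemma~\ref{thm:closest} the average matching distance $|\mathcal S^*|^{-1}\sum_{i\in\mathcal S^*}\nu(W_i,W_{l_{(X',D)'}^*(i)})\rightarrow 0$. Next I would put $V_i\equiv\tilde Z_i(\gamma_{\eG,F})(Y_i-\tilde Z_i(\gamma_{\eG,F})'\tilde\theta_{\eG^*,F}(\gamma_{\eG,F}))$; because $\tilde Z_i(\gamma_{\eG,F})$ is deterministic given $W_i$, parts~(a)--(b) of Assumption~\ref{ass:metric-index} translate directly into the boundedness and $\nu$-Lipschitz conditions on $\Exp_F[V_{j,i}^{r_j}V_{k,i}^{r_k}|W_i=\cdot]$ that Lemma~\ref{thm:closest-variance} requires, the mixed-$d$ Lipschitz bound following from the same triangle-inequality split used in the original proof. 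Lemma~\ref{thm:closest-variance} then delivers
\begin{align*}
&\frac{1}{2|\mathcal S^*|}\sum_{i\in\mathcal S^*}(V_i-V_{l_{(X',D)'}^*(i)})(V_i-V_{l_{(X',D)'}^*(i)})'\\
&\qquad - \frac{1}{|\mathcal S^*|}\sum_{i\in\mathcal S^*}\Var_F[V_i|W_i]\rightarrow_p 0,
\end{align*}
and, since $\Exp_F[U_i|W_i]=0$, the deterministic shift from the true conditional mean does not affect the conditional variance, so $\Var_F[V_i|W_i]=\tilde Z_i(\gamma_{\eG,F})\tilde Z_i(\gamma_{\eG,F})'\Exp_F[U_i^2|X_i,D_i]$, whence the second average equals $\tilde\Delta^*(\gamma_{\eG,F})$.

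It then remains to replace the infeasible residuals, built on $\tilde\theta_{\eG^*,F}(\gamma_{\eG,F})$, by the feasible ones $\hat{\tilde E}_i^*(\gamma_{\eG,F})$, built on $\hat{\tilde\theta}^*(\gamma_{\eG,F})$, and to conjugate by $(\tilde\Gamma^*(\gamma_{\eG,F}))^{-1}$. The consistency $\hat{\tilde\theta}^*(\gamma_{\eG,F})\rightarrow_p\tilde\theta_{\eG^*,F}(\gamma_{\eG,F})$ needed here is implied by Proposition~\ref{thm:normality-index}, since $\sqrt{|\mathcal S^*|}$-rate convergence in law forces the unscaled deviation to be $o_p(1)$; and Assumption~\ref{ass:design-matrix-inverse-index} controls the two sandwich factors exactly as Assumption~\ref{ass:design-matrix-inverse} does in the original argument. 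The hard part will be this replacement step---the analogue of display~\eqref{eq:V}---where one must show that substituting $\hat{\tilde\theta}^*(\gamma_{\eG,F})$ for $\tilde\theta_{\eG^*,F}(\gamma_{\eG,F})$ in the matched outer-product sum is asymptotically negligible. I would handle it by expanding the difference, factoring out $\hat{\tilde\theta}^*(\gamma_{\eG,F})-\tilde\theta_{\eG^*,F}(\gamma_{\eG,F})=o_p(1)$, and bounding the remaining empirical averages of products of the $\tilde Z_i(\gamma_{\eG,F})$'s and the errors uniformly, using the moment bounds of Assumption~\ref{ass:metric-index}(a) together with the boundedness of $\diam(\mathbb X^{*d})$.
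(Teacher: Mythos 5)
Your proposal is correct and rests on exactly the same idea as the paper's proof: because $\gamma_{\eG,F}$ is a function of $\{(X_i',D_i)'\}_{i\in\mathcal S}$, the map $x\mapsto s(x)'\gamma_{\eG,F}$ is an admissible covariate function, so the statement reduces to a special case of Proposition \ref{thm:variance-estimator-consistency}. The paper records this reduction in one line, whereas you re-execute the argument with $Z_i$ replaced by $\tilde Z_i(\gamma_{\eG,F})$; the substance is identical.
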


Abbreviate $\tilde\Sigma^*(\hat\gamma)$ and $\hat{\tilde\Sigma}^*(\hat\gamma)$ by $\tilde\Sigma^*$ and $\hat{\tilde\Sigma}^*$. Then, since $\hat\gamma$ converges in conditional probability to $\gamma_{\eG,F}$ given $\{(X_i',D_i)\}_{i\in\mathcal S}=\{(x_i',d_i)'\}_{i\in\mathcal S}$,
\begin{align}
	\rho(\mathcal L_{\tilde\Sigma^{*-\frac{1}{2}}\sqrt{|\mathcal S^*|}(\hat{\tilde\theta}^* - \tilde\theta_{\eG^*,F})|\{(X_i',D_i)'\}_{i\in\mathcal S}},\mathcal N(\mathbf 0,I))&\rightarrow 0\label{eq:normality-index}  \text{ and}\\
	\hat{\tilde\Sigma}^* - \tilde\Sigma^* &\rightarrow_p 0 \label{eq:variance-estimator-consistency-index}
\end{align}
given $\{(X_i',D_i)\}_{i\in\mathcal S}=\{(x_i',d_i)\}_{i\in\mathcal S}$, using the following assumption:
\begin{assume}
	$\Gamma^*$, $\theta_{\eG^*,F}$, and $\Delta^*$ are bounded by a fixed constant, i.e., $O(1)$, given $\{(X_i',D_i)'\}_{i\in\mathcal S}=\{(x_i',d_i)'\}_{i\in\mathcal S}$. 	
\end{assume}

Denote by $se_{\tilde\beta_{\eG^*,F}}(\hat{\tilde\beta}^*)$ the $|\mathcal S^*|^{-1/2}$-scaled square root of the second diagonal element of $\hat{\tilde\Sigma}^*$. 

\begin{assume}\label{ass:local-asymptotics-index}
	For a positive function $v$ of $\{(X_i',D_i)'\}_{i\in\mathcal S}$, 
	\begin{align}
		m_{\eG_{\phi}^*,F_\phi^{}}c_{\eG_{\phi}^*}=\frac{v}{\sqrt{|\mathcal S^*|}}
	\end{align}
\end{assume}

Under Assumption \ref{ass:local-asymptotics-index}, whenever $m \geq m_{\eG_{\phi}^*,F_\phi^{}}$,
\begin{align}
	C_\alpha(m)\equiv \biggl\{\tau\in\mathbb R:\hat{\tilde\beta}^*-z_{1-\alpha/2}\tilde{se}^*-mc_{\eG_\phi^*}\leq \tau\leq  \hat{\tilde\beta}^*-z_{\alpha/2}\tilde{se}^* + m c_{\eG_\phi^*}\biggr\},	
\end{align}
where $\tilde{se}^*$ is shorthand for $se_{\tilde\beta_{\eG^*,F}}(\hat{\tilde\beta}^*)$, has asymptotic conditional coverage no smaller than $1-\alpha$.

\newpage 
\section{Omitted proofs}
This section includes omitted proofs for several equations, arguments, and Corollaries \ref{thm:bias-bound-TV}--\ref{thm:bias-bound-LP} and Proposition \ref{thm:representation-other-parameters} in the main text.

\subsection{Proof of equation \eqref{eq:beta}}
Let $(a,b')'\equiv \argmin_{(\tilde a,\tilde b')'\in\mathbb R\times\mathbb R^p}\Exp[(D-(\tilde a+s(X)'\tilde b))^2]$ be the linear projection coefficient of $(1,s(X)')$ obtained from the regression $D$ on $(1,s(X)')'$ in the population. Define $\dot D\equiv D-(a+s(X)'b)$. We reformulate the linear regression model as 
\begin{align*}
Y &= \alpha + \beta D + s(X)'\gamma + E\\
&= \alpha + \beta (a + s(X)'b + \dot D) + s(X)'\gamma + E\\
&= \beta \dot D + \underbrace{(\alpha + \beta a) + s(X)'(\beta b + \gamma) + E}_{\equiv \dot E}.
\end{align*}

Suppose $\Exp[\dot D^2]=0$. Then, $D=a+s(X)'b$ holds almost surely, and thus, so does
\begin{align*}
	Z=\begin{pmatrix}
		1\\ D\\ s(X)
	\end{pmatrix}
	= \underbrace{\begin{pmatrix}
		1 & 0\\
		a & b'\\
		\mathbf 0 & I
 	\end{pmatrix}}_{M}
	\begin{pmatrix}
		1 \\ s(X)
	\end{pmatrix}.
\end{align*}
The rank of $M$ is $\dim\theta -1$. Thus, the rank of $\Exp[ZZ']$ cannot be larger than $\dim\theta -1$, which violates Assumption \ref{ass:regularity}. From contraposition, $\Exp[\dot D^2]>0$ then follows.

Furthermore,
\begin{align*}
	\Exp[\dot D\dot E] = \underbrace{\Exp[\dot D(
		1\ s(X)')]}_{\mathclap{=0\text{ by the definition of  a linear projection}\phantom{\quad\quad\quad\quad\quad\quad\quad}}}\begin{pmatrix}
		\alpha + \beta a \\ \beta b + \gamma
	\end{pmatrix} + (-a,1,-b')\underbrace{\Exp[ZE]}_{\mathclap{=0 \text{ by Assumption 1}}}= 0.
\end{align*}

Combining results, $(\Exp[\dot D^2])^{-1}\Exp[\dot D Y] = \beta + (\Exp[\dot D^2])^{-1}\Exp[\dot D \dot E] = \beta$.

\subsection{Proof of equations \eqref{eq:bias-bound-MKW-m}--\eqref{eq:bias-bound-MKW-c}}
Fix $\pi\in\Pi(G^1,G^0)$. Using the Cauchy-Schwarz inequality, 
\begin{align*}
	&\int |f(x_1,0)-l(x_1,0)-(f(x_2,0)-l(x_2,0))|\pi(\mathrm dx_1\mathrm dx_2)\\
	&=\int \frac{|f(x_1,0)-l(x_1,0)-(f(x_2,0)-l(x_2,0))|}{\|x_1-x_2\|}\mathbf 1\{\|x_1-x_2\|>0\}\|x_1-x_2\|\pi(\mathrm dx_1\mathrm dx_2)\\
	&\leq \biggl(\int \|x_1-x_2\|^2\pi(\mathrm dx_1\mathrm dx_2)\biggr)^{\frac{1}{2}}\\
	&\hphantom{\leq \biggl(}\times \sup_{\pi\in\Pi(G^1,G^0)}\biggl(\int \biggl( \frac{|f(x_1,0)-l(x_1,0)-(f(x_2,0)-l(x_2,0))|}{\|x_1-x_2\|}\mathbf 1\{\|x_1-x_2\|>0\}\biggr)^2\pi(\mathrm dx_1\mathrm dx_2)\biggr)^{\frac{1}{2}},
\end{align*}
which bounds from above $|\beta-\tau|$ by Proposition \ref{thm:representation}. Taking $\inf_{\pi\in\Pi(G^1,G^0)}$ then yields the result.

\subsection{$|f(\cdot,0)-l(\cdot,0)|\leq (|\zeta_1^\star|\vee |\zeta_{-1}^\star|)'|r(\cdot)|$ under separabilities.}
If $r(\cdot)'\zeta_1^\star$ attains separation, i.e., $\xi_1^{d\star}(\cdot)=0$, by definition,
\begin{align*}
	f(x,0)-l(x,0) - r(x)'\zeta_1^\star &\leq 0\text{ for every }x\in\mathcal X^1\text{ and}\\
	f(x,0)-l(x,0) - r(x)'\zeta_1^\star &\geq 0\text{ for every }x\in\mathcal X^0.
\end{align*}
Similarly, if $r(\cdot)'\zeta_{-1}^\star$ attains separation, i.e., $\xi_{-1}^{d\star}(\cdot)=0$, again by definition,
\begin{align*}
	-(f(x,0)-l(x,0)) + r(x)'\zeta_{-1}^\star&\leq 0\text{ for every }x\in\mathcal X^1\text{ and}\\
	-(f(x,0)-l(x,0)) + r(x)'\zeta_{-1}^\star&\geq 0\text{ for every }x\in\mathcal X^0.
\end{align*}
Combining the inequalities,
\begin{align*}
	-|r(x)|'|\xi_{-1}^\star|\leq r(x)'\zeta_{-1}^\star\leq f(x,0)-l(x,0)&\leq r(x)'\zeta_1^\star\leq |r(x)|'|\zeta_1^\star|\text{ for every }x\in\mathcal X^1,\text{ and}\\
	-|r(x)|'|\zeta_1^\star|\leq r(x)'\zeta_1^\star\leq f(x,0)-l(x,0)&\leq r(x)'\zeta_{-1}^\star\leq |r(x)|'|\zeta_{-1}^\star|\text{ for every }x\in\mathcal X^0.
\end{align*}
Thus, for every $x\in\cup_{d\in\{0,1\}}\mathcal X^d$,
\begin{align*}
	|f(x,0)-l(x,0)|\leq \sum_{\sigma=\pm 1}|r(x)|'|\zeta_\sigma^\star|=|r(x)|'\sum_{\sigma=\pm 1}|\zeta_\sigma^\star|.
\end{align*} 

\subsection{Asymptotic conditional consistency of $C_\alpha(m)$}
The eligibility for inclusion in the set $C_\alpha(m)$, i.e., $\tau\in C_\alpha(m)$, can be stated as
\begin{align*}
	z_{\alpha/2}-\frac{mc_{\eG^*}}{se_{\beta_{\eG^*,F}}^*}< \frac{\hat\beta^*-\tau}{se_{\beta_{\eG^*,F}}^*}\leq z_{1-\alpha/2} + \frac{mc_{\eG^*}}{se_{\beta_{\eG^*,F}}^*}.
\end{align*}
Below, we show that, whenever equation \eqref{eq:robust} holds, $\tau_{\eG^*,F}$ satisfies this condition with conditional probability at least $1-\alpha$ in large samples. 

Fix $\{(X_i',D_i)\}_{i\in\mathcal S}=\{(x_i',d_i)'\}_{i\in\mathcal S}$. Since $mc_{\eG^*}\geq m_{\eG^*,F}c_{\eG^*}=v/\sqrt{|\mathcal S^*|}$, 
\begin{align*}
	&\Pr_F\biggl[z_{\alpha/2}\geq \frac{\hat\beta^*-\tau_{\eG^*,F}}{se_{\beta_{\eG^*,F}}^*} + \frac{v}{\sqrt{|\mathcal S^*|}se_{\beta_{\eG^*,F}}^*}\biggm|\{(X_i',D_i)'\}_{i\in\mathcal S}\biggr]\\
	&\geq \Pr_F\biggl[z_{\alpha/2}\geq \frac{\hat\beta^*-\tau_{\eG^*,F}}{se_{\beta_{\eG^*,F}}^*} + \frac{mc_{\eG^*}}{se_{\beta_{\eG^*,F}}^*}\biggm|\{(X_i',D_i)'\}_{i\in\mathcal S}\biggr]\\
	&= \Pr_F\biggl[z_{\alpha/2}-\frac{mc_{\eG^*}}{se_{\beta_{\eG^*,F}}^*} \geq  \frac{\hat\beta^*-\tau_{\eG^*,F}}{se_{\beta_{\eG^*,F}}^*}\biggm|\{(X_i',D_i)'\}_{i\in\mathcal S}\biggr],
\end{align*}
and by Corollary \ref{thm:size-distortion-subsample}, 
\begin{align*}
	\limsup_{|\mathcal S|\rightarrow\infty}\Pr_F\biggl[z_{\alpha/2}-\frac{mc_{\eG^*}}{se_{\beta_{\eG^*,F}}^*} \geq  \frac{\hat\beta^*-\tau_{\eG^*,F}}{se_{\beta_{\eG^*,F}}^*}\biggm|\{(X_i',D_i)'\}_{i\in\mathcal S}\biggr]\leq \Phi(z_{\alpha/2})=\alpha/2
\end{align*}
Similarly, since
\begin{align*}
	&1 - \Pr_F\biggl[\frac{\hat\beta^*-\tau_{\eG^*,F}}{se_{\beta_{\eG^*,F}}^*} - \frac{v}{\sqrt{|\mathcal S^*|}se_{\beta_{\eG^*,F}}^*}\leq z_{1-\alpha/2}\biggm|\{(X_i',D_i)'\}_{i\in\mathcal S}\biggr]\\
	&= \Pr_F\biggl[z_{1-\alpha/2} < \frac{\hat\beta^*-\tau_{\eG^*,F}}{se_{\beta_{\eG^*,F}}^*} - \frac{v}{\sqrt{|\mathcal S^*|}se_{\beta_{\eG^*,F}}^*}\biggm|\{(X_i',D_i)'\}_{i\in\mathcal S}\biggr]\\
	&\geq \Pr_F\biggl[z_{1-\alpha/2} < \frac{\hat\beta^*-\tau_{\eG^*,F}}{se_{\beta_{\eG^*,F}}^*} - \frac{mc_{\eG^*}}{se_{\beta_{\eG^*,F}}^*}\biggm|\{(X_i',D_i)'\}_{i\in\mathcal S}\biggr]\\
	&= \Pr_F\biggl[z_{1-\alpha/2} + \frac{mc_{\eG^*}}{se_{\beta_{\eG^*,F}}^*}< \frac{\hat\beta^*-\tau_{\eG^*,F}}{se_{\beta_{\eG^*,F}}^*}\biggm|\{(X_i',D_i)'\}_{i\in\mathcal S}\biggr],
\end{align*}
Corollary \ref{thm:size-distortion-subsample} implies
\begin{align*}
	&\limsup_{|\mathcal S|\rightarrow \infty}\Pr_F\biggl[z_{1-\alpha/2} + \frac{mc_{\eG^*}}{se_{\beta_{\eG^*,F}}^*}< \frac{\hat\beta^*-\tau_{\eG^*,F}}{se_{\beta_{\eG^*,F}}^*}\biggm|\{(X_i',D_i)'\}_{i\in\mathcal S}\biggr]\\
	&\leq 1 - \liminf_{|\mathcal S|\rightarrow\infty}\Pr_F\biggl[\frac{\hat\beta^*-\tau_{\eG^*,F}}{se_{\beta_{\eG^*,F}}^*} - \frac{v}{\sqrt{|\mathcal S^*|}se_{\beta_{\eG^*,F}}^*}\leq z_{1-\alpha/2}\biggm|\{(X_i',D_i)'\}_{i\in\mathcal S}\biggr]\\
	&\leq 1-\Phi(1-\alpha/2) = 1- (1-\alpha/2)  = \alpha/2.
\end{align*}

\subsection{Proof of equation \eqref{eq:V}}
Fix $\{W_i\}_{i\in\mathcal S}=\{(x_i',d_i)'\}_{i\in\mathcal S}$. Since $V_i(\theta) - V_i = Z_iZ_i'(\theta_{\eG^*,F}-\theta)$,
\begin{align*}
	&\frac{1}{|\mathcal S^*|}\sum_{i\in\mathcal S^*}(V_i(\theta) - V_{l_W^*(i)}(\theta))(V_i(\theta) - V_{l_W^*(i)}(\theta))' - \frac{1}{|\mathcal S^*|}\sum_{i\in\mathcal S^*}(V_i - V_{l_W^*(i)})(V_i - V_{l_W^*(i)})'\\
	&= \frac{1}{|\mathcal S^*|}\sum_{i\in\mathcal S^*}(Z_iZ_i'-Z_{l_W^*(i)}Z_{l_W^*(i)}')(\theta_{\eG^*,F}-\theta)(V_i - V_{l_W^*(i)})'\\
	&\hphantom{=} + \frac{1}{|\mathcal S^*|}\sum_{i\in\mathcal S^*}(V_i - V_{l_W^*(i)})(\theta_{\eG^*,F}-\theta)'(Z_iZ_i'-Z_{l_W^*(i)}Z_{l_W^*(i)}')'\\
	&\hphantom{=} + \frac{1}{|\mathcal S^*|}\sum_{i\in\mathcal S^*}(Z_iZ_i'-Z_{l_W^*(i)}Z_{l_W^*(i)}')(\theta_{\eG^*,F}-\theta)(\theta_{\eG^*,F}-\theta)'(Z_iZ_i'-Z_{l_W^*(i)}Z_{l_W^*(i)}')'.
\end{align*}
The Frobenius norms of the first and second terms are bounded by
\begin{align*}
	A(\theta)\equiv \|\theta_{\eG^*,F}-\theta\|\biggl(\frac{1}{|\mathcal S^*|}\sum_{i\in\mathcal S^*}\|Z_iZ_i'-Z_{l_W^*(i)}Z_{l_W^*(i)}'\|^2\biggr)^{\frac{1}{2}}\lambda_{\max}\biggl[\frac{1}{2|\mathcal S^*|}\sum_{i\in\mathcal S^*}(V_i - V_{l_W^*(i)})(V_i - V_{l_W^*(i)})'\biggr]^{\frac{1}{2}}
\end{align*}
up to a fixed constant, and the norm of the third term is bounded by the square of
\begin{align*}
	B(\theta)\equiv \|\theta_{\eG^*,F}-\theta\|\biggl(\frac{1}{|\mathcal S^*|}\sum_{i\in\mathcal S^*}\|Z_iZ_i'-Z_{l_W^*(i)}Z_{l_W^*(i)}'\|\biggr)^{\frac{1}{2}}.	
\end{align*}

First, we show that, as $|\mathcal S|$ tends to infinity, 
\begin{align*}
	\frac{1}{|\mathcal S^*|}\sum_{i\in\mathcal S^*}\|Z_iZ_i'-Z_{l_W^*(i)}Z_{l_W^*(i)}'\|^2 \rightarrow 0.
\end{align*}
By definition,
\begin{align*}
	\|Z_iZ_i'-Z_{l_W^*(i)}Z_{l_W^*(i)}'\|^2 = \sum_{j=1}^{\dim \theta}\sum_{k=1}^{\dim \theta}(Z_{j,i}Z_{k,i}-Z_{j,l_W^*(i)}Z_{k,l_W^*(i)})^2.
\end{align*}
As $\dim \theta$ is fixed, it is enough to show that
\begin{align*}
	\frac{1}{|\mathcal S^*|}\sum_{i\in\mathcal S^*}(Z_{j,i}Z_{k,i}-Z_{j,l_W^*(i)}Z_{k,l_W^*(i)})^2 \rightarrow 0. 
\end{align*}
Define $\tilde V_i\equiv Z_{j,i}Z_{k,i}$, and recall that $W_i= (X_i',D_i)'$. Note that Assumption \ref{ass:metric} implies $\Exp_{\mathcal L_{\tilde V|W}}[\tilde V_i^2|W_i=\cdot]$ is $\nu$-Lipschitz and bounded. Lemma \ref{thm:closest-variance} implies
\begin{align*}
	\frac{1}{2|\mathcal S^*|}\sum_{i\in\mathcal S^*}(Z_{j,i}Z_{k,i}-Z_{j,l_W^*(i)}Z_{k,l_W^*(i)})^2	 - \frac{1}{|\mathcal S^*|}\sum_{i\in\mathcal S^*}\Var_{\mathcal L_{\tilde V|W}}[\tilde V_i|W_i] \rightarrow_p 0,
\end{align*}
where 
\begin{align*}
	\Var_{\mathcal L_{\tilde V|W}}[\tilde V_i|W_i]=\Exp_{\mathcal L_{\tilde V|W}}[Z_{j,i}^2Z_{k,i}^2|W_i]-\Exp_{\mathcal L_{\tilde V|W}}[Z_{j,i}Z_{k,i}|W_i]^2 = Z_{j,i}^2Z_{k,i}^2 - (Z_{j,i}Z_{k,i})^2 = 0. 
\end{align*}
This yields the desired result.

By Proposition \ref{thm:consistency} and the preceding result, 
\begin{align*}
	B(\hat\theta^*) = \|\theta_{\eG^*,F}-\hat\theta^*\|\biggl(\frac{1}{|\mathcal S^*|}\sum_{i\in\mathcal S^*}\|Z_iZ_i'-Z_{l_W^*(i)}Z_{l_W^*(i)}'\|^2\biggr)^{\frac{1}{2}}\rightarrow_p 0.
\end{align*}
It therefore remains to show that $A(\hat\theta^*)\rightarrow_p 0$. Let
\begin{align*}
	A_1(\theta)&\equiv \|\theta_{\eG^*,F}-\theta\|\biggl(\frac{1}{|\mathcal S^*|}\sum_{i\in\mathcal S^*}\|Z_iZ_i'-Z_{l_W^*(i)}Z_{l_W^*(i)}'\|^2\biggr)^{\frac{1}{2}}\\
	&\hphantom{\equiv} \times \biggl(\lambda_{\max}\biggl[\frac{1}{2|\mathcal S^*|}\sum_{i\in\mathcal S^*}(V_i - V_{l_W^*(i)})(V_i - V_{l_W^*(i)})'\biggr]^{\frac{1}{2}} - \lambda_{\max}[\Delta^*]^{\frac{1}{2}}\biggr)
	\intertext{and}
	A_2(\theta)&\equiv \|\theta_{\eG^*,F}-\theta\|\biggl(\frac{1}{|\mathcal S^*|}\sum_{i\in\mathcal S^*}\|Z_iZ_i'-Z_{l_W^*(i)}Z_{l_W^*(i)}'\|^2\biggr)^{\frac{1}{2}}\lambda_{\max}[\Delta^*]^{\frac{1}{2}},
\end{align*}
so that $A(\theta)=A_1(\theta)+A_2(\theta)$. Note that $\lambda_{\max}$ is Lipschitz. Thus, for a fixed positive constant $\eta>0$,  
\begin{align*}
&\Pr_F\biggl[\biggl\|\frac{1}{2|\mathcal S^*|}\sum_{i\in\mathcal S^*}(V_i-V_{l_W^*(i)})(V_i-V_{l_W^*(i)})' - \Delta^*\biggl\|>\sqrt{\underline\lambda}\eta\biggm|\{W_i\}_{i\in\mathcal S}\biggr]\rightarrow 0
\end{align*}
---i.e., equation \eqref{eq:closest-variance}---implies
\begin{align*}
	\Pr_F\biggl[\biggl|\lambda_{\max}\biggl(\frac{1}{2|\mathcal S^*|}\sum_{i\in\mathcal S^*}(V_i-V_{l_W^*(i)})(V_i-V_{l_W^*(i)})'\biggr)^{\frac{1}{2}} - \lambda_{\max}(\Delta^*)^{\frac{1}{2}}\biggr|>\eta\biggm|\{W_i\}_{i\in\mathcal S}\biggr]\rightarrow 0.
\end{align*}
From $B(\hat\theta^*)\rightarrow_p 0$, it follows that $A_1(\hat\theta^*)\rightarrow_p 0$. By Assumptions \ref{ass:bounded} and \ref{ass:metric},
\begin{align*}
	\biggl(\frac{1}{|\mathcal S^*|}\sum_{i\in\mathcal S^*}\|Z_iZ_i'-Z_{l_W^*(i)}Z_{l_W^*(i)}'\|^2\biggr)^{\frac{1}{2}}\lambda_{\max}(\Delta^*)^{\frac{1}{2}}=o(1)O(1) = o(1).
\end{align*}
Combined with Proposition \ref{thm:consistency}, this yields $A_2(\hat\theta^*)\rightarrow_p 0$.

\subsection{Proof of Corollary \ref{thm:bias-bound-TV}}
Using the Jensen's inequality, 
\begin{align*}
&\biggl|\int (f(x,0)-l(x,0))(g^1(x)-g^0(x))\mu(\mathrm dx)\biggr|\\
&\leq \int |f(x,0)-l(x,0)||g^1(x)-g^0(x)|\mu(\mathrm dx)\\
&\leq \|f(\cdot,0)-l(\cdot,0)\|_{L^\infty(\mu)}\int |g^1(x)-g^0(x)|\mu(\mathrm dx).
\end{align*}

\subsection{If $G^0\gg G^1$, $c^{\DR}=0$ if and only if $X$ and $D$ are independent.}
$c^{\DR}=0$ if and only if $(\mathrm dG^1/\mathrm dG^0)(\cdot)=1$, $G^0$-almost surely. Thus,
\begin{align*}
	(\mathrm dG^1/\mathrm dG^0)\cdot G^0 = 1\cdot G^0,
\end{align*}
where the domination implies that the left-hand side is $G^1$.

\subsection{$c^{\DR}$ is finite if $\Pr[D=1|X=\cdot]$ is bounded from one.}
By the Lebesgue decomposition theorem, it holds $G^0$-almost surely that 
\begin{align*}
	\frac{\mathrm dG^1}{\mathrm dG^0}= \frac{g^1}{g^0},
\end{align*}
where we recall that $G^d=g^d\cdot \mu$.

Now, assuming that $\Pr[D=1]\in (0,1)$, choose the marginal distribution $\mathcal L_X$ of $X$ as the dominating measure $\mu$. Then, by the Bayes' theorem,
\begin{align*}
	\Pr[D=d]g^d(\cdot)=\Pr[D=d|X=\cdot]
\end{align*}
holds $\mathcal L_X$-almost surely.

As a result, if the propensity score function $\Pr[D=1|X=\cdot]$ is bounded from one, it holds $\mathcal L_X$-almost surely that
\begin{align*}
	\frac{g^1(\cdot)}{g^0(\cdot)}=\frac{\frac{1}{\Pr[D=1]}\Pr[D=1|X=\cdot]}{\frac{1}{\Pr[D=0]}\Pr[D=0|X=\cdot]}&\leq \frac{\Pr[D=0]}{\Pr[D=1]\Pr[D=0|X=\cdot]}\\
	&\leq \frac{\Pr[D=0]}{\Pr[D=1](1-\|\Pr[D=1|X=\cdot]\|_{L_{\infty}(\mathcal L_X)})}<\infty;
\end{align*}
in particular, the density ratio is bounded $G^0$-almost surely.

\subsection{Proof of Corollary \ref{thm:bias-bound-DR}}
In cases where either $m^{\DR}$ or $c^{\DR}$ is infinity, the inequality trivially holds; we assume that both are finite. By the Cauchy-Schwarz inequality and the Jensen's inequality, 
\begin{align*}
	m^{\DR}c^{\DR}&\geq \int \biggl|(f(x,0)-l(x,0))\biggl(\frac{\mathrm dG^1}{\mathrm dG^0}(x)-1\biggr)\biggr|G^0(\mathrm dx)\\
	&\geq \biggl|\int (f(x,0)-l(x,0))\biggl(\frac{\mathrm dG^1}{\mathrm dG^0}(x)-1\biggr) G^0(\mathrm dx)\biggr|\\
	&=\biggl|\int (f(x,0)-l(x,0))((G^1)^a-G^0)(\mathrm dx)\biggr|,
\end{align*}
where $(G^1)^a$ denotes the absolutely continuous part of $G^1$ with respect to $G^0$. Add
\begin{align*}
	\biggl|\int (f(x,0)-l(x,0))(G^1)^{\perp}(\mathrm dx)\biggr|	
\end{align*}
to the both sides of the preceding inequality. Proposition \ref{thm:representation} then yields the result.

\subsection{Proof of Corollary \ref{thm:bias-bound-LP}}
When $m^{\LP}$ is infinity, the inequality trivially holds. When it is zero, then, by definition, $\|f(\cdot,0)-l(\cdot,0)\|_{L^\infty(\mu)}=0$, and Corollary \ref{thm:bias-bound-TV} implies $|\beta-\tau|=0$. That is, equation \eqref{eq:bias-bound-LP} holds in the form of $0=0$. Thus, we assume $m^{\LP}\in(0,\infty)$. Then, by Corollary 2 of \citet{Dudley1968},
\begin{align*}
	\bigg|\int \frac{f(x,0)-l(x,0)}{\|f(\cdot,0)-l(\cdot,0)\|_{\Lip}+\|f(\cdot,0)-l(\cdot,0)\|_\infty}(G^1-G^0)(\mathrm dx)\biggr|
	\leq 2\rho(G^1,G^0).
\end{align*}

\subsection{Proof of equation \eqref{eq:bias-bound-LP-m}}
By Corollary 1 of \citet{Dudley1968}, for any $\varepsilon>0$, there exists a probability measure $\pi$ on $\mathbb R^p\times\mathbb R^p$ such that its push-forward onto the first coordinate is $G^1$, onto the second coordinate is $G^0$, and  
\begin{align*}
	\pi[\{(x_1',x_2')'\in\mathbb R^p\times \mathbb R^p:\|x_1-x_2\|>\rho(G^1,G^0)+\varepsilon\}]\leq \rho(G^1,G^0) + \varepsilon.
\end{align*}

The first two properties imply
\begin{align*}
	&\int (f(\cdot,0)-l(\cdot,0))(G^1-G^0)(\mathrm dx)\\
	&= \int (f(x_1,0)-l(x_1,0)-(f(x_2,0)-l(x_2,0)))\pi(\mathrm dx_1\mathrm dx_2)\\
	&= \int (f(x_1,0)-l(x_1,0)-(f(x_2,0)-l(x_2,0)))\mathbf 1\{\|x_1-x_2\|>\rho(G^1,G^0)+\varepsilon\}\pi(\mathrm dx_1\mathrm dx_2)\\
	&\hphantom{= \int} + \int (f(x_1,0)-l(x_1,0)-(f(x_2,0)-l(x_2,0)))\mathbf 1\{\|x_1-x_2\|\leq \rho(G^1,G^0) +\varepsilon\}\pi(\mathrm dx_1\mathrm dx_2). 
\end{align*}
Combined with the third property, the absolute value of the first term in the preceding equation is bounded by
\begin{align*}
	2\sup_{x\in\cup_{d\in\{0,1\}}\mathcal X^d}|f(x,0)-l(x,0)|(\rho(G^1,G^0)+\varepsilon),
\end{align*}
and that of the second term by
\begin{align*}
	&\int \frac{|f(x_1,0)-l(x_1,0)-(f(x_2,0)-l(x_2,0))|}{\|x_1-x_2\|}\mathbf 1\{0<\|x_1-x_2\|\leq \rho(G^1,G^0)+\varepsilon\}\pi(\mathrm dx_1\mathrm dx_2)\\
	&\phantom{\int \frac{||}{||}}\times (\rho(G^1,G^0)+\varepsilon).
\end{align*}
Now, taking $\varepsilon\downarrow 0$ and then $\sup_{x_1\in\mathcal X^1}$ inside the integral yields the desired result.

\subsection{Parameter is preserved if $s(X)$ is a balancing score.} 
Suppose that $s(X)$ is a balancing score, i.e., $X$ and $D$ are independent conditional on $s(X)$. For each $d\in\{0,1\}$,
\begin{align*}
	\Exp_G[\Exp_F[Y|X,D=d]|s(X)=\cdot,D=1]
	=\Exp_G[\Exp_F[Y|X,D=d]|s(X)=\cdot,D=0].
\end{align*}
Thus, for each $(d,d')'\in\{0,1\}^2$,
\begin{align*}
	\Exp_{F_s}[Y|s(X)=\cdot,D=d] = \Exp_G[\Exp_F[Y|X,D=d]|s(X)=\cdot,D=d'] .
\end{align*}
Using this,
\begin{align*}
	&\Exp_G[\Exp_{F_s}[Y|s(X),D=1]-\Exp_{F_s}[Y|s(X),D=0]|D=1]\\
 	&=\Exp_G[\Exp_G[\Exp_F[Y|X,D=1]|s(X),D=1]|D=1]\\
 	&\hphantom{=\Exp_G} - \Exp_G[\Exp_G[\Exp_F[Y|X,D=0]|s(X),D=1]|D=1]\\ 
	&=\Exp_G[\Exp_G[\Exp_F[Y|X,D=1]-\Exp_F[Y|X,D=0]|s(X),D=1]|D=1]\\
	&=\Exp_G[\Exp_F[Y|X,D=1]-\Exp_F[Y|X,D=0]|D=1]=\tau_{G,F}.
\end{align*}

\subsection{Proof of Proposition \ref{thm:representation-other-parameters}}
We start with $\tau^0$.
\begin{align*}
	\tau^0 &= \Exp[\Exp[Y|X,D=1]|D=0]- \Exp[Y|D=0]\\
	&= \biggl(\int \Exp[Y|X=x,D=1](G^0-G^1)(\mathrm dx) + \int \Exp[Y|X=x,D=1]G^1(\mathrm dx)\biggr) - \Exp[Y|D=0]\\
	&= \int \Exp[Y|X=x,D=1](G^0-G^1)(\mathrm dx) + \beta + \Exp[s(X)'\gamma|D=1] - \Exp[s(X)'\gamma|D=0]\\
	&= \beta - \int (\Exp[Y|X=x,D=1] - s(x)'\gamma)(G^1-G^0)(\mathrm dx)\\
	&= \beta - \int (\Exp[Y|X=x,D=1] - (\alpha + \beta + s(x)'\gamma))(G^1-G^0)(\mathrm dx).
\end{align*}
Combining this with Proposition \ref{thm:representation},
\begin{align*}
	\tau^{10} &= \Pr[D=1]\tau + \Pr[D=0]\tau^0\\
	&= \Pr[D=1]\biggl(\beta - \int (f(x,0)-l(x,0))(G^1-G^0)(\mathrm dx)\biggr)\\
	&\hphantom{= \Pr} + \Pr[D=0]\biggl(\beta - \int(f(x,1)-l(x,1))(G^1-G^0)(\mathrm dx)\biggr).
\end{align*}

\subsection{$\hat{\tilde\theta}^*=(\hat\alpha,\hat\beta,1)$ if $\mathcal S^*=\mathcal S$.}
Let $s$ be the dimension of $s(\cdot)$. By definition, for every $(\alpha,\beta,\gamma')'\in\mathbb R^{1+1+s}$, 
\begin{align*}
	\frac{1}{|\mathcal S|}\sum_{i\in\mathcal S}(Y_i-(\hat\alpha+\hat\beta D_i+s(X_i)'\hat\gamma))^2\leq \frac{1}{|\mathcal S|}\sum_{i\in\mathcal S}(Y_i-(\alpha+\beta D_i+s(X_i)'\gamma))^2.
\end{align*}
Take $\gamma=\hat\gamma q$, where $q\in\mathbb R$. Then,
\begin{align*}
	\frac{1}{|\mathcal S|}\sum_{i\in\mathcal S}(Y_i-(\hat\alpha+\hat\beta D_i+(s(X_i)'\hat\gamma)\cdot 1))^2\leq 
	\frac{1}{|\mathcal S|}\sum_{i\in\mathcal S}(Y_i-(\alpha+\beta D_i+(s(X_i)'\hat\gamma)\cdot q))^2.
\end{align*}

Now, note that
\begin{align*}
\tilde Z_i'\equiv \begin{pmatrix}
	1 & D_i & s(X_i)'\hat\gamma	
\end{pmatrix} = 
Z_i'
\underbrace{\begin{pmatrix}
	1 & 0 & 0\\
	0 & 1 & 0\\
	\mathbf 0 & \mathbf 0 & \hat\gamma 
\end{pmatrix}}_{\equiv B},
\end{align*}
where $B$ is of full column rank if $\hat \gamma \neq \mathbf 0$. Thus,
\begin{align*}
	\frac{1}{|\mathcal S|}\sum_{i\in\mathcal S}\tilde Z_i\tilde Z_i'= 
	B'\frac{1}{|\mathcal S|}\sum_{i\in\mathcal S}Z_iZ_i'B
\end{align*}
is of full rank if $(1/|\mathcal S|)\sum_{i\in\mathcal S}Z_iZ_i'$ is of full rank and $\hat\gamma\neq \mathbf 0$.

\subsection{More on equation \eqref{eq:theta-index}}
We show that 
\begin{align*}
	\frac{1}{|\mathcal S^*|}\sum_{i\in\mathcal S^*}\tilde Z_i\Exp_{F_{\hat\gamma}}[Y_i|s(X_i)'\hat\gamma,D_i] = \frac{1}{|\mathcal S^*|}\sum_{i\in\mathcal S^*}\tilde Z_i\Exp_F[Y_i|X_i,D_i],
\end{align*}
which implies that $\tilde\theta_{\eG^*,F}=\theta_{\eG_{\hat\gamma}^*,F_{\hat\gamma}^{}}$.

Let $\mathcal S^*(d,t)\equiv \{i\in\mathcal S^*: D_i=d,s(X_i)'\hat\gamma =t\}$.
Note that $\sum_{i\in\mathcal S^*}=\sum_{(d,t)}\sum_{i\in\mathcal S^*(d,t)}$.
\begin{align*}
	&\frac{1}{|\mathcal S^*|}\sum_{i\in\mathcal S^*}\tilde Z_i\Exp_{F_{\hat\gamma}}[Y_i|s(X_i)'\hat\gamma,D_i]\\
	&= \frac{1}{|\mathcal S^*|}\sum_{i\in\mathcal S^*}\begin{pmatrix}
		1 \\ D_i \\ s(X_i)'\hat\gamma	
	\end{pmatrix}
	\frac{1}{|\mathcal S^*(D_i,s(X_i)'\hat\gamma)|}\sum_{j\in \mathcal S^*(D_i,s(X_i)'\hat\gamma)}\Exp_F[Y|X=X_j,D=D_j]\\
	&=\frac{1}{|\mathcal S^*|}\sum_{(d,t)}\underbrace{\sum_{i\in\mathcal S^*(d,t)}\frac{1}{|\mathcal S^*(d,t)|}}_{=1}\begin{pmatrix}
		1 \\ d \\ t
	\end{pmatrix}\sum_{j\in \mathcal S^*(d,t)}\Exp_F[Y|X=X_j,D=D_j]\\
	&=\frac{1}{|\mathcal S^*|}\sum_{(d,t)}\sum_{j\in \mathcal S^*(d,t)}\begin{pmatrix}
		1 \\ D_j \\ s(X_j)'\hat\gamma
	\end{pmatrix}\Exp_F[Y|X=X_j,D=D_j] = \frac{1}{|\mathcal S^*|}\sum_{j\in\mathcal S^*}\tilde Z_j\Exp_F[Y|X=X_j,D=D_j].
\end{align*}

\subsection{Proofs of Propositions \ref{thm:normality-index}--\ref{thm:variance-estimator-consistency-index}}

Note that $\gamma_{\eG,F}$ is a function of $\{(X_i',D_i)\}_{i\in\mathcal S}$. This reduces Propositions \ref{thm:normality-index}--\ref{thm:variance-estimator-consistency-index} to special cases of Propositions \ref{thm:normality}--\ref{thm:variance-estimator-consistency}. 

\subsection{Proofs of equations \eqref{eq:normality-index}--\eqref{eq:variance-estimator-consistency-index}}
Fix $\{(X_i',D_i)'\}_{i\in\mathcal S}=\{(x_i',d_i)'\}_{i\in\mathcal S}$. It is enough to verify that
\begin{align*}
	\frac{1}{\sqrt{|\mathcal S^*|}}\sum_{i\in\mathcal S^*}\tilde Z_i(\gamma_{\eG,F})U_i  - 	\frac{1}{\sqrt{|\mathcal S^*|}}\sum_{i\in\mathcal S^*}\tilde Z_i(\hat\gamma)U_i&\rightarrow_p 0, \\
	\tilde\Gamma^*(\gamma_{\eG,F}) - \tilde\Gamma^*(\hat\gamma) &\rightarrow_p 0,\\
	\tilde\Delta^*(\gamma_{\eG,F}) - \tilde\Delta^*(\hat\gamma) &\rightarrow_p 0,\text{ and}\\
	\hat{\tilde\Delta}^*(\gamma_{\eG,F}) - \hat {\tilde\Delta}^*(\hat\gamma)&\rightarrow_p 0. 
\end{align*}

From
\begin{align*}
	\tilde Z_i(\gamma_{\eG,F})-\tilde Z_i(\hat\gamma) =
	\begin{pmatrix}
		1 & 0 & \mathbf 0'\\
		0 & 1 & 	\mathbf 0'\\
		0 & 0 & \gamma_{\eG,F}' 
	\end{pmatrix}Z_i
	- \begin{pmatrix}
		1 & 0 & \mathbf 0'\\
		0 & 1 & 	\mathbf 0'\\
		0 & 0 & \hat\gamma' 
	\end{pmatrix}Z_i
	= 	\begin{pmatrix}
		0 & 0 & \mathbf 0'\\
		0 & 0 & 	\mathbf 0'\\
		0 & 0 & \gamma_{\eG,F}'-\hat\gamma' 
	\end{pmatrix}Z_i, 
\end{align*}
it follows that
\begin{align*}
	&\|\tilde\Gamma^*(\gamma_{\eG,F}) - \tilde\Gamma^*(\hat\gamma)\|\\
	&\leq \biggl\|\frac{1}{|\mathcal S^*|}\sum_{i\in\mathcal S^*}(\tilde Z_i(\gamma_{\eG,F})-\tilde Z_i(\hat\gamma))\tilde Z_i(\gamma_{\eG,F})'\biggr\|
	+ \biggl\|\frac{1}{|\mathcal S^*|}\sum_{i\in\mathcal S^*}\tilde Z_i(\gamma_{\eG,F})(\tilde Z_i(\gamma_{\eG,F})-\tilde Z_i(\hat\gamma))'\biggr\|\\
	&\hphantom{=\biggl\|} +  \biggl\|\frac{1}{|\mathcal S^*|}\sum_{i\in\mathcal S^*}(\tilde Z_i(\gamma_{\eG,F})-\tilde Z_i(\hat\gamma))(\tilde Z_i(\gamma_{\eG,F})-\tilde Z_i(\hat\gamma))'\biggr\|\\
	&\leq 2\|\gamma_{\eG,F}-\hat\gamma\|\biggl\|\frac{1}{|\mathcal S^*|}\sum_{i\in\mathcal S^*}Z_iZ_i'\biggr\|\sqrt{\|\gamma_{\eG,F}\|^2+2} + \|\gamma_{\eG,F}-\hat\gamma\|^2\biggl\|\frac{1}{|\mathcal S^*|}\sum_{i\in\mathcal S^*}Z_iZ_i'\biggr\| = o_p(1),
\end{align*}
that
\begin{align*}
	&\|\tilde\Delta^*(\gamma_{\eG,F})-\tilde\Delta^*(\hat\gamma)\|\\
	&\leq \|\gamma-\hat\gamma\|^2\biggl\|\frac{1}{|\mathcal S^*|}\sum_{i\in\mathcal S^*}Z_iZ_i'\Exp_F[U_i^2|X_i,D_i]\biggr\|\lesssim \|\gamma-\hat\gamma\|^2\biggl\|\frac{1}{|\mathcal S^*|}\sum_{i\in\mathcal S^*}Z_iZ_i'\biggr\| = o_p(1), 
\end{align*}
and that
\begin{align*}
	&\biggl\|\frac{1}{\sqrt{|\mathcal S^*|}}\sum_{i\in\mathcal S^*}\tilde Z_i(\gamma_{\eG,F})U_i - \frac{1}{\sqrt{|\mathcal S^*|}}\sum_{i\in\mathcal S^*}\tilde Z_i(\hat\gamma)U_i\biggr\|\\
	&\leq \|\gamma_{\eG,F}-\hat\gamma\|\biggl\|\frac{1}{\sqrt{|\mathcal S^*|}}\sum_{i\in\mathcal S^*}Z_iU_i\biggr\|= o_p(1). 
\end{align*}

We proceed as follows to establish the final convergence.
\begin{itemize}[listparindent=\parindent]
	\item Note that
	\begin{align*}
		&\hat{\tilde\theta}(\hat\gamma)-\hat{\tilde\theta}(\gamma_{\eG,F})\\
		&=(\tilde\Gamma^*(\hat\gamma))^{-1}\frac{1}{|\mathcal S^*|}\sum_{i\in\mathcal S^*}\tilde Z_i(\hat\gamma)Y_i 
		- (\tilde\Gamma^*(\gamma_{\eG,F}))^{-1}\frac{1}{|\mathcal S^*|}\sum_{i\in\mathcal S^*}\tilde Z_i(\gamma_{\eG,F})Y_i\\
		&=((\tilde\Gamma^*(\hat\gamma))^{-1}-(\tilde\Gamma^*(\gamma_{\eG,F}))^{-1})\underbrace{\frac{1}{|\mathcal S^*|}\sum_{i\in\mathcal S^*}\tilde Z_i(\hat\gamma)Y_i}_{\equiv Q}\\ 
		&\hphantom{=(} +(\tilde\Gamma^*(\gamma_{\eG,F}))^{-1}\biggl(\underbrace{\frac{1}{|\mathcal S^*|}\sum_{i\in\mathcal S^*}\tilde Z_i(\hat\gamma)Y_i-\frac{1}{|\mathcal S^*|}\sum_{i\in\mathcal S^*}\tilde Z_i(\gamma_{\eG,F})Y_i}_{\equiv R}\biggr).
	\end{align*}
	Since
	\begin{align*}
		R&=\frac{1}{|\mathcal S^*|}\sum_{i\in\mathcal S^*}\begin{pmatrix}
			0 & 0 & \mathbf 0'\\
			0 & 0 & 	\mathbf 0'\\
			0 & 0 & \hat\gamma'-\gamma_{\eG,F}'
		\end{pmatrix}Z_iY_i\\
		&= \begin{pmatrix}
			0 & 0 & \mathbf 0'\\
			0 & 0 & 	\mathbf 0'\\
			0 & 0 & \hat\gamma'-\gamma_{\eG,F}'
		\end{pmatrix}\biggl(\underbrace{\frac{1}{|\mathcal S^*|}\sum_{i\in\mathcal S^*}Z_i\Exp_F[Y_i|X_i,D_i]}_{\displaystyle=\biggl(\frac{1}{|\mathcal S^*|}\sum_{i\in\mathcal S^*}Z_iZ_i\biggr)\theta_{\eG^*,F}} + \frac{1}{|\mathcal S^*|}\sum_{i\in\mathcal S^*}Z_iU_i\biggr)=o_p(1)
	\end{align*}
	and
	\begin{align*}
		Q&=R + \begin{pmatrix}
			1 & 0 & \mathbf 0'\\
			0 & 1 & 	\mathbf 0'\\
			0 & 0 & \gamma_{\eG,F}' 
		\end{pmatrix}\frac{1}{|\mathcal S^*|}\sum_{i\in\mathcal S^*}Z_iY_i\\
		&=R + \begin{pmatrix}
			1 & 0 & \mathbf 0'\\
			0 & 1 & 	\mathbf 0'\\
			0 & 0 & \gamma_{\eG,F}'
		\end{pmatrix}\biggl(\biggl(\frac{1}{|\mathcal S^*|}\sum_{i\in\mathcal S^*}Z_iZ_i\biggr)\theta_{\eG^*,F} + \frac{1}{|\mathcal S^*|}\sum_{i\in\mathcal S^*}Z_iU_i\biggr)=O_p(1),
	\end{align*}
	$\hat{\tilde\theta}(\hat\gamma)-\hat{\tilde\theta}(\gamma_{\eG,F})=o_p(1)$.
	\item Let
	\begin{align*}
		A_i&\equiv \tilde Z_i(\gamma_{\eG,F})\hat{\tilde E}_i^*(\gamma_{\eG,F})-\tilde Z_i(\hat\gamma)\hat{\tilde E}_i^*(\hat\gamma)\text{ and}\\
		B_{l_{(X',D)'}^*(i)}&\equiv \tilde Z_{l_{(X',D)'}^*(i)}(\gamma_{\eG,F})\hat{\tilde E}_{l_{(X',D)'}^*(i)}^*(\gamma_{\eG,F})-\tilde Z_{l_{(X',D)'}^*(i)}(\hat\gamma)\hat{\tilde E}_{l_{(X',D)'}^*(i)}^*(\hat\gamma),	
	\end{align*}
	so that
	\begin{align*}
		&(\tilde Z_i(\gamma_{\eG,F})\hat{\tilde E}_i^*(\gamma_{\eG,F})-\tilde Z_{l_{(X',D)'}^*(i)}(\gamma_{\eG,F})\hat{\tilde E}_{l_{(X',D)'}^*(i)}^*(\gamma_{\eG,F}))\\
		&- (\tilde Z_i(\hat\gamma)\hat{\tilde E}_i^*(\hat\gamma)-\tilde Z_{l_{(X',D)'}^*(i)}(\hat\gamma)\hat{\tilde E}_{l_{(X',D)'}^*(i)}^*(\hat\gamma))\\
		&= A_i - B_{l_{(X',D)'}^*(i)}.
	\end{align*}
	Note that the Frobenius norm of 
	\begin{align*}
		\frac{1}{|\mathcal S^*|}\sum_{i\in\mathcal S^*}(A_i-B_{l_{(X',D)'}^*(i)})(A_i-B_{l_{(X',D)'}^*(i)})'
	\end{align*}
	is bounded by (the products of) the norms of 
	\begin{align*}
	\frac{1}{|\mathcal S^*|}\sum_{i\in\mathcal S^*}A_iA_i'\text{ and }\frac{1}{|\mathcal S^*|}\sum_{i\in\mathcal S^*}B_{l_{(X',D)'}^*(i)}B_{l_{(X',D)'}^*(i)}'.
	\end{align*}
	
	Since
	\begin{align*}
		&\hat{\tilde E}_i^*(\gamma_{\eG,F})-\hat{\tilde E}_i^*(\hat\gamma)\\
		&=(Y_i-\tilde Z_i(\gamma_{\eG,F})'\hat{\tilde\theta}(\gamma_{\eG,F}))-(Y_i-\tilde Z_i(\hat\gamma)'\hat{\tilde\theta}(\hat\gamma))\\
		&=-\tilde Z_i(\gamma_{\eG,F})'\hat{\tilde\theta}(\gamma_{\eG,F})+\tilde Z_i(\hat\gamma)'\hat{\tilde\theta}(\hat\gamma)\\
		&=(\tilde Z_i(\hat\gamma)-\tilde Z_i(\gamma_{\eG,F}))'\hat{\tilde\theta}(\gamma_{\eG,F}) + \tilde Z_i(\gamma_{\eG,F})(\hat{\tilde\theta}(\hat\gamma)-\hat{\tilde\theta}(\gamma_{\eG,F}))\\
		&\hphantom{=(} + (\tilde Z_i(\hat\gamma)-\tilde Z_i(\gamma_{\eG,F}))'(\hat{\tilde\theta}(\hat\gamma)-\hat{\tilde\theta}(\gamma_{\eG,F})),
	\end{align*}
	when combined with the preceding result, 
	\begin{align*}
		\frac{1}{|\mathcal S^*|}\sum_{i\in\mathcal S^*}(\hat{\tilde E}_i^*(\gamma_{\eG,F})-\hat{\tilde E}_i^*(\hat\gamma))(\hat{\tilde E}_i^*(\gamma_{\eG,F})-\hat{\tilde E}_i^*(\hat\gamma))' = o_p(1).
	\end{align*}
	This yields $|\mathcal S^*|^{-1}\sum_{i\in\mathcal S^*}A_iA_i'=o_p(1)$

	Similarly, we can show that $|\mathcal S^*|^{-1}\sum_{i\in\mathcal S^*}B_{l_{(X',D)'}^*(i)}B_{l_{(X',D)'}^*(i)}'=o_p(1)$, noting that
	\begin{align*}
		&\frac{1}{|\mathcal S^*|}\sum_{i\in\mathcal S^*}Z_{l_{(X',D)'}^*(i)}Z_{l_{(X',D)'}^*(i)}'\\
		&= \biggl(\frac{1}{|\mathcal S^*|}\sum_{i\in\mathcal S^*}Z_{l_{(X',D)'}^*(i)}Z_{l_{(X',D)'}^*(i)}'-\frac{1}{|\mathcal S^*|}\sum_{i\in\mathcal S^*}Z_iZ_i\biggr) + \frac{1}{|\mathcal S^*|}\sum_{i\in\mathcal S^*}Z_iZ_i = O(1).
	\end{align*}
	\item Since the Frobenius norm of $\hat{\tilde\Delta}^*(\gamma_{\eG,F}) - \hat {\tilde\Delta}^*(\hat\gamma)$ is bounded by the product of the norms of $|\mathcal S^*|^{-1}\sum_{i\in\mathcal S^*}(A_i-B_{l_{(X',D)'}^*(i)})(A_i-B_{l_{(X',D)'}^*(i)})'$ and $\hat{\tilde\Delta}^*(\gamma_{\eG,F})$, where 
	\begin{align*}
		\hat{\tilde\Delta}^*(\gamma_{\eG,F})=(\hat{\tilde\Delta}^*(\gamma_{\eG,F})-\tilde\Delta^*(\gamma_{\eG,F}))+\tilde\Delta^*(\gamma_{\eG,F})=O(1),
	\end{align*}
	we obtain the desired result.
\end{itemize}


\subsection{Short versus long regressions}
Let $s_A$ and $s_B$ denote two different covariate functions. For each $j\in \{A,B\}$, consider a linear regression model 
\begin{align*}
	Y &= \alpha_j + \beta_j D + s_j(X)'\gamma_j + E_j,
\end{align*}
where $\Exp[E_j]=\Exp[DE_j]=0$ and $\Exp[s_j(X)E_j]=\mathbf 0$; let
\begin{align*}
	l^j(x,d) &\equiv \alpha_j + \beta_j d + s_j(x)'\gamma_j
\end{align*}
be the corresponding regression function. Then, by Proposition \ref{thm:representation},
\begin{align}
	\beta_A - \beta_B = \int (l^A(x,0) - l^B(x,0))(G^1 - G^0)(\mathrm dx).
\end{align}

\newpage
\section{Omitted simulation results}
We provide additional simulation results for cases with $(\mathsf N_1,\mathsf N_0)\in\{(50,75),(50,125)\}$. We observe similar patterns to the results in Section \ref{sec:simulation}, even when $\mathsf N_0$ is small.

Figures \ref{fig:estimand-75}--\ref{fig:mean-difference-75} display the results for $(\mathsf N_1,\mathsf N_0)=(50,75)$. As in the preceding case, the design phase identifies the estimand with less bias (Figures \ref{fig:estimand-75}--\ref{fig:bias-75}), and the source of this reduction is the robustification against misspecification (Figures \ref{fig:total-variation-75}--\ref{fig:mean-difference-75}).

The results for $(\mathsf N_1,\mathsf N_0)=(50,125)$ are shown in Figures \ref{fig:estimand-125}--\ref{fig:mean-difference-125}. The design phase has become more powerful with the aid of a larger ``donor pool.'' In Figure \ref{fig:bias-125}, relative to Figure \ref{fig:bias-75}, more points now lie farther below the identity line. In Figures \ref{fig:total-variation-125}--\ref{fig:mean-difference-125}, the blue and red points are more separated than Figures \ref{fig:total-variation-75}--\ref{fig:mean-difference-75}. Specifically, the blue points have shifted further towards the left, thus benefiting from the extended degree of model misspecification that can be allowed.

\begin{figure}[ht]
\centering
\subfloat[Specification A]{\includegraphics[width=0.33\textwidth]{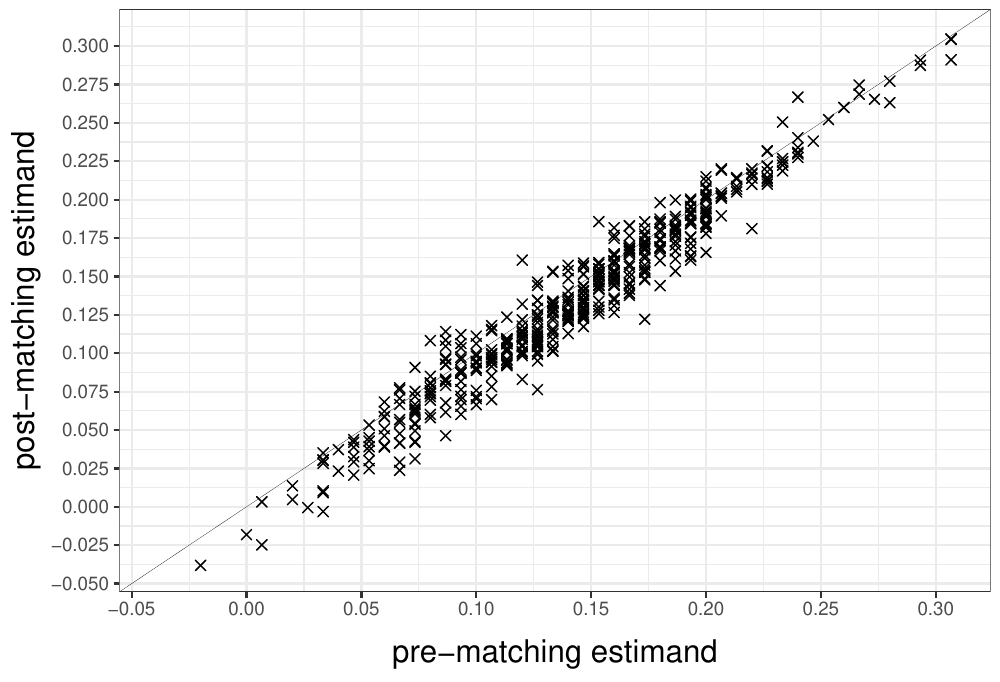}}\hfil
\subfloat[Specification B]{\includegraphics[width=0.33\textwidth]{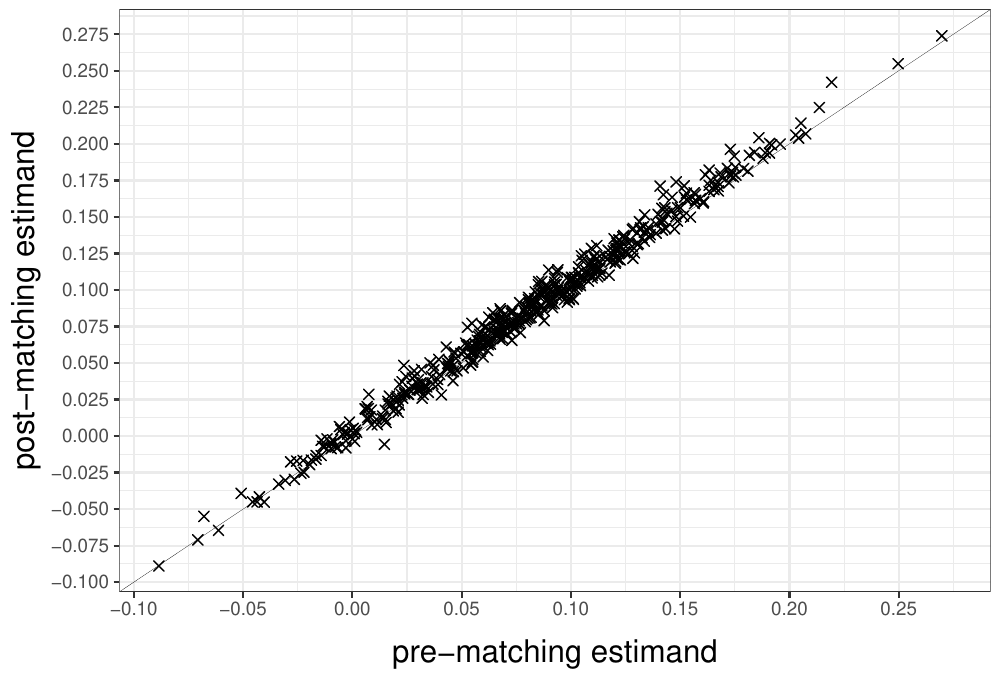}}\hfil
\subfloat[Specification C]{\includegraphics[width=0.33\textwidth]{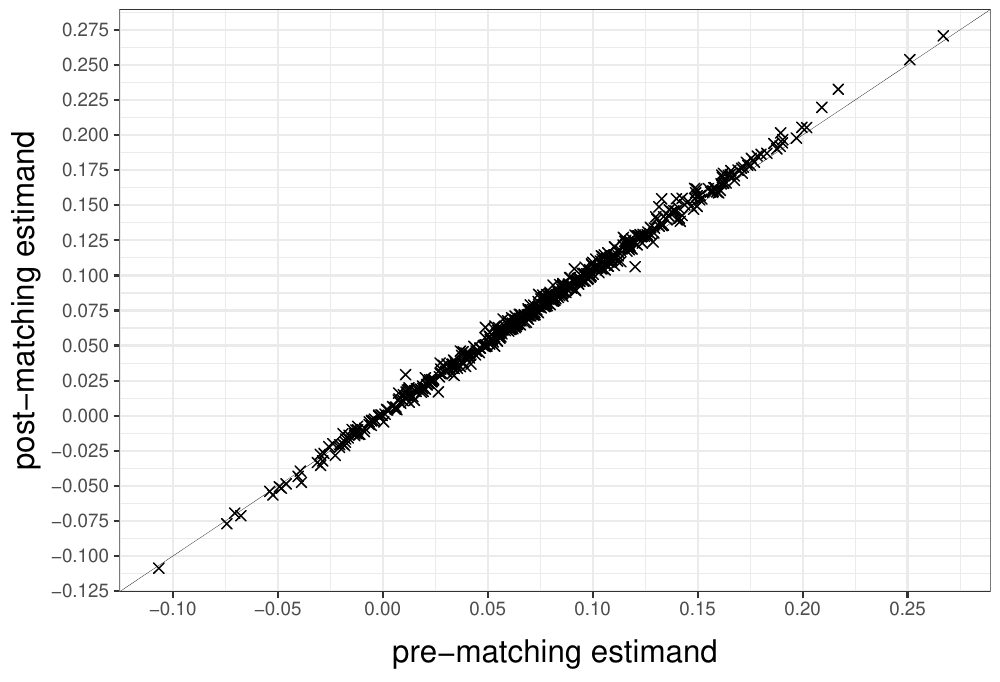}}
\caption{Pre- and Post-estimands ($\mathsf N_1=50$, $\mathsf N_0=75$)}
\label{fig:estimand-75}\bigskip
\subfloat[Specification A]{\includegraphics[width=0.33\textwidth]{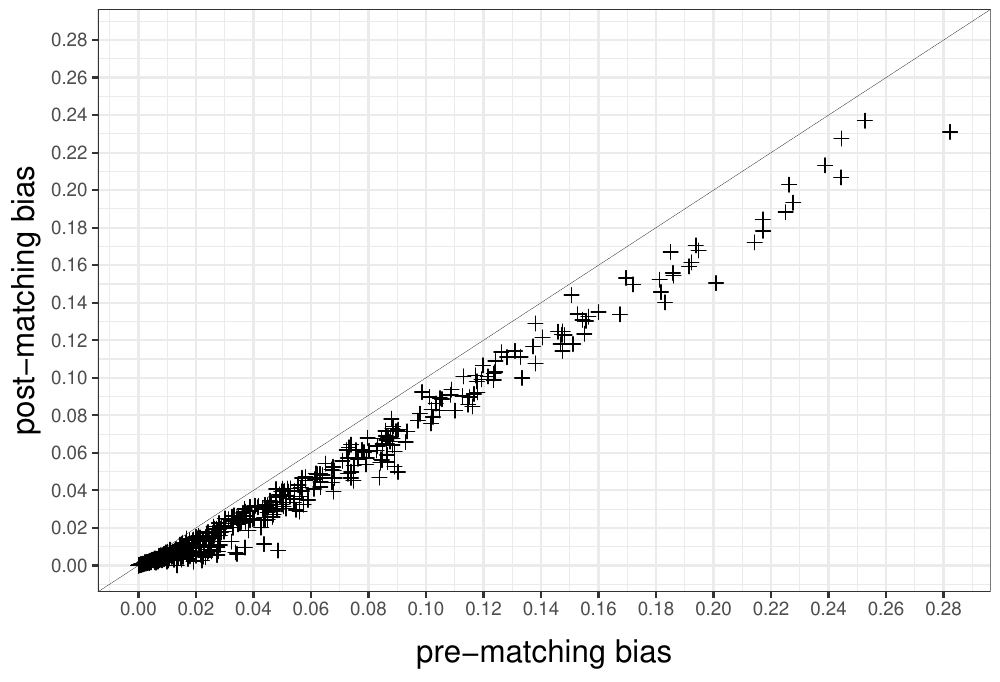}}\hfil
\subfloat[Specification B]{\includegraphics[width=0.33\textwidth]{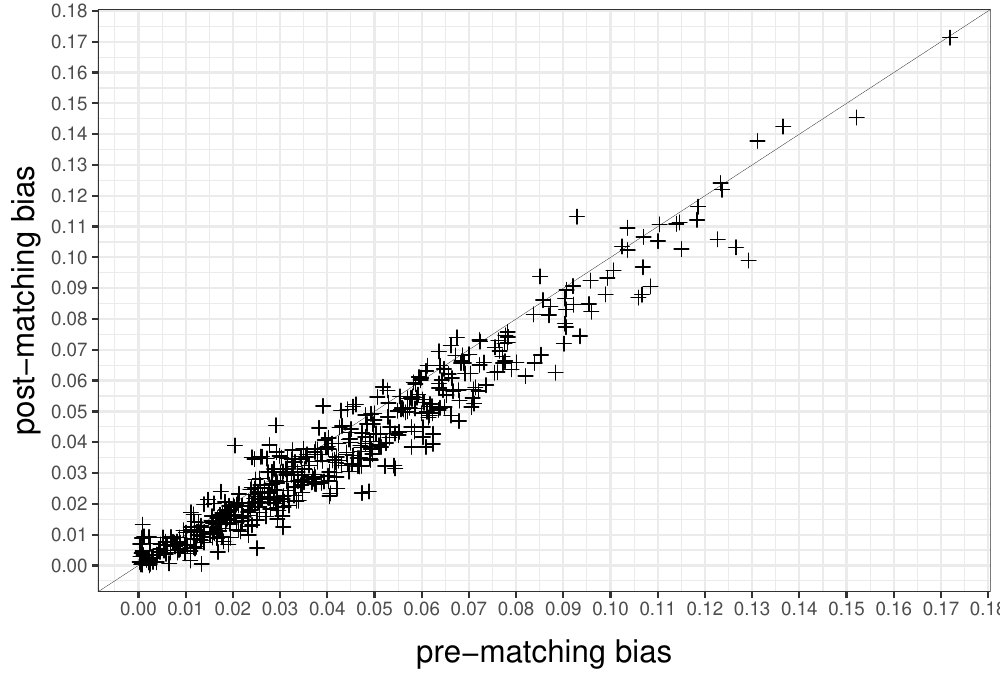}}\hfil	
\subfloat[Specification C]{\includegraphics[width=0.33\textwidth]{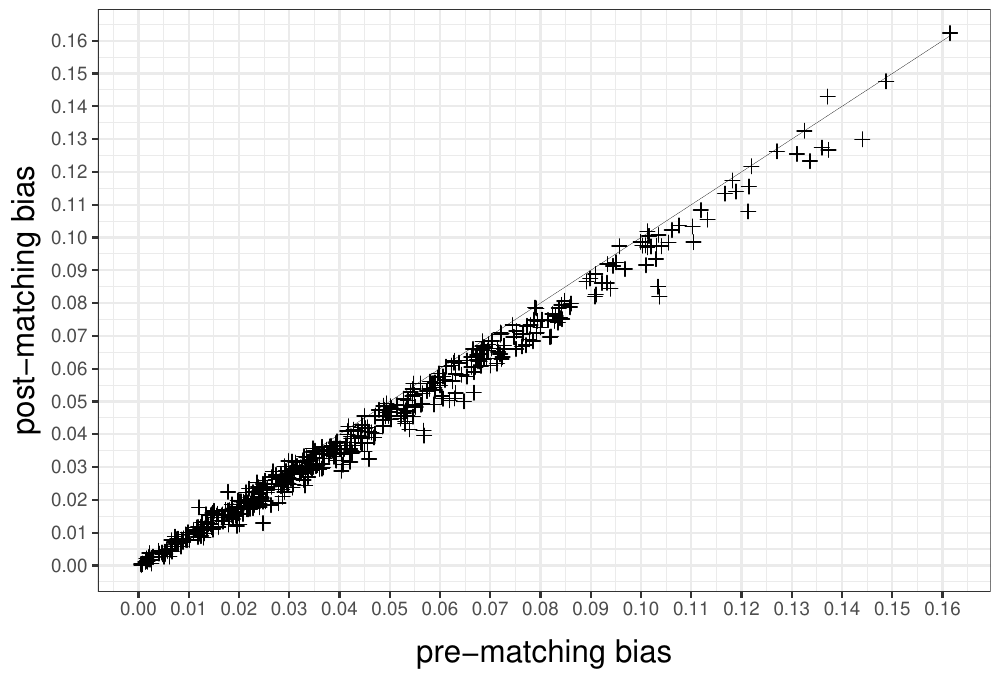}}
\caption{Pre- and Post-biases ($\mathsf N_1=50$, $\mathsf N_0=75$)}
\label{fig:bias-75}\bigskip
\end{figure}

\begin{figure}[htp]
\centering
\subfloat[Without covariates]{\includegraphics[width=0.33\textwidth]{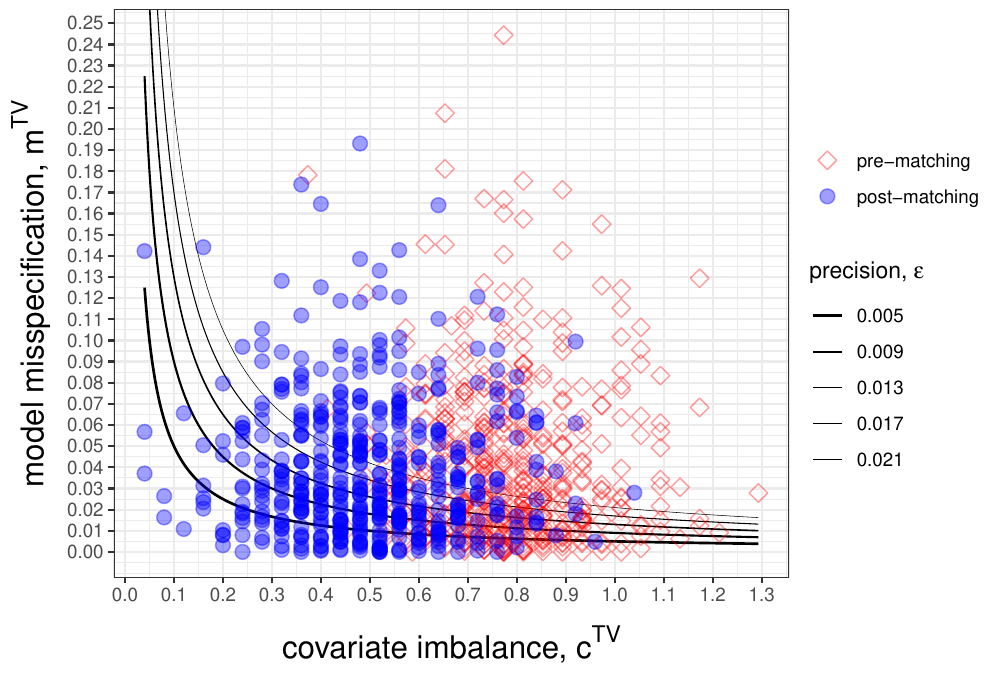}}
\subfloat[Saturated]{\includegraphics[width=0.33\textwidth]{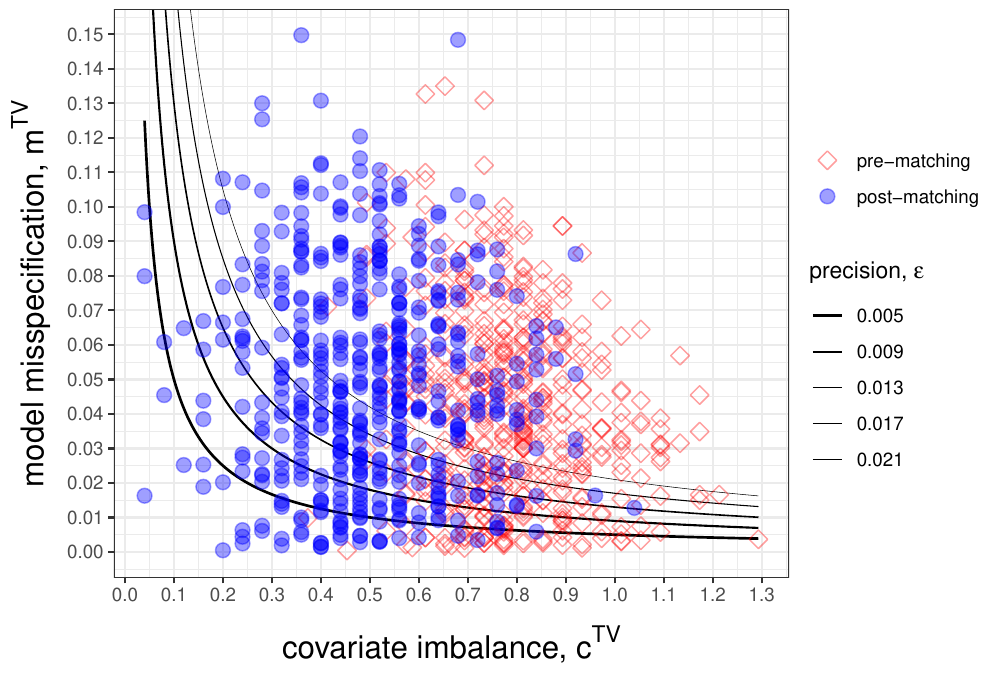}}
\caption{Total variation bound ($\mathsf N_1=50$, $\mathsf N_0=75$)}
\label{fig:total-variation-75}\bigskip
\subfloat[Specification A]{\includegraphics[width=0.33\textwidth]{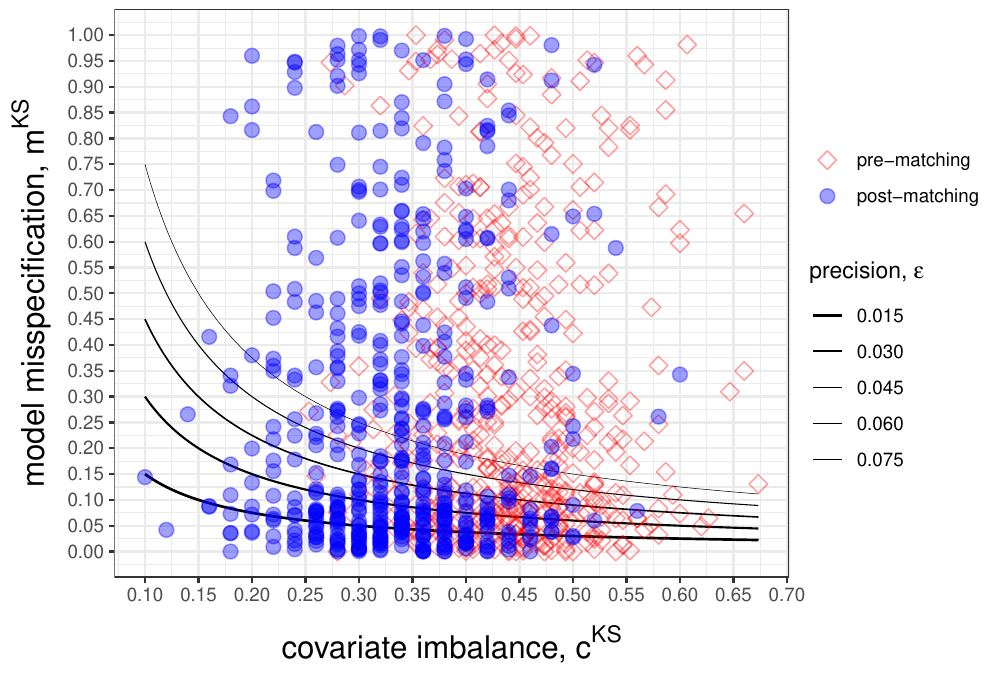}}\hfil
\subfloat[Specification B]{\includegraphics[width=0.33\textwidth]{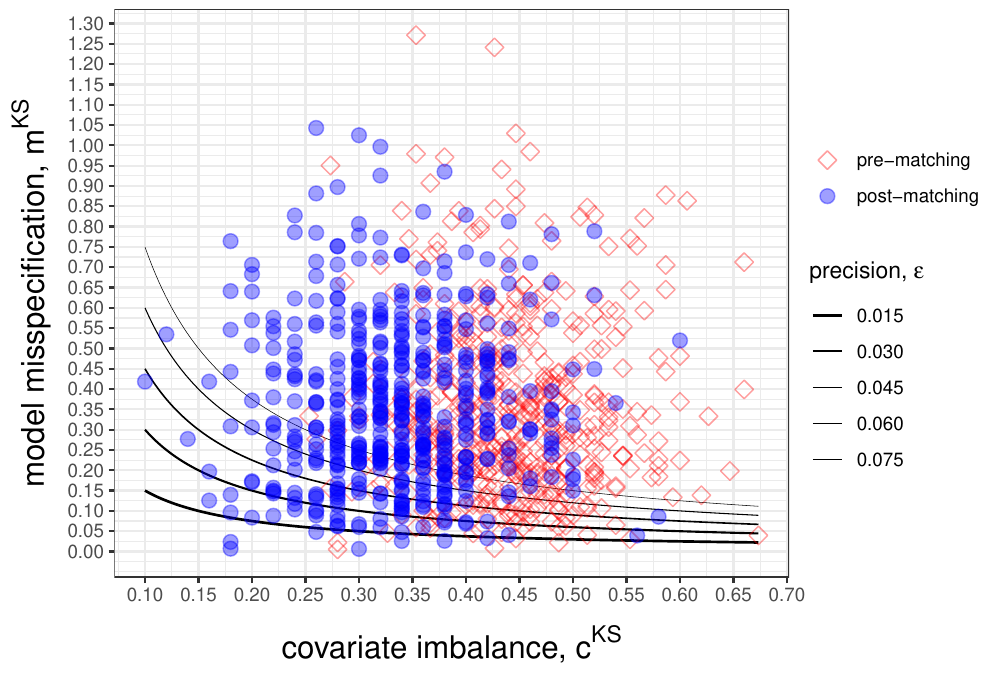}}\hfil
\subfloat[Specification C]{\includegraphics[width=0.33\textwidth]{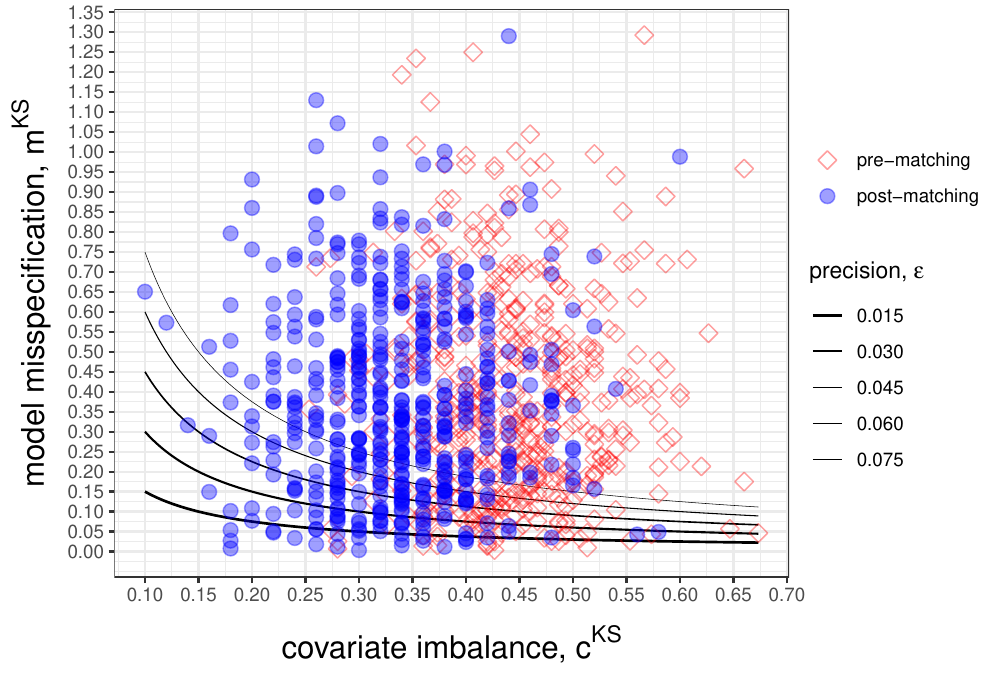}}
\caption{Kolmogorov-Smirnov bound ($\mathsf N_1=50$, $\mathsf N_0=75$)}
\label{fig:kolmogorov-smirnov-75}\bigskip
\subfloat[Without covariates]{\includegraphics[width=0.33\textwidth]{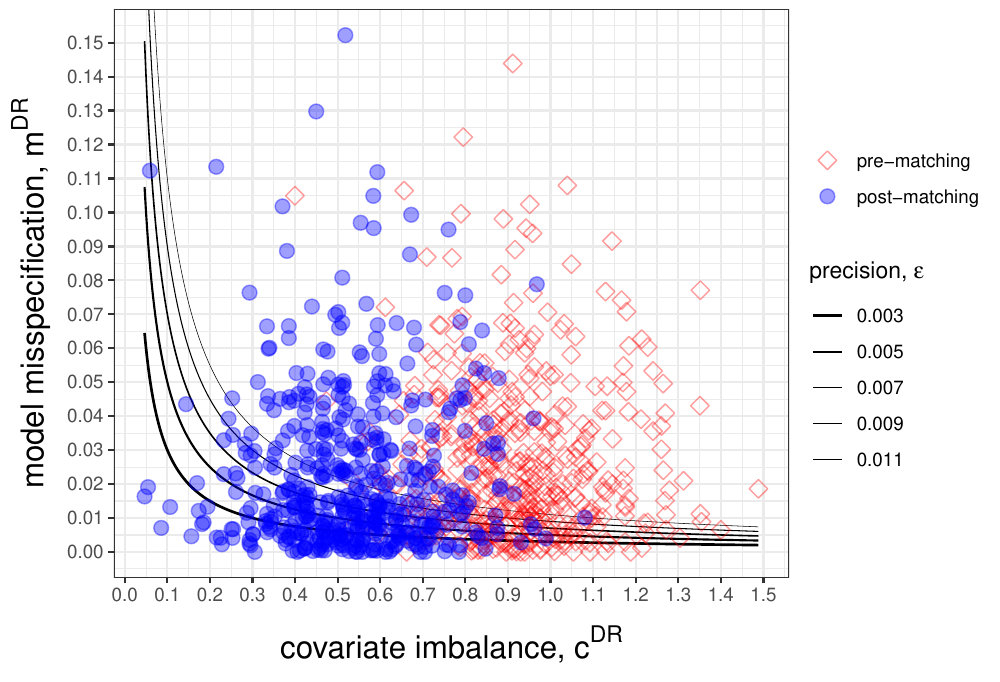}}
\subfloat[Saturated]{\includegraphics[width=0.33\textwidth]{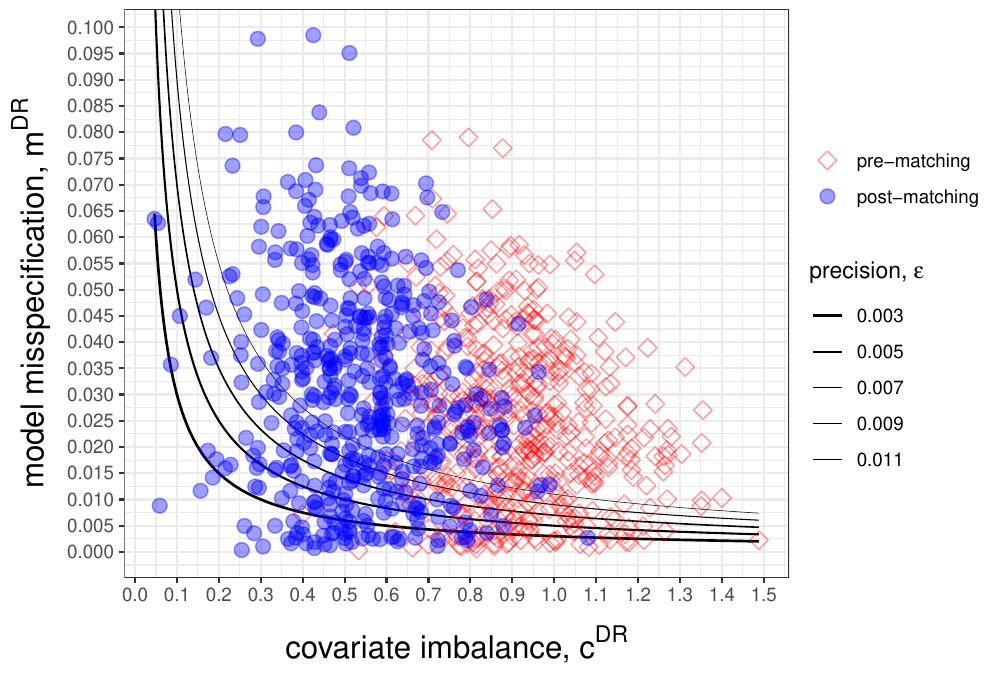}}
\caption{Density ratio bound ($\mathsf N_1=50$, $\mathsf N_0=75$)}
\label{fig:density-ratio-75}
\end{figure}

\begin{figure}
\centering
\subfloat[Specification A]{\includegraphics[width=0.33\textwidth]{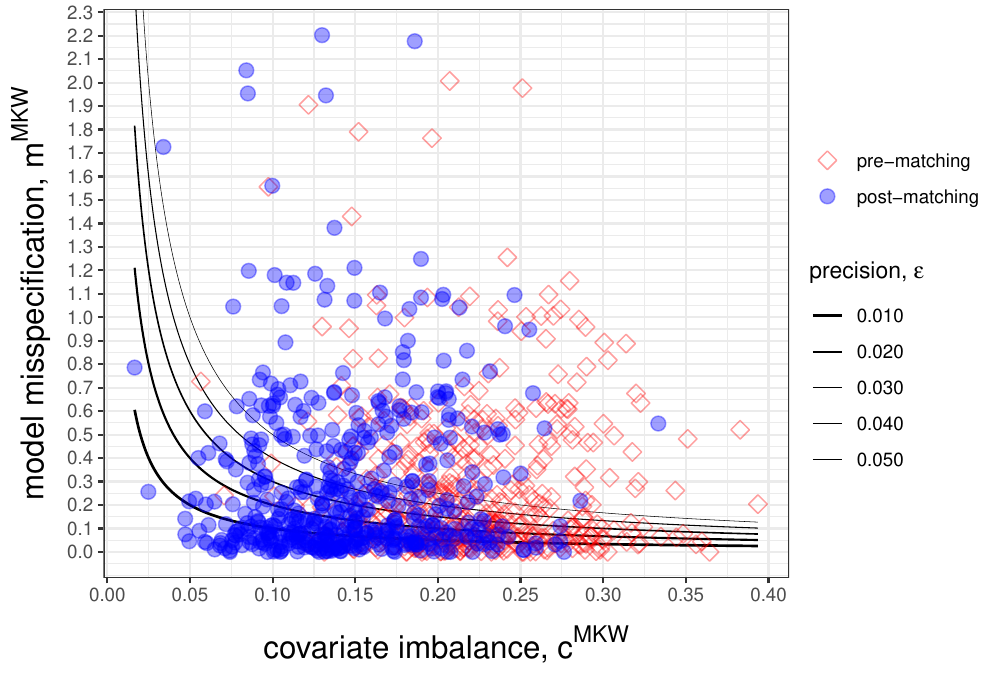}}
\subfloat[Specification B]{\includegraphics[width=0.33\textwidth]{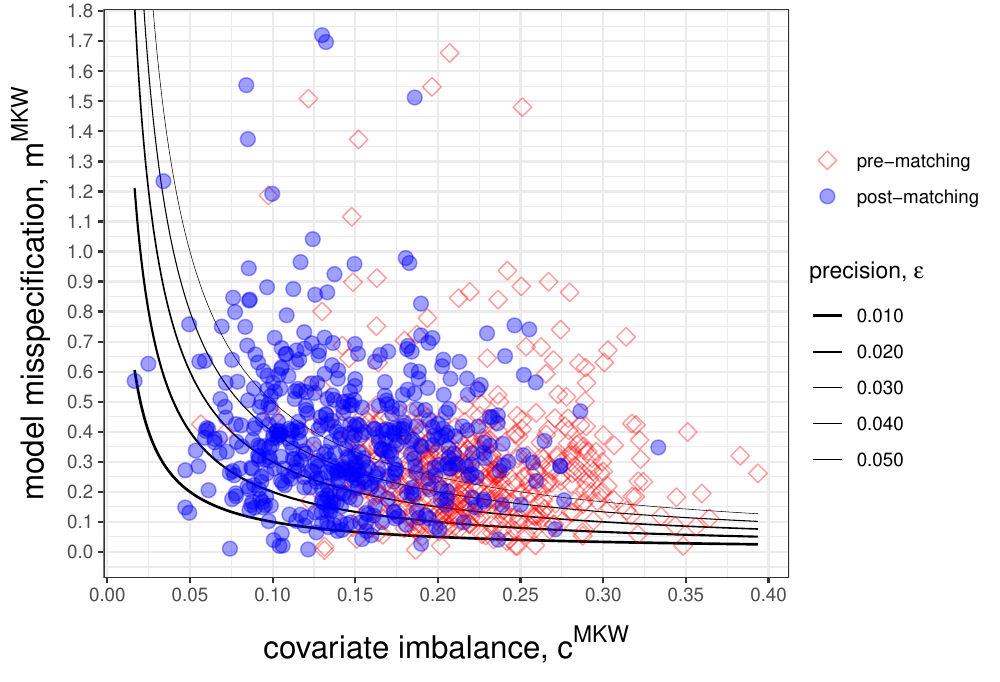}}
\subfloat[Specification C]{\includegraphics[width=0.33\textwidth]{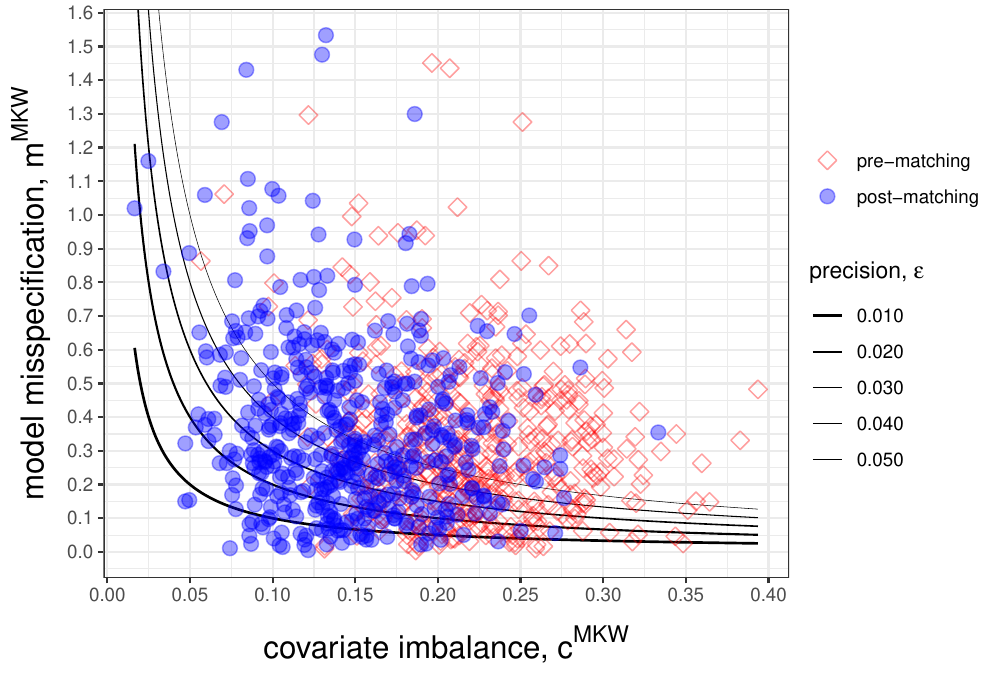}}
\caption{Monge-Kantorovich/Wasserstein Bound ($\mathsf N_1=50$, $\mathsf N_0=75$)}
\label{fig:monge-kantorovich/wasserstein-75}\bigskip
\centering
\subfloat[Specification A]{\includegraphics[width=0.33\textwidth]{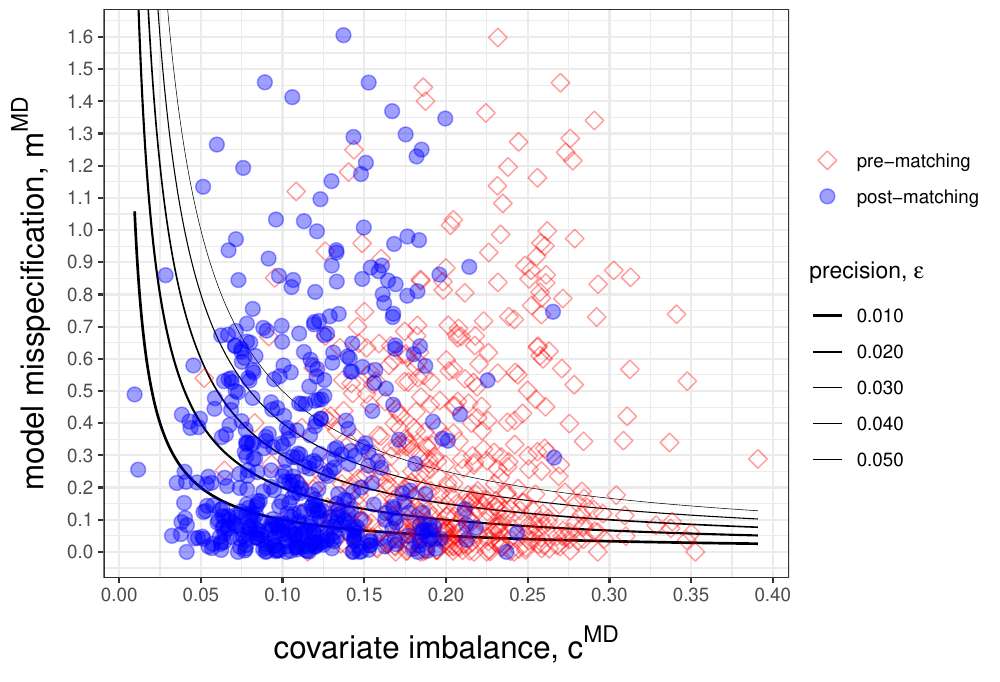}}
\subfloat[Specification B]{\includegraphics[width=0.33\textwidth]{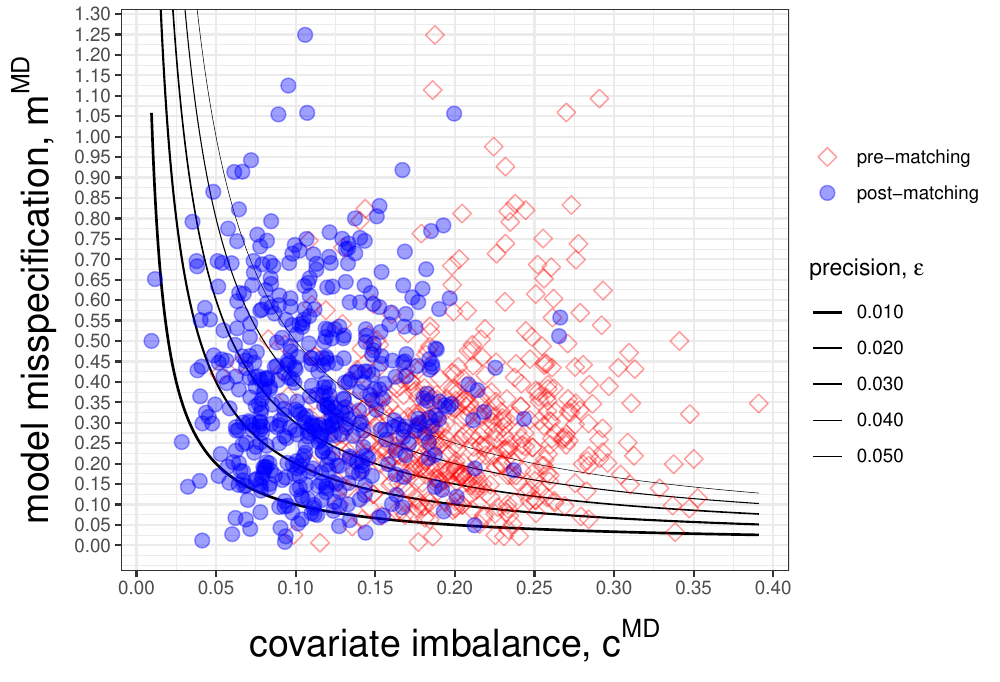}}
\subfloat[Specification C]{\includegraphics[width=0.33\textwidth]{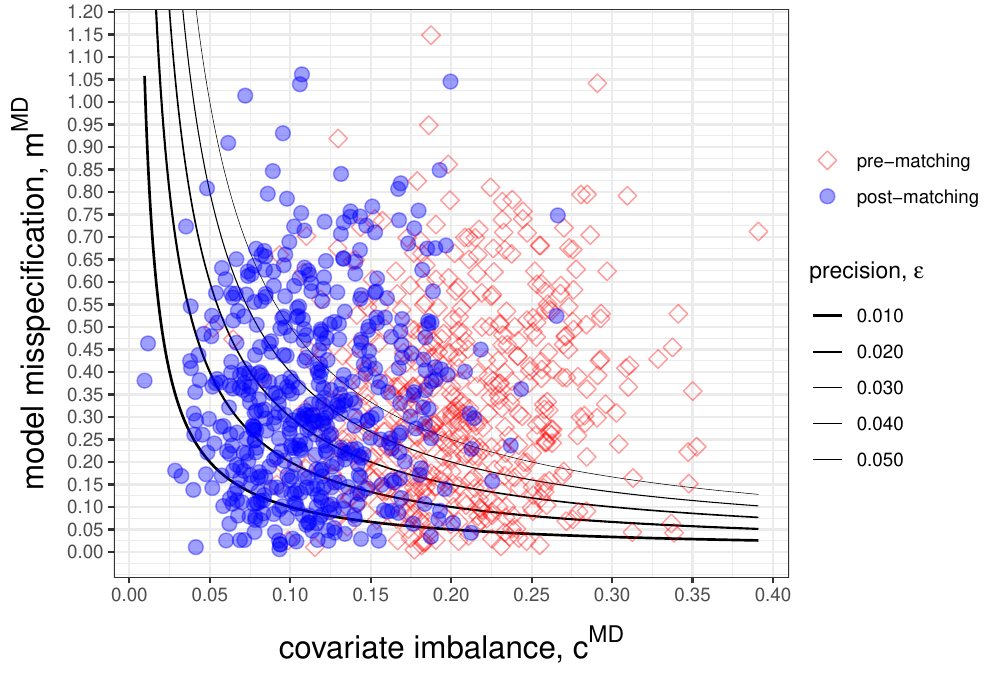}}
\caption{Mean difference Bound ($\mathsf N_1=50$, $\mathsf N_0=75$)}
\label{fig:mean-difference-75}
\end{figure}

\begin{figure}[ht]
\centering
\subfloat[Specification A]{\includegraphics[width=0.33\textwidth]{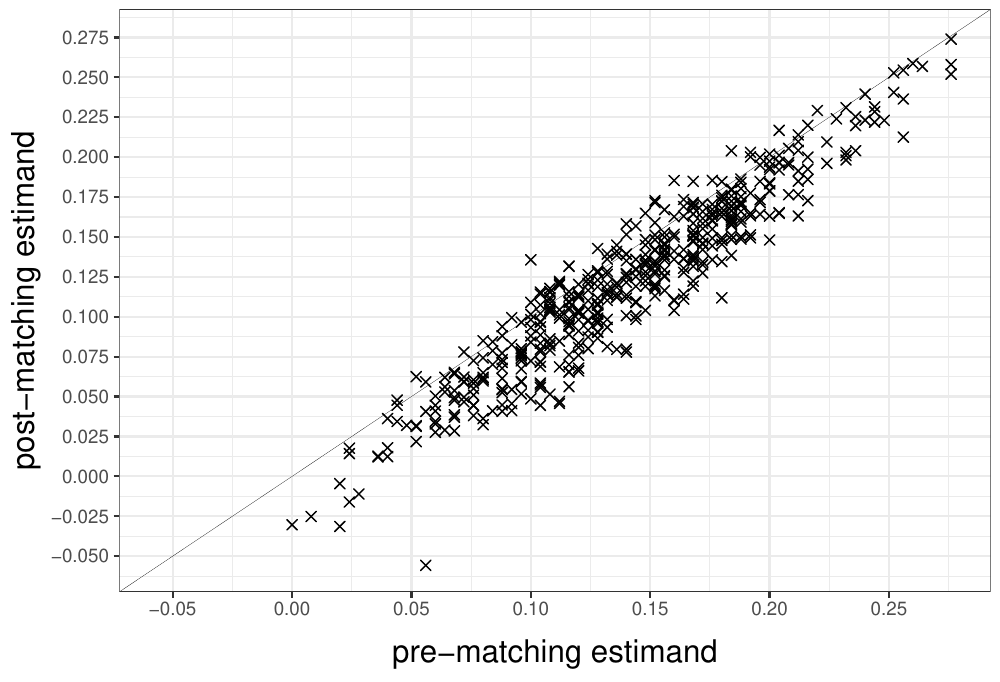}}
\subfloat[Specification B]{\includegraphics[width=0.33\textwidth]{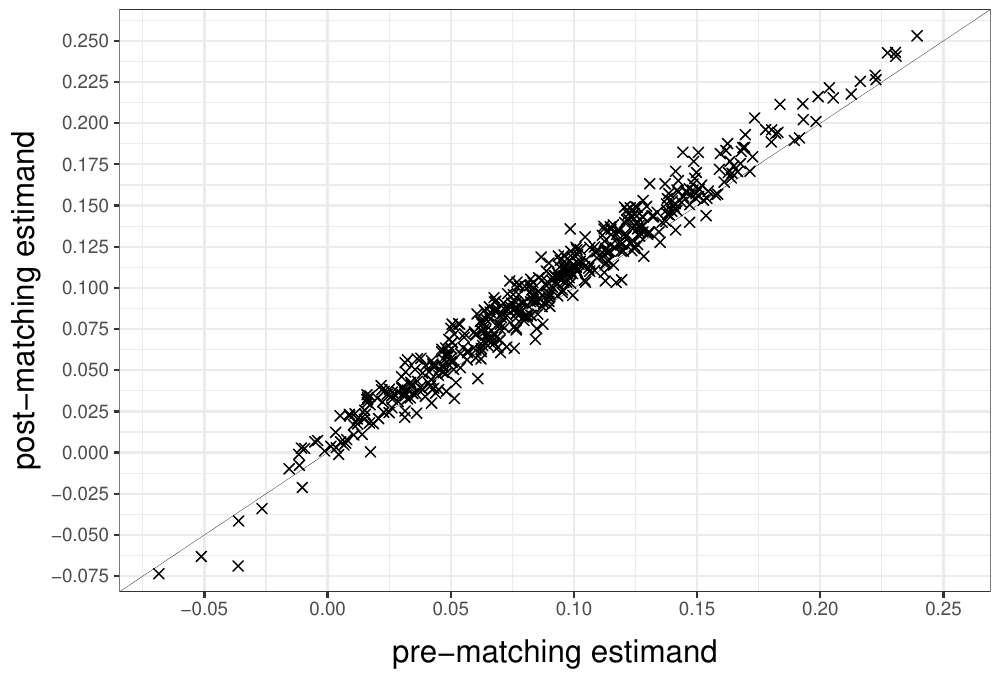}}
\subfloat[Specification C]{\includegraphics[width=0.33\textwidth]{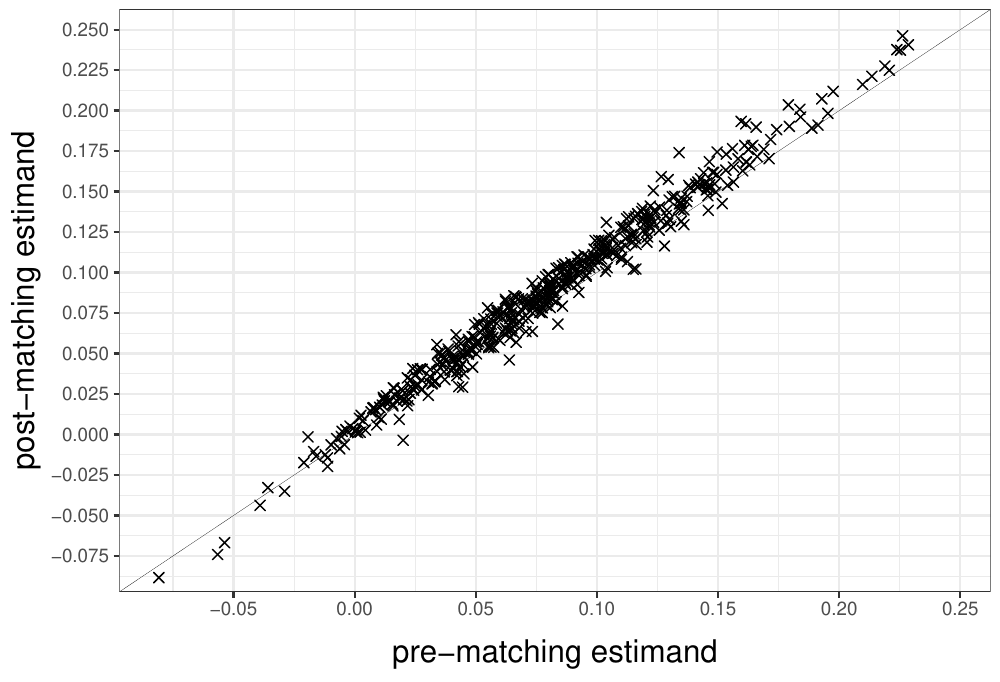}}
\caption{Pre- and Post-estimands ($\mathsf N_1=50$, $\mathsf N_0=125$)}
\label{fig:estimand-125}\bigskip
\subfloat[Specification A]{\includegraphics[width=0.33\textwidth]{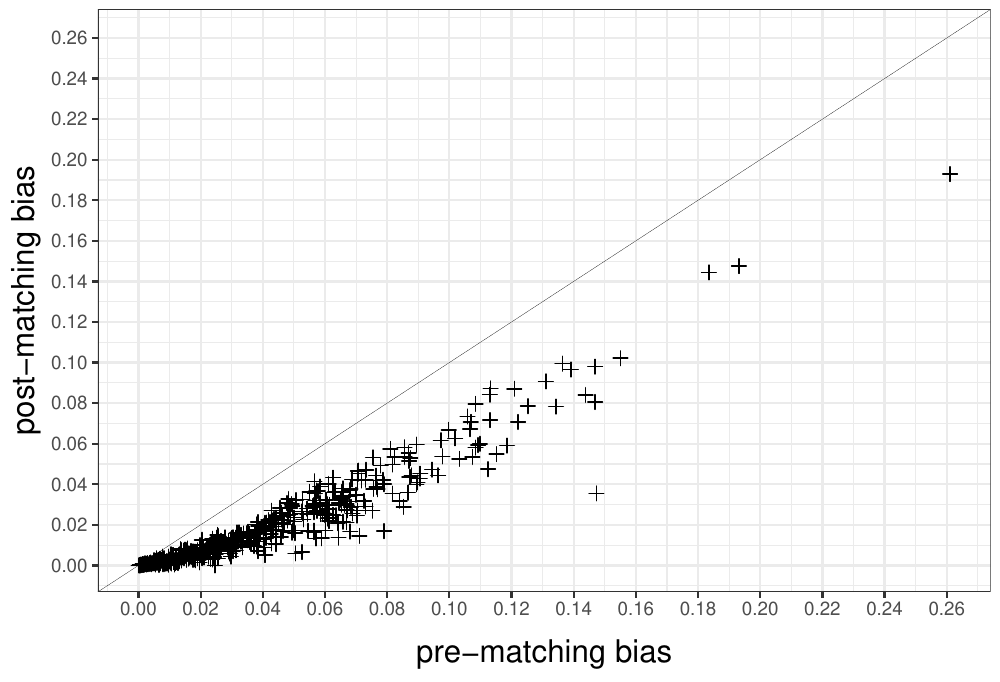}}
\subfloat[Specification B]{\includegraphics[width=0.33\textwidth]{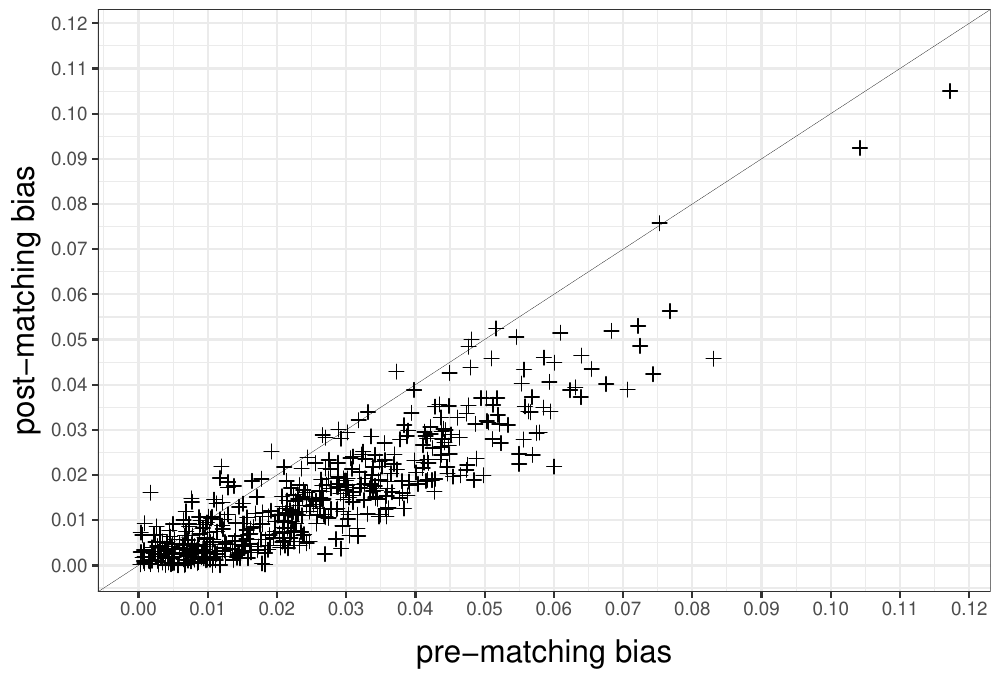}}
\subfloat[Specification C]{\includegraphics[width=0.33\textwidth]{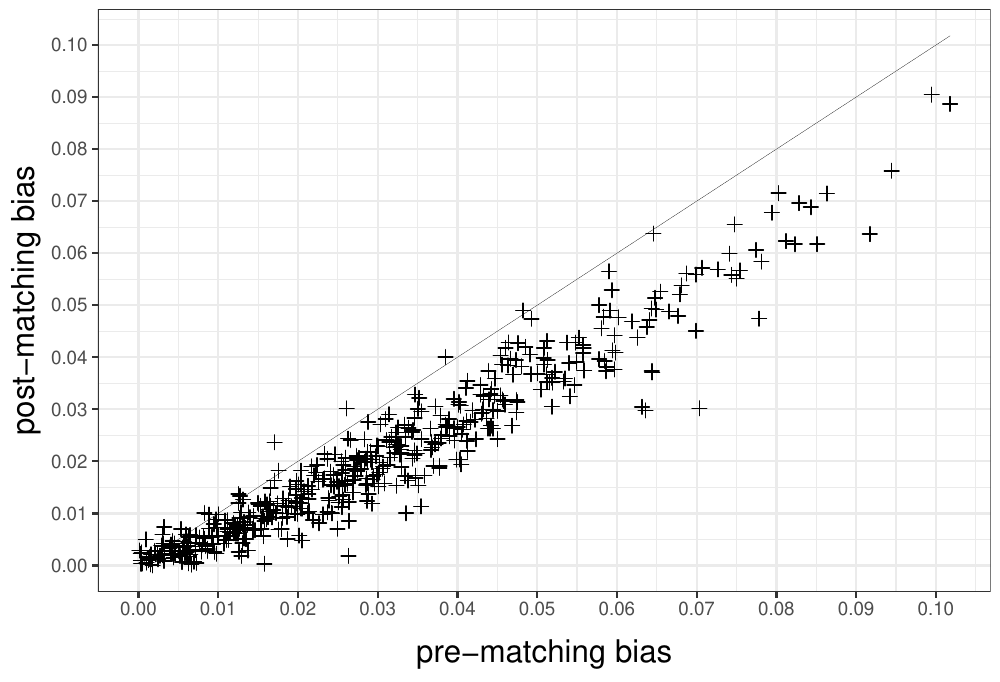}}
\caption{Pre- and Post-biases ($\mathsf N_1=50$, $\mathsf N_0=125$)}
\label{fig:bias-125}\bigskip
\end{figure}

\begin{figure}[htp]
\centering
\subfloat[Without covariates]{\includegraphics[width=0.33\textwidth]{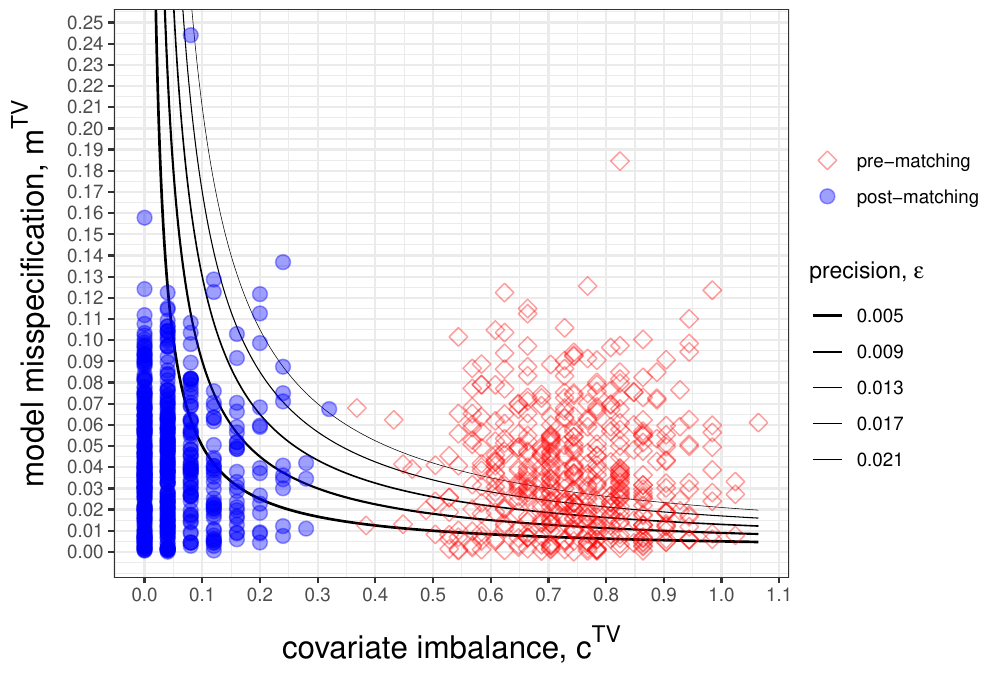}}
\subfloat[Saturated]{\includegraphics[width=0.33\textwidth]{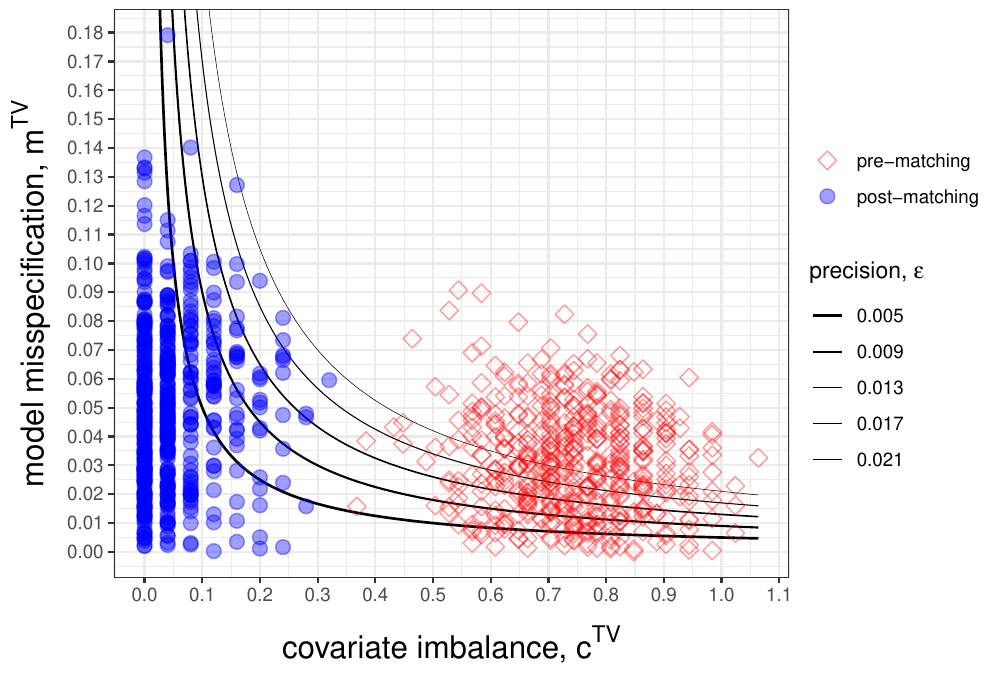}}
\caption{Total variation bound ($\mathsf N_1=50$, $\mathsf N_0=125$)}
\label{fig:total-variation-125}\bigskip
\subfloat[Specification A]{\includegraphics[width=0.33\textwidth]{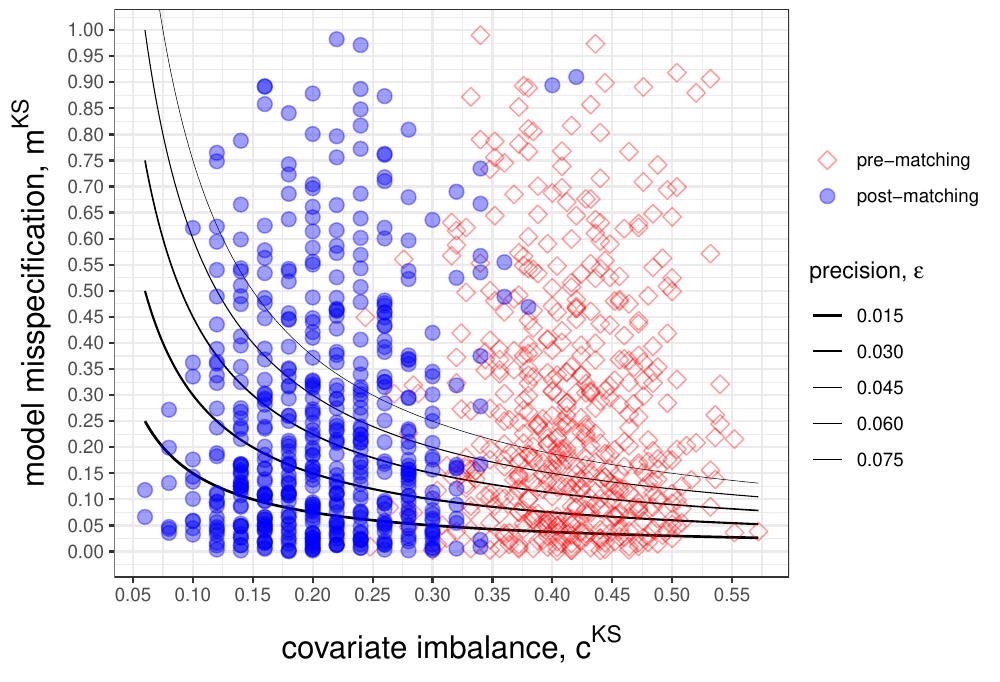}}
\subfloat[Specification B]{\includegraphics[width=0.33\textwidth]{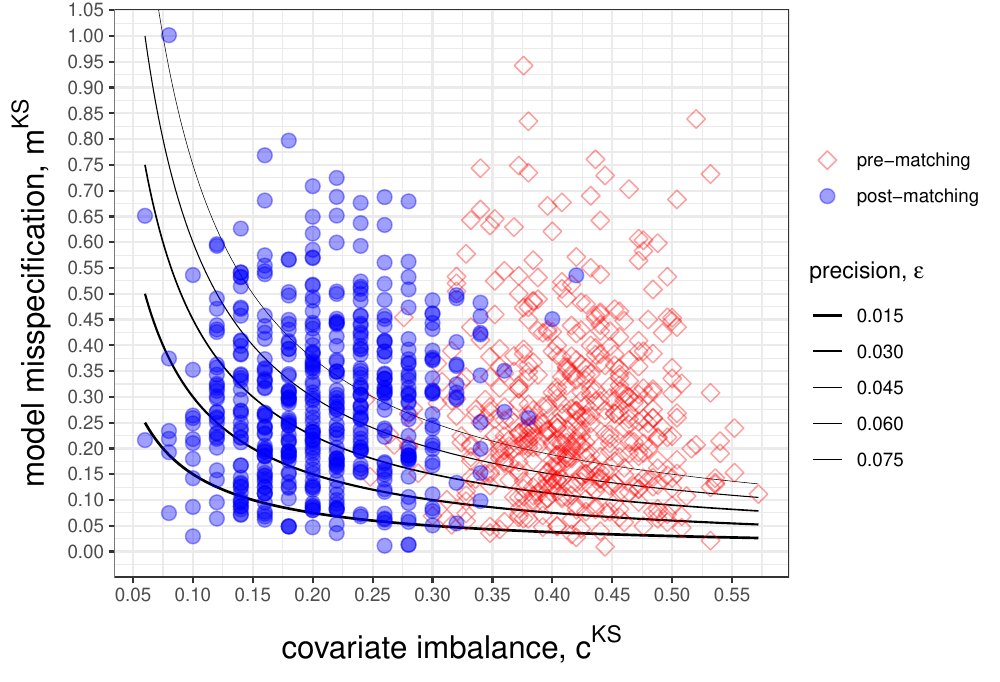}}
\subfloat[Specification C]{\includegraphics[width=0.33\textwidth]{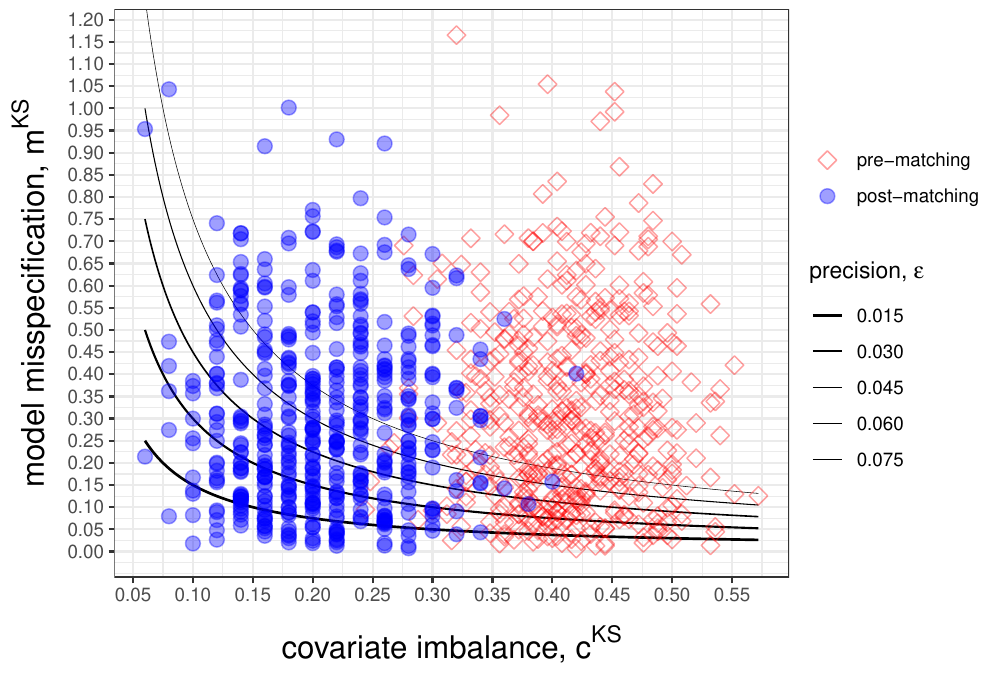}}
\caption{Kolmogorov-Smirnov bound ($\mathsf N_1=50$, $\mathsf N_0=125$)}
\label{fig:kolmogorov-smirnov-125}\bigskip
\subfloat[Without covariates]{\includegraphics[width=0.33\textwidth]{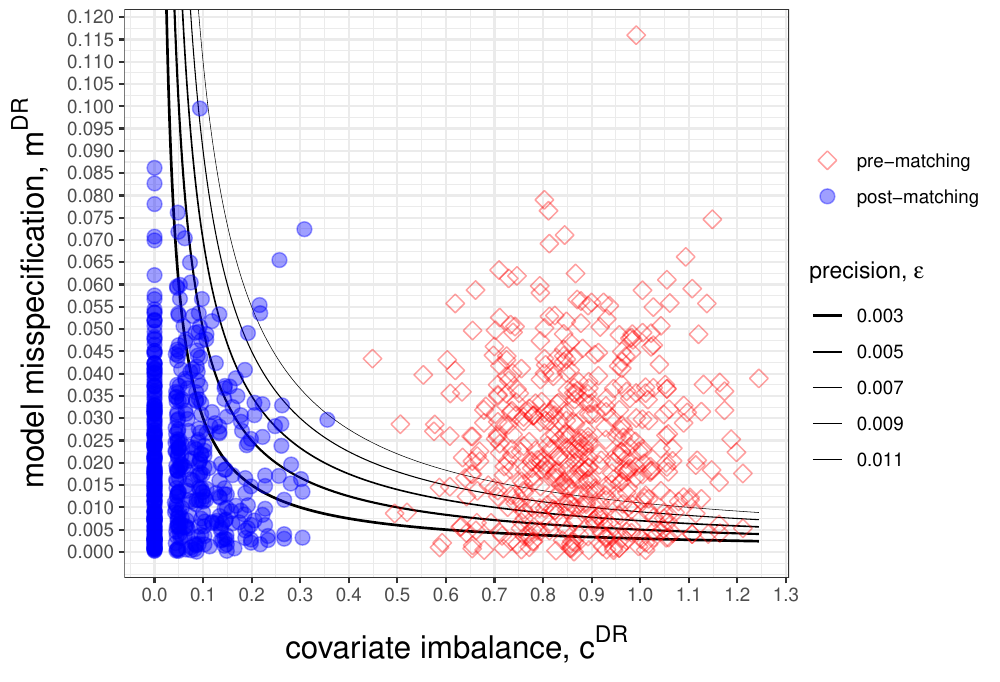}}
\subfloat[Saturated]{\includegraphics[width=0.33\textwidth]{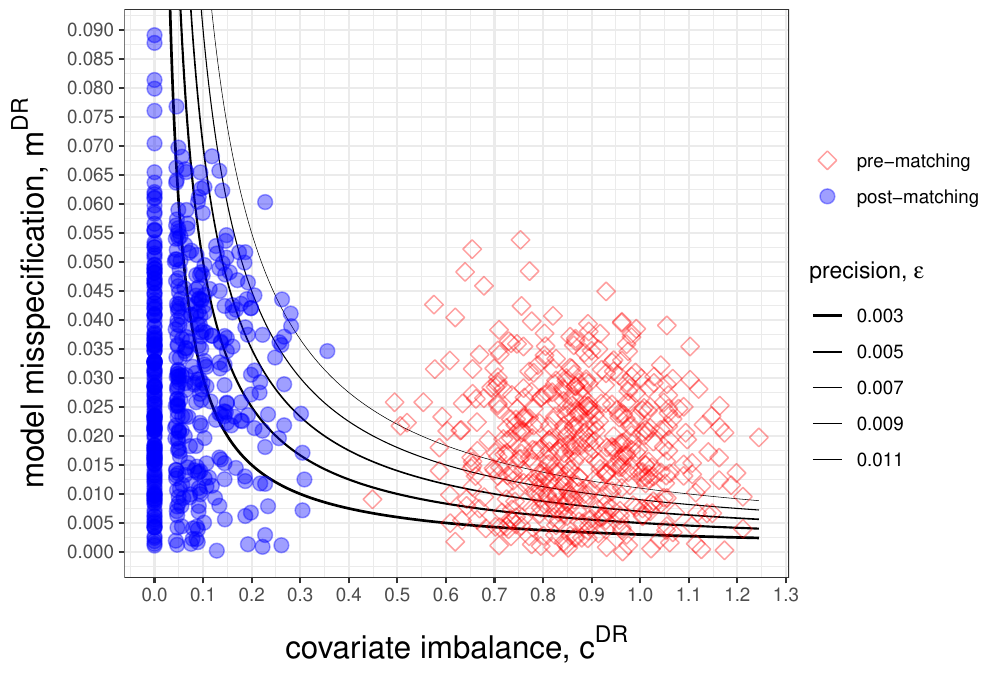}}
\caption{Density ratio bound ($\mathsf N_1=50$, $\mathsf N_0=125$)}
\label{fig:density-ratio-125}
\end{figure}

\begin{figure}
\centering
\subfloat[Specification A]{\includegraphics[width=0.33\textwidth]{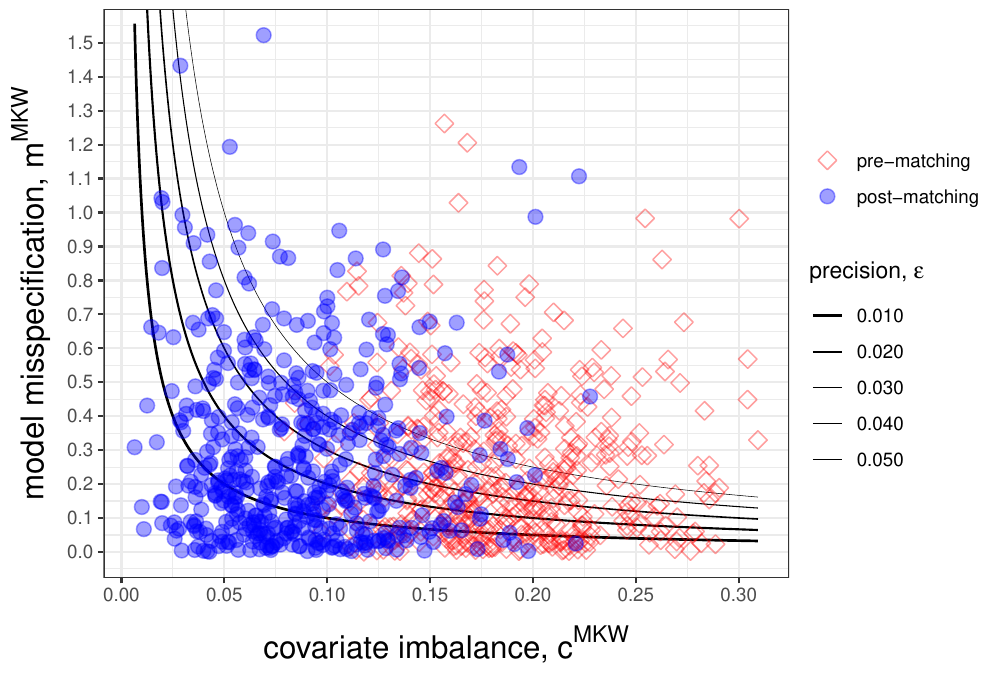}}
\subfloat[Specification B]{\includegraphics[width=0.33\textwidth]{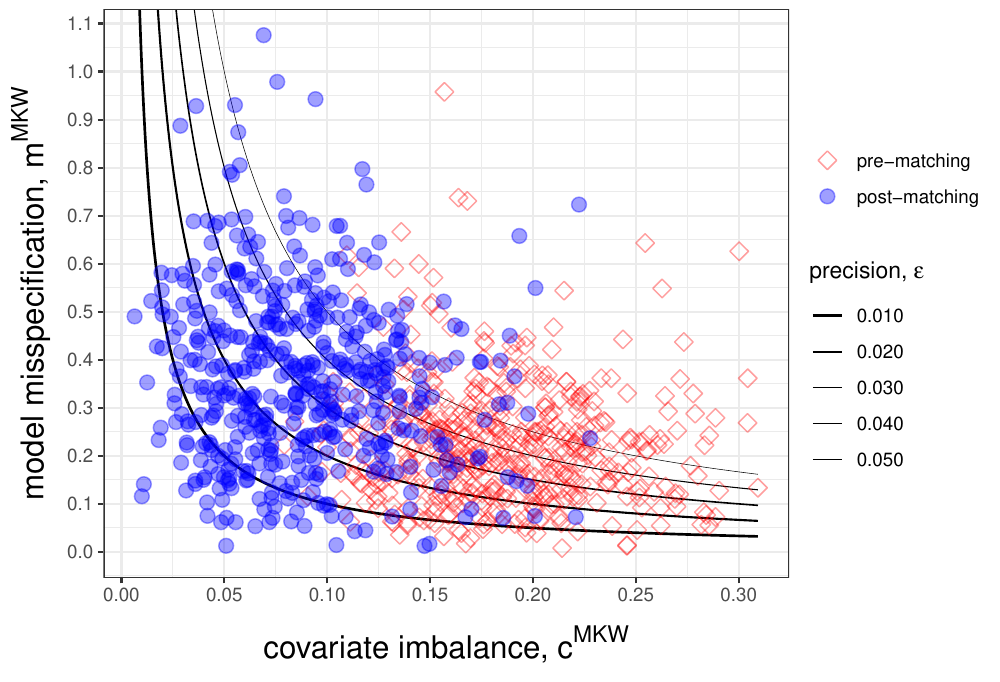}}
\subfloat[Specification C]{\includegraphics[width=0.33\textwidth]{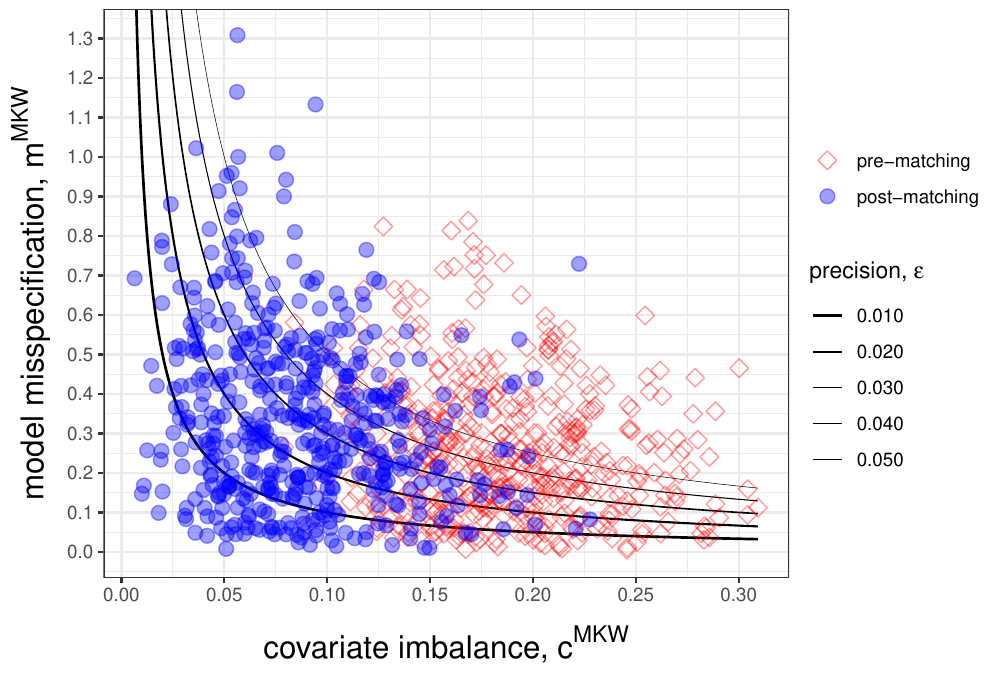}}
\caption{Monge-Kantorovich/Wasserstein Bound ($\mathsf N_1=50$, $\mathsf N_0=125$)}
\label{fig:monge-kantorovich/wasserstein-125}\bigskip
\centering
\subfloat[Specification A]{\includegraphics[width=0.33\textwidth]{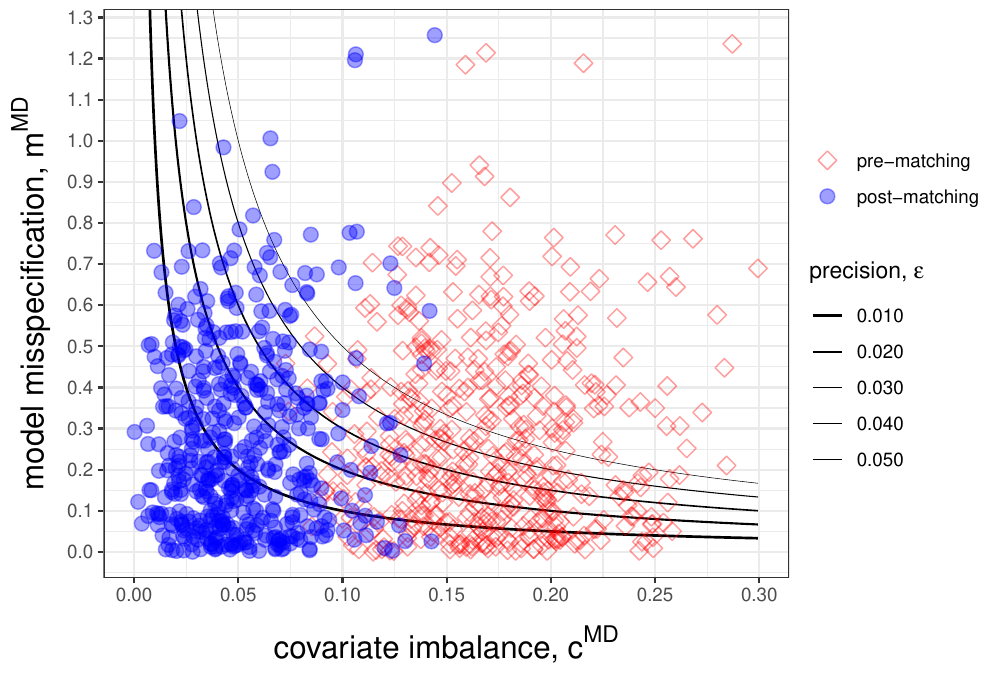}}
\subfloat[Specification B]{\includegraphics[width=0.33\textwidth]{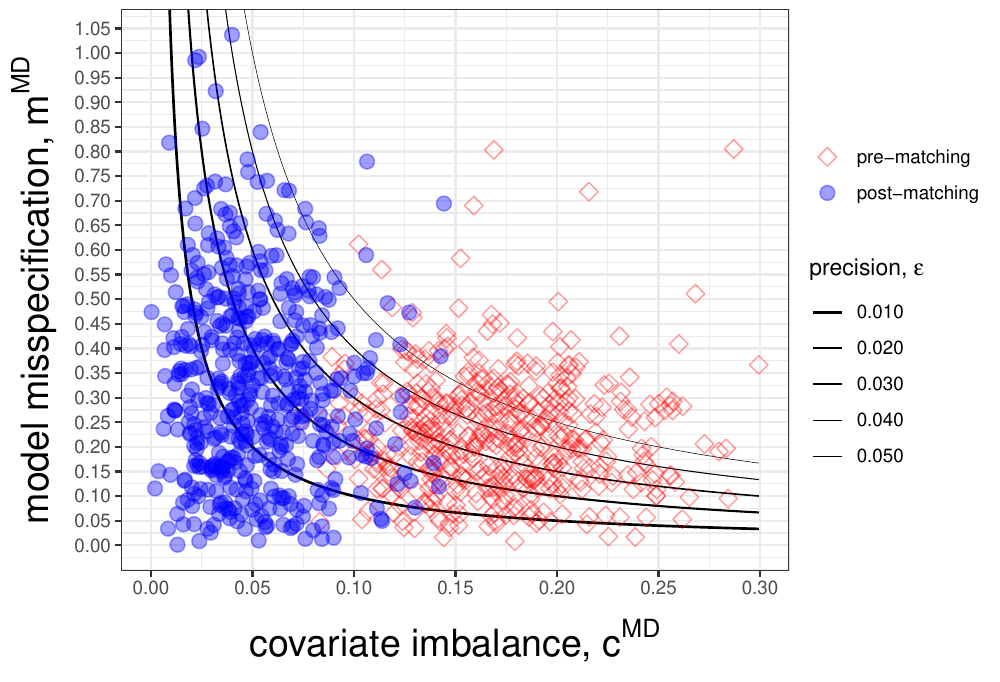}}
\subfloat[Specification C]{\includegraphics[width=0.33\textwidth]{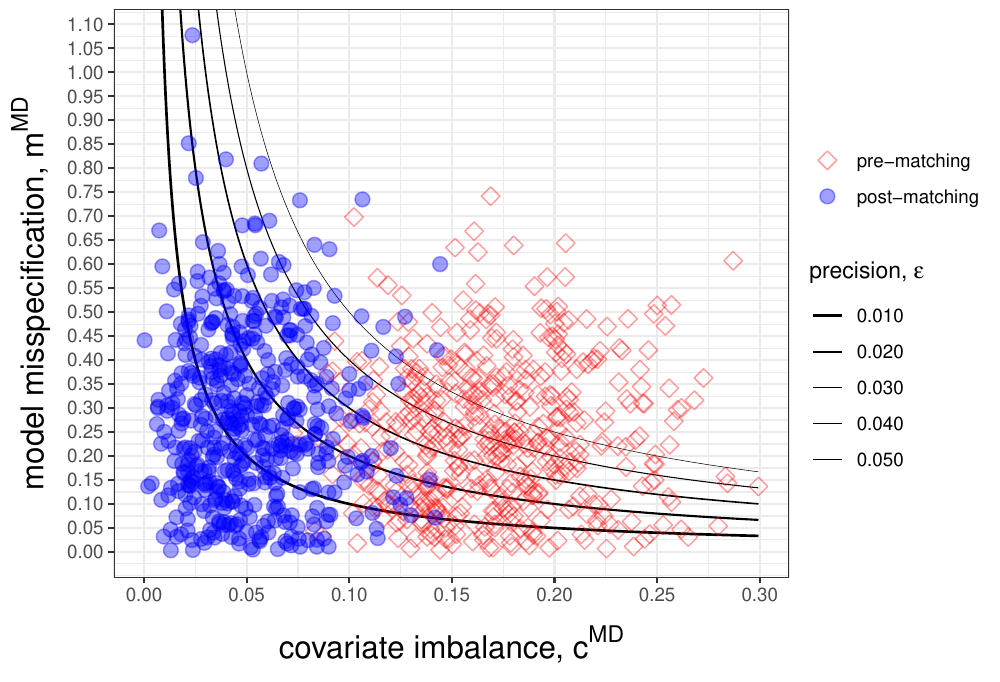}}
\caption{Mean difference Bound ($\mathsf N_1=50$, $\mathsf N_0=125$)}
\label{fig:mean-difference-125}
\end{figure}

\end{document}